\title{Perpetual exploration in anonymous synchronous networks with a Byzantine black hole} 
\author{Adri Bhattacharya}{Indian Institute of Technology Guwahati, Assam-781039, India}{a.bhattacharya@iitg.ac.in}{https://orcid.org/0000-0003-1517-8779}{upported by CSIR, Govt. of India, Grant Number:
09/731(0178)/2020- EMR-I}
\author{Pritam Goswami}{Sister Nivedita University, Kolkata, India}{pgoswami.cs@gmail.com}{https://orcid.org/0000-0002-0546-3894}{}
\author{Evangelos Bampas}{Universit\'e Paris-Saclay, CNRS, Laboratoire Interdisciplinaire des Sciences du Num\'erique, 91400 Orsay, France}{bampas@lisn.fr}{https://orcid.org/0000-0002-1496-9299}{}
\author{Partha Sarathi Mandal}{Indian Institute of Technology Guwahati, Assam-781039, India}{psm@iitg.ac.in}{https://orcid.org/0000-0002-8632-5767}{}
\authorrunning{A. Bhattacharya, P. Goswami, E. Bampas, and P.S. Mandal} 
\keywords{mobile agents, perpetual exploration, malicious host, Byzantine black hole} 
\newcommand{\pbmPerpExpl}{\textsc{PerpExploration-BBH}}
\newcommand{\pbmPerpExplHome}{\textsc{PerpExploration-BBH-Home}}
\newcommand{\calA}{\mathcal{A}}
\newcommand{\calB}{\mathcal{B}}
\newcommand{\calE}{\mathcal{E}}
\newcommand{\calG}{\mathcal{G}}
\newcommand{\calI}{\mathcal{I}}
\newcommand{\calO}{\mathcal{O}}
\newcommand{\calP}{\mathcal{P}}
\newcommand{\frakb}{\mathfrak{b}}
\newcommand{\cons}{\mathrm{Cons}}
\begin{document}

\maketitle

\begin{abstract}
In this paper, we investigate the following question:

``How can a group of initially co-located mobile agents perpetually explore an unknown graph, when one stationary node occasionally behaves maliciously, under the control of an adversary?''

This malicious node is termed as ``Byzantine black hole (BBH)'' and at any given round it may choose to destroy all visiting agents, or none of them.
While investigating this question, we found out that
this subtle power turns out to drastically undermine even basic exploration strategies which have been proposed in the context of a classical, always active, black hole.

We study this perpetual exploration problem in the presence of at most one BBH, without initial knowledge of the network size. Since the underlying graph may be 1-connected, perpetual exploration of the entire graph may be infeasible. Accordingly, we define two variants of the problem, termed as \pbmPerpExpl\ and \pbmPerpExplHome. In the former, the agents are tasked to perform perpetual exploration of at least one component, obtained after the exclusion of the BBH. In the latter, the agents are tasked to perform perpetual exploration of the component which contains the \emph{home} node, where agents are initially co-located. Naturally, \pbmPerpExplHome\ is a special case of \pbmPerpExpl.
The mobile agents are controlled by a synchronous scheduler, and they communicate via \textit{face-to-face} model of communication.

The main objective in this paper is to determine the minimum number of agents necessary and sufficient to solve these problems.
We first consider the problems in acyclic networks, and we obtain optimal algorithms that solve \pbmPerpExpl\ with $4$ agents, and \pbmPerpExplHome\ with $6$ agents in trees. The lower bounds hold even in path graphs. In general graphs, we give a non-trivial lower bound of $2\Delta-1$ agents for \pbmPerpExpl, and an upper bound of $3\Delta+3$ agents for \pbmPerpExplHome. To the best of our knowledge, this is the first paper that studies a variant of a black hole in  arbitrary networks, without initial topological knowledge about the network.
\end{abstract}

\section{Introduction}
In distributed mobile agent algorithms, a fundamental task is the collaborative exploration of a network by a collection of mobile agents. It was introduced and formulated by Shannon \cite{Shannon-First-Exploration} in 1951. Later, the real life applicability of this problem in fields like unmanned search and rescue, monitoring, network search, etc. has garnered a lot of interest from researchers across the world which leads them to study this problem in many different settings.
Ensuring the security of agents against threats or breaches is one of the major concerns in the design of exploration algorithms. Among the various security threats, two types have received the most attention in the literature of mobile agent algorithms so far. These are threats from \textit{malicious agents} \cite{DPP14,IntrusionDetection} and \textit{malicious hosts} \cite{CzyzowiczBHSSyncTree,DBLP:journals/dc/DobrevFPS06,DobrevAnonymousRingBHS}. In this work, we are interested in the latter. A malicious host in a network is a stationary node that can destroy any incoming agent without leaving any trace. Dobrev et al.\ introduced this type of malicious node in their 2006 paper \cite{DobrevAnonymousRingBHS}, referring to such malicious hosts as \textit{black holes}. In this paper, we will use interchangeably the terms ``classical black hole'' and ``black hole''.

There is extensive literature on \textit{Black Hole Search} (BHS) problem, that requires locating the black hole by multiple mobile agents in a network. The BHS problem has been studied under many different scenarios and under many different communication models (\cite{CzyzowiczComplexityBHS,CzyzowiczBHSSyncTree,DiLunaBHSDynamicRing, DobrevDangerousGraphTokens,DobrevAnonymousRingBHS,DobrevBHSUnorientedRingScattered,FlocchiniPingPongBHSPebble}). In all of these above mentioned works, the black hole is considered to be the classical black hole which always destroys any incoming agent without fail. 
In this work we are interested in a more general version of the classical black hole, called \textit{Byzantine Black Hole}, or BBH \cite{DBLP:conf/opodis/GoswamiBD024}. A BBH has the choice to act, at any given moment, as a black hole, destroying any data present on that node, or as a non-malicious node. Moreover, it is assumed that the initial position of the agents is safe. Note that the BHS problem does not have an exact equivalent under the Byzantine black hole assumption. Indeed, if the BBH always acts as an non-malicious node, then it can never be detected by any algorithm. Thus, in this work, our goal is not to locate the Byzantine black hole (BBH), but rather to perpetually explore the network despite its presence. Note that, if all agents are initially co-located and the BBH is a cut-vertex of the network, then it becomes impossible to visit every vertex, as the BBH can block access by behaving as a black hole. Consequently, in this work, we modify the exploration problem to focus on perpetually exploring \emph{one connected component} of the graph, among the connected components which would be obtained if the BBH and all its incident edges was removed from the network. In \cite{DBLP:conf/opodis/GoswamiBD024}, the problem was first introduced as \pbmPerpExpl\ for a ring network, where it was studied under various communication models and also under any initial deployment. In this work, in addition to investigating \pbmPerpExpl, we focus on a variant in which agents are required to perpetually explore specifically the connected component that includes their initial position, referred to as the \emph{home}. We call this variant \pbmPerpExplHome. \pbmPerpExplHome\ is particularly relevant in practical scenarios where the starting vertex serves as a central base, for example to aggregate the information collected by individual agents or as a charging station for energy-constrained agents. Therefore, during perpetual exploration, it is essential to ensure that the component being explored by the agents includes their \emph{home}. Note that the two variants are equivalent if the BBH is not a cut vertex.

\subsection{Related work}

Exploration of underlying topology by mobile agents in presence of a malicious host (also termed as black hole) has been well studied in literature. This problem, also known as black hole search or BHS problem was first introduced by Dobrev et al. \cite{DobrevAnonymousRingBHS}. Detailed surveys of the problems related to black holes can be found in \cite{DBLP:journals/eatcs/Markou12,DBLP:series/lncs/MarkouS19,WeiShiSurveyBHS}. Královič and Mikl\'{\i}k \cite{DBLP:conf/sirocco/KralovicM10,:Miklik2010} were the first to introduce some variants of the classical black hole, among which the \textit{gray hole}, a Byzantine version of the black hole (controlled by the adversary) with the ability, in each round, to behave either as a regular node or as a black hole. They considered the model, in which each agents are controlled by an asynchronous scheduler, where the underlying network is a ring with each node containing a whiteboard (i.e., each node can store some data, with which the agents can communicate among themselves). In this model, one of the problems they solved is Periodic Data Retrieval in presence of a gray hole with 9 agents. Later, Bampas et al.~\cite{DBLP:journals/tcs/BampasLMPP15} improved the results significantly under the same model. For the gray hole case, they obtained an optimal periodic data retrieval strategy with 4 agents. It may be noted that BBH behavior coincides with the ``gray hole'' considered in these papers. Moreover, periodic data retrieval is similar to perpetual exploration, as the main source of difficulty lies in periodically visiting all non-malicious nodes in order to collect the generated data. Goswami et al.~\cite{DBLP:conf/opodis/GoswamiBD024} first studied the perpetual exploration problem in presence of BBH (i.e., \pbmPerpExpl) in ring networks, where the agents are controlled by a synchronous scheduler. They studied this problem under different communication models (i.e., \textit{face-to-face}, \textit{pebble} and \textit{whiteboard}, where \textit{face-to-face} is the weakest and \textit{whiteboard} is the strongest model of communication), for different initial positions of the agents, i.e., for the case when the agents are initially co-located and for the case when they are initially scattered. Specifically for the case where the agents communicate \textit{face-to-face} and they are initially co-located at a node, they obtained an upper bound of 5 agents and a lower bound of 3 agents for \pbmPerpExpl.


\subsection{Our contributions}
We study the \pbmPerpExpl\ and \pbmPerpExplHome\ problems in specific topologies and in arbitrary synchronous connected networks with at most one BBH under the \textit{face-to-face} communication model (i.e., two agents can only communicate if both of them are on the same node). To the best of our knowledge, all previous papers in the literature which study variants of the classical black hole, do so assuming that the underlying network is a ring. Our primary optimization objective is the number of agents required to solve these problems.

For tree networks, we provide a tight bound on the number of agents for both problems. Specifically, we show that 4 agents are necessary and sufficient for \pbmPerpExpl\ in trees, and that 6 agents are necessary and sufficient for \pbmPerpExplHome\ in trees. Our algorithms work without initial knowledge of the size~$n$ of the network. However, knowledge of~$n$ would not reduce the number of agents, as our lower bounds do not assume that $n$ is unknown. Note that our 4-agent \pbmPerpExpl\ algorithm can be used to directly improve the 5-agent algorithm for rings, with knowledge of the network size, given in~\cite{DBLP:conf/opodis/GoswamiBD024} in the same model (\emph{face-to-face} communication, initially co-located agents).

To simplify the presentation, we present these results for path networks, and then we explain how to adapt the algorithms to work in trees with the same number of agents.


In the case of general network topologies, we propose an algorithm that solves the problem \pbmPerpExplHome, hence also \pbmPerpExpl, with $3\Delta+3$ agents, where $\Delta$ is the maximum degree of the graph, without knowledge of the size of the graph.

In terms of lower bounds, we first show that, if the BBH behaves as a classical black hole (i.e., if it is activated in every round), then at least $\Delta+1$ agents are necessary for perpetual exploration, even with knowledge of~$n$. In the underlying graph used in this proof, the BBH is not a cut vertex, hence the lower bound holds even for \pbmPerpExpl. This can be seen as an analogue, in our model, of the well-known $\Delta+1$ lower bound for black hole search in asynchronous networks~\cite{DBLP:journals/dc/DobrevFPS06}. In passing, under the same assumption of the BBH behaving as a classical black hole, we discuss an algorithm that performs perpetual exploration with only $\Delta+2$ agents, without knowledge of~$n$. 

We then use the full power of the Byzantine black hole to prove a stronger, and more technical, lower bound of $2\Delta-1$ agents. In this last lower bound, the structure and, indeed, the size of the graph is decided dynamically based on the actions of the algorithm. Hence, the fact that agents do not have initial knowledge of~$n$ is crucial in this proof. In this case, the BBH may be a cut vertex, but the adversarial strategy that we define in the proof never allows any agents to visit any other component than the one containing the home node. Therefore, this lower bound also carries over to \pbmPerpExpl.

The above results are the first bounds to be obtained for a more powerful variant of a black hole, for general graphs, under the assumption that the agents have no initial topological knowledge about the network. 

Several technical details and proofs are omitted from the main part of the paper, and they can be found in the Appendix.


\section{Model and basic definitions}

The agents operate in a simple, undirected, connected port-labeled graph $G=(V,E,\lambda)$, where $\lambda=\bigl(\lambda_v\bigr)_{v\in V}$ is a collection of port-labeling functions $\lambda_v:E_v\rightarrow \{1,\dots,\delta_v\}$, where $E_v$ is the set of edges incident to node~$v$ and $\delta_v$ is the degree of~$v$. We denote by~$n$ the number of nodes and by~$\Delta$ the maximum degree of~$G$.

An algorithm is modeled as a deterministic Turing machine. Agents are modeled as instances of an algorithm (i.e., copies of the corresponding deterministic Turing machine) which move in~$G$. Each agent is initially provided with a unique identifier.

The execution of the system proceeds in synchronous rounds. In each round, each agent receives as input the degree of its current node, the local port number through which it arrived at its current node, i.e.\ $\lambda_v(\{u,v\})$ if it just arrived at node~$v$ by traversing edge~$\{u,v\}$, or $0$ if it did not move in the last round, and the configurations of all agents present at its current node. It then  computes the local port label of the edge that it wishes to traverse next (or $0$ if it does not wish to move). All agents are activated, compute their next move, and perform their moves in simultaneous steps within a round. We assume that all local computations take the same amount of time and that edge traversals are instantaneous.

Note that we will only consider initial configurations in which all agents are co-located on a node called ``the \emph{home}''. In this setting, the set of unique agent identifiers becomes common knowledge in the very first round.

At most one of the nodes $\frakb \in V$ is a Byzantine black hole. In each round, the adversary may choose to activate the black hole. If the black hole is activated, then it destroys all agents that started the round at~$\frakb$, as well as all agents that choose to move to~$\frakb$ in that round.  The agents have no information on the position of the Byzantine black hole, except that it is not located at the \emph{home} node. Furthermore, the agents do not have initial knowledge of the size of the graph.

\subsection{Problem definition}
We define the \textsc{Perpetual Exploration} problem with initially co-located agents in the presence of a Byzantine black hole, hereafter denoted \pbmPerpExpl, as the problem of perpetually exploring at least one of the connected components resulting from the removal of the Byzantine black hole from the graph. If the graph does not contain a Byzantine black hole, then the entire graph must be perpetually explored.

If, in particular, the perpetually explored component \emph{must} be the component containing the \emph{home}, then the corresponding problem is denoted as \pbmPerpExplHome.

\begin{definition}\label{def:instance}
    An \emph{instance} of the \pbmPerpExpl\ problem is a tuple~$\left\langle G,k,h,\frakb \right\rangle$, where $G=(V,E,\lambda)$ is a connected port-labeled graph, $k\geq 1$ is the number of agents starting on the home $h\in V$, and $\frakb\in (V\setminus\{h\})\cup\{\bot\}$ is the node that contains the Byzantine black hole. If $\frakb=\bot$, then $G$ does not contain a Byzantine black hole.
\end{definition}

For the following definitions, fix a \pbmPerpExpl\ instance $I=\left\langle G,k,h,\frakb \right\rangle$, where $G=(V,E,\lambda)$, and let $\calA$ be an algorithm.

\begin{definition}
    We say that an execution of $\calA$ on $I$ \emph{perpetually explores a subgraph~$H$ of~$G$} if every node of~$H$ is visited by some agent infinitely often. 
\end{definition}

\begin{definition}\label{def:correctalg}
    Let $C_1,C_2,\dots,C_t$ be the connected components of the graph~$G-\frakb$, resulting from the removal of~$\frakb$ and all its incident edges from~$G$. If $\frakb=\bot$, then $t=1$ and $C_1\equiv G$. Without loss of generality, let~$h\in C_1$.
    
    We say that $\calA$ \emph{solves \pbmPerpExpl\ on~$I$}, if for every execution starting from the initial configuration in which $k$ agents are co-located at node~$h$, at least one of the components $C_1,C_2,\dots,C_t$ is perpetually explored.

    We say that $\calA$ \emph{solves \pbmPerpExplHome\ on~$I$}, if for every execution starting from the initial configuration in which $k$ agents are co-located at node~$h$, the component $C_1$ (containing the home) is perpetually explored.
\end{definition}

Finally, we say that $\calA$ \emph{solves \pbmPerpExpl\ with~$k_0$ agents} if it solves the problem on any instance with~$k\geq k_0$ agents (similarly for \pbmPerpExplHome). Note that any algorithm that solves \pbmPerpExplHome\ also solves \pbmPerpExpl.

Some further preliminary notions that are required for the full technical presentation of the lower bounds in Appendices~\ref{Appendix: Proof of impossibility with 3 PerpExplorationBBH Path} and~\ref{Appendix: Proof of mpossibility with 5 PerpExplorationBBH-Home Path}, are given in Appendix~\ref{Appendix:prelims}.


\section{Perpetual exploration in path and tree networks}

In this section, our main aim is to establish the following two theorems, giving the optimal number of agents that solve \pbmPerpExpl\ and \pbmPerpExplHome\ in path graphs.

\begin{theorem}
    \label{Thm: necessity and sufficiency of 4 agents for perpExploration-BBH path} 4 agents are necessary and sufficient to solve \pbmPerpExpl\ in path graphs, without initial knowledge of the size of the graph.
\end{theorem}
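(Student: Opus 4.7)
The plan is to divide the four agents into two pairs and dispatch one pair in each direction out of the home (or all four in the single available direction if the home is a leaf). Within a pair, the agents would carry out a coordinated cautious walk in which a designated \emph{scout} advances one step at a time, while a \emph{witness} trails it at a one-round lag, maintaining the invariant that no two agents of the pair simultaneously occupy the same non-home vertex. The central idea is that, on the side of the home where the BBH lies, the scout is the only agent of its pair that the adversary can destroy, and the witness is scheduled so that, from its knowledge of the deterministic synchronous protocol, it can recover the exact location of the BBH the moment it fails to meet the scout at an expected rendezvous. Thereafter the witness perpetually shuttles between the home and the vertex just before the BBH on its side. On the other side (with no BBH), the scout traverses all the way to the endpoint and both agents perpetually shuttle along the entire side. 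Together the two pairs visit every vertex of the home-containing component infinitely often, which solves \pbmPerpExpl.

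\textbf{Necessity.} For the lower bound I would use an adversary argument. Given any three-agent algorithm~$\calA$, consider a sufficiently long path with the home at an internal vertex and let the adversary simulate $\calA$, delaying BBH activation and choosing its eventual location adaptively. Since the home-containing component extends on both sides of the home, $\calA$ must dispatch agents in both directions infinitely often. With only three agents, a pigeonhole argument shows that at some round at most one agent is committed to one of the two sides; the adversary declares the BBH to lie in that direction at the position of the lone agent and activates it, destroying that agent. The remaining two agents must now perpetually cover both sides of the home on their own, and I would argue that the adversary can always force at least one of them to eventually cross the BBH's location and be destroyed as well, leaving too few survivors to maintain perpetual visits to both the leftmost and rightmost nodes of the home-containing component. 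Keeping the adversary's BBH placement consistent with every observation that the surviving agents have accumulated is the principal bookkeeping step.

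\textbf{Main obstacle.} The hardest part is the design underlying the upper bound. A naive cautious walk fails because an agent that fails to return may have perished at any of several nodes visited since its last check-in, so its companions cannot pinpoint the BBH and must therefore behave conservatively enough that the adversary can continue to prune them one by one. Arranging the protocol so that a single death event unambiguously identifies the BBH, while also avoiding ever having two agents of the same pair at the same unknown-safe vertex (which the adversary could exploit to kill both with one activation), requires delicate scheduling of probes and rendezvous; this synchronization is the technical core on which the rest of the correctness argument rests.
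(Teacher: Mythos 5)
There is a genuine gap, and it is in the sufficiency direction, which is the technical core of the theorem. Your pair-based cautious walk is exactly the kind of ``classical black hole'' strategy that a \emph{Byzantine} black hole defeats: a node that the scout has just traversed unharmed is not thereby safe, because the adversary may decline to activate $\frakb$ for the scout and then activate it when the witness arrives, or activate it at a rendezvous to kill both members of the pair at once. Your stated invariant that the two agents of a pair never simultaneously occupy the same non-home vertex is also incompatible with your detection mechanism: in the face-to-face model an agent learns nothing about a neighboring vertex, so the witness can only detect the scout's disappearance at a co-location, which your invariant forbids outside the home; and if you instead allow rendezvous at previously visited vertices, those vertices are themselves still candidate BBH positions. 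Finally, even granting a repaired mechanism, two agents per side appear to be too few: the paper's own lower-bound machinery (Appendix~\ref{Appendix: Proof of impossibility with 3 PerpExplorationBBH Path}) shows that two alive agents with at least three consistent BBH positions on one side cannot succeed, and a lone returning probe always leaves a two-node ambiguity (it may have died at the new node on the way out or at the previous node on the way back). The paper's algorithm avoids all of this by keeping all four agents together in a two-node ``pattern'' that is translated by a choreographed five-round sub-phase; the case analysis over every possible activation round and every possible position of $\frakb$ among the three relevant nodes guarantees that some agent survives \emph{knowing the exact location} of $\frakb$, after which that single agent perpetually explores whichever component it is in (which suffices for \pbmPerpExpl, though not for \pbmPerpExplHome).

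Your necessity sketch also has a substantive gap beyond being informal: you argue that the survivors cannot maintain perpetual visits to both ends of the \emph{home-containing} component, but \pbmPerpExpl\ is already solved if \emph{any} component of $G-\frakb$ is perpetually explored --- in particular by an agent stranded on the far side of $\frakb$. The paper's Lemma~\ref{lem:impossibility:oneagent-twosusp} and Lemma~\ref{lemma: 2 agents are not sufficient enclosing susp} are devoted precisely to ruling this out, by showing that a survivor with two or more consistent BBH positions can neither avoid all of them (it would then miss a non-malicious node of its component) nor visit the nearer one safely. Relatedly, ``activate the BBH at the position of the lone agent'' only works if the adversary can do so while keeping at least two instances consistent with the survivors' histories; otherwise the remaining two agents pinpoint $\frakb$ and win. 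Establishing that this ambiguity can always be maintained against an arbitrary deterministic algorithm is the bulk of the paper's proof (the cascade of configurations with one agent and two suspicious nodes, two agents and three suspicious nodes, and three agents and eight suspicious nodes of which three are consecutive), and it is absent from the proposal.
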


\begin{theorem}
    \label{Thm: necessity and sufficiency of 6 agents for perpExploration-BBH-Home path} 6 agents are necessary and sufficient to solve \pbmPerpExplHome\ in path graphs, without initial knowledge of the size of the graph.
\end{theorem}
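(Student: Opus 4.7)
My plan splits the proof into a sufficiency direction (6 agents are enough) and a necessity direction (no 5-agent algorithm works), presented for the path; the extension to trees is then adaptive and mirrors what was done for \pbmPerpExpl.

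For sufficiency, I would have the 6 agents partition at $h$ into a ``left team'' and a ``right team'' of 3 agents each, with each team independently running a 3-agent ``one-sided cautious perpetual exploration'' subroutine. The subroutine maintains a tentative safe frontier $f$ initialized at $h$: one ``anchor'' agent stays near $h$, ensuring that $h$ and the cleared interval are visited infinitely often, while the other two agents perform a staggered cautious walk in which at most one scout at a time probes $f+1$ and returns before any second agent is advanced. Because the BBH is Byzantine, the team never certifies any node as permanently safe, but the key invariant is that after any BBH activation destroying one or two scouts, at least one agent remains inside the previously cleared interval to keep oscillating over it. Correctness then follows by case analysis on whether the BBH is absent, on the left, or on the right: the team on the BBH-free side perpetually sweeps its entire half of the home-component, while the team on the BBH side eventually stabilizes with one or two survivors oscillating over $[h,\frakb-1]$, jointly covering the home-component.

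For necessity, I would argue by an adversarial indistinguishability argument against any deterministic 5-agent algorithm $\calA$. The adversary starts with an unboundedly long path with $h$ as an interior node and delays committing to the endpoints and to the BBH position until forced by $\calA$'s moves. The key claim is that in any infinite execution of $\calA$, the adversary can identify a round at which it places the BBH on one side so that its activation destroys enough agents to leave strictly fewer than 3 agents on one of the two sides of $h$. A sub-lemma then shows that any configuration with at most 2 surviving agents ``committed'' to one side of the home-component cannot perpetually explore that side: the adversary can repeat a destruction on that side, or force the agents into a region smaller than the home-component, by always picking BBH placements consistent with the agents' face-to-face observations so far. The formalization relies on the apparatus of executions consistent with multiple instances set up in Appendix~\ref{Appendix:prelims}, exhibiting two instances differing only in the BBH's side whose executions under $\calA$ agree up to the first destruction.

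The main obstacle is the lower bound, specifically the argument that the adversary can always force a configuration in which some side of $h$ is under-garrisoned, and the downstream claim that 2 committed agents cannot perpetually explore an unboundedly long one-sided corridor with a possibly present BBH of unknown location. Making this precise requires a careful analysis of the patterns a 5-agent algorithm is forced into when extending cautious exploration along an unbounded path, matching each pattern to a destructive BBH placement that is still consistent with what $\calA$ has observed so far, and then formalizing an adaptive adversary that exploits the tension between extending the frontier (which requires some agent to leave the cleared interval) and keeping the cleared interval permanently covered. The sufficiency direction, by contrast, is conceptually straightforward and amounts to verifying the cautious-walk invariants under every possible Byzantine activation schedule.
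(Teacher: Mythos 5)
Your sufficiency construction has a genuine gap that cannot be repaired within the proposed framework. Splitting the 6 agents into two independent 3-agent teams, one per side of $h$, reduces the problem to: can 3 agents perpetually explore one unbounded side of the path on which the BBH may reside? The answer is no, and this is precisely the content of the paper's own lower bound for \pbmPerpExpl\ (Theorem~\ref{thm:impossibility with 3 PerpExplorationBBH Path}, proved as Theorem~\ref{thm:3 agents not sufficient}): if the adversary places the BBH on the right, the left team never contributes to exploring the right portion $[h+1,\frakb-1]$ of the home component, and the 3 right-team agents face exactly an instance on which 3 agents fail. Concretely, your claimed invariant --- that after any activation some agent survives inside the previously cleared interval --- is false: since no node is ever certified safe, the BBH may sit at a node well inside the ``cleared'' interval $[h,f]$, let the scouts advance far beyond it, and then destroy the anchor when it oscillates over that node; the surviving scouts are then stranded in the far component and nobody ever again visits the near portion. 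Alternatively, by Lemma~\ref{lemma: 2 agents are not sufficient} and Lemma~\ref{lem:impossibility:oneagent-twosusp}, the adversary can destroy both scouts so as to leave the lone anchor with at least two candidate BBH positions, after which it must either step onto the BBH or forever skip a safe node of the home component. The paper's algorithm avoids all of this with a very different design: four agents move as a tight two-node pattern whose five-round choreography guarantees that any activation leaves a survivor knowing the \emph{exact} BBH position, and the remaining two agents wait at $h$ and later execute a timing-synchronized cautious walk in which a survivor stranded on the far side deliberately steps onto the BBH in the same round as the probing waiter, forcing a ``both die or both meet'' dichotomy that transfers the exact position back into the home component.

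The necessity direction is also not decomposed correctly. The claim ``force fewer than 3 agents onto one side, then show 2 committed agents cannot explore that side'' is neither necessary nor sufficient for failure: two agents on a side with no BBH succeed trivially, and conversely the hard configurations in the paper involve, e.g., one survivor stranded on the far side with exactly two candidate BBH positions while two agents wait at home with a much larger suspicious set. The paper's actual proof first shows that some agent must remain at $h$ until the first destruction, then performs a case analysis on how many agents stay home ($k\in\{1,2,3,4\}$), invoking quantitative lemmas about what the adversary can do to 2 or 3 exploring agents (destroy all of them, or leave survivors with suspicious sets of size $\Omega(n)$, or strand one survivor in the far component with two candidates) and recursing down to the 3- and 4-agent impossibility results; the factor-of-4 losses at each level are why $n\geq 145$ is needed. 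You correctly flag the lower bound as the main obstacle, but the sufficiency direction, which you call conceptually straightforward, is where the proposal actually breaks.
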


For the necessity part, we prove that there exists no algorithm solving \pbmPerpExpl\ (resp.\ \pbmPerpExplHome) with 3 agents (resp.\ 5 agents), even assuming knowledge of the size of the graph. For the sufficiency part of Theorem~\ref{Thm: necessity and sufficiency of 6 agents for perpExploration-BBH-Home path}, we provide an algorithm, which we call \textsc{Path\_PerpExplore-BBH-Home}, solving \pbmPerpExplHome\ with 6 initially co-located agents, even when the size of the path is unknown to the agents. Thereafter, we modify this algorithm to solve \pbmPerpExpl\ with 4 initially co-located agents, thus establishing the sufficiency part of Theorem~\ref{Thm: necessity and sufficiency of 4 agents for perpExploration-BBH path}.

In Section~\ref{sec:lbpath}, we give a brief sketch of the approach we have used to prove the lower bounds on the number of agents. Then, in Section~\ref{section: Path Graph Algorithm Description}, we describe the algorithms.

\subsection{Lower bounds in paths} \label{sec:lbpath}

\begin{theorem} 
    \label{thm:impossibility with 3 PerpExplorationBBH Path} At least 4 agents are necessary to solve \pbmPerpExpl\ in all path graphs with $n\geq 9$ nodes, even assuming initial knowledge of the size of the graph.
\end{theorem}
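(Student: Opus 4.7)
The plan is to argue by contradiction: suppose some algorithm $\calA$ solves \pbmPerpExpl\ with $3$ agents on every path of $n \ge 9$ nodes, even with knowledge of $n$. I will exhibit a path instance with $n=9$ and an adversarial BBH placement and activation schedule that defeats $\calA$. The argument splits into a single-agent sublemma and a reduction that kills two out of three agents.

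First I would prove the sublemma that a lone surviving agent cannot perpetually explore any component of a sufficiently long path against an adversary that controls BBH placement and activation. The adversary simulates the lone agent's trajectory. If the trajectory is eventually confined to a bounded sub-path $[v_a,v_b]$ of length at least three, I commit the BBH to an interior node of $[v_a,v_b]$ and activate it on the agent's next visit. If instead the trajectory is unbounded, I commit the BBH to a not-yet-visited node that the agent is guaranteed to reach (such a node must exist because in the $\frakb=\bot$ run with the same prefix every node must be visited infinitely often) and activate on that first visit. Either way, the last agent dies and no component is perpetually explored.

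Next I would design an adversarial schedule that reduces the population from three agents to one. Fix $n=9$ with the home $h$ at one endpoint, and first simulate $\calA$ against $\frakb=\bot$; by correctness every node is visited infinitely often. For each agent $i\in\{1,2,3\}$, let $A_i$ denote the set of nodes visited infinitely often by agent $i$ in this simulated run. If some $A_i=V$, then the complementary pair of agents has support avoiding some node $v_j$ of the path; I commit the BBH to a node far enough from $h$ that killing agent $i$ strands the surviving pair on a sub-path that fails to cover some boundary node of the home component which $\calA$ must visit infinitely often. Otherwise, every $A_i$ is a proper subset of $V$ and, by pigeonhole applied to the infinitely many visits to each node, some interior node $v^{*}$ is visited infinitely often by at least two distinct agents; then I commit $\frakb=v^{*}$ and activate the BBH on two carefully chosen visits by those two agents, killing them. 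In either case we are reduced to a single surviving agent, at which point the sublemma provides the contradiction.

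The main obstacle will be the case in which one agent covers the entire path alone. Killing that agent risks altering the surviving pair's behaviour because they may detect the death through a missed face-to-face rendezvous. I would handle this with an indistinguishability argument: until the earliest scheduled meeting between the dead agent and a survivor, the survivors' local views are identical to the simulation, so they execute the same moves. The assumption $n\ge 9$ is used precisely to provide enough slack, so that during this interval the surviving pair's support cannot yet cover some boundary node of the home component, ensuring failure independently of any recovery behaviour triggered once the death is eventually detected, and also to guarantee that in the other case the killing node $v^{*}$ lies strictly in the interior with both sides long enough for the sublemma to apply.
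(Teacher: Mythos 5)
There is a genuine gap, and it is precisely the ingredient that the paper's proof is built around: maintaining, at every stage of the adversarial schedule, at least two BBH positions that remain \emph{consistent} with the surviving agents' histories. Your lone-agent sublemma is false as stated: a single surviving agent that happens to know the exact location of~$\frakb$ (for instance, because it can infer from the algorithm where its two colleagues were headed when they vanished) can simply avoid that node and perpetually explore its own component, which \emph{does} solve \pbmPerpExpl. The adversary cannot ``commit the BBH'' afresh after observing the lone agent's future trajectory, because the BBH position is fixed for the whole execution and must be consistent with where and when the first two agents were destroyed. The correct statement (the paper's first lemma) is that a lone agent fails only if at least two candidate positions of~$\frakb$ are still consistent with its history at that moment, and the entire difficulty of the lower bound is engineering the first two kills so that this ambiguity survives them. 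Your proposal has no mechanism for this; the paper's proof devotes a chain of lemmas to it, quantifying how many consistent positions must remain when two agents survive (three) and when three agents survive (eight, with the first three consecutive), and choosing the kill times as ``the last time agent $b$ entered the suspicious subpath before time $t'$'' so that the second victim is already committed regardless of whether the first was destroyed.

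The 3-to-1 reduction has the same problem in two further places. First, the pigeonhole argument on the benign run's infinitely-often-visited nodes does not give you two kills at $v^*$: after the first agent dies there, the execution diverges from the benign simulation, and the second agent (whose behaviour may change once it misses a rendezvous) need never return to $v^*$. Second, in the case where one agent covers the whole path alone, your indistinguishability window only shows that the surviving pair fails to visit some node \emph{during a finite interval}; perpetual exploration is an infinite-horizon property, so this proves nothing --- once the survivors detect the death they may adapt and explore everything, and you must then argue (as the paper does for two agents facing three or more consistent positions) that any attempt to disambiguate forces them into the suspicious region where they can be destroyed one at a time while the ambiguity is preserved for whoever remains.
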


In order to prove Theorem~\ref{thm:impossibility with 3 PerpExplorationBBH Path}, we establish, in a series of lemmas, several different types of configurations from which the adversary can force an algorithm to fail, by destroying all agents or by failing to entirely perpetually explore any component. Omitting a lot of details, which can be found in Appendix~\ref{Appendix: Proof of impossibility with 3 PerpExplorationBBH Path}, these configurations are as follows:
\begin{itemize}
    \item If one agent remains alive, and at least two BBH locations are consistent with the history of the agent up to that time, then the algorithm fails.
    \item If two agents remain alive on the same side \textit{of the path} of at least three potential BBH locations, then the algorithm fails. Also, if two agents remain alive on either side of at least three potential BBH locations, then the algorithm fails.
    \item If three agents remain alive on the same side of at least eight potential BBH locations, of which at least the first three are consecutive nodes in the path, then the algorithm fails.
\end{itemize}
Theorem~\ref{thm:impossibility with 3 PerpExplorationBBH Path} then follows by observing that, in the initial configuration with three co-located agents, all nodes apart from the home node are potential BBH positions.

\begin{theorem}
    \label{thm:impossibility with 5 PerpExplorationBBH-Home Path} At least 6 agents are necessary to solve \pbmPerpExplHome\ in all path graphs with $n\geq 145$ nodes, even assuming initial knowledge of the size of the graph.
\end{theorem}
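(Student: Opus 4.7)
The plan is to extend the cascade-of-doomed-configurations strategy used in the proof of Theorem~\ref{thm:impossibility with 3 PerpExplorationBBH Path} to a harder setting with two more agents but a stricter success criterion. The adversary maintains, at every round, a contiguous window $W$ on the path, every position of which is consistent with every alive agent's full observation history as a potential location for the BBH, and chooses its activations so that at least one such consistent position always survives. The essential new ingredient beyond Theorem~\ref{thm:impossibility with 3 PerpExplorationBBH Path} is that, under the Home constraint, any agent lying strictly on the non-home side of~$W$ is essentially neutralized: from its current side it cannot perpetually explore the home component, and it cannot safely cross~$W$ because the adversary can activate the BBH on any entry step. Hence the combinatorial analysis reduces to tracking the number $j$ of alive agents on the home side of $W$ together with the length $w=|W|$.

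Following the layered structure of the proof of Theorem~\ref{thm:impossibility with 3 PerpExplorationBBH Path}, I would first re-establish, under the Home semantics, the base failure lemmas: (i) if a single home-side agent remains alive and $\geq 2$ BBH positions remain consistent, the algorithm fails; (ii) two home-side agents on the same side of $\geq 3$ consistent positions fail; together with analogous statements for configurations having arbitrarily many agents on the non-home side, justified by the neutralization observation above. I would then prove, by induction on $j\in\{1,\ldots,5\}$, the existence of thresholds $N_1<N_2<N_3<N_4<N_5$ such that from any configuration with $j$ alive home-side agents and $w\geq N_j$, the adversary forces either the death of one home-side agent (reducing to $j-1$ home-side agents with a window still exceeding $N_{j-1}$) or a strict, boundedly-multiplicative shrinkage of $w$ while keeping $j$ fixed. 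Applying the top-level invariant with $j=5$ to the initial configuration, where every non-home node is a consistent BBH position, then yields failure whenever $n-1\geq N_5$. A careful choice of shrinkage factors at each inductive step must be tuned so that $N_5\leq 144$, hence the bound $n\geq 145$.

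The main obstacle will be constructing the adversary's strategy \emph{globally consistently}: the adversary never commits to a BBH position during the execution but must exhibit, for every finite prefix, a single position in $W$ consistent with every agent's joint history. This is most delicate in mixed configurations where some agents have crossed $W$ to the non-home side, since activations must be timed so that the same BBH position stays consistent with all observations on both sides, and so that any attempt by a non-home-side agent to return either fails outright (via activation) or leads to a new doomed configuration with smaller $w$ or $j$. A secondary difficulty is the quantitative bookkeeping: each time the adversary ``spends'' part of the window in order to destroy an agent rather than merely shrink, the remainder of $W$ must still exceed $N_{j-1}$, and the total cost of the five-agent cascade of shrinkages and kills must fit within a $144$-node budget for the argument to close at $n=145$.
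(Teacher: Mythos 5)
Your overall skeleton matches the paper's: a layered adversary argument that peels off agents one level at a time, losing a constant multiplicative factor of the path length per level and bottoming out at the 3-agent impossibility for $n\geq 9$, which is exactly where $9\cdot 16=144$ comes from. The paper organizes this as a chain of reductions (Lemmas~\ref{lemma: 2 agents exploration cases LB} and~\ref{lemma: consequence of 3 agents Path Graph} describe what the adversary can force against 2 or 3 exploring agents, Lemma~\ref{lemma: at least two groups are required} forces one agent to wait at home, and Theorems~\ref{theorem: 4 agents cannot solve BBH-home} and~\ref{thm:5 agents impossibility} case-split on how many agents wait at home until the first destruction); your single induction on the number of home-side agents with thresholds $N_j$ is the same mechanism with different bookkeeping. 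However, two steps of your plan have genuine gaps.

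First, your invariant --- a window $W$ every position of which is consistent with \emph{every} alive agent's history, with $|W|\geq N_j$ --- cannot be maintained through the configuration that is the actual crux of the proof. The adversary's best move (scenario 3 of Lemma~\ref{lemma: consequence of 3 agents Path Graph}) is to strand exactly one exploring agent in $C_2$ knowing the BBH up to \emph{two adjacent candidates} $v_p,v_{p+1}$, while the home-side agents still have $\Omega(n)$ consistent positions. The intersection of all agents' consistent sets is then only 2 nodes, so $w$ collapses below every threshold and your induction on $(j,w)$ stalls. You must decouple the home-side agents' uncertainty from the global consistency requirement; the paper does this by tracking per-agent suspicious sets $S_i(t)$ rather than one shared window.

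Second, the claim that a non-home-side agent is ``essentially neutralized'' because nobody can safely cross $W$ is precisely the statement that needs the most work, and the reason you give does not cover the real threat. The danger is not the stranded agent crossing a wide window; it is a \emph{home-side} agent walking one step past the near candidate $v_p$ to meet the stranded agent at $v_{p+1}$ and learn the exact BBH location, after which the home component could be explored while avoiding it. Refuting this requires showing that any such meeting can only occur at $v_{p+1}$, that the crossing agent can be destroyed at $v_p$ or $v_{p+1}$ without the last home-side agent being able to tell which, and that if the last home-side agent is itself at $v_p$ or beyond at the crossing round then the adversary can strand everyone in $C_2$. This rendezvous analysis is the bulk of Case-III of the paper's proof of Theorem~\ref{thm:5 agents impossibility}; your plan names it as an obstacle but supplies no argument, so as written the proof does not close.
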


The first element of the proof of Theorem~\ref{thm:impossibility with 5 PerpExplorationBBH-Home Path} is that at least one agent must remain at the \emph{home} node up to at least the destruction of the first agent by the BBH (actually, up to the time of destruction of the first agent plus the time it would take for that information to arrive at the home node), otherwise the adversary can trap all agents in the wrong component, i.e., the one not containing the \emph{home} node.

Then, very informally, we argue that, no matter how the algorithm divides the 5 available agents into a group of \emph{exploring} agents that must go arbitrarily far from the home node, and a group of \emph{waiting} agents that must wait at the home node, either the number of \emph{exploring} agents is not enough for them to explore arbitrarily far from the \emph{home} node and, at the same time, be able to detect exactly the BBH position if it is activated, or the number of \emph{waiting} agents is not enough to allow them to recover the information of the exact BBH position from one of the remaining \emph{exploring} agents that has been stranded on the other side of the BBH.

The above argument is presented in full technical detail in Appendix~\ref{Appendix: Proof of mpossibility with 5 PerpExplorationBBH-Home Path}.

\subsection{Description of Algorithm \textsc{\textmd{Path\_PerpExplore-BBH-Home}}}\label{section: Path Graph Algorithm Description}

Here we first describe the algorithm, assuming that the BBH does not intervene and destroy any agent. Then we describe how the agents behave after the BBH destroys at least one agent (i.e., the BBH intervenes).

We call this algorithm \textsc{Path\_PerpExplore-BBH-Home}. Let $\left\langle P,6,h,\frakb\right\rangle$ be an instance of~\pbmPerpExplHome, where $P=(V,E,\lambda)$ is a port-labeled path. Per Definition~\ref{def:correctalg}, all agents are initially co-located at~$h$ (the home node). To simplify the presentation, we assume that $h$ is an extremity of the path, and we explain how to modify the algorithm to handle other cases in Remark~\ref{remark: h is not an extreme end} in Appendix~\ref{Appendix: Path Alg}. Our algorithm works with 6 agents. Initially, among them the four least ID agents will start exploring $P$, while the other two agents will wait at $h$, for the return of the other agents. We first describe the movement of the four least ID agents say, $a_0$, $a_1$, $a_2$ and $a_3$, on $P$. Based on their movement, they identify their role as follows: $a_0$ as $F$, $a_1$ as $I_2$, $a_2$ as $I_1$ and $a_3$ as $L$. The identities $F$, $I_2$, $I_1$ and $L$ are denoted as Follower, Intermediate$2$, Intermediate$1$ and Leader, respectively. The exploration is performed by these four agents in two steps, in the first step, they form a particular pattern on $P$. Then in the second step, they move collaboratively in such a way that the pattern is translated from the previous node to the next node in five rounds. Since the agents do not have the knowledge of $n$, where $|V|=n$, they do the exploration of $P$, in $\log n$ phases, and then this repeats. In the $i$-th phase, the four agents start exploring $P$ by continuously translating the pattern to the next node, starting from $h$, and moves up to a distance of $2^i$ from $h$. Next, it starts exploring backwards in a similar manner, until they reach $h$. It may be observed that, any phase after $j-$th phase (where $2^j\ge n$), the agents behave in a similar manner, as they behaved in $j-$th phase, i.e., they move up to $2^j$ distance from $h$, and then start exploring backwards in a similar manner, until each agent reaches $h$. Note that, this can be done with agents having $\calO(\log n)$ bits of memory, in order to store the current phase number.

We now describe how the set of four agents, i.e., $L$, $I_1$, $I_2$ and $F$ create and translate the pattern to the next node. 

\noindent\textbf{Creating pattern:} $L, I_1,I_2$ and $F$ takes part in this step from the very first round  of any phase, starting from $h$. In the first round, $L,~I_1$ and $I_2$ moves to the next node. Then in the next round only $I_2$ returns back to \emph{home} to meet with $F$. Note that, in this configuration the agents $L$, $I_1$, $I_2$ and $F$ are at two adjacent nodes while $F$ and $I_2$ are together and $L$ and $I_1$ are together on the same node. We call this particular configuration the \textit{pattern}. We name the exact procedure as \textsc{Make\_Pattern}.  



\noindent\textbf{Translating pattern:} After the pattern is formed in first two rounds of a phase, the agents translate the pattern to the next node until the agent $L$ reaches either at one end of the path graph $P$, or, reaches a node at a distance $2^i$ from $h$ in the $i$-th phase. Let, $v_0,~v_1,~v_2$ be three consecutive nodes on $P$, where, suppose $L$ and $I_1$ is on $v_1$ and $F$ and $I_2$ is on $v_0$. This translation of the pattern makes sure that after 5 consecutive rounds $L$ and $I_1$ is on $v_2$ and $F$ and $I_2$ is on $v_1$, thus translating the pattern by one node. We call these 5 consecutive rounds where the pattern translates starting from a set of 2 adjacent nodes, say $v_0,~v_1$, to the next two adjacent nodes, say $v_1,~v_2$, a \textit{sub-phase} in the current phase. The description of the 5 consecutive rounds i.e., a sub-phase without the intervention of the BBH at $\frakb$ (such that $\frakb\in\{v_0,v_1,v_2\}$) are as follows.

\noindent\underline{\textbf{Round 1:}} $L$ moves to $v_2$ from $v_1$.

\noindent\underline{\textbf{Round 2:}} $I_2$ moves to $v_1$ from $v_0$ and $L$ moves to $v_1$ back from $v_2$.

\noindent\underline{\textbf{Round 3:}} $I_2$ moves back to $v_0$ from $v_1$ to meet with $F$. Also, $L$ moves back to $v_2$ from $v_1$.

\noindent\underline{\textbf{Round 4:}} $F$ and $I_2$ moves to $v_1$ from $v_0$ together.

\noindent\underline{\textbf{Round 5:}} $I_1$ moves to $v_2$ from $v_1$ to meet with $L$.

\noindent So after completion of round 5 the pattern is translated from nodes $v_0$, $v_1$ to $v_1$, $v_2$. The pictorial description of the translate pattern is explained in Fig. \ref{fig:translate-pattern-MainVersion}.

\begin{figure}[h!]
    \centering
    \begin{minipage}{0.48\textwidth}
        \centering
        \includegraphics[width=0.7\textwidth]{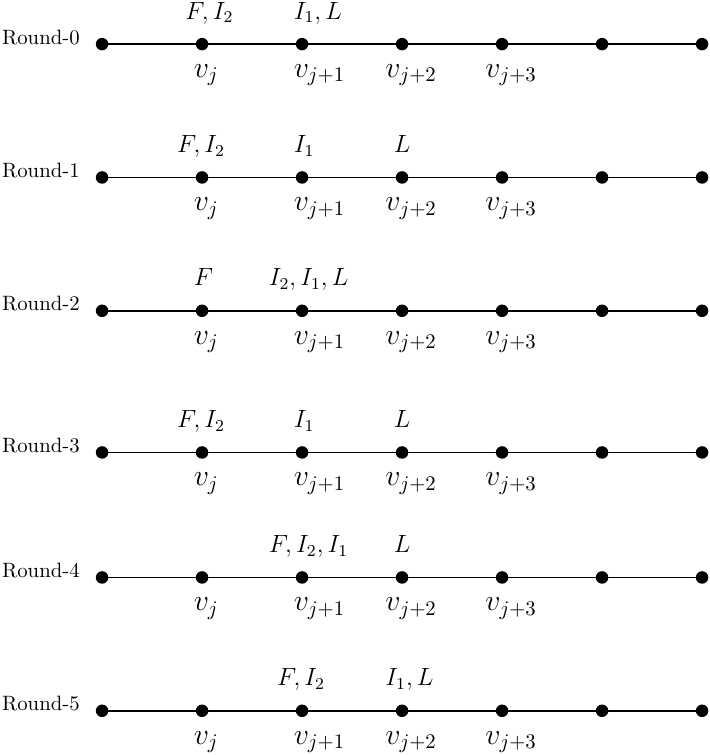}
        \caption{Depicts translating pattern steps}
        \label{fig:translate-pattern-MainVersion}
    \end{minipage}
    \hfill
    \begin{minipage}{0.48\textwidth}
        \centering
        \includegraphics[width=0.6\textwidth]{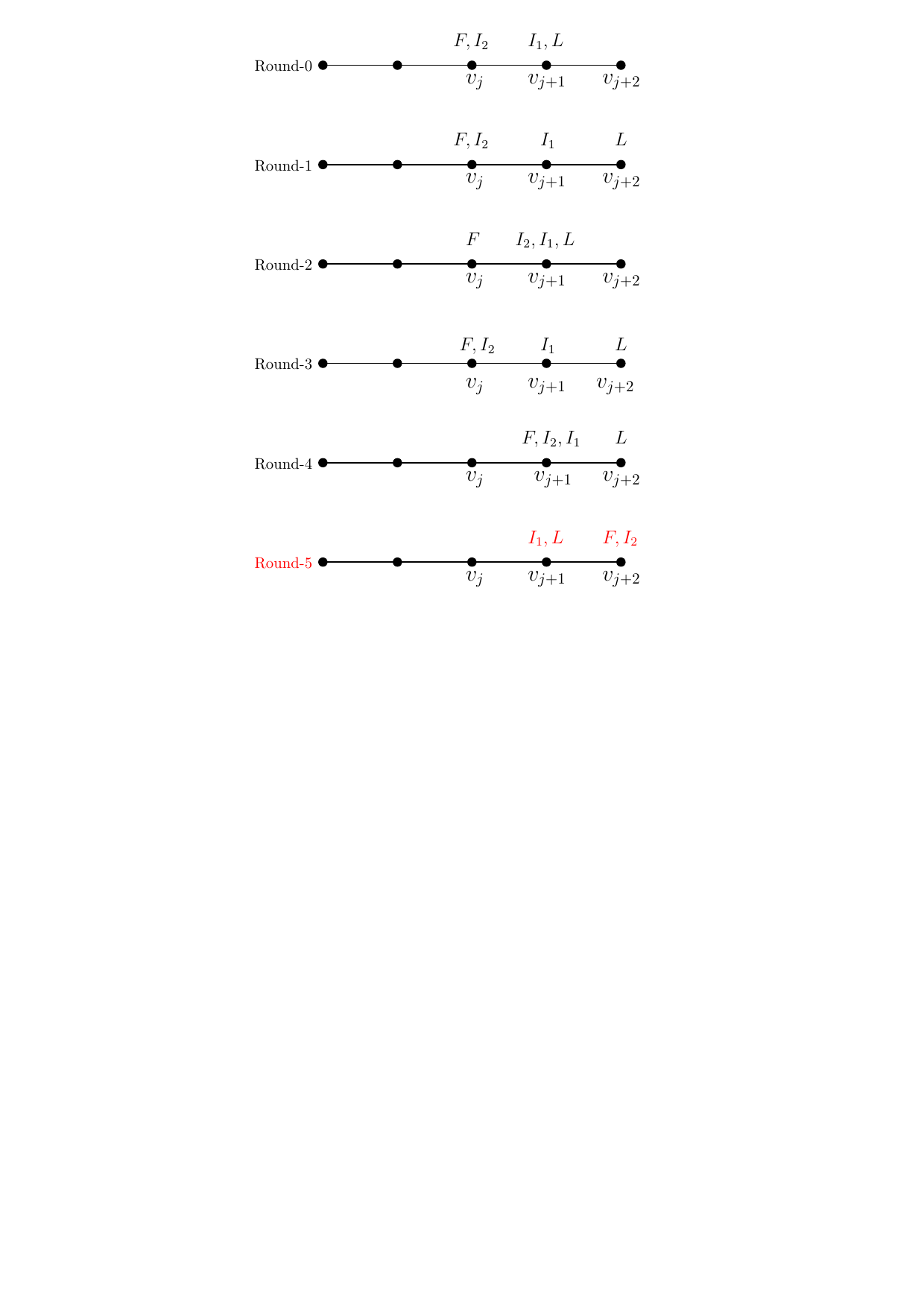}
        \caption{Depicts the step-wise interchange of roles, when the agents reach the end of the path graph, while performing translate pattern}
        \label{fig:translate-pattern-flip-MainVersion}
    \end{minipage}%
\end{figure}
Now, suppose $v_2$ is the node upto which the pattern was supposed to translate at the current phase (or, it can be the end of the path graph also). So, when $L$ visits $v_2$ for the first time, in round 1 of  some sub-phase it knows that it has reached the end of the path graph for the current phase. Then in the same sub-phase at round 2 it conveys this information to $I_2$ and $I_1$. In round 3 of the same sub-phase, $F$ gets that information from $I_2$. So at the end of the current sub-phase all agents has the information that they have explored either one end of the path graph or the node upto which they were supposed to explore in the current phase. In this case, they interchange the roles as follows: the agent which was previously had role $L$ changes role to $F$, the agent having role $F$ changes it to $L$, $I_1$ changes role to $I_2$ and $I_2$ changes role to $I_1$ (refer to Fig. \ref{fig:translate-pattern-flip-MainVersion}). Then from the next sub-phase onwards they start translating the pattern towards $h$. It may be noted that, once $L$ (previously $F$) reaches $h$ in round 1 of a sub-phase, it conveys this information to the remaining agents in similar manner as described above. So, at round 5 of this current sub-phase, $F$ (previously $L$) and $I_2$ (previously $I_1$) also reaches $h$, and meets with $L$ (previously $F$) and $I_1$ (previously $I_2$). We name the exact procedure as \textsc{Translate\_Pattern}.

\noindent \textbf{Intervention by the BBH:} 
Now we describe the behaviour of the agents while the BBH kills at least one exploring agent during create pattern or, during translate pattern. We first claim that, however the BBH intervenes while the four agents are executing creating pattern or translating pattern, there must exists at least one agent, say $A_{alive}$ which knows the exact location of the BBH. The justification of the claim is immediate if we do case by case studies fixing a BBH position and the round it is intervening with the agents. This can be during creating pattern or in a particular sub-phase of translating pattern. We have added this study in Appendix \ref{Appendix: Path Alg}. 
Now, there can be two cases:  

\noindent\textbf{Case-I :} Let $A_{alive} \in C_1$ ($C_1$ being the connected component of $G-\frakb$ that contains $h$) then it can perpetually explore every vertex in $C_1$ satisfying the requirement. 

\noindent\textbf{Case-II :} Let $A_{alive} \in C_2$ ($C_2$ being the connected component of $G-\frakb$ such that $h \notin C_2$).  In this case $A_{alive}$ places itself to the adjacent node of the BBH on $C_2$.
Let $T_i$ be the maximum time, required for the 4 agents (i.e., $L,~I_1,~I_2$ and $F$) to return back to $h$ in $i-$th phase if BBH does not intervene. Let us denote the waiting agents at $h$, i.e., $a_4,~a_5$ as $F_1,~F_2$. Starting from the $i$-th phase, they wait for $T_i$ rounds for the other agents to return. Now, if the set of agents $L,~I_1,~I_2$ and $F$ fail to return back to \emph{home} within $T_i$ rounds in phase, $i$, the agents $F_1$ and $F_2$ starts moving \textit{cautiously}. In the cautious move, first $F_1$ visits the next node and in the next round, returns back to the previous node to meet with $F_2$, that was waiting there for $F_1$. If $F_1$ fails to return then $F_2$ knows the exact location of $\frakb$, which is the next node $F_1$ visited. In this case, $F_2$ remains in the component of $h$, hence can explore it perpetually. Otherwise, if $F_1$ returns back to $F_2$, then in the next round both of them moves together to the next node.

\begin{figure}
    \centering
\includegraphics[width=0.7\linewidth]{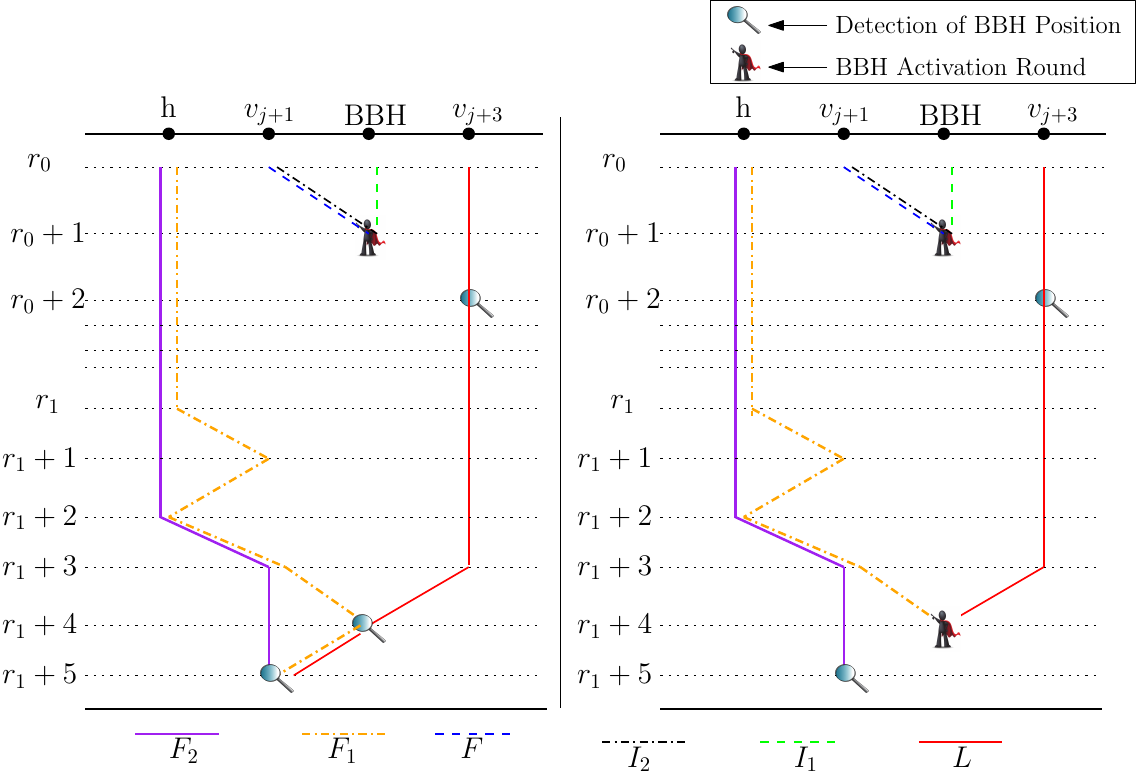}
    \caption{Represents the time diagram, in which at least one among $F_1$ and $F_2$ detects $\frakb$, and perpetually explores the path graph.}
    \label{fig:perpetuallyexplore-timediagram-MainVersion}
\end{figure}

 It may be noted that, $A_{alive}$ knows which phase is currently going on and so it knows the exact round at which $F_1$ and $F_2$ starts moving cautiously. Also, it knows the exact round at which $F_1$ first visits $\frakb$, say at round $r$. $A_{alive}$ waits till round $r-1$, and at round $r$ it moves to $\frakb$. Now at round $r$, if adversary activates $\frakb$, it destroys both $F_1$ and $A_{alive}$ then, $F_1$ fails to return back to $F_2$ in the next round. This way, $F_2$ knows the exact location of $\frakb$, while it remains in $C_1$. So it can explore $C_1$ by itself perpetually. The right figure in Fig. \ref{fig:perpetuallyexplore-timediagram-MainVersion}, represents the case where $L$ detects $\frakb$ at round $r_0+2$, and waits till round $r_1+3$. In the meantime, at round $r_1$ (where $r_1=r'_0+T_i$, $r'_0<r_0$ and $r'_0$ is the first round of phase $i$), $F_1$ and $F_2$ starts moving cautiously. Notably, along this movement, at round $r_1+4$, $F_1$ visits $\frakb$, and at the same time $L$ as well visits $\frakb$ from $v_{j+3}$. The adversary activates $\frakb$, and both gets destroyed. So, at round $r_1+5$, $F_2$ finds failure of $F_1$'s return and understands the next node to be $\frakb$, while it is present in $C_1$. Accordingly, it perpetually explores $C_1$.

On the other hand, if at round $r$, adversary doesn't activate $\frakb$, then $F_1$ meets with $A_{alive}$ and knows that they are located on the inactivated $\frakb$. In this case they move back to $C_1$ and starts exploring the component $C_1$, avoiding $\frakb$. The left figure in Fig. \ref{fig:perpetuallyexplore-timediagram-MainVersion} explores this case, where $L$ detects the position of $\frakb$ at round $r_0+2$, and stays at $v_{j+3}$ until $r_1+3$, then at round $r_1+4$, when $F_1$ is also scheduled to visit $\frakb$, $L$ also decides to visit $\frakb$. But, in this situation, the adversary does not activate $\frakb$ at round $r_1+4$, so both $F_1$ and $L$ meets, gets the knowledge from $L$ that they are on $\frakb$. In the next round, they move to $v_{j+1}$ which is a node in $C_1$, where they meet $F_2$ and shares this information. After which, they perpetually explore $C_1$.

The correctness of the algorithm follows from the description of the algorithm and the case by case study of all possible interventions by the BBH, which can be found in Appendix~\ref{Appendix: Path Alg}. We have thus established the sufficiency part of Theorem~\ref{Thm: necessity and sufficiency of 6 agents for perpExploration-BBH-Home path}.

  \begin{note}
  It may be noted that, our algorithm \textsc{Path\_PerpExplore-BBH-Home}, without $F_1$ and $F_2$, is sufficient to solve \pbmPerpExpl. It is because, if $A_{alive}$ in $C_i$ ($i \in \{1,2\}$) after one intervention by the BBH, it can perpetually explore $C_i$ without any help from $F_1$ and $F_2$ as it knows the exact location of the BBH (refer to the correctness under \textbf{Intervention by the BBH} in Appendix \ref{Appendix: Path Alg}). This proves the sufficiency part of Theorem~\ref{Thm: necessity and sufficiency of 4 agents for perpExploration-BBH path}.
\end{note}

\begin{remark} \label{rem:subsume 5 agent ring alg}
    Note that our 4-agent \pbmPerpExpl\ algorithm for paths directly improves the 5-agent algorithm for rings with knowledge of the 
    network size, given in~\cite{DBLP:conf/opodis/GoswamiBD024}, in the \emph{face-to-face} model with initially co-located agents. Indeed, the 4-agent pattern can keep moving around the ring as if in a path graph. Naturally, it will never reach a node of degree~$1$. If and when the BBH destroys an agent, then, as we showed, at least one agent remains alive and knows the position of the BBH. As the agent knows the size of the ring in this case, it can perform perpetual exploration of the non-malicious ring nodes, without ever visiting the BBH.
\end{remark}%

  \subsubsection{Modification of the path algorithms to work in trees}

We modify the algorithms for path graphs of Section~\ref{section: Path Graph Algorithm Description} to work for trees, with the same number of agents. The algorithms for trees work by translating the pattern from one node to another by following the $k-\texttt{Increasing-DFS}$ \cite{fraigniaud2005graph} algorithm up to a certain number of nodes in a certain phase.
  The details of the modification can be found in Appendix \ref{Appendix: Tree Alg}.


\section{Perpetual exploration in general graphs}

In this section, we establish upper and lower bounds on the optimal number of agents required to solve Perpetual Exploration in arbitrary graphs with a BBH, without any initial knowledge about the graph. In particular, we give a lower bound of~$2\Delta-1$ agents for \pbmPerpExpl\ (Theorem~\ref{thm: equivalent statement} below), which carries over directly to \pbmPerpExplHome, and an algorithm for \pbmPerpExplHome\ using $3\Delta+3$ agents (Theorem~\ref{theorem:Final3DeltaUpperBound-mainversion} below), which also solves \pbmPerpExpl.






We give a brief sketch of the lower bound proof in Section~\ref{subsection: lower bound general graph}, and we present the algorithm in Section~\ref{subsection: Algorithm description general graph}.

\subsection{Lower bound in general graphs}\label{subsection: lower bound general graph}


For the lower bound, we construct a particular class of graphs in which any algorithm using $2\Delta-2$ agents or less must fail. The complete proof can be found in Appendix~\ref{appendix: lower bound proof General BBH}.

\begin{theorem}\label{thm: equivalent statement}
For every $\Delta\geq 4$, there exists a class of graphs~$\calG$ with maximum degree~$\Delta$, such that any algorithm using at most~$2\Delta-2$ agents, with no initial knowledge about the graph, fails to solve \pbmPerpExpl\ in at least one of the graphs in~$\calG$.
\end{theorem}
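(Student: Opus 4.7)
The plan is to prove Theorem~\ref{thm: equivalent statement} by exhibiting a family of graphs~$\calG$ of maximum degree~$\Delta$ together with an online adversary that, for any algorithm~$\calA$ using at most~$2\Delta-2$ agents, produces a specific instance of~$\calG$ on which $\calA$ fails. The graph is built incrementally: the adversary reveals new structure only in response to the agents' probes and, in parallel, commits or defers the BBH position so as to maintain as much ambiguity as possible. A convenient skeleton places the home~$h$ at the end of a short trunk leading to a central node~$c$ of degree~$\Delta$; the $\Delta-1$ edges of~$c$ that do not go back toward~$h$ each lead into a path that the adversary can extend arbitrarily. The BBH is reserved for the first interior node of one of these $\Delta-1$ branches, with its branch left uncommitted.

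The core of the proof is an online bookkeeping argument. I would maintain a set~$S$ of branches that, given the history so far, are still indistinguishable from the branch hosting the BBH, and the adversary preserves $|S|\geq 2$ throughout. The adversary grows a branch only when an agent actually attempts to push past its current endpoint, and it never activates the BBH unless there is a choice of its location that (i) is consistent with every local view observed so far by an agent that has returned to~$h$ and (ii) leaves enough surviving agents stranded to keep ambiguity alive. Building on this, I would formalize the intuition that safely probing a branch in~$S$ requires a \emph{cautious} protocol keeping at least two agents invested in that branch at every round where exploration of it is in progress: one to cross the gate next to~$c$, and a second to observe its potential loss and ferry that information back to the trunk. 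Because the BBH is Byzantine, even a branch whose first node has been crossed many times cannot be certified safe, so no amount of past success reduces this two-agent requirement.

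Combining this invariant with a pigeonhole argument on the $2\Delta-2$ available agents should yield the contradiction. At any round in which $\calA$ is driving an agent outward along one of the $\Delta-1$ branches in~$S$, that excursion borrows one of the two agents committed to that branch; with only $2\Delta-2$ agents in total, either another branch in~$S$ drops below the two-agent threshold, or at most one agent is present in the trunk and can therefore be isolated. In the first case the adversary activates the BBH on the under-invested branch and destroys the lone agent there, restoring ambiguity; in the second it activates the BBH on the branch currently being probed, stranding the outbound agents on the far side of the (now identified) BBH and entirely in the non-home component, where by Definition~\ref{def:correctalg} they can no longer contribute to perpetually exploring the home component. Iterating this drains $\calA$'s agent budget faster than it can resolve~$S$, forcing failure of \pbmPerpExpl.

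The main obstacle I expect is handling the Byzantine nature of the BBH cleanly within the bookkeeping. Classical black-hole lower bounds charge each candidate neighbour of~$c$ a one-shot certification cost, but here that amortisation breaks, because the adversary may postpone its single activation arbitrarily far into the future. The delicate step will be defining a potential function on configurations so that each cautious crossing of a gate decreases the algorithm's available slack rather than permanently resolving the status of any branch, and then checking that the online extension of the branches in~$\calG$ keeps the local view from~$h$ genuinely symmetric across the candidates in~$S$ until the algorithm is forced into a committal move that the adversary can punish.
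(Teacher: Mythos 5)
Your proposal has a genuine gap, and it starts with the choice of graph family. In your spider construction the Byzantine black hole sits at the \emph{first interior node of one branch}, i.e.\ at a fixed, known distance from the central node~$c$. On such a graph a standard cautious walk defeats your adversary with a constant number of agents: one agent steps onto a suspicious node while a second waits one hop behind to witness its return; since at most one agent is ever exposed, a single activation kills a single agent, the witness immediately localizes~$\frakb$, and the survivors perpetually explore the home component. So no invariant of the form ``two agents must be invested in every branch at every round'' can be established --- branches not currently being probed need no coverage, and a probed branch costs at most one \emph{death}, not a permanent two-agent commitment. Your pigeonhole therefore counts the wrong resource (simultaneous presence rather than irrevocable agent loss), and the step you yourself flag as ``delicate'' (the potential function) is exactly the part that cannot be made to work on this topology.

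The paper's construction is engineered precisely to overcome this. The single node~$\frakb$ is made adjacent (within distance at most~$2$) to $\Delta-1$ distinct gateway vertices $v_1,\dots,v_{\Delta-1}$ strung along a path, and each gateway is forced to cost \emph{two} agent deaths: either the algorithm sends two agents together through an unexplored port of~$v_i$ and the adversary labels that port to lead to~$\frakb$, or a lone agent's first $2$-hop excursion from~$v_i$ is routed through either $(\frakb,z)$ or $(w_i,\frakb)$ --- two locally indistinguishable extensions --- so that after the first death the survivors cannot tell whether the first or the second node of that excursion is malicious, and a second sacrifice from~$v_i$ is unavoidable. A further essential ingredient you are missing is the adaptive choice of the inter-gateway distances~$l_i$ (the ``Conflict-Free'' property), which exploits the agents' ignorance of~$n$ to keep the $\Delta-1$ probing episodes informationally independent even though they all touch the same node~$\frakb$. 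Summing $2$ deaths over $\Delta-1$ gateways exhausts $2\Delta-2$ agents, which is where the stated bound actually comes from; your outline does not produce this count.
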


\begin{proof}[Proof sketch] For a fixed $\Delta\geq 4$, we construct the corresponding graph class~$\calG$ by taking an underlying path $\mathcal{P}$, consisting of two types of vertices $\{v_i:1\leq i\leq\Delta\}$ and, for each~$i\leq \Delta$, the nodes $\{u^i_j:1\leq j\leq l_i\}$, where $l_i\geq 1$. The nodes $\{u^i_j\}$ form a subpath of length~$l_i+1$ connecting~$v_i$ to $v_{i+1}$. 
The nodes $v_i$ (for $\in\{1,2,\dots,\Delta-1\}$) are special along $\mathcal{P}$, because they are connected to the BBH~$\frakb$ either directly or via a new node~$w_i$. Every node of~$\calP$, as well as~$\frakb$, now completes its degree up to~$\Delta$ by connecting to at most~$\Delta-1$  trees of height~$2$ (see example in Figure~\ref{fig:example graph of G class}). Every node~$w_i$ completes its degree up to~$\Delta$ by connecting to $\Delta-2$ new nodes.

Note that, to reach $\frakb$ from a vertex of the form $u^i_j$, it is required to visit either $v_i$ or $v_{i+1}$. In the example of Figure~\ref{fig:example graph of G class}, we have $\Delta=4$, $l_1=l_2=2$ and $l_3=1$, so in $\mathcal{P}$ there are two vertices $u^1_1,u^1_2$ between~$v_1$ and~$v_2$, two vertices $u^2_1,u^2_2$ between $v_2$ and $v_3$, and one vertex $u^3_1$ between $v_3$ and $v_4$.
In addition, $v_2$ is directly connected to $\frakb$ (or BBH) whereas $v_1$ and $v_3$ are connected to $\frakb$ via a path of length 2.

\begin{figure}
    \centering
\includegraphics[width=0.6\linewidth]{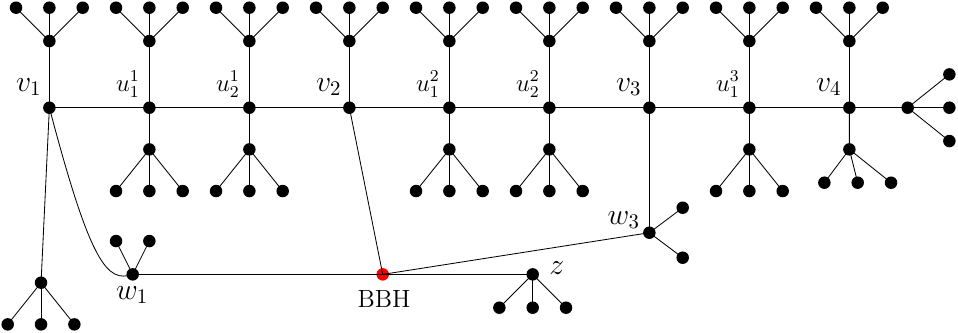}
    \caption{An example graph of the class $\calG$, used in the proof of Theorem~\ref{thm: equivalent statement}.}
    \label{fig:example graph of G class}
\end{figure}

Given an algorithm $\calA$ which claims to solve \pbmPerpExpl\ with $2\Delta-2$ co-located agents, the adversary  returns a port-labeled graph $G=(V,E,\lambda)\in\mathcal{G}$ in which $\mathcal{A}$ fails, by choosing the lengths~$l_i$ and the connection between $v_i$ and $\frakb$ (direct or through $w_i$). A high-level idea of the proof is as follows. First, it can be ensured that, $\mathcal{A}$ must instruct at least one agent to visit a node which is at 2 hop distance from $v_i$, for all $i\in\{1,2,\dots,\Delta-1\}$. Let the agent which first visits a node which is at 2 hop distance from $v_i$, starts from $v_i$ at time $t_i$. \textbf{Case-1}: Now, based on $\mathcal{A}$, if at $t_i$, at least 2 agents are instructed to move from $v_i$ along the same port, then the adversary chooses a graph from $\mathcal{G}$, such that $v_i$ is connected to $\frakb$, and returns a port labeling, such that at $t_{i}+1$, these agents reach $\frakb$. \textbf{Case-2}: Otherwise, the adversary chooses a graph from $\mathcal{G}$ such that, either $v_i$ is connected to $\frakb$ or there exists an intermediate vertex $w_i$ connecting $v_i$ with $\frakb$. In this case also, the adversary returns the port labeling such that, the agent must move to $\frakb$ (or $w_i$) at $t_{i}+1$ and to some vertex $z$ (or $\frakb$) at $t'_i$ (where $t'_i>t_i+1$).

In addition, the graph chosen by the adversary sets the distance between $v_{i-1}$ and $v_i$ in a way to ensures that, within the time interval $[t_i,t’_i]$, no agent from any $v_{\alpha}$ ($\alpha\in\{0,1,\dots, i-1\}$) attempts to visit $\frakb$. So, for \textbf{Case-1}, two agents directly gets destroyed. For \textbf{Case-2}, after the first agent is destroyed, remaining agents do not understand which among the next 2 nodes from $v_i$ is $\frakb$. Now, for $\mathcal{A}$ to succeed, at least one other agent from $v_i$ gets destroyed, at some round $t''_i$. This shows that from each $v_i$, for all $i\in\{1,2,\dots,\Delta-1\}$ at least 2 agents each, are destroyed. Thus, $\mathcal{A}$ fails. 
\end{proof}

\subsection{Description of Algorithm \textmd{\textsc{Graph\_PerpExplore-BBH-Home}}}\label{subsection: Algorithm description general graph}

Here we discuss the algorithm, termed as \textsc{Graph\_PerpExplore-BBH-Home} that solves \textsc{PerpExploration-BBH-Home} on a general graph, $G=(V,E,\lambda)$. We will show that our algorithm requires at most $3\Delta+3$ agents. Let $\left\langle G,3\Delta+3,h,\frakb\right\rangle$ be an instance of the problem~\pbmPerpExplHome, where $G=(V,E,\lambda)$ is a simple port-labeled graph. The structure of our algorithm depends upon four separate algorithms \textsc{Translate\_Pattern} along with \textsc{Make\_Pattern} (discussed in Section \ref{section: Path Graph Algorithm Description}), \textsc{Explore} (explained in this section) and \textsc{BFS-Tree-Construction} \cite{chalopin2010constructing}. So, before going in to details of our algorithm that solves \textsc{PerpExploration-BBH-Home}, we recall the idea of \textsc{BFS-Tree-Construction}.

 An agent starts from a node $h\in V$ (also termed as \textit{home}), where among all nodes in $G$, only $h$ is marked. The agent performs breadth-first search (BFS) traversal, while constructing a BFS tree rooted at $h$. The agent maintains a set of edge-labeled paths, $\mathcal{P}=\{P_v~:~\text{edge labeled shortest path}$ $ \text{from $h$ to $v$,}~\forall v\in V~\text{such that the agent has visited $v$}\}$ while executing the algorithm. During its traversal, whenever the agent visits a node $w$ from a node $u$, then to check whether the node $w$ already belongs to the current BFS tree of $G$ constructed yet, it traverses each stored edge labeled paths in the set $\mathcal{P}$ from $w$ one after the other, to find if one among them takes it 
to the marked node $h$. If yes, then it adds to its map a cross-edge $(u,w)$. Otherwise, it adds to the already constructed BFS tree, the node $w$, accordingly $\mathcal{P}=\mathcal{P}\cup P_w$ is updated. The underlying data structure of \textsc{Root\_Paths} \cite{chalopin2010constructing} is used to perform these processes. This strategy guarantees as per Proposition 9 of \cite{chalopin2010constructing}, that \textsc{BFS-Tree-Construction} algorithm constructs a map of $G$, in presence of a marked node, within $\mathcal{O}(n^3\Delta)$ steps and using $\mathcal{O}(n\Delta\log n)$ memory, where $|V|=n$ and $\Delta$ is the maximum degree in $G$.

 In our algorithm, we use $k$ agents (in Theorem \ref{theorem:Final3DeltaUpperBound-mainversion}, it is shown that $k=3\Delta+3$ agents are sufficient), where they are initially co-located at a node $h\in V$, which is termed as \emph{home}. Initially, at the start our algorithm asks the agents to divide in to three groups, namely, \texttt{Marker}, \texttt{SG} and \texttt{LG}$_0$, where \texttt{SG} (or smaller group) contains the least four ID agents, the highest ID agent among all $k$ agents, denoted as \texttt{Marker} stays at $h$ (hence $h$ acts as a marked node), and the remaining $k-5$ agents are denoted as \texttt{LG}$_0$ (or larger group). During the execution of our algorithm, if at least one member of \texttt{LG}$_0$ detects one port leading to the BBH from one of its neighbor, in that case at least one member of \texttt{LG}$_0$ settles down at that node, acting as an \textit{anchor} blocking that port which leads to the BBH, and then some of the remaining members of \texttt{LG}$_0$ forms \texttt{LG}$_1$. In general, if at least one member of \texttt{LG}$_i$ detects the port leading to the BBH from one of its neighbors, then again at least one member settles down at that node acting as an \textit{anchor} to block that port leading to the BBH, and some of the remaining members of \texttt{LG}$_i$ forms \texttt{LG}$_{i+1}$, such that |\texttt{LG}$_{i+1}|<$|\texttt{LG}$_{i}$|. It may be noted that, a member of \texttt{LG}$_i$ only settles at a node $v$ (say) acting as an \textit{anchor}, only if no other \textit{anchor} is already present at $v$. Also, only if a member of \texttt{LG}$_i$ settles as an \textit{anchor}, then only some of the members of \texttt{LG}$_i$ forms \texttt{LG}$_{i+1}$.
 
 In addition to the groups \texttt{LG}$_0$ and \texttt{SG}, the \texttt{Marker} agent permanently remains at $h$. In a high-level the goal of our \textsc{Graph\_PerpExplore-BBH-Home} algorithm is to create a situation, where eventually at least one agent blocks, each port of $C_1$ that leads to the BBH (where $C_1,C_2,\dots,C_t$ are the connected components of $G-\frakb$, such that $h\in C_1$), we term these blocking agents as \textit{anchors}, whereas the remaining alive agents must perpetually explore at least $C_1$.

Initially from $h$, the members of \texttt{SG} start their movement, and the members of \texttt{LG}$_0$ stays at $h$ until they find that, none of the members of \texttt{SG} reach $h$ after a certain number of rounds. Next, we explain one after the other how both these groups move in $G$.

\noindent\textbf{Movement of \texttt{SG}}: The members (or agents) in \texttt{SG} works in phases, where in each phase the movement of these agents are based on the algorithms \textsc{Make\_Pattern} and \textsc{Translate\_Pattern} (both of these algorithms are described in Section \ref{section: Path Graph Algorithm Description}). Irrespective of which, the node that they choose to visit during \textit{making pattern} or \textit{translating pattern} is based on the underlying algorithm \textsc{BFS-Tree-Construction}.

More specifically, the $i$-th phase (for some $i>0$) is divided in two sub-phases: $i_1$-th phase and $i_2$-th phase. In the $i_1$-th phase, the members of \texttt{SG} makes at most $2^i$ translations, while executing the underlying algorithm \textsc{BFS-Tree-Construction}. Next, in the $i_2$-th phase, irrespective of their position after the end of $i_1$-th phase, they start translating back to reach $h$. After they reach $h$ during the $i_2$-th phase, they start $(i+1)$-th phase (which has again, ${(i+1)}_1$ and ${(i+1)}_2$ sub-phase). Note that, while executing $i_1$-th phase, if the members of \texttt{SG} reach $h$, in that case they continue executing $i_1$-th phase. We already know as per ~Section \ref{section: Path Graph Algorithm Description}, each translation using \textsc{Translate\_Pattern} requires 5 rounds and for creating the pattern using \textsc{Make\_Pattern} it requires 2 rounds. This concludes that, it requires at most ${Ti}_j=5\cdot 2^i+2$ rounds to complete $i_j$-th phase, for each $i>0$ and $j\in\{1,2\}$.

If at any point, along their traversal, the adversary activates the BBH, such that it interrupts the movement of \texttt{SG}. In that scenario, at least one member of \texttt{SG} must remain alive, exactly knowing the position of the BBH from its current node (refer to the discussion of \textbf{Intervention by the BBH} in Section \ref{section: Path Graph Algorithm Description} and Appendix \ref{Appendix: General Graph Algorithm}). The agent (or agents) which knows the exact location of the BBH, stays at the node adjacent to the BBH, such that from its current node, it knows the exact port that leads to the BBH, or in other words they act as \textit{anchors} with respect to one port, leading to the BBH. In particular, let us suppose, the agent holds the adjacent node of BBH, with respect to port $\alpha$ from BBH, then this agent is termed as \texttt{Anchor($\alpha$)}.

\noindent\textbf{Movement of \texttt{LG}$_0$}: These group members stay at $h$ with \texttt{Marker}, until the members of \texttt{SG} are returning back to $h$ in the $i_2$-th phase, for each $i>0$. If all members of \texttt{SG} do not reach $h$, in the $i_2$-th phase, i.e., within ${Ti}_2$ rounds since the start of $i_2$-th phase, then the members of \texttt{LG}$_0$ start their movement.

Starting from $h$, the underlying movement of the members of \texttt{LG}$_0$ is similar to \textsc{BFS-Tree-Construction}, but while moving from one node to another they do not execute neither \textsc{Make\_Pattern} nor \textsc{Translate\_Pattern}, unlike the members of \texttt{SG}. In this case, if all members of \texttt{LG}$_0$ are currently at a node $u\in V$, then three lowest ID members of \texttt{LG}$_0$ become the explorers, they are termed as $E^0_1$, $E^0_2$ and $E^0_3$ in increasing order of their IDs, respectively. If based on the \textsc{BFS-Tree-Construction}, the next neighbor to be visited by the members of \texttt{LG}$_0$ is $v$, where $v\in N(u)$, then the following procedure is performed by the explorers of \texttt{LG}$_0$, before \texttt{LG}$_0$ finally decides to visit $v$. 

Suppose at round $r$ (for some $r>0$), \texttt{LG}$_0$ members reach $u$, then at round $r+1$ both $E^0_2$ and $E^0_3$ members reach $v$. Next at round $r+2$, $E^0_3$ traverses to the first neighbor of $v$ and returns to $v$ at round $r+3$. At round $r+4$, $E^0_2$ travels to $u$ from $v$ and meets $E^0_1$ and then at round $r+5$ it returns back to $v$. This process iterates for each neighbor of $v$, and finally after each neighbor of $v$ is visited by $E^0_3$, at round $r+4\cdot (\delta_v-1)+1$ both $E^0_2$ and $E^0_3$ returns back to $u$. And in the subsequent round each members of \texttt{LG}$_0$ visit $v$. The whole process performed by $E^0_1$, $E^0_2$ and $E^0_3$ from $u$ is termed as \textsc{Explore}$(v)$, where $v$ symbolizes the node at which the members of \texttt{LG}$_0$ choose to visit from a neighbor node $u$. After the completion of \textsc{Explore}$(v)$, each member of \texttt{LG}$_0$ (including the explorers) visit $v$ from $u$. A pictorial description is explained in Fig. \ref{fig:Explore(v)-mainversion}.
\begin{figure}
    \centering
    \includegraphics[width=0.8\linewidth]{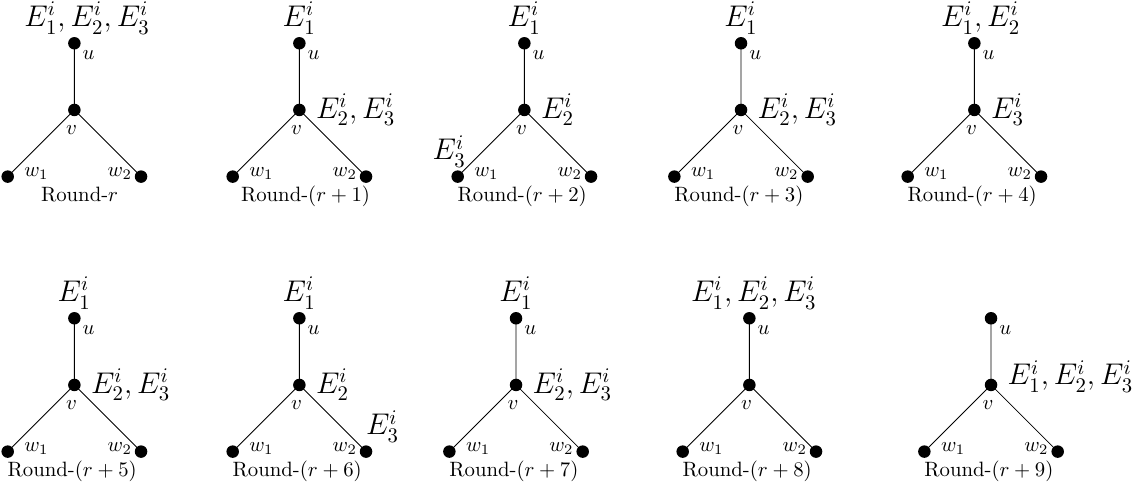}
    \caption{Depicts the round-wise execution of \textsc{Explore}$(v)$ from $u$ by the explorer agents of \texttt{LG}$_i$, for some $i\ge 0$ on the neighbors $w_1$ and $w_2$ of $v$.}
    \label{fig:Explore(v)-mainversion}
\end{figure}

It may be noted that, if the members of \texttt{LG}$_0$ at the node $u$, according to the \textsc{BFS-Tree-Construction} algorithm, are slated to visit the neighboring node $v$, then before executing \textsc{Explore}$(v)$, the members of \texttt{LG}$_0$ checks if there exists an \textit{anchor} agent blocking that port which leads to $v$. If that is the case, then \texttt{LG}$_0$ avoids visiting $v$ from $u$, and chooses the next neighbor, if such a neighbor exists and no \textit{anchor} agent is blocking that edge. If no such neighbor exists from $u$ to be chosen by \texttt{LG}$_0$ members, then they backtrack to the parent node of $u$, and start iterating the same process.

From the above discussion we can have the following remark.
\begin{remark}
    If at some round $t$, the explorer agents of \texttt{LG}$_i$  (i.e., $E^i_1, E^i_2$ and $E^i_3$), are exploring a two length path, say $P= u \rightarrow v \rightarrow w$, from $u$, then all members of \texttt{LG}$_i$ 
 agrees on $P$ at $t$. This is due to the fact that the agents while executing \textsc{Explore}($v$) from $u$ must follow the path $u \rightarrow
 v$ first. Now from $v$, $E^i_3$ chooses the next port in a particular pre-decided order (excluding the port through which it entered $v$). So, whenever it returns back to $v$ to meet $E^i_2$ after visiting a node $w$, $E^i_2$ knows which port it last visited and which port it will chose next and relay that information back to other agents of \texttt{LG}$_i$ on $u$. So, after $E^i_2$ returns back to $v$ again from $u$ when $E^i_3$ starts visiting the next port, all other agents know about it.
 \label{remark: agreement of path chosen-mainversion}
\end{remark}

During the execution of \textsc{Explore}$(v)$ from $u$, the agent $E^0_3$ can face one of the following situations:

\begin{itemize}
    \item It can find an \textit{anchor} agent at $v$, acting as \texttt{Anchor}$(\beta)$, for some $\beta\in\{1,\dots,\delta_v\}$. In that case, during its current execution of \textsc{Explore}$(v)$, $E^0_3$ does not visit the neighbor of $v$ with respect to the port $\beta$.
    \item It can find an \textit{anchor} agent at a neighbor $w$ (say) of $v$, acting as \texttt{Anchor}$(\beta')$, where $\beta'\in\{1,\dots, \delta_w\}$. If the port connecting $w$ to $v$ is also $\beta'$, then $E^0_3$ understands $v$ is the BBH, and accordingly tries to return to $u$, along the path $w\rightarrow v \rightarrow u$, and if it is able to reach to $u$, then it acts as an \textit{anchor} at $u$, with respect to the edge $(u,v)$. On the other hand, if port connecting $w$ to $v$ is not $\beta'$, then $E^0_3$ continues its execution of \textsc{Explore}$(v)$.
\end{itemize}

The agent $E^0_2$ during the execution of \textsc{Explore}$(v)$, can encounter one of the following situations, and accordingly we discuss the consequences that arise due to the situations encountered.

\begin{itemize}
    \item It can find an \textit{anchor} agent at $v$ where the \textit{anchor} agent is not $E^0_3$, in which case it continues to execute \textsc{Explore}$(v)$.
    \item It can find an \textit{anchor} agent at $v$ and finds the \textit{anchor} agent to be $E^0_3$. In this case, $E^0_2$ returns back to $u$, where \texttt{LG}$_1$ is formed, where \texttt{LG}$_1=\texttt{LG}_0\setminus\{E^0_3\}$. Next, the members of \texttt{LG}$_1$ start executing the same algorithm from $u$, with new explorers as $E^1_1$, $E^1_2$ and $E^1_3$.
    \item $E^0_2$ can find that $E^0_3$ fails to return to $v$ from a node $w$ (say), where $w\in N(v)$. In this case, $E^0_2$ understands $w$ to be the BBH, and it visits $u$ in the next round to inform this to remaining members of \texttt{LG}$_0$ in the next round, and then returns back to $v$, and becomes \texttt{Anchor}$(\beta)$, where $\beta\in\{1,\dots,\delta_v\}$ and $\beta$ is the port for $(v,w)$. On the other hand, \texttt{LG}$_0$ after receiving this information from $E^0_2$, transforms to \texttt{LG}$_1$ (where \texttt{LG}$_1=\texttt{LG}_0\setminus\{E^0_2,E^0_3\}$) and starts executing the same algorithm, with $E^1_1$, $E^1_2$ and $E^1_3$ as new explorers.
    
\end{itemize}

Lastly, during the execution of \textsc{Explore}$(v)$ the agent $E^0_1$ can face the following situation. 

\begin{itemize}
    \item $E^0_2$ fails to return from $v$, in this situation $E^0_1$ becomes \texttt{Anchor}$(\beta)$ at $u$, where $\beta$ is the port connecting $u$ to $v$. Moreover, the remaining members of \texttt{LG}$_0$, i.e., \texttt{LG}$_0\setminus\{E^0_1,E^0_2,E^0_3\}$ forms \texttt{LG}$_1$ and they start executing the same algorithm from $u$, with new explorers, namely, $E^1_1$, $E^1_2$ and $E^1_3$, respectively.
\end{itemize}

 For each $E^0_1$, $E^0_2$ and $E^0_3$, if they do not face any of the situations discussed above, then they continue to execute \textsc{Explore}$(v)$. 

A pictorial explanation of two scenarios, of how an \textit{anchor} settles at neighbor nodes of BBH is explained in Fig. \ref{fig:Time-Diagram-Explore-on-path-P-MainVersion}.

 \begin{figure}
    \centering
    \includegraphics[width=0.75\linewidth]{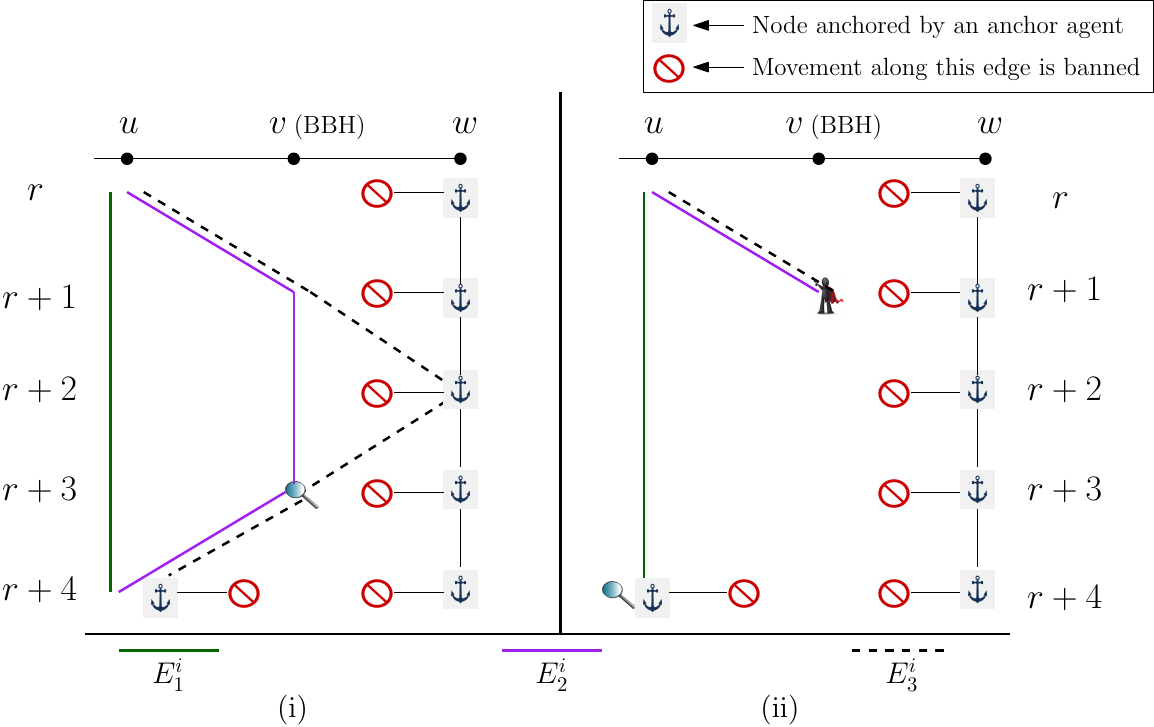}
    \caption{Depicts the time diagram of \textsc{Explore}$(v)$ of \texttt{LG}$_i$ along a specific path $P=u\rightarrow v \rightarrow w$, where $v=\frakb$, in which $w$ contains an \textit{anchor} agent for the edge $(v,w)$. In (i), it depicts even if $\frakb$ is not activated, an explorer agent settles as an \textit{anchor} at $u$ for the edge $(u,v)$. In (ii), it depicts that activation of $\frakb$ destroys both $E^i_2$ and $E^i_3$, then an explorer $E^i_1$ gets settled as \textit{anchor} at $u$ for $(u,v)$.}
    \label{fig:Time-Diagram-Explore-on-path-P-MainVersion}
\end{figure}

\noindent\textbf{Correctness:}
In order to prove the correctness of the algorithm \textsc{Graph\_PerpExplore-BBH-Home}, we give a brief sketch of the proof of the following theorem. The complete sequence of lemmas leading to this final theorem can be found in Appendix~\ref{Appendix: General Graph Algorithm}.

\begin{restatable}{theorem}{ubgeneral}
\label{theorem:Final3DeltaUpperBound-mainversion}
    Algorithm~\textsc{Graph\_PerpExplore-BBH-Home} solves \pbmPerpExplHome\ in arbitrary graphs with $3\Delta+3$ agents, having $\calO(n\Delta\log n)$ memory, without initial knowledge about the graph.
\end{restatable}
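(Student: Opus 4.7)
The plan is to prove Theorem~\ref{theorem:Final3DeltaUpperBound-mainversion} by combining a single-intervention invariant, a budget accounting, and a perpetual-exploration argument once all BBH-incident edges of~$C_1$ are anchored. First, I would establish the invariant that whenever the BBH is activated to destroy at least one agent (either during \textsc{Make\_Pattern}/\textsc{Translate\_Pattern} by \texttt{SG} or during an \textsc{Explore}$(v)$ by some \texttt{LG}$_i$), at least one surviving agent knows the exact identity of~$\frakb$ and settles as an \texttt{Anchor}$(\beta)$ at a node adjacent to~$\frakb$, blocking the correct port~$\beta$. For \texttt{SG} this is inherited verbatim from the path-algorithm analysis, since the four \texttt{SG} agents execute the same pattern translation along a sequence of edges of the BFS walk. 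For \textsc{Explore}$(v)$, the argument relies on Remark~\ref{remark: agreement of path chosen-mainversion}: all members of \texttt{LG}$_i$ agree at every round on the two-length probed path $u \to v \to w$, so the pattern of which explorer fails to return uniquely identifies $v$ or $w$ as the BBH, and the algorithm's case distinction designates the correct surviving agent as the new anchor.

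I expect this first step to be the main technical obstacle, since the case analysis must also handle the subtler interactions prescribed in the algorithm description (a preexisting anchor encountered at~$v$ or at a neighbor of~$v$, combined with a fresh BBH activation), and must certify both survival of the decoding agent and that every placed anchor sits on a true BBH-neighbor with the right port. Having done that, the second step is a straightforward accounting argument. Each anchor placement by \texttt{LG}$_i$ costs at most three agents of the current pool (the new anchor plus up to two destroyed explorers), while a \texttt{SG}-intervention costs at most four agents but yields at most one anchor. Because~$\frakb$ has degree at most~$\Delta$, the component~$C_1$ is incident to at most~$\Delta$ BBH-edges if $\frakb$ is not a cut vertex, and to at most $\Delta-1$ BBH-edges otherwise (as the remaining components consume at least one edge each). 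A case split on whether \texttt{SG}'s anchor lands in $C_1$ or in another component shows that in every case, at most $3(\Delta-1)$ agents from \texttt{LG}$_0$ are needed to anchor all remaining $C_1$-side BBH-edges. Together with $1$ \texttt{Marker} kept at~$h$, the $4$ initial \texttt{SG} agents, and at least one surviving \texttt{LG} agent preserved for perpetual exploration, the total budget $1+4+3(\Delta-1)+1 = 3\Delta+3$ matches the initial allocation exactly.

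Finally, once every $C_1$-side edge incident to~$\frakb$ is anchored, the surviving \texttt{LG}$_j$ member (or \texttt{SG} survivor) together with the still-marked root~$h$ maintained by \texttt{Marker} continues to run \textsc{BFS-Tree-Construction} on the subgraph~$C_1$, simply skipping any edge whose endpoint anchor indicates leads to~$\frakb$. Since $C_1$ is connected and finite and no further BBH activation can destroy any agent in $C_1$, Proposition~9 of~\cite{chalopin2010constructing} guarantees that a complete map of $C_1$ is constructed in finite time and perpetually traversed thereafter, satisfying the \pbmPerpExplHome\ specification. The memory bound $\calO(n\Delta\log n)$ follows directly from that of \textsc{BFS-Tree-Construction}, augmented by $\calO(\log n)$ bits for the \texttt{SG} phase counter and $\calO(\log \Delta)$ bits per stored anchor port, all subsumed by $\calO(n\Delta\log n)$.
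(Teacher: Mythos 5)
Your proposal is correct and follows essentially the same route as the paper: an invariant that every BBH intervention (against \texttt{SG} or against an \textsc{Explore} probe) leaves a surviving agent who pinpoints~$\frakb$ and settles as an anchor, a cost of at most three agents per anchored port giving $3(\Delta-1)$ for \texttt{LG}$_0$ plus \texttt{Marker}, the four \texttt{SG} agents, and one surviving explorer for a total of $3\Delta+3$, and finally perpetual BFS exploration of~$C_1$ once all its BBH-incident ports are blocked, with the memory bound inherited from \textsc{BFS-Tree-Construction}. The only difference is one of explicitness: the paper isolates two auxiliary lemmas you leave implicit, namely that \texttt{LG}$_i$ as a group can never step onto~$\frakb$ (using the preexisting \texttt{SG} anchor at distance two to decode~$\frakb$ even without a fresh activation) and that the set~$\mathcal{U}$ of unanchored $C_1$-neighbors of~$\frakb$ strictly shrinks until empty, which is the progress argument your phrase ``once every $C_1$-side edge is anchored'' relies on.
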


\begin{proof}[Proof sketch]
We first show that if $\frakb$ (i.e., BBH) never destroys any agent then the problem \pbmPerpExplHome~is solved only by the four agents in \texttt{SG} (follows from the correctness of \textsc{BFS-Tree-Construction}). On the otherhand if BBH destroys any agent, then we prove that there exists an agent from \texttt{SG} acting as an \textit{anchor} at a node in $N(\frakb)$, where $N(\frakb)$ indicates the neighbors of $\frakb$. The node can be either in $C_1$ or $C_j$, where $G-\frakb=C_1\cup \dots \cup C_t$ for some $t\ge1$ and $home\in C_1$ and $j\neq 1$. Now if we assume, BBH destroys at least one agent then, we define a set $\mathcal{U}$ of nodes in $G$, where $\mathcal{U}\subseteq N(\frakb)\cap C_1$ such that no vertex in $\mathcal{U}$ contains an \textit{anchor}. A node $u\in\mathcal{U}$ ceases to exist in $\mathcal{U}$, whenever an agent visits $u$ and blocks the edge $(u,\frakb)$, by becoming an \textit{anchor}. We show that during the execution of our algorithm, $\mathcal{U}$ eventually becomes empty (in bounded time, refer to Remark \ref{remark: Eventual time for placing anchor} in Appendix \ref{appendix: lower bound proof General BBH}). In addition to that, we also show that, the agents in \texttt{LG}$_i$ (for some $i\ge 0$) never visit a node which is not in $C_1$. By \texttt{LG}$_i$ visiting a node, we mean to say that all agents are located at that node simultaneously. This concludes that, all neighbors in $C_1$ of $\frakb$, gets anchored by an \textit{anchor} agent. So, eventually all the agents, which are neither an \textit{anchor} nor a \texttt{Marker}, can perpetually explore $C_1$. Also, to set up an \textit{anchor} at each neighbor of $\frakb$, at most 2 agents are destroyed. So, this shows that $3(\Delta-1)$ agents are required to \textit{anchor} each neighbor of $C_1$ by the members of \texttt{LG}$_i$ (for any $i\ge 0$). So, in total $3\Delta+3$ agents are sufficient to execute \textsc{Graph\_PerpExplore-BBH-Home}, including \texttt{Marker}, 4 agents in \texttt{SG}$_0$, and one agent which is neither a \texttt{Marker} nor an \textit{anchor}, i.e., the one which perpetually explores $C_1$. Moreover, to execute \textsc{BFS-Tree-Construction} as stated in  \cite[Proposition~9]{chalopin2010constructing}, each agent requires $\calO(n\Delta\log n)$ memory. This concludes the proof. 
\end{proof}

\subsection{Perpetual exploration in presence of a black hole}

In the special case in which the Byzantine black hole is activated in each round, i.e., behaves as a classical black hole, we show that the optimal number of agents for perpetual exploration (we call this problem \textsc{PerpExploration-BH}) drops drastically to between~$\Delta+1$ and~$\Delta+2$ agents.

The lower bound holds even with initial knowledge of~$n$, and even if we assume that agents have knowledge of the structure of the graph, minus the position of the black hole and the local port labelings at nodes.  This can be seen as an analogue, in our model, of the well-known $\Delta+1$ lower bound for black hole search in asynchronous networks~\cite{DBLP:journals/dc/DobrevFPS06}.


 Details of the proof and a discussion of the algorithm that solves the problem with~$\Delta+2$ agents can be found in Appendix~\ref{Appendix: PerpExploration-BH}.

\section{Conclusion}
 We gave the first non-trivial upper and lower bounds on the optimal number of agents for perpetual exploration, in presence of at most one Byzantine black hole in general, unknown graphs.

 One noteworthy point, as regards the related problem of \emph{pinpointing} the location of the malicious node, is that all our algorithms have the property that, even in an execution in which the Byzantine black hole destroys only one agent, all remaining agents manage to determine exactly the position of the malicious node, at least within the component which ends up being perpetually explored. By contrast, this is not the case for the algorithms proposed in~\cite{DBLP:conf/opodis/GoswamiBD024} for \pbmPerpExpl\ in ring networks, as the adversary can time a single activation of the Byzantine black hole, destroying at least one agent, so that the remaining agents manage to perpetually explore the ring, but without ever being able to disambiguate which one of two candidate nodes is the actual malicious node.


A few natural open problems remain. First, close the important gap between~$2\Delta-1$ and~$3\Delta+3$ for the optimal number of agents required for \pbmPerpExpl\ and \pbmPerpExplHome, in general graphs. Second, note that our general graph lower bound of $2\Delta-1$ holds only for graphs of maximum degree at least~$4$ (Theorem~\ref{thm: equivalent statement}). Could there be a $4$-agent algorithm for graphs of maximum degree~$3$? Third, in the special case of a black hole (or, equivalently, if we assume that the Byzantine black hole is activated in each round), close the gap between~$\Delta+1$ and~$\Delta+2$ agents. Fourth, our 4-agent algorithm in paths induced a direct improvement of the optimal number of agents for perpetual exploration in a ring with known~$n$, under the face-to-face communication model, from~5 agents (which was shown in~\cite{DBLP:conf/opodis/GoswamiBD024}) to~4 agents. However, it is still unknown whether the optimal number of agents is~3 or~4 in this case.


Finally, it would be interesting to consider stronger or weaker variants of the Byzantine black hole, and explore the tradeoffs between the power of the malicious node and the optimal number of agents for perpetual exploration. Orthogonally to this question, one can consider different agent communication models or an asynchronous scheduler.

\bibliography{Fullversion}

\clearpage
\appendix

\section{Preliminary notions} \label{Appendix:prelims}

In this section, we discuss some preliminary notions, used to prove Theorem~\ref{thm:impossibility with 3 PerpExplorationBBH Path} in Appendix~\ref{Appendix: Proof of impossibility with 3 PerpExplorationBBH Path} and Theorem~\ref{thm:impossibility with 5 PerpExplorationBBH-Home Path} in Appendix~\ref{Appendix: Proof of mpossibility with 5 PerpExplorationBBH-Home Path}.

For the following, fix a \pbmPerpExpl\ instance $I=\left\langle G,k,h,\frakb \right\rangle$, where $G=(V,E,\lambda)$, let $\calA$ be an algorithm, and fix an execution~$\calE$ of~$\calA$ on~$I$.

\begin{definition}[Destruction Time]\label{definition: destruction time}
 The \emph{destruction time}, denoted by $T_d$, is the first round in which the adversary activates the Byzantine black hole and destroys an agent.
\end{definition}

Note that, for a given execution, $T_d$ can be infinite if $\frakb=\bot$, or if the adversary never chooses to activate the black hole, or if it never destroys any agent.

\begin{definition}[Benign execution]\label{def:benign}
    We say that an execution of~$\calA$ on~$I$ is \emph{benign up to time~$T$} if $T_d > T$. We say that it is \emph{benign} if $T_d$ is infinite. 
\end{definition}

\begin{definition}[Benign continuation]
    The \emph{benign continuation from time~$t_0$} of an execution~$\calE$ of~$\calA$ on~$I$ is an execution which is identical to~$\calE$ up to time~$t_0$, and for~$t > t_0$ the Byzantine black hole is never activated. 
\end{definition}

Note that agents are deterministic and the system is synchronous. Therefore, if~$\frakb=\bot$, then there exists a unique execution of~$\calA$ on~$I$, which, in addition, is benign. If the system does contain a Byzantine black hole, then there exist different executions for different sequences of actions by the adversary. However, the benign execution is still unique in this case.

\begin{definition}[History of an agent]
The \emph{history} of an agent~$a$ at time~$t\geq 1$ is the finite sequence of inputs received by agent~$a$, at the beginning of rounds~$1,\dots,t$.
\end{definition}

\begin{definition}[Consistent instances] \label{def:consistent}
An instance~$I'$ is said to be $(\calE,I,t)$-consistent if there exists an execution~$\calE'$ of~$\calA$ on~$I'$ such that:
\begin{itemize}
    \item At the beginning of round~$t$, the sets of histories of all alive agents in~$\calE$ and in~$\calE'$ are identical.
    \item If $\tilde{\calE}$ and $\tilde{\calE}'$ are the infinite benign continuations of~$\calE$ and~$\calE'$, respectively, from time~$t$, then the suffixes of~$\tilde{\calE}$ and~$\tilde{\calE}'$ starting from time~$t$ are identical.    
\end{itemize}

We denote by $\cons(\calE,I,t)$ the set of all $(\calE,I,t)$-consistent instances.
\end{definition}

The usefulness of Definition~\ref{def:consistent} lies in the fact that agents with a certain history in a given execution in a given instance can be swapped into any other consistent instance and, in a benign continuation, behave exactly as they would in the original instance. We use this in the lower bound proofs of Appendices~\ref{Appendix: Proof of impossibility with 3 PerpExplorationBBH Path} and~\ref{Appendix: Proof of mpossibility with 5 PerpExplorationBBH-Home Path}.

\section{Proof of Theorem \ref{thm:impossibility with 3 PerpExplorationBBH Path}}\label{Appendix: Proof of impossibility with 3 PerpExplorationBBH Path}

In this section we establish the basic impossibility result for \pbmPerpExpl~in turn prove Theorem \ref{thm:impossibility with 3 PerpExplorationBBH Path}: Three agents cannot solve \pbmPerpExpl\ in sufficiently large paths, even assuming knowledge of the size of the graph (Theorem~\ref{thm:3 agents not sufficient}).

\begin{lemma} \label{lem:impossibility:oneagent-twosusp}
    If $\calA$ solves \pbmPerpExpl\ with~$k_0$ agents, then for every instance~$I$ on a path graph with~$k\geq k_0$ agents, there is no execution~$\calE$ of~$\calA$ on~$I$ such that all of the following hold:
    \begin{itemize}
        \item At some time~$t \geq 1$, exactly one agent~$a$ remains alive.
        \item $\cons(\calE,I,t)$ contains instances $I_1=\left\langle P,h,k,\frakb_1\right\rangle$ and~$I_2=\left\langle P,h,k,\frakb_2\right\rangle$, where~$P$ is a port-labeled path and $\frakb_1,\frakb_2$ are two distinct nodes of~$P$.
    \end{itemize}
\end{lemma}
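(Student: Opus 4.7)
The plan is to argue by contradiction. Suppose such an execution $\calE$ exists, with a single remaining agent $a$ at time $t$, and two instances $I_1,I_2\in\cons(\calE,I,t)$ with distinct BBH locations $\frakb_1,\frakb_2$ on the same path~$P$. By Definition~\ref{def:consistent}, there exist executions $\calE_1,\calE_2$ of $\calA$ on $I_1,I_2$ whose benign continuations from time~$t$ coincide with the benign continuation $\tilde\calE$ of $\calE$ from time~$t$. In particular, in all three benign continuations from time~$t$, the lone surviving agent $a$ follows exactly the same deterministic trajectory.

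Let $T$ be the set of nodes visited by $a$ infinitely often in this common trajectory. Since $a$ moves along the edges of a path, two nodes can only both be visited infinitely often if every node between them is also crossed infinitely often, so $T$ is a contiguous sub-path of~$P$. Now, because $\calA$ must solve \pbmPerpExpl\ on $I_1$ for every adversarial strategy, and in particular for the strategy obtained by extending $\calE_1$ benignly after time~$t$, the trajectory of $a$ in $\tilde\calE$ must perpetually explore some connected component of $P-\frakb_1$; hence $T$ must contain some component of $P-\frakb_1$. By the same argument with $I_2$, $T$ must also contain some component of $P-\frakb_2$.

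Assume WLOG $\frakb_1=v_{i_1}$ and $\frakb_2=v_{i_2}$ with $i_1<i_2$, and write $L_j,R_j$ for the left/right components of $P-\frakb_j$. A short case analysis on which component of each $P-\frakb_j$ is contained in the interval $T$ shows that $T$ must contain at least one of $\frakb_1,\frakb_2$: if $T\supseteq L_1$ and $T\supseteq L_2$, then $T$ is a left-prefix of~$P$ containing $L_2\ni v_{i_1}=\frakb_1$; symmetrically, if $T\supseteq R_1$ and $T\supseteq R_2$, then $T$ contains $R_1\ni v_{i_2}=\frakb_2$; and in each of the two mixed cases $T$ contains a prefix and a suffix of $P$ whose union is all of~$V(P)$, so $T=V(P)$ and contains both BBH nodes. (Degenerate cases where one component is empty only make the exploration obligation more restrictive and do not affect the argument.)

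WLOG, assume $\frakb_1\in T$. Consider now the adversary strategy on~$I_1$ that agrees with $\calE_1$ up to time~$t$ (valid by consistency), and that after time~$t$ activates the BBH the first time agent~$a$ enters $\frakb_1$. Such a time exists by definition of~$T$, and at that round $a$ is destroyed. From then on, no agent is alive, so no component of $P-\frakb_1$ can be perpetually explored, contradicting the assumption that $\calA$ solves \pbmPerpExpl\ on~$I_1$. The main obstacle in turning this sketch into a formal proof is simply to state cleanly the consistency-based swap between executions on $I$, $I_1$, and $I_2$, and to verify the case analysis on~$T$, including the boundary situations where $i_1=0$ or $i_2=n-1$.
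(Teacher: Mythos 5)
Your proof is correct and takes essentially the same approach as the paper: both arguments boil down to the fact that a lone agent on a path cannot perpetually explore a component of $P-\frakb_1$ and of $P-\frakb_2$ without visiting one of $\frakb_1,\frakb_2$, at which point the adversary activates that node and destroys it (the paper phrases this as a case split on whether the agent ever visits a candidate node, while you derive the visit as a consequence of the exploration obligation via the infinitely-visited interval~$T$). One tiny slip: in the mixed case $T\supseteq L_1$ and $T\supseteq R_2$ the union $L_1\cup R_2$ is \emph{not} all of $V(P)$, but your conclusion $T=V(P)$ still follows immediately from the contiguity of~$T$, which you had already established.
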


\begin{proof}
For a contradiction, let~$\calE$ be an execution of~$\calA$ on~$I$ that satisfies both conditions simultaneously at time~$t\geq 1$, and let~$h_a$ be the history of the only remaining alive agent~$a$ at time~$t$. 
By consistency, there exist executions~$\calE_1$ on~$I_1$ and~$\calE_2$ on~$I_2$ such that, under both, the history of~$a$ at time~$t$ is exactly~$h_a$. Moreover, the benign continuations~$\tilde{\calE}_1,\tilde{\calE}_2$ share a common suffix starting from time~$t$.

We distinguish the following cases:
\begin{itemize}
    \item In $\tilde{\calE}_1$ and $\tilde{\calE}_2$, at some $t'\geq t$, agent~$a$ visits~$\frakb_1$ (resp.\ $\frakb_2$). Then, the algorithm fails in instance~$I_1$ (resp.\ $I_2$) by activating the Byzantine black hole at time~$t'$.

    \item In $\tilde{\calE}_1$ and $\tilde{\calE}_2$,  agent~$a$ never visits~$\frakb_1$ or $\frakb_2$ at any time~$t'\geq t$. Let~$p$ be the position of the agent at time~$t$. If~$\frakb_1$ and~$\frakb_2$ are on the same side of~$p$, and $\frakb_1$ (resp.\ $\frakb_2$) is the farthest from~$p$, then the algorithm fails on instance~$I_1$ (resp.\ $I_2$) because, after time~$t$, it never visits node~$\frakb_2$ (resp.\ $\frakb_1$), which is in the same component as the agent and is not the Byzantine black hole. If $p$ is between $\frakb_1$ and~$\frakb_2$, then the algorithm fails on both instances~$I_1$ and~$I_2$. To see why, consider the instance~$I_1$: In~$I_1$, $\frakb_2$ is not the Byzantine black hole and it is in the same component as the agent, but is never visited after time~$t$. The reasoning is similar for instance~$I_2$.
\end{itemize}
In all cases, we obtain a contradiction with the correctness of algorithm~$\calA$.
\end{proof}

\begin{lemma} \label{lemma: 2 agents are not sufficient}
If $\calA$ solves \pbmPerpExpl\ with~$k_0$ agents, then for every instance~$I$ on a path graph with~$k\geq k_0$ agents, there is no execution~$\calE$ of~$\calA$ on~$I$ such that all of the following hold:
    \begin{itemize}
        \item At some time~$t \geq 1$, exactly two agents~$a,b$ remain alive.
        \item $\cons(\calE,I,t)$ contains instances~$I_i=\left\langle P,h,k,\frakb_i\right\rangle$, $1\leq i\leq 3$, where~$P$ is a port-labeled path and $\{\frakb_i: 1\leq i\leq 3\}$ are three distinct nodes of~$P$. Let~$\calB$ be the common suffix, starting from time~$t$, of all benign continuations of the executions witnessing the $(\calE,I,t)$-consistency of~$I_i$, for $1\leq i\leq 3$.
        \item If $p_a,p_b$ are the positions of the agents at time~$t$ under $\calB$, then $p_a,p_b$ are on the same side of all of~$\{\frakb_i: 1\leq i\leq 3\}$.
    \end{itemize}
\end{lemma}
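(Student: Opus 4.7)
Proof proposal:

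Assume for contradiction that the claimed execution $\calE$ exists. Order the three candidates as $\frakb_1, \frakb_2, \frakb_3$ along the path; by hypothesis both surviving agents $a, b$ are on the same side of all three, so WLOG both are to the left of $\frakb_1$ at time $t$ in $\calB$, where $\calB$ is the common benign continuation from time $t$. The plan is a case analysis on the behaviour of the agents in $\calB$. If in $\calB$ no agent visits any of $\frakb_1, \frakb_2, \frakb_3$ after time $t$, then $\frakb_2$ is visited only finitely often in total; since $\frakb_2$ lies in the left component of $G-\frakb_3$, that component is not perpetually explored, while the right component of $G-\frakb_3$ is not visited at all after time $t$ since no agent crosses $\frakb_3$. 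Hence $\calA$ fails on the benign execution of $I_3$, contradicting correctness.

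Otherwise some $\frakb_i$ is visited in $\calB$ from time $t$ onwards, and since the path forces the agents to cross $\frakb_1$ first, this first visit is to $\frakb_1$, at some round $t_1 \geq t$. Let $S_1 \subseteq \{a, b\}$ be the nonempty set of agents moving into $\frakb_1$ at round $t_1$, and consider the adversary on $I_1$ that activates the BBH only at $t_1$, destroying exactly $S_1$. If $|S_1| = 2$, no agent remains and $\calA$ fails trivially on $I_1$. Otherwise $|S_1| = 1$: WLOG $a$ is destroyed, $b$ survives, and we denote this execution by $\calE_1$.

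I then plan to apply Lemma \ref{lem:impossibility:oneagent-twosusp} to $b$ at a suitable time $t' > t_1$, by exhibiting $j \in \{2, 3\}$ with $I_j \in \cons(\calE_1, I_1, t')$. The intended construction is as follows: choose $j$ so that $a$ first visits $\frakb_j$ in $\calB$ at some later round $t^{(j)} > t_1$, and let $\calE_j$ be the execution on $I_j$ where the adversary activates the BBH only at $t^{(j)}$. The crucial observation is that, provided $a$ and $b$ never share a node in $\calB$ during $[t_1+1, t^{(j)}]$, the sequence of inputs received by $b$ in $\calE_1$ and in $\calE_j$ coincides up to $t' = t^{(j)}+1$, so $b$'s histories match at $t'$; since $b$ is the only survivor at $t'$ in both executions, the benign continuations from $t'$ are determined by $b$ alone and therefore agree. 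Lemma \ref{lem:impossibility:oneagent-twosusp} then delivers the contradiction.

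The main obstacle I foresee is the residual case where $a$ and $b$ do meet in $\calB$ at some round $t_* > t_1$ before $a$ has visited either $\frakb_2$ or $\frakb_3$: then the simulation above breaks, because $b$'s inputs in $\calE_1$ (where $a$ is dead) and in the candidate $\calE_j$ (where $a$ is still alive at $t_*$) disagree from $t_*$ onwards. I would address this by noting that for $\calA$ to succeed on $I_3$ the node $\frakb_2$ must still be visited infinitely often in $\calB$, so some first post-$t_*$ visit of $\frakb_j$ exists; the argument then proceeds by tracking $b$'s evolution after $t_*$, where $b$ is effectively solo and faces at least two still-undetermined BBH candidates, and recursively appeals to Lemma \ref{lem:impossibility:oneagent-twosusp} to reach the contradiction.
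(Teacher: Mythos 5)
There is a genuine gap, and it sits exactly where you flag it: the residual case is not a corner case but the crux of the lemma, and your proposed resolution of it does not close. By committing the adversary to destroying $a$ at its \emph{first} visit to~$\frakb_1$, you allow the algorithm to win. Concretely, if the algorithm uses a cautious probe (agent~$a$ steps onto~$\frakb_1$, returns, and reports to~$b$ before anyone ever goes deeper), then in your execution~$\calE_1$ agent~$b$ observes the missed rendezvous at~$t_*$ and, since $a$'s (deterministic, known) trajectory up to~$t_*$ touched only~$\frakb_1$ among the three candidates, $b$ can infer that the Byzantine black hole \emph{is}~$\frakb_1$. At that point $b$ is solo with a \emph{singleton} suspect set, so Lemma~\ref{lem:impossibility:oneagent-twosusp} (which needs two consistent instances) cannot be invoked, and $b$ can legitimately perpetually explore its component of~$G-\frakb_1$, i.e., $\calA$ succeeds on~$I_1$ against your adversary. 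The claim in your last paragraph that $b$ then ``faces at least two still-undetermined BBH candidates'' is therefore unjustified, and the recursion does not get off the ground.

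The paper avoids this trap by \emph{not} activating at the first visit; it cases instead on whether the two agents are ever simultaneously engaged with the subpath~$P_1$ beyond~$\frakb_1$. If at some time one agent enters~$P_1$ while the other is inside (or entering/exiting) it, the adversary activates~$\frakb_1$ at the two relevant entry times and kills both. Otherwise every incursion into~$P_1$ is solo, and since $\frakb_2$ must be visited (else $\calA$ fails on the benign execution of~$I_3$), some solo incursion by one agent reaches~$\frakb_2$ while the other stays outside; only \emph{then} does the adversary act, choosing to destroy the incurring agent either at~$\frakb_1$ (instance~$I_1$) or at~$\frakb_2$ (instance~$I_2$). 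Because the surviving agent was outside~$P_1$ for the whole incursion, it cannot disambiguate the two, and Lemma~\ref{lem:impossibility:oneagent-twosusp} applies. Your non-residual branch is essentially this second case, but the correct proof must \emph{delay} the activation until such a deep solo incursion occurs, rather than fire at~$t_1$ and hope to recover.
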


\begin{proof}
    For a contradiction, let~$\calE$ be an execution of~$\calA$ on~$I$ that satisfies all of the conditions simultaneously at time~$t\geq 1$, let~$\frakb_1,\frakb_2,\frakb_3$ be ordered by increasing distance from~$p_a$ and~$p_b$, and
let $P_1$ denote the subpath of~$P$ from~$\frakb_1$ to an extremity of~$P$, such that~$P_1$ does not contain any of~$p_a$, $p_b$.

We distinguish the following cases:

\emph{Case 1}: In $\tilde{\calE}_1$, $\tilde{\calE}_2$, and $\tilde{\calE}_3$, there exists some time~$t'\geq t$ such that agent~$a$ (without loss of generality) enters~$P_1$ (hence moves to~$\frakb_1$ from outside~$P_1$), and agent~$b$ moves to, from, or stays on some node of~$P_1$.

In this case, consider the smallest such~$t'$, and let~$t'_1\leq t'$ be the last time that agent~$b$ enters~$P_1$ before~$t'$. Note that agent~$a$ enters~$P_1$ at time~$t'$ without waiting for agent~$b$ to exit~$P_1$. Therefore, even if agent~$b$ is destroyed by~$\frakb_1$ at time~$t'_1$, agent~$a$ will still enter~$P_1$ at time~$t'$. We obtain a contradiction with the correctness of algorithm~$\calA$, as it fails on instance~$I_1$ if the Byzantine black hole~$\frakb_1$ is activated at times~$t'$ and~$t'_1$.

\emph{Case 2}: In $\tilde{\calE}_1$, $\tilde{\calE}_2$, and $\tilde{\calE}_3$, whenever an agent enters~$P_1$ at some time~$t'\geq t$, then at that time the other agent is not inside~$P_1$, nor is entering or exiting~$P_1$.

In this case, there must exist at least one time~$t'\geq t$ such that agent~$a$ (without loss of generality) enters~$P_1$ and moves on to reach~$\frakb_2$ before exiting~$P_1$, while agent~$b$ remains outside of~$P_1$. Indeed, if this never happens, then the algorithm fails on instance~$I_3$ as it never explores node~$\frakb_2$.

Let~$t'_1>t'$ be the first time after~$t'$ such that agent~$a$ visits~$\frakb_2$. Now, consider the following executions:
\begin{itemize}
    \item $\calE'_1$ on $I_1$: same as~$\tilde{\calE}_1$ up to time~$t'$, then $\frakb_1$ is activated at time~$t'$ and destroys agent~$a$, then $\frakb_1$ is never activated after~$t'$.
    \item $\calE'_2$ on $I_2$: same as~$\tilde{\calE}_2$ up to time~$t'_1$, then $\frakb_2$ is activated at time~$t'_1$ and destroys agent~$a$, then $\frakb_2$ is never activated after~$t'_1$.
\end{itemize}
Clearly, after time~$t'$ and as long as agent~$a$ hasn't exited~$P_1$, agent~$b$ receives exactly the same inputs in both executions. Therefore, the history of agent~$b$ is identical in both executions up to at least time~$t'_1$. Let~$h'_b$ be the history of agent~$b$ at that time. Moreover, the benign continuations of both~$\calE'_1$ and~$\calE'_2$ after time~$t'_1$ must be identical.
It follows that $\cons(\calE'_1,I_1,t'_1)$ contains~$I_1$ and~$I_2$. This contradicts Lemma~\ref{lem:impossibility:oneagent-twosusp}.
\end{proof}

\begin{lemma} \label{lemma: 2 agents are not sufficient enclosing susp}
If $\calA$ solves \pbmPerpExpl\ with~$k_0$ agents, then for every instance~$I$ on a path graph with~$k\geq k_0$ agents, there is no execution~$\calE$ of~$\calA$ on~$I$ such that all of the following hold:
    \begin{itemize}
        \item At some time~$t \geq 1$, exactly two agents~$a,b$ remain alive.
        \item $\cons(\calE,I,t)$ contains instances~$I_i=\left\langle P,h,k,\frakb_i\right\rangle$, $1\leq i\leq 3$, where~$P$ is a port-labeled path and $\{\frakb_i: 1\leq i\leq 3\}$ are three distinct nodes of~$P$. Let~$\calB$ be the common suffix, starting from time~$t$, of all benign continuations of the executions witnessing the $(\calE,I,t)$-consistency of~$I_i$, for $1\leq i\leq 3$.
        \item If $p_a,p_b$ are the positions of the agents at time~$t$ under~$\calB$, then the path connecting $p_a$ to $p_b$ contains all of~$\{\frakb_i: 1\leq i\leq 3\}$.
    \end{itemize}
\end{lemma}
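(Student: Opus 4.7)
The plan is to adapt the two-case strategy of Lemma~\ref{lemma: 2 agents are not sufficient} to the opposite-sides geometry, and ultimately reduce to an invocation of Lemma~\ref{lem:impossibility:oneagent-twosusp}. First I would assume for contradiction that such an execution $\calE$ exists, let $\calB$ denote the common suffix of the benign continuations, and order the BBH candidates so that $\frakb_1,\frakb_2,\frakb_3$ appear left-to-right along $P$, with $p_a$ strictly to the left of $\frakb_1$ and $p_b$ strictly to the right of $\frakb_3$. The key observation driving the proof is that $\frakb_2$ is a bottleneck in $\calB$: neither agent can reach the opposite side of the path without first visiting $\frakb_2$.

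The next step would be to reduce to the case where some agent visits $\frakb_2$ in $\calB$ at a finite time. If no agent ever does, then throughout $\calB$ agent $a$ stays strictly left of $\frakb_2$ and agent $b$ stays strictly right of $\frakb_2$, so the two agents never meet. Correctness on $I_2$ then forces at least one component of $P-\frakb_2$ to be perpetually explored; WLOG by symmetry, agent $a$ visits $\frakb_1$ infinitely often. Modifying $\calB$ on instance $I_1$ by activating $\frakb_1$ at the first such visit destroys $a$ while leaving $b$ completely unaffected (since the two agents never interacted). In the resulting execution on $I_1$ only $b$ is alive, and it remains confined to the right of $\frakb_2$, so neither the left component of $P-\frakb_1$ (which $b$ cannot reach) nor the right component (which requires visiting $\frakb_2$) is perpetually explored, contradicting correctness on~$I_1$.

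Let $t^*$ denote the first round in $\calB$ after $t$ at which some agent visits $\frakb_2$. If both agents are at $\frakb_2$ in round $t^*$, the adversary activates $\frakb_2$ in that round, destroying both, which immediately contradicts correctness on $I_2$. Otherwise, exactly one agent is at $\frakb_2$ in round $t^*$; WLOG this is $a$, which must have arrived from its left neighbor, since $a$ started left of $\frakb_1$ and by minimality of $t^*$ has not previously crossed $\frakb_2$. Let $t_1^*\in[t,t^*)$ be $a$'s first visit to $\frakb_1$ in $\calB$. Throughout the interval $[t_1^*,t^*]$, agent $a$ occupies nodes on the left of $\frakb_2$ (or $\frakb_2$ itself in round $t^*$), while $b$ occupies nodes strictly to the right of $\frakb_2$, so $a$ and $b$ never share a node during this interval.

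The argument concludes by constructing two executions: $\calE^*_1$ on $I_1$, following $\calB$ up to round $t_1^*-1$ and then activating $\frakb_1$ in round $t_1^*$; and $\calE^*_2$ on $I_2$, following $\calB$ up to round $t^*-1$ and then activating $\frakb_2$ in round $t^*$. In both executions, $a$ is dead by the beginning of round $t^*+1$ and $b$'s history at that time coincides with its history in $\calB$, because $a$ and $b$ never interact up to round $t^*$ and hence $b$'s inputs are unaffected by $a$'s life status. The benign continuations from round $t^*+1$ consist only of $b$ operating deterministically on the same port-labeled graph, and are therefore identical. This makes $I_1$ and $I_2$ both $(\calE^*_2,I_2,t^*+1)$-consistent with $b$ as the unique surviving agent, so Lemma~\ref{lem:impossibility:oneagent-twosusp} yields the desired contradiction. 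The main obstacle is justifying the non-interaction of $a$ and $b$ across $[t_1^*,t^*]$; this hinges on the minimality of $t^*$, which forces $b$ to remain strictly right of $\frakb_2$ while $a$ is necessarily left of $\frakb_2$ until exactly round $t^*$.
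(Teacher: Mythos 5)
Your proposal is correct and follows essentially the same route as the paper's proof: the core step in both is that an agent reaching $\frakb_2$ must first pass $\frakb_1$, so destroying it at either node yields a lone survivor that cannot distinguish $I_1$ from $I_2$, contradicting Lemma~\ref{lem:impossibility:oneagent-twosusp}; and if no agent ever reaches $\frakb_2$, the agents never meet and whichever of them would have to cover $\frakb_1$ (or $\frakb_3$) for $I_2$ can be destroyed there, leaving the other unable to explore its component. The only cosmetic difference is that you organize this as a case split on whether $\frakb_2$ is ever visited, whereas the paper first proves as a standalone claim that it never is.
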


\begin{proof}
          For a contradiction, let~$\calE$ be an execution of~$\calA$ on~$I$ that satisfies all of the conditions simultaneously at time~$t\geq 1$, and let~$\frakb_1,\frakb_2,\frakb_3$ be ordered by increasing distance from~$p_a$ (and, therefore, decreasing distance from~$p_b$).

    \begin{claim}
        In~$\calB$, no agent ever visits~$\frakb_2$ after time~$t$.
    \end{claim}

    \begin{claimproof}
        Suppose the contrary and, without loss of generality, let~$a$ be the first agent to visit~$\frakb_2$ at time~$t_2\geq t$. As the two agents are initially on either side of~$\frakb_2$, it follows that the agents do not meet before~$t_2$. Moreover, in order to reach~$\frakb_2$, agent~$a$ must also visit~$\frakb_1$ at some time~$t_1<t_2$. By activating the Byzantine black hole at times~$t_1,t_2$ in the respective instances, agent~$a$ may be destroyed on either of~$\frakb_1,\frakb_2$. Let~$\calB'$ be the same as~$\calB$ up to~$t_2$, when~$\frakb_2$ is activated once to destroy agent~$a$ and then remains benign. By the lack of communication between agents, $\cons(\calB',I_2,t_2)$, contains~$I_1,I_2$, which contradicts Lemma~\ref{lem:impossibility:oneagent-twosusp}.
    \end{claimproof}

    A corollary of the above claim is that agents never meet after time~$t$. We can further show that agent~$a$ never visits~$\frakb_1$ and, by symmetry, agent~$b$ never visits~$\frakb_3$. Indeed, if agent~$a$ ever visits~$\frakb_1$, then the algorithm fails on instance~$I_1$ by destroying agent~$a$. The remaining agent~$b$ will never visit~$\frakb_2$, thus failing to perpetually explore its component.
    We conclude that no agent ever visits any of~$\frakb_1,\frakb_2,\frakb_3$ after time~$t$, and therefore the algorithm fails on instance~$I_2$ because no agent perpetually explores its component.
\end{proof}

\begin{lemma} \label{lemma: 3 agents are not sufficient-2simult}
If $\calA$ solves \pbmPerpExpl\ with~$k_0$ agents, then for every instance~$I$ on a path graph with~$k\geq k_0$ agents, there is no execution~$\calE$ of~$\calA$ on~$I$ such that all of the following hold:
    \begin{itemize}
        \item At some time~$t \geq 1$, exactly three agents~$a,b,c$ remain alive.
        \item $\cons(\calE,I,t)$ contains instances~$I_i=\left\langle P,h,k,\frakb_i\right\rangle$, $1\leq i\leq 7$, where~$P$ is a port-labeled path and $\{\frakb_i : 1\leq i\leq 7\}$ are seven distinct nodes of~$P$. Let~$\calB$ be the common suffix, starting from time~$t$, of all benign continuations of the executions witnessing the $(\calE,I,t)$-consistency of $I_i$, for $1\leq i\leq 7$.
        \item If $p_a,p_b,p_c$ are the positions of the agents at time~$t$ under~$\calB$, then $p_a,p_b,p_c$  are on the same side of all of~$\{\frakb_i : 1\leq i\leq 7\}$.
        \item Assuming without loss of generality that nodes~$\frakb_i$ are ordered by increasing distance from~$p_a,p_b,p_c$, and letting~$P_1$ denote the subpath of~$P$ from~$\frakb_1$ to an extremity of~$P$ such that~$P_1$ does not contain~$p_a,p_b,p_c$, then in~$\calB$, at most two agents are ever simultaneously in~$P_1$.
    \end{itemize}
\end{lemma}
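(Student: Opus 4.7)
The approach is to extend the given execution $\calE$ by a single well-chosen activation of the Byzantine black hole so as to produce a new execution that contradicts Lemma~\ref{lemma: 2 agents are not sufficient}, exhibiting two alive agents on the same side of at least three consistent BBH positions. Consider the common benign continuation $\calB$ from time $t$, in which the three agents survive forever. First I use the correctness of $\calA$ on the consistent instances to force some agent in $\calB$ to reach $\frakb_3$: for example in $I_7$, the agents' component contains $\frakb_1,\dots,\frakb_6$ and the other component lies beyond $\frakb_7$, so perpetual exploration of either one in $\calB$ requires crossing $\frakb_3$. Let $T^*$ be the first round at which some agent $x$ is at $\frakb_3$ in $\calB$, and $T_1^x$ the most recent round before $T^*$ at which $x$ entered $P_1$; throughout $[T_1^x,T^*]$ agent $x$ lies in $P_1$ and, by minimality of $T^*$, visits $\frakb_1$ and $\frakb_2$ at rounds $T_1^{x,\mathrm{vis}} \le T_2^{x,\mathrm{vis}} < T^*$ all inside this interval.

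Take an execution $\calE_3$ on $I_3$ witnessing the $(\calE,I,t)$-consistency of $I_3$, and define $\calE^*$ on $I_3$ by following $\calB$ up to round $T^*-1$ and activating $\frakb_3$ in round $T^*$, after which no further activation occurs. This destroys $x$ while the two remaining agents $b,c$ proceed as in $\calB$. Assuming for now that $b$ and $c$ do not meet $x$ during $[T_1^x,T^*]$ and are outside $P_1$ at round $T^*+1$, I claim that $\cons(\calE^*, I_3, T^*+1)$ contains $I_1,I_2,I_3$: for each $k\in\{1,2,3\}$ the required alternative execution on $I_k$ follows $\calB$ up to round $T_k^{x,\mathrm{vis}}$, activates $\frakb_k$ to destroy $x$ there, and runs benignly thereafter, reproducing the $\calB$-histories of $b$ and $c$ at round $T^*+1$ as well as the benign continuation from that round onward. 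Because $b$ and $c$ are outside $P_1$, they lie on the home side of $\frakb_1,\frakb_2,\frakb_3$, and Lemma~\ref{lemma: 2 agents are not sufficient} applied to $\calE^*$ at round $T^*+1$ yields the desired contradiction.

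The principal obstacle is discharging the two parenthetical hypotheses in the general case: a second agent could enter $P_1$ during $x$'s trip and cross with $x$, and one of $b,c$ could be inside $P_1$ at round $T^*+1$. I would handle this by letting $T^*$ slide deeper into $P_1$: since the correctness argument forces arbitrarily deep probing in $\calB$, across the sequence of trips into $P_1$ there must eventually be a trip whose solo leg visits three new suspects without an intervening meeting with either of the other agents. The availability of seven suspects in the hypothesis, rather than the three strictly needed by Lemma~\ref{lemma: 2 agents are not sufficient}, is precisely the slack that supports this pigeonhole-style argument; and when the two surviving agents happen to sit astride the three chosen suspects rather than on a single side of them, Lemma~\ref{lemma: 2 agents are not sufficient enclosing susp} can be invoked in place of Lemma~\ref{lemma: 2 agents are not sufficient}.
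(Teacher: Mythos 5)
There is a genuine gap, and it sits exactly where you flag it: the two ``parenthetical hypotheses'' you defer are not a technicality but the entire difficulty of the lemma. The hypothesis explicitly permits two agents to be simultaneously in~$P_1$, and an algorithm can exploit this by having a pair of agents explore~$P_1$ with a cautious, leapfrog-style pattern (exactly the kind of pattern the paper's own upper-bound algorithms use): the lead agent advances to one new suspect node, returns, meets its partner, and only then do they both advance. Under such a strategy \emph{every} solo leg of every trip into~$P_1$ covers exactly one new suspect before an intervening meeting, so your pigeonhole claim --- that some trip must eventually contain a solo leg visiting three new suspects without a meeting --- is simply false, no matter how many suspects are available and no matter how deep or how often the agents probe. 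Consequently, after you destroy the prober at~$\frakb_3$, the surviving partner may be able to localize the black hole to one or two candidates rather than three, and the reduction to Lemma~\ref{lemma: 2 agents are not sufficient} does not go through. Your fallback to Lemma~\ref{lemma: 2 agents are not sufficient enclosing susp} also does not exhaust the cases (a survivor can sit strictly between $\frakb_1$ and~$\frakb_2$, so the pair is neither on one side of the three suspects nor astride all of them).

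The paper's proof is structured precisely to handle the cooperating-pair case. It first shows that if two agents never go deep together (no $P_4$-event), then a solo excursion must cross $\frakb_4,\frakb_5,\frakb_6$ unaccompanied, and only \emph{that} branch reduces to Lemma~\ref{lemma: 2 agents are not sufficient}. In the remaining branch, where agents~$a,b$ do enter $P_3$ and then $P_4$ together, the key move is to show that the third agent~$c$ is informationally isolated (it cannot enter~$P_3$ in the relevant window without being forced out past three suspects, by Lemma~\ref{lemma: 2 agents are not sufficient enclosing susp}), and then to construct \emph{two} executions --- one destroying both $a$ and $b$ at~$\frakb_3$ at times $t_1,t_3$, the other destroying both at~$\frakb_4$ at times $t'_4,t_4$ --- that are indistinguishable to the lone survivor~$c$, contradicting Lemma~\ref{lem:impossibility:oneagent-twosusp} rather than Lemma~\ref{lemma: 2 agents are not sufficient}. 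To repair your proof you would need an analogue of this double-destruction, isolate-the-third-agent argument; the single-activation scheme you propose cannot be salvaged by sliding $T^*$ deeper.
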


\begin{proof}
      For a contradiction, let~$\calE$ be an execution of~$\calA$ on~$I$ that satisfies all of the conditions simultaneously at time~$t\geq 1$, and let, for $i\geq 2$, $P_i$ denote the subpath of~$P_1$ spanning the nodes from~$\frakb_i$ to an extremity of~$P$. We say that a $P_i$-event occurs at time~$t_0\geq t$ ($1\leq i\leq 7$), if an agent enters~$P_i$ at time~$t_0$ while another agent is already in~$P_i$.
      
\begin{claim}
    In~$\calB$, a $P_4$-event must occur at some time after~$t$.
\end{claim}

\begin{claimproof}
Indeed, suppose that in $\calB$, at most one agent is ever inside~$P_4$ after time~$t$. Then, there must exist some time~$t'_1\geq t$ at which agent~$a$ (without loss of generality) enters~$P_4$ and does not exit~$P_4$ before reaching~$\frakb_6$ at time~$t'_2\geq t'_1$, while the other agents stay outside of~$P_4$. Agent~$a$ may be destroyed on any of~$\frakb_4,\frakb_5,\frakb_6$ by activating the Byzantine black hole in the respective instance. It follows that $I_4,I_5,I_6$ are all in $\cons(\calB',I_6,t'_2)$, where $\calB'$ is the same as~$\calB$ up to time~$t'_2$, when $\frakb_6$ is activated. This contradicts Lemma~\ref{lemma: 2 agents are not sufficient}.    
\end{claimproof}

Note that a $P_4$-event requires a prior $P_3$-event. Let~$t_4$ be the time of the first~$P_4$-event in~$\calB$, involving agents~$a,b$ without loss of generality. Let $t_3\leq t_4$ be the time of the last~$P_3$-event before~$t_4$. By definition, the~$P_3$-event at time~$t_3$ must involve the same agents~$a,b$ and none of these agents may have exited~$P_3$ between~$t_3$ and~$t_4$. Without loss of generality, let agent~$b$ be the one that enters~$P_3$ at time~$t_3$. Therefore, this is the last time that agent~$b$ enters~$P_3$ before~$t_4$. Let~$t_1\leq t_3$ denote the last time agent~$a$ entered~$P_3$.

\begin{claim} \label{clm:c1}
    In~$\calB$, agent~$c$ does not enter~$P_3$ between $t_1$ and $t_3$.
\end{claim}

\begin{claimproof}
    Suppose that agent~$c$ enters~$P_3$ at time~$t'_1$ ($t_1\leq t'_1 < t_3$). Given that agent~$b$ enters~$P_3$ at time~$t_3$, and by the assumption that at most two agents may ever be in~$P_1$, it follows that agent~$c$ must reach~$\frakb_1$ at some time~$t'_2$ between~$t'_1$ and~$t_3$, while agent~$a$ remains in~$P_3$ and agent~$b$ remains outside of~$P_1$. On the way out, agent~$c$ may be destroyed by any of~$\frakb_1,\frakb_2,\frakb_3$ by activating the Byzantine black hole in the respective instance. It follows that~$I_1,I_2,I_3$ are all in~$\cons(\calB',I_1,t'_2)$, where $\calB'$ is the same as~$\calB$ up to time~$t'_2$, when~$\frakb_1$ is activated. This contradicts Lemma~\ref{lemma: 2 agents are not sufficient enclosing susp}.
\end{claimproof}

It follows that only agent~$b$ may enter~$P_3$ between~$t_1$ and~$t_3$. Let~$t_2\geq t_1$ be the first time this happens. Then, agent~$b$ may not exit~$P_1$ after time~$t_2$, by a similar argument as in the proof of Claim~\ref{clm:c1}. There is, therefore, no communication after time~$t_1$ between agent~$c$ and either of agents~$a,b$. Finally, let~$t'_4\leq t_4$ be the time when the agent that is already in~$P_4$ at time~$t_4$ enters~$P_4$ for the last time. Now, consider the following executions:
\begin{itemize}
    \item $\calB_1$ on~$I_3$: same as~$\calB$ up to time~$t_1$, when~$\frakb_3$ is activated destroying agent~$a$. Then, benign up to time~$t_3$, when~$\frakb_3$ is activated destroying agent~$b$. Then, $\frakb_3$ is never activated after~$t_3$.
    \item $\calB_2$ on~$I_4$: same as~$\calB$ up to time~$t'_4$, when~$\frakb_4$ is activated. Then, benign up to time~$t_4$, when~$\frakb_4$ is activated. At this point, both agents~$a,b$ are destroyed and $\frakb_4$ is never activated after~$t_4$.
\end{itemize}
Clearly, $\cons(\calB_1,I_3,t_4)$ contains both of~$I_3,I_4$, which contradicts Lemma~\ref{lem:impossibility:oneagent-twosusp}.
\end{proof}

\begin{lemma} \label{lemma: 3 agents fail}
If $\calA$ solves \pbmPerpExpl\ with~$k_0$ agents, then for every instance~$I$ on a path graph with~$k\geq k_0$ agents, there is no execution~$\calE$ of~$\calA$ on~$I$ such that all of the following hold:
    \begin{itemize}
        \item At some time~$t \geq 1$, exactly three agents~$a,b,c$ remain alive.
        \item $\cons(\calE,I,t)$ contains instances~$I_i=\left\langle P,h,k,\frakb_i\right\rangle$, $1\leq i\leq 8$, where~$P$ is a port-labeled path and $\{\frakb_i : 1\leq i\leq 8\}$ are eight distinct nodes of~$P$. 
        \item If $p_a,p_b,p_c$ are the positions of the agents at time~$t$ under~$\calE$, then $p_a,p_b,p_c$  are on the same side of all of~$\{\frakb_i : 1\leq i\leq 8\}$.
        \item Assuming without loss of generality that nodes~$\frakb_i$ are ordered by increasing distance from~$p_a,p_b,p_c$, nodes $\frakb_1,\frakb_2,\frakb_3$ are consecutive in~$P$.
    \end{itemize}
\end{lemma}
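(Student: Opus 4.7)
My plan is to argue by contradiction, assuming such an execution $\calE$ exists, and to let $\calB$ denote the common suffix, starting from time $t$, of the benign continuations of the executions witnessing the $(\calE,I,t)$-consistency of $I_1,\dots,I_8$. Let $P_1$ be the subpath from $\frakb_1$ to the extremity of $P$ not containing $p_a,p_b,p_c$. I would then split into two cases according to whether $\calB$ ever brings all three agents simultaneously inside $P_1$, since the previous lemma already handles the first case cleanly.

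If at most two agents are ever simultaneously in $P_1$ during $\calB$, then the hypotheses of Lemma~\ref{lemma: 3 agents are not sufficient-2simult} are met by the seven consistent instances $I_1,\dots,I_7$, yielding an immediate contradiction; the eighth instance is unused and absorbs the slack between the two lemma statements. In the complementary case, when all three agents are in $P_1$ simultaneously at some round $t^*\geq t$ in $\calB$, I would consider the earliest round $t_0\leq t^*$ at which an agent moves to $\frakb_1$, and let $m\in\{1,2,3\}$ denote the number of agents making that move at $t_0$. For $m=3$, activating the Byzantine black hole at $\frakb_1$ at round $t_0$ in instance $I_1$ destroys all three agents in a single stroke and the algorithm fails on $I_1$. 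For $m=2$, the analogous activation leaves one surviving agent, whose history remains consistent with at least the two instances $I_1,I_2$, so Lemma~\ref{lem:impossibility:oneagent-twosusp} finishes the argument.

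The truly delicate subcase is $m=1$, in which exactly one agent (call it $c$) enters $P_1$ at $t_0$. Here the consecutivity of $\frakb_1,\frakb_2,\frakb_3$ becomes crucial. Since the Case~2 hypothesis forces $a$ and $b$ to eventually enter $P_1$, and since reaching $\frakb_2$ (resp.\ $\frakb_3$) requires being at $\frakb_1$ (resp.\ $\frakb_2$) one round earlier, the three nodes must be first visited in $\calB$ at rounds $t_0\leq t_2\leq t_3$. The plan is to construct three parallel executions $\calE_j$ on $I_j$, for $j\in\{1,2,3\}$, in each of which the adversary activates the black hole exactly once, namely at $\frakb_j$ at round $t_j$, killing the first agent to reach that node. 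Each $\calE_j$ then leaves two agents alive, still positioned on the same side of $\frakb_1,\frakb_2,\frakb_3$, and the goal is to verify that the surviving agents' histories coincide across $\calE_1,\calE_2,\calE_3$, so that Lemma~\ref{lemma: 2 agents are not sufficient} may be invoked.

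The main obstacle, which I expect to require the most care, will be establishing this history coincidence. Between the three destruction times, the agent that keeps penetrating $P_1$ may in principle exit $P_1$ or encounter one of the outside agents, which would transmit indirect information about which $\frakb_j$ has been activated and break the consistency. I would tackle this by exploiting the tight time window enforced by the adjacency of $\frakb_1,\frakb_2,\frakb_3$, arguing inductively that in the interval $[t_0,t_3]$ no face-to-face meeting can occur between the penetrating agent and the agents outside $P_1$, since up to each $t_j$ the execution is identical to $\calB$ and the only divergence introduced by an activation at $t_j$ affects a single node. Should the window fail to be sufficient in some configuration, the fallback is either to refine the choice of the destruction rounds or to stage additional consistent activations so that at the final reduction point two alive agents still have matching histories under at least three distinct BBH positions.
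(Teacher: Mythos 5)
Your Case 1 (at most two agents ever simultaneously in $P_1$) is fine and matches the paper's opening move, except that the paper applies Lemma~\ref{lemma: 3 agents are not sufficient-2simult} to $I_2,\dots,I_8$ so as to force all three agents simultaneously into $P_2$ (a ``$P_2^3$-event''), which is a strictly stronger launching point for the rest of the argument. The genuine gap is in your Case 2. For $m=2$: after destroying the two agents that move to $\frakb_1$ at $t_0$, the survivor's history need not be consistent with $I_2$ --- in the counterfactual execution on $I_2$ those two agents are not destroyed at $\frakb_1$, and nothing forces both of them to proceed to $\frakb_2$ before one of them exits $P_1$ and meets the survivor; if that happens, $I_2\notin\cons(\cdot)$ and Lemma~\ref{lem:impossibility:oneagent-twosusp} cannot be invoked. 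For $m=1$ the problems are worse: (i) the first visits to $\frakb_1,\frakb_2,\frakb_3$ may be made by \emph{different} agents, so your executions $\calE_1,\calE_2,\calE_3$ leave \emph{different} pairs alive, and Definition~\ref{def:consistent} (identical sets of histories of the alive agents) then fails outright; (ii) your key claim that no face-to-face meeting can occur in $[t_0,t_3]$ is false --- consecutiveness of $\frakb_1,\frakb_2,\frakb_3$ only constrains a single inward trip, while the first visits to the three nodes can be separated by arbitrarily many rounds during which the penetrating agent exits $P_1$ and communicates. Your ``fallback'' of refining the destruction rounds is precisely the hard part of the lemma and is left unproved.

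The paper closes this gap with a different mechanism: it introduces $P_i^j$-events, takes the \emph{first} time $T$ at which all three agents are simultaneously in $P_2$ with one of them entering at $T$, proves as a separate claim that no two of the other agents were ever simultaneously in $P_3$ before $T$, and then performs a four-way case analysis of the exact positions of $a,b,c$ among the consecutive nodes $\frakb_1,\frakb_2,\frakb_3$ at time $T$. Each case either destroys all three agents at once or leaves a \emph{single} survivor whose consistent set contains two distinct BBH positions, reducing to Lemma~\ref{lem:impossibility:oneagent-twosusp} rather than to Lemma~\ref{lemma: 2 agents are not sufficient}. You would need to replicate an argument of this form to make your Case 2 go through.
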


\begin{proof}
      For a contradiction, let~$\calE$ be an execution of~$\calA$ on~$I$ that satisfies all of the conditions simultaneously at time~$t\geq 1$. Let~$\calB$ be the common suffix, starting from time~$t$, of all benign continuations of the executions witnessing the $(\calE,I,t)$-consistency of $I_i$, for $1\leq i\leq 8$. Let, for $i\geq 1$, $P_i$ denote the subpath of~$P$ spanning the nodes from~$\frakb_i$ to an extremity of~$P$, such that~$P_i$ does not contain any of~$p_a,p_b,p_c$. We say that a $P^j_i$-event occurs at time~$t_0\geq t$ ($1\leq i\leq 8$, $1\leq j\leq 3$), if some agent enters~$P_i$ at time~$t_0$ and at least $j$ agents  are in~$P_i$ at time~$t_0$ (including the one or the ones that entered at time~$t_0$).

      By Lemma~\ref{lemma: 3 agents are not sufficient-2simult}, a $P_2^3$-event must occur in~$\calB$. Let~$T$ be the time of the first $P_2^3$-event, and assume, without loss of generality, that agent~$c$ enters~$P_2$ at time~$T$ while agents~$a,b$ are in~$P_2$.

      \begin{claim} \label{clm:c2}
          In~$\calB$, there is no~$P_3^2$-event involving agents~$a,b$ at or before time~$T$.
      \end{claim}

      \begin{claimproof}
          For a contradiction, let~$t_4\leq T$ be the time of the last~$P_3^2$-event before time~$T$, and assume, without loss of generality, that agent~$a$ enters~$P_3$ while agent~$b$ is in~$P_3$. Let~$t_3\leq t_4$ be the time of the last~$P_2^2$-event before~$t_4$ involving agents~$a,b$.  Without loss of generality, let agent~$a$ be the one that enters~$P_2$ at time~$t_3$, and let~$t_1\leq t_3$ be the last time agent~$b$ entered~$P_2$.

          Consider the following executions:
          \begin{itemize}
              \item $\calB_1$ on~$I_1$: same as~$\calB$ up to time~$t_3$, when~$\frakb_1$ is activated destroying agent~$b$. Then, benign up to time~$T$, when~$\frakb_1$ is activated destroying agent~$c$. Then, $\frakb_1$ is never activated after~$T$.
              \item $\calB_2$ on~$I_2$: same as~$\calB$ up to time~$t_3$, when~$\frakb_1$ is activated destroying agent~$b$. Then, benign up to time~$T$, when~$\frakb_1$ is activated destroying agent~$c$. Then, $\frakb_2$ is never activated after~$T$.
          \end{itemize}
          Clearly, $\cons(\calB_1,I_1,T)$ contains both of~$I_1,I_2$, which contradicts Lemma~\ref{lem:impossibility:oneagent-twosusp}.
      \end{claimproof}

      By Claim~\ref{clm:c2}, either none of $a,b$ has entered~$P_3$ until~$T$, or only one of~$a,b$ has entered~$P_3$ until~$T$. We complete the proof by distinguishing the following cases:

      \emph{Case 1: None of $a,b$ has entered~$P_3$ until time~$T$}. Note that, by assumption, a $P_2^3$-event occurs at time~$T$. It follows that, in that round, all agents are either on, moving to, or moving from node~$\frakb_2$. Therefore, they can all be destroyed in instance~$I_2$ by activating~$\frakb_2$ at time~$T$.

      \emph{Case 2: Agent~$a$ (without loss of generality) enters or is in~$P_3$ at time~$T$, and agents $b,c$ enter simultaneously~$P_2$ at time~$T$}. Then, the adversary can construct executions~$\calE_1$ (resp.\ $\calE_2$) in which it follows~$\calB$ up to time~$T$, when it activates the Byzantine black hole $\frakb_1$ (resp.\ $\frakb_2$) in instance~$I_1$ (resp.\ $I_2$) to destroy both agents~$b,c$. $\cons(\calE_1,I_1,T)$ contains both of~$I_1,I_2$, which contradicts Lemma~\ref{lem:impossibility:oneagent-twosusp}.

      \emph{Case 3: Each of agents~$a,b$ is either on or moving to~$\frakb_2$ at time~$T$}. In this case, the adversary can destroy all of the agents in instance~$I_2$ by activating~$\frakb_2$ at time~$T$.

      \emph{Case 4: At round~$T$, agent~$a$ is at~$\frakb_3$, agent~$b$ is at~$\frakb_2$, and agent~$c$ moves to~$\frakb_2$ from~$\frakb_1$.} Let $t_a\leq T$ be the last time before~$T$ that $a$ moved from~$\frakb_2$ to~$\frakb_3$, and let $t_b\leq T$ be the last time before~$T$ that $b$ moved from~$\frakb_1$ to~$\frakb_2$. If $t_b \leq t_a$, then the adversary can activate~$\frakb_2$ in instance~$I_2$ between times~$t_a$ and~$T$ to destroy all the agents. If~$t_a<t_b$, then the adversary can construct executions~$\calE_1$ (resp.\ $\calE_2$) in which it follows~$\calB$ up to~$t_b$, and then it keeps activating the Byzantine black hole~$\frakb_1$ (resp.\ $\frakb_2$) in instance~$I_1$ (resp.\ $I_2$) to destroy both agents~$b,c$. We then have $I_1,I_2\in\cons(\calE_1,I_1,T)$, which contradicts Lemma~\ref{lem:impossibility:oneagent-twosusp}.
\end{proof}

 \begin{theorem} \label{thm:3 agents not sufficient}
    There is no algorithm that solves \pbmPerpExpl\ with~$3$ agents on all path graphs with at least~$9$ nodes, even assuming that the agents know the number of nodes.
\end{theorem}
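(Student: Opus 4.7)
My plan is to apply Lemma~\ref{lemma: 3 agents fail} directly at the initial time step, letting the adversary exploit the fact that, before any agent moves, \emph{every} non-home node is still a consistent candidate for the Byzantine black hole. Suppose, for contradiction, that some algorithm $\calA$ solves \pbmPerpExpl\ with $3$ agents on every path of size at least $9$. I would fix a path $P$ of exactly $n=9$ nodes with home $h$ placed at one endpoint, and let $I = \left\langle P, 3, h, \frakb\right\rangle$ be any instance (say with $\frakb$ a fixed non-home node). Let $\calE$ be the benign execution of $\calA$ on $I$.

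Next I would verify at time $t=1$ the four hypotheses of Lemma~\ref{lemma: 3 agents fail}. First, all three agents are alive and co-located at $h$, since $h$ is not the black hole. Second, no agent has yet received any input depending on the identity of $\frakb$, and $\calA$ is deterministic; by Definition~\ref{def:consistent} every instance $\left\langle P, 3, h, \frakb'\right\rangle$ with $\frakb'\in V\setminus\{h\}$ is $(\calE,I,1)$-consistent, and with $n=9$ this provides exactly $8$ candidate instances to take as $I_1,\dots,I_8$. Third, since $h$ is an endpoint of $P$, the common position of the three agents lies strictly to one side of every non-home node, so all $8$ candidate BBH positions are on the same side of $p_a,p_b,p_c$. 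Fourth, ordering the candidates by increasing distance from $h$ places the three closest at consecutive nodes of $P$, matching the requirement that $\frakb_1,\frakb_2,\frakb_3$ be consecutive. All hypotheses are therefore satisfied, and Lemma~\ref{lemma: 3 agents fail} yields the desired contradiction.

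There does not seem to be a substantive obstacle here: the heavy lifting is done by the earlier series of lemmas, culminating in Lemma~\ref{lemma: 3 agents fail}. The remaining step is essentially a bookkeeping observation, namely that in the initial configuration the set of instances consistent with the agents' (empty) history is as large as possible, and that choosing $n=9$ with $h$ at an endpoint is enough to ensure the geometric conditions (same-sidedness and three consecutive candidates) of the lemma. Any $n\geq 9$ with $h$ at an endpoint works, and even for $h$ in the interior the same argument applies by noting that one of the two sides of $h$ must contain at least $\lceil (n-1)/2\rceil \geq 4$ consecutive non-home nodes, which, combined with a symmetric argument or a slight adjustment of the chosen candidates, still supplies the required $8$ inconsistent BBH positions on one side. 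The cleanest presentation, however, is simply to put $h$ at the endpoint, which I would do.
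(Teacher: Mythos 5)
Your proposal is correct and follows essentially the same route as the paper: instantiate Lemma~\ref{lemma: 3 agents fail} at time $t=1$ with the home at an extremity, observe that all $8$ non-home candidate positions lie in $\cons(\calE,I,1)$ on the same side of the co-located agents, and that the three nearest are consecutive. (The aside about an interior home is unnecessary --- and as stated it would not supply $8$ candidates on one side for $n=9$ --- but since the instance, including $h$, is chosen by the adversary, placing $h$ at an endpoint suffices, as you note.)
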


\begin{proof}
    Let~$\calA$ be an algorithm that solves \pbmPerpExpl\ with~$3$ agents on all path graphs with at least~$9$ nodes. Fix a path~$P$ with $n\geq 9$ nodes. Let the home node~$h$ be one of the extremities of~$P$, and let~$v_i$, $1\leq i\leq 8$ be the node at distance~$i$ from~$h$. Let $3$ agents start at node~$h$ and consider the set of instances~$\calI=\{\left\langle P,h,3,v_i\right\rangle:1\leq i\leq 8\}$. Even with knowledge of~$n$, at the very beginning of the benign execution~$\calE$ of~$\calA$ (beginning of first round) in, say, $I_1$, every instance in~$\calI$ is contained in~$\cons(\calE,I_1,1)$. Moreover, $v_1,v_2,v_3$ are consecutive in~$P$. This contradicts Lemma~\ref{lemma: 3 agents fail}. 
\end{proof}

\section{Proof of Theorem \ref{thm:impossibility with 5 PerpExplorationBBH-Home Path}}\label{Appendix: Proof of mpossibility with 5 PerpExplorationBBH-Home Path}

 We further refine the result of the previous section to show that \pbmPerpExplHome\ is actually impossible even with 5 agents in sufficiently large paths, even assuming knowledge of the size of the graph (Theorem~\ref{thm:5 agents impossibility}).

In this section, we use the notion of \emph{suspicious nodes} to simplify the presentation of the proofs. The suspicious nodes are potential positions of the Byzantine black hole on the path. Formally, given an instance~$I=\left\langle P,k,h,\frakb\right\rangle$, where $P=(V,E,\lambda)$ is a port-labeled path graph, and an execution~$\calE$ on~$I$, the set~$S(t)$ of suspicious nodes at time~$t$ is defined as:
\[
S(t) = \{s:\left\langle P',k,h,s\right\rangle\in\cons(\calE,I,t) \text{ for some path~$P'$}\}
\]
Since the graph is assumed anonymous, in this definition a node is identified by its signed distance from the home node~$h$ in~$P$, by assuming that the positive (resp.\ negative) direction on the path is the direction of outgoing port number~$1$ (resp.\ $2$) from~$h$.

Note that, as long as the agents have not learned the size of the path and if the Byzantine black hole has not been activated yet, $S(t)$ is an infinite set. Indeed, $\cons(\calE,I,t)$ contains all arbitrarily long paths which are consistent with the finite history of the agents at time~$t$.

In this section, though, we will assume that the agents know the size of the path. This simplifies the definition of~$S(t)$ to:
\[
S(t) = \{s:\left\langle P,k,h,s\right\rangle\in\cons(\calE,I,t)\}
\]
A node is said to be \emph{safe} if it is in~$V\setminus S(t)$. At the beginning of an execution only the home node can be assumed safe, therefore $S(1)= V\setminus\{h\}$. In subsequent rounds, each agent keeps its own track of the suspicious node set, which can only be reduced. Indeed, if an agent detects the destruction of one or more agents, it can eliminate from its suspicious node set those nodes that could not house the Byzantine black hole. Note, however, that it may happen during an execution that one or more agents have an outdated idea of the suspicious node set, specifically if they have not been able yet to receive information from the agent(s) that first detected the destruction of an agent. In this case, the agents with outdated information must still continue to behave consistently with their presumed suspicious node set, until they can communicate with the agents having the most current information. We use the notation~$S_i(t)$ to denote the suspicious node set of agent~$i$ at time~$t$.

In all the following lemmas and theorems we assume that $n\ge 9$.

\begin{lemma}\label{lemma: 2 agents exploration cases LB}
        

For any exploration algorithm $\mathcal{A}$ with two agents $A_1$ and $A_2$ on a path graph $P=(V,E,\lambda)=L[v_0,v_{n-1}]$ containing a BBH, where $L[v_i,v_j]$ indicates the induced subgraph of $P$ induced by the nodes $\{v_x:0\le i\le x \le j\le |V|-1\}$. There exists a time $T_f\ge T_d$ such that  adversary can create one of the two following situations at $T_f$. 

\begin{enumerate}
    \item Both agents are destroyed.
    \item Exactly one agent, say $A_i$ stays alive and $|S_i(T_f)| \ge \lfloor\frac{n}{2}\rfloor$.
\end{enumerate}
\end{lemma}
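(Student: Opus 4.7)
The plan is to exhibit a simple adversarial strategy that forces Situation~1 or~2 for any deterministic algorithm~$\mathcal{A}$ on two agents. Since both agents are deterministic and the system is synchronous, the unique benign execution of~$\mathcal{A}$ (the execution in which the BBH is never activated) is fully predictable. The adversary will simulate this benign execution and wait for the first round~$t_{\mathrm{first}}$ in which some agent enters a node other than~$h$. If no such round ever exists, the algorithm keeps both agents at~$h$ forever; then $T_d=\infty$ and the conclusion is vacuous, so assume otherwise.

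At round~$t_{\mathrm{first}}$, two sub-cases arise. If both agents simultaneously enter the same new node $v \neq h$, the adversary commits to $\frakb=v$ and activates the BBH at that round, destroying both agents; this yields Situation~1 with $T_f=T_d=t_{\mathrm{first}}$. Otherwise, pick an agent~$A_i$ that enters a new node~$v$ at round~$t_{\mathrm{first}}$ and let~$A_j$ be the other agent, which either stays at~$h$ or enters some distinct node~$v'$. The adversary sets $\frakb=v$ and activates the BBH at round~$t_{\mathrm{first}}$, destroying only~$A_i$. By the definition of~$t_{\mathrm{first}}$, no agent has visited any non-home node in any earlier round, hence the set of nodes visited by~$A_j$ up to time~$T_d$ is contained in $\{h,v'\}$. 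Each of the remaining $n-2$ or more nodes is therefore still consistent with being the BBH from $A_j$'s perspective, so $|S_j(T_d)|\geq n-2 \geq \lfloor n/2\rfloor$ for $n\geq 9$, yielding Situation~2 with $T_f=T_d=t_{\mathrm{first}}$.

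The main obstacle I expect is the rigorous justification that each of the at least $n-2$ unvisited nodes genuinely lies in~$S_j(T_d)$. Formally, per Definition~\ref{def:consistent}, this requires exhibiting for each such node~$w$ a port-labeled path~$P'$ and an execution witnessing $(\mathcal{E},I,T_d)$-consistency of the instance $\langle P',2,h,w\rangle$. This should follow from the observation that up to round $t_{\mathrm{first}}-1$ both agents sit at~$h$ and receive inputs independent of~$\frakb$, and at round~$t_{\mathrm{first}}$ itself agent~$A_j$'s input depends only on the local structure at~$v'$ (or at~$h$), which is identical for any BBH position $w \notin \{h,v'\}$. A secondary point worth noting is that the bound $|S_j(T_d)|\geq n-2$ is comfortably above $\lfloor n/2\rfloor$ under the standing assumption $n\geq 9$, so no delicate threshold tuning is required and the argument yields $T_f = T_d$ directly.
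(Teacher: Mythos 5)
There is a genuine gap, and it is exactly at the point you flagged as ``the main obstacle.'' Under the paper's formal definition of the suspicious set (Definition~\ref{def:consistent} and Appendix~\ref{Appendix: Proof of mpossibility with 5 PerpExplorationBBH-Home Path}), a node $w$ belongs to $S_j(t)$ only if the instance $\langle P,2,h,w\rangle$ admits an execution in which \emph{the set of histories of all alive agents} at time~$t$ matches the current execution. In your construction the agent $A_i$ is destroyed at the very first non-home node~$v$ it visits; for any candidate $w\neq v$, the BBH at $w$ cannot destroy $A_i$ at time $t_{\mathrm{first}}$ (since $A_i$ is at $v$, not $w$), so $A_i$ is still alive in that instance and the sets of histories do not match. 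Hence $|S_j(T_f)|=1$, not $\geq n-2$. What you actually establish is only the survivor's \emph{transient local} ignorance: one round later, $A_j$ observes that its partner failed to return from its very first step and, knowing the deterministic schedule, pinpoints $\frakb=v$ exactly. A cautious algorithm therefore defeats your adversary outright, and the statement you prove is too weak to support the reductions in Theorems~\ref{theorem: 4 agents cannot solve BBH-home} and~\ref{thm:5 agents impossibility}, which need $\lfloor n/2\rfloor$ \emph{genuinely consistent instances} surviving forever.

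The paper's proof is built precisely to avoid this collapse. It lets the benign execution run until some agent must penetrate the far half $L_{\lfloor n/2\rfloor}=L[v_{\lfloor n/2\rfloor},v_{n-1}]$ (forced by the exploration requirement). Either both agents co-occupy $L_j$ for each $j\leq\lfloor n/2\rfloor$ at some point, in which case both can be destroyed (your situation~1, but realizable in many instances), or one agent stays outside $L_{\lfloor n/2\rfloor}$ forever while the other traverses $v_{\lfloor n/2\rfloor},\dots,v_{n-1}$ after its \emph{last possible communication} with the survivor. The adversary can then destroy the explorer at any one of those $\geq\lfloor n/2\rfloor$ nodes, at the (different) times it passes through them, and in every one of these instances the explorer is genuinely dead and the survivor's history is identical --- now and in every benign continuation. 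That persistent, multi-instance indistinguishability is the actual content of $|S_i(T_f)|\geq\lfloor n/2\rfloor$, and it is what your first-step-kill strategy cannot deliver.
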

\begin{proof}
    Let the path graph $P=(V,E,\lambda)=L[v_0,v_{n-1}]$ where $|V|=n$. Without loss of generality, we denote the home $h$ to be $v_0$. It may be noted that $S_i(T)=S_i(0)=L[v_1,v_{n-1}]$ where $i \in \{1, 2\}$ for all $0\le T\le T_d$. Let us consider the ordered sequence of line segments $L_{\lfloor\frac{n}{2}\rfloor} \subset L_{\lfloor\frac{n}{2}\rfloor-1}\subset\dots L_2\subset L_1\subset P$, where $L_i=L[v_i,v_{n-1}]$. Now we claim that if there is a round $T_i$ such that at $T_i$ both $A_1$ and $A_2$ are in $P_i$ then adversary can create $i$ many instances denoted by $I_j= \langle P,2,v_0,v_j\rangle$ where both agents are destroyed where $0 \le j \le i$ . 
    
    Let in the benign execution at some round $T_i$ both $A_1$ and $A_2$ are in $P_i$. So there is a round such that at $T_j$ both $A_1$ and $A_2$ are in $L_j$ where $j\le i$. Now let without loss of generality $A_1$ first enters $L_j$ before $A_2$. Let $T_j'$ be the last time when $A_1$ was on $v_j$ before $T_j$. So, by activating the BBH at $T_j'$ and $T_j$ adversary can destroy both $A_1$ and $A_2$ creating scenario 1 for all $j\le i$. Here $T_d=T_j'$ and $T_f=T_j$.

    So, if $i=\lfloor\frac{n}{2}\rfloor$ then, adversary can create $\lfloor\frac{n}{2}\rfloor$ many instances destroying each of the agents in all those instances given there is a time $T_{\lfloor\frac{n}{2}\rfloor}$ when both agents are in $L_{\lfloor\frac{n}{2}\rfloor}$. 
    
    Now, let for all $T$ exactly one agent, say $A_1$, remains outside of $L_{\lfloor\frac{n}{2}\rfloor}$. Due to the benign execution property, $A_2$ has to explore the whole path graph. So, there exists a time $T_*$ when $A_2$ is on $v_{n-1}$. Let $T_*'$ be the last time $A_2$ was on $v_{\lfloor\frac{n}{2}\rfloor}$ before $T_*$. Then adversary can create $\lfloor\frac{n}{2}\rfloor$ many instances denoted as $I_j^*=\langle P,2,v_0,v_j\rangle$ where $\lfloor\frac{n}{2}\rfloor \le j\le n-1$. In each of these instances $A_2$ can be destroyed at $\frakb$ at time $T_d$ by the adversary after $T_*'$ and $A_1$ can not distinguish between any of them. So, for the alive agent $A_1$, $S_1(T_f) \ge \lfloor\frac{n}{2}\rfloor$ where $T_f=T_d$.
\end{proof}

\begin{lemma}\label{lemma: consequence of 3 agents Path Graph}
    For any exploration algorithm $\mathcal{A}$ on a path graph $P=(V,E,\lambda)=L[v_0,v_{n-1}]$ with 3 agents $A_1$, $A_2$ and $A_3$, in presence of a BBH, there exists a time $T_f\ge T_d$ such that at $T_d$, the adversary can create any of the following three scenarios.
    \begin{enumerate}
        \item All three agents are destroyed.
        \item At least one agent stays alive at the component $C_1$ and no agents are in $C_2$ of the graph $P\setminus\{\frakb\}$ where $C_1$ contains the home $h$, $P-\frakb=C_1\cup C_2$ and for all alive agents $A_j$, $|S_j(T_f)| \ge \lfloor\frac{n}{4}\rfloor$.
        \item Exactly one agent $A_i$ stays alive at component $C_2$ of the graph $G -\{\frakb\}$, where $C_2$ does not contain the home and $|S_i(T_f)| \ge 2$.
    \end{enumerate}
    \label{lemma: 3 agents cases}
\end{lemma}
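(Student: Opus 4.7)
The plan is to mirror and extend the approach used for Lemma~\ref{lemma: 2 agents exploration cases LB}, now with three agents and a refinement that classifies a lone survivor according to the component in which it ends up. Throughout, I would work with the benign execution $\calE_0$ of $\calA$ (the trajectory produced when the BBH is never activated, which is therefore independent of the choice of~$\frakb$) and argue that, for a suitable choice of~$\frakb$ together with a handful of activation rounds, the resulting execution falls into one of Scenarios~1--3. Setting $m=\lfloor n/4\rfloor$, I would again study the nested segments $L_i=L[v_i,v_{n-1}]$, so that the adversary is always free to commit to $\frakb=v_j$ with $j\in L_m$ and thereby guarantee that any agent which remains in $L[v_0,v_{m-1}]$ lies in $C_1$.

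The natural skeleton of the proof is a three-way split according to how many of $A_1,A_2,A_3$ ever visit $v_m$ during $\calE_0$. First, if all three do, at times $t_1<t_2<t_3$ say, I would set $\frakb=v_m$ and activate it at each $t_i$; by reusing the destruction argument of Lemma~\ref{lemma: 2 agents exploration cases LB}, each agent dies on its first arrival at~$v_m$, since the agents that were killed earlier had not yet conveyed any information contradicting $\calE_0$ to the later-arriving ones. This produces Scenario~1. Second, if exactly one agent, say $A_1$, stays confined to $L[v_0,v_{m-1}]$ throughout, I would apply the 2-agent argument of Lemma~\ref{lemma: 2 agents exploration cases LB} to $\{A_2,A_3\}$, restricted so that $\frakb$ is picked inside $L_m$; if both are destroyed, $A_1$ is the unique survivor, it lies in $C_1$, and its suspicious set contains all of $L_m$ (which it never visited), so $|S_1(T_f)|\ge n-m\ge 3m\ge m$, yielding Scenario~2. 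Third, if at most one agent ever visits $L_m$, then two agents stay in $L[v_0,v_{m-1}]\subseteq C_1$; the sole deep explorer can be destroyed at a position in $L_m$ that is indistinguishable from at least $m$ candidates, so Scenario~2 again applies with large suspicious sets for the two shallow survivors.

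The main obstacle is the residual sub-case in which the 2-agent argument of the second bullet leaves exactly one of $A_2,A_3$ alive: if that survivor ends up in $C_2$ while $A_1$ remains alive in $C_1$, the resulting configuration matches none of the three scenarios. I plan to handle this by refining the adversarial strategy so that, whenever the 2-agent argument would produce a $C_2$-survivor, the adversary either (a)~picks $\frakb$ and activation rounds that destroy both of $A_2,A_3$ (collapsing to Scenario~2 via $A_1$), or (b)~adds one extra activation so that exactly one $C_2$-survivor remains, with at least two BBH candidates in its suspicious set, which gives Scenario~3. Realising this refinement is the crux of the argument: it relies on carefully tracking $\cons(\calE,I,t)$ for each alive agent at every step, essentially a pigeonhole argument over the nodes of $L_m$ applied in parallel to all three agents, and picking $j$ in that range so that the two relevant agents cross $v_j$ before any information about the first destruction can propagate through the face-to-face channel. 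A secondary, more routine, point is to justify that the surviving agents continue to follow $\calE_0$ up to the scheduled activations, which will follow from the observation that, up to~$T_d$, none of the destroyed agents was due to meet the survivors in~$\calE_0$, so no input inconsistent with $\calE_0$ reaches the survivors before~$T_d$; the time $T_f$ at which the chosen scenario becomes visible is then the last such activation round.
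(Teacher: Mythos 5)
Your skeleton — nested suffix segments, a case split on which agents go deep, and reuse of the two-agent lemma — is in the spirit of the paper's proof (which uses thresholds at $\lfloor n/2\rfloor$ and $\lfloor 3n/4\rfloor$ rather than a single one at $\lfloor n/4\rfloor$), but two of your three cases do not go through as written. In Case~1 you claim that if all three agents visit $v_m$ at times $t_1<t_2<t_3$ of the benign execution, the adversary can kill each on its first arrival. This fails: an agent conveys information by its \emph{absence}. If $A_1$ dies at $v_m$ at $t_1$ and was scheduled in $\calE_0$ to meet $A_2$ at some $t'\in(t_1,t_2)$, then $A_2$ detects the death at $t'$, deviates from $\calE_0$, and may never visit $v_m$ again (e.g.\ by switching to a cautious walk with $A_3$). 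The times $t_2,t_3$ are defined relative to $\calE_0$ and become meaningless after the first deviation. The paper reaches Scenario~1 only in configurations where the victims are committed before any survivor can observe a missed rendezvous — e.g.\ when two agents cross the boundary simultaneously, or when the kills occur at the \emph{last} entrance times into a segment before the first round at which all three are simultaneously inside it — and this bookkeeping of last-entrance times is precisely what your argument omits.

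The second problem is the residual sub-case of your Case~2, which you rightly call the crux but do not actually resolve. Your option~(b) produces a survivor in $C_2$ \emph{while $A_1$ is still alive in $C_1$}; since $A_1$ is confined to $L[v_0,v_{m-1}]$ it never visits a candidate BBH node and cannot be destroyed, so this configuration has two live agents in different components and matches neither Scenario~2 (which forbids agents in $C_2$) nor Scenario~3 (which, as the lemma is used later, means exactly one of the three agents survives and it is in $C_2$). The paper's way out is structural: when only one agent stays shallow, the BBH candidates are placed in $L_{\lfloor 3n/4\rfloor}$, so that either the shallower of the two deep explorers never enters $L_{\lfloor 3n/4\rfloor}$ — in which case it is guaranteed to be on the home side at the moment the deep one is killed, giving Scenario~2 — or both deep explorers are simultaneously in $L_{\lfloor 3n/4\rfloor}$ at some point, in which case the two-agent destruction argument kills \emph{both}, again giving Scenario~2. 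Scenario~3 is reserved exclusively for the configuration in which all three agents are simultaneously inside $L_{\lfloor n/2\rfloor}$: there the two latest entrants can both be destroyed at the boundary nodes $v_{\lfloor n/2\rfloor-1},v_{\lfloor n/2\rfloor}$, stranding the earliest (deepest) entrant alone in $C_2$ with two indistinguishable candidates. A further, smaller, gap: your bound $|S_1(T_f)|\ge n-m$ via ``$A_1$ never visited $L_m$'' conflates non-visitation with consistency — a node an agent never visited can still be eliminated from its suspicious set if it observed another agent return safely from beyond it — so the lower bound must be argued, as in the paper, by exhibiting $\lfloor n/4\rfloor$ explicit instances that the survivors cannot distinguish.
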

\begin{proof}
    Without loss of generality, let $h=v_0$, and we denote let $L_i=L[v_i,v_{n-1}]$ such that $V=\{v_0,\dots,v_{n-1}\}$. We first show that, if two agents, say $A_1$ and $A_2$ does not enter $L_{\lfloor\frac{n}{2}\rfloor}$ but $A_3$ enters it, then the adversary can create  scenario 2 at some time $T_f$, where two agents $A_1$ and $A_2$ stays alive at $C_1$ and $|S_i(T_f)|\ge \lfloor\frac{n}{2}\rfloor>\lfloor\frac{n}{4}\rfloor$, for $i\in \{1, 2\}$ and for some $T_f\ge T_d$. In this case, due to a benign execution property, there must exists a time $T_1$ when $A_3$ is at $v_{n-1}$, and let $T_1'$ be the last time before $T_1$, when $A_3$ was on $v_{\lfloor\frac{n}{2}\rfloor}$. Now, the adversary can create $\lfloor\frac{n}{2}\rfloor$ distinct instances denoted as $I_{\alpha}$ where $I_{\alpha}=\langle P, 3,v_0,v_{n-1-\alpha} \rangle$ and $0\le \alpha < \lfloor\frac{n}{2}\rfloor$ such that it can destroy $A_3$ in each of these instances at round $T_d$ between $T_1$ and $T_1'$. We consider $T_f=T_d$ here. This construction ensures that $A_2$ and $A_3$ can not distinguish between these instances. This implies $|S_i(T_f)| \ge \lfloor\frac{n}{2}\rfloor > \lfloor\frac{n}{4}\rfloor $ for all $i \in \{1,2\}$ where $A_1$ and $A_2$ are in $C_1$. 
    
    So, between $T_1'$ and $T_1$ if none among $A_1$ and $A_2$ enters $L_{\lfloor\frac{n}{2}\rfloor}$ then adversary can create a situation worse than scenario 2.

    So, let us assume between $T_1'$ and $T_1$ among $A_1$ and $A_2$ only $A_2$ moved to $L_{\lfloor\frac{n}{2}\rfloor}$ by being on $v_{\lfloor\frac{n}{2}\rfloor}$, for the first time at $T_2$ and $A_1$ never moves to $L_{\lfloor\frac{n}{2}\rfloor}$. 
Now, between $T_2$ and $T_1$ if $A_2$ never enters $L_{\lfloor\frac{3n}{4}\rfloor}$ then, again adversary can create $\lfloor\frac{n}{4}\rfloor$ instances $I_{\beta}= \langle P,3, v_0, v_{\lfloor\frac{3n}{4}+\beta\rfloor}\rangle$ where $0\le \beta \le \lfloor\frac{n}{4}\rfloor$, and destroy $A_3$ at $T_d$ for each of these instances between $T_1''$ and $T_1$, $T_1''$ being the last time $A_3$ was on $v_{\lfloor\frac{3n}{4}\rfloor}$ before $T_1$. Note that $A_2$ and $A_3$ can not distinguish between these instances due to the construction. So, in this way if we consider $T_f=T_d$,  then again adversary can create scenario 2 for this case.

Now let there exists a time $T_{x_0}$ between $T_1'$ and $T_1$, when both $A_2$ and $A_3$ are in $L_{\lfloor\frac{3n}{4}\rfloor}$. So, there exists $T_{x_i}$ in between $T_1'$ and $T_1$ such that both $A_2$ and $A_3$ are in $L_{\lfloor\frac{3n}{4}\rfloor-i}$ and $T_{x_{i+1}} < T_{x_i}$ for all $i$ where $0\le i \le \lfloor\frac{n}{4}\rfloor$. By similar argument as the second paragraph of Lemma~\ref{lemma: 2 agents exploration cases LB}, adversary can destroy both of $A_2$ and $A_3$ for each of the $\lfloor \frac{n}{4} \rfloor$ instances $I_{i}=\langle P,3,v_0, v_{\lfloor\frac{3n}{4}\rfloor-i}\rangle$ in between time $T_1'$ and $T_1$. Let $T_f$ be the time when last alive agent among $A_2$ and $A_3$ is killed  and $T_1'\le T_d\le T_f\le T_1$. Due to this construction, $A_1 \in C_1$, can not distinguish between these instances and so adversary can again create scenario 2 at $T_f$. 

So, to be specific, till now we have concluded that if there is at least one agent that never enters $L_{\lfloor\frac{n}{2} \rfloor}$ in between $T_1'$ and $T_1$ then adversary can always create scenario 2. 

So, now let us consider the case when there is a time $T_{*1}$ between $T_1'$ and $T_1$ when all three agents are in $L_{\lfloor\frac{n}{2}\rfloor}$. Let $T_*$  be the first time when all agents are in $L_{\lfloor \frac{n}{2}\rfloor}$. Let $A_{x_i}$ enters $L_{\lfloor \frac{n}{2}\rfloor}$ last time before $T_*$  at $T_{x_i}$. Here, for any $i \in \{1,2,3\}$, $x_i \in \{1,2,3\}$ and $x_i \ne x_j$ if $i\ne j$. Without loss of generality let, $T_{x_1} \le T_{x_2}\le T_{x_3}\le T_*$. Now if $T_{x_1}=T_{x_2}$ then for the instance $I=\langle P,3,v_0,v_{\lfloor\frac{n}{2}\rfloor}\rangle$, adversary can destroy $A_{x_1}$ and $A_{x_2}$ at $T_{x_1}=T_{x_2}$ and can destroy $A_{x_3}$ at time $T_{x_3}$ so creating the scenario 1 at time $T_f=T_{x_3}$. Now let us consider another case, $T_{x_1}<T_{x_2}\le T_*$. Let $T_{x_2}<T_*$ and let at $T_{x_2}$, $A_{x_3}$ is in $L_{\lfloor \frac{n}{2}\rfloor}$. Then this is a contradiction to the assumption that $T_*$ is the first round when all three of the agents are in side $L_{\lfloor \frac{n}{2}\rfloor}$. Thus either $T_{x_2}=T_*$, or, $T_{x_2}< T_*$ and $A_{x_3}$ is outside of $L_{\lfloor \frac{n}{2}\rfloor}$ at time $T_{x_2}$. When $T_{x_2}=T_*$ then $T_{x_2}=T_{x_3}=T_*$. In this case for  both the instances $I_{*_1}=\langle P,3,v_0,v_{\lfloor\frac{n}{2}\rfloor-1}\rangle$ and  $I_{*_2}=\langle P,3,v_0,v_{\lfloor\frac{n}{2}\rfloor}\rangle$, adversary can destroy both $A_{x_2}$ and $A_{x_3}$ at time $T_{x_2}$. Now if at $T_{x_2}$, $A_{x_1}$ is also at $v_{\lfloor\frac{n}{2}\rfloor}$ or moving to it then adversary destroys all of them leading to scenario 1 at time $T_f=T_{x_2}$. On the other hand if $A_{x_1}$ is not at $v_{\lfloor\frac{n}{2}\rfloor}$ or not moving at it at round $T_{x_2}$  then $A_{x_1}$ can not distinguish between both of the above mentioned instances. So, this case leads to scenario 3 at time $T_f=T_{x_2}$. Now let at $T_{x_2} (< T_{x_3}\le T_*)$, $A_{x_3}$ is not in $L_{\lfloor\frac{n}{2}\rfloor}$. Let us consider the time span starting at $T_{x_2}$ ending at $T^*_{x_3}$ where $T^*_{x_3}$ is the time when $A_{x_3}$ enters $L_{\lfloor\frac{n}{2}\rfloor}$ for the first time after $T_{x_2}$. Now let us consider the two instances $I_{*_1}$ and $I_{*_2}$, as discussed earlier. If adversary activates the BBH  for all rounds in the above mentioned time span, i.e., between $T_{x_2}$ and $T^*_{x_3}$, then it can destroy both the agents $A_{x_2}$ and $A_{x_3}$. Now if in this time span $A_{x_1}$ visits $v_{\lfloor\frac{n}{2}\rfloor}$ then it will be destroyed creating scenario 1 at $T_f=T^*_{x_3}$. Otherwise if it does not visit $v_{\lfloor\frac{n}{2}\rfloor}$ then only $A_{x_1}$ stays alive at $T_{x_3}$ without distinguishing between the instances. This leads to scenario 3 at $T_f=T^*_{x_3}$.

    Note that if there is a time when all three agents are in $L_{\lfloor\frac{n}{2}\rfloor}$ implies there is a time $T_{*_i}$ when all three agents are in  $L_{\lfloor\frac{n}{2}\rfloor-i}$, where $0\le i\le \lfloor\frac{n}{4}\rfloor$. So with similar reasoning as above for each we can create scenario 1 or scenario 3 where the BBH can be at any one of the vertex, $v_{\lfloor\frac{n}{2}\rfloor-i}$. 
 \end{proof}

\begin{lemma}\label{lemma: at least two groups are required}
  
    For any algorithm $\mathcal{A}$ solving \textsc{PerpExploration-BBH-Home} on a path graph $P=(V,E,\lambda)=L[v_0,v_{n-1}]$, at least one agent needs to be present at home until the destruction time $T_d$.
\end{lemma}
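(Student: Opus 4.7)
The plan is to derive a contradiction by constructing an alternative BBH instance in which $\calA$ visits $h$ only finitely often, hence fails \pbmPerpExplHome. Without loss of generality I treat $h=v_0$ as an endpoint of $P$ (the setting used by the adversary in the lower-bound construction, and where the argument is cleanest since $v_0$ has the unique neighbor $v_1$). Assume for contradiction that there exist an instance $I=\langle P,k,v_0,\frakb\rangle$, an execution $\calE$ of $\calA$ on $I$, and a minimal round $t_1<T_d$ at which no agent sits on $v_0$. By the minimality of $t_1$, at least one agent was at $v_0$ at round $t_1-1$ and moved to its unique neighbor $v_1$ at round $t_1$; in particular, at round $t_1$ every agent lies at some $v_j$ with $j\geq 1$ and at least one agent is at $v_1$.

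Next I would exhibit the alternative instance $I'=\langle P,k,v_0,v_1\rangle$, in which the BBH lies on $v_1$, together with the following adversary policy on $I'$: keep the BBH inactive throughout rounds $1,\dots,t_1-1$, activate it at round $t_1$, and from then on activate it in every round in which some agent starts at or moves to $v_1$. Because in the face-to-face model the only inputs an agent receives at a node are its degree, its incoming port, and the configurations of other agents on the same node, an inactive BBH is indistinguishable from an ordinary node. Consequently, the sequence of inputs received by every agent up to round $t_1-1$ is identical under $\calE$ on $I$ and under the corresponding execution $\calE'$ on $I'$, so the round-$t_1$ moves coincide and every agent that moves to $v_1$ at round $t_1$ is destroyed.

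In $I'$, removing $\frakb=v_1$ from $P$ yields the components $C_1=\{v_0\}$ and $C_2=\{v_2,\dots,v_{n-1}\}$, where $C_1$ is the home component. The adversary's ongoing policy ensures that no surviving agent of $\calE'$ can ever cross $v_1$, so $h=v_0$ is visited only during the finite prefix $\{1,\dots,t_1-1\}$, contradicting the assumption that $\calA$ solves \pbmPerpExplHome\ on $I'$. The main delicate step is justifying the faithful lift of the benign prefix of $\calE$ into an execution of $I'$, which is precisely the indistinguishability observation above; once that is in hand, the rest is a routine stranding argument.
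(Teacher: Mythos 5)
Your proposal is correct and follows essentially the same strategy as the paper's proof: place the BBH at the occupied node nearest to the (endpoint) home at the first moment home is vacated, exploit indistinguishability of the benign prefix, and then keep the BBH active forever to strand every surviving agent in the component not containing $h$. The only cosmetic difference is that you use minimality of the vacating round to pin the BBH at $v_1$ specifically, whereas the paper takes an arbitrary round $T<T_d$ with home empty and places the BBH at the least-indexed occupied node $v_i$; both choices yield the same stranding argument, and your explicit justification of the lift of the benign prefix to the alternative instance is a welcome addition to what the paper leaves implicit.
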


\begin{proof}

We prove this lemma with contradiction. Let, without loss of generality $v_0$ be the home. Let there be an algorithm $\mathcal{A}$ that solves \textsc{PerpExploration-BBH-Home} for the path graph $L[v_0,v_{n-1}]$ such that there exists a time $T (<T_d)$ when there are no agent at the home $v_0$. Let $i>0$ be the least integer such that $v_i$ contains any agent at time $T$. Then for the instance $I=\langle P,k, v_0,v_i\rangle$ if adversary activates the BBH for all rounds starting from round $T$, then all agents which are alive gets stranded on the component $C_2$ of graph $P-\{\frakb\}$ where $\frakb=v_i$ and $C_2$ does not contain the home $v_0$. So, it contradicts the assumption that $\mathcal{A}$ solves \textsc{PerpExploration-BBH-Home} on $P$. 
\end{proof}

\begin{theorem}\label{theorem: 4 agents cannot solve BBH-home}
    A set of 4 agents cannot solve \textsc{PerpExploration-BBH-Home} on a path graph of size $n'$, where $n'>36$.
\end{theorem}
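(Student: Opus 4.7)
I would proceed by contradiction, assuming an algorithm~$\calA$ that solves \pbmPerpExplHome\ with $4$ agents on every path of size $n' > 36$, and then construct an adversary that forces failure on a fixed path $P=L[v_0,v_{n-1}]$ with $n>36$ and home~$v_0$. The key structural observation is that, by Lemma~\ref{lemma: at least two groups are required}, every correct algorithm must keep at least one of its four agents at~$v_0$ at every round strictly before the destruction time~$T_d$. Consequently, in any benign prefix, at most three agents can be away from home at any given round, and, during any sub-interval in which these (at most) three ``exploring'' agents are simultaneously away from~$v_0$, face-to-face communication prevents them from exchanging any information with the agent(s) left at the home. Their behaviour during such intervals can therefore be analysed as that of a stand-alone $3$-agent algorithm on~$P$.

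The next step is to apply the adversarial strategy of Lemma~\ref{lemma: consequence of 3 agents Path Graph} to this implicit $3$-agent subsystem, timing the adversary's moves so that the explorers never return to~$v_0$ during the critical window $[T_d,T_f]$; this freezes the home agent in an uninformed state. The lemma delivers a time $T_f \ge T_d$ together with one of three terminal scenarios. Because $n>36$ we have $\lfloor n/4\rfloor \ge 9$, so the suspicious sets at time~$T_f$ are of size at least $9$ in scenario~$2$ and at least $2$ in scenarios~$1$ and~$3$. In scenario~$1$ all three explorers are destroyed, and in scenario~$3$ the single surviving explorer is stranded in~$C_2$ and can never rejoin the home agent; in both cases the home agent is left alone in~$C_1$ with at least two candidate BBH positions, and Lemma~\ref{lem:impossibility:oneagent-twosusp} immediately produces the desired contradiction.

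The hard case, which I expect to be the main obstacle, is scenario~$2$: the $\le 3$ survivors (the home agent together with one or two alive explorers) all lie in~$C_1$ and share a suspicious set of size $\ge 9$. Here I would iterate the construction. The continuation of~$\calA$ after~$T_f$ must again satisfy Lemma~\ref{lemma: at least two groups are required}, so at every subsequent round before the next destruction, at most two of the $\le 3$ survivors can venture away from~$v_0$. I would then adapt the adversarial construction from Lemma~\ref{lemma: 2 agents exploration cases LB} (or, alternatively, re-run the reasoning inside Lemma~\ref{lemma: consequence of 3 agents Path Graph}) against these two new explorers; because the common suspicious set is still of size $\ge 9$, the adversary retains enough distinct candidate BBH positions to force either another pair of destructions or a final configuration with a lone agent in~$C_1$ that still has $\ge 2$ candidates, at which point Lemma~\ref{lem:impossibility:oneagent-twosusp} closes the argument. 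The bound $n'>36$ is calibrated precisely so that $\lfloor n/4\rfloor \ge 9$, which is exactly the headroom needed to absorb this second round of adversarial eliminations while still leaving at least two suspicious nodes at the end. The main technical subtlety to handle is justifying that the non-trivial histories accumulated by the survivors at time~$T_f$ do not obstruct the re-use of Lemmas~\ref{lemma: at least two groups are required} and~\ref{lemma: 2 agents exploration cases LB} on the continuation; essentially, one must argue that from the survivors' viewpoint the situation at $T_f$ is indistinguishable from that of a fresh sub-instance over the still-suspicious portion of the path.
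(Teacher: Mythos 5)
Your overall strategy --- pin one agent at home via Lemma~\ref{lemma: at least two groups are required}, unleash the adversarial lemmas on the explorers, and reduce the residual configuration to a smaller impossibility --- is the same as the paper's, but two of your steps have concrete gaps. First, you apply Lemma~\ref{lemma: consequence of 3 agents Path Graph} to an ``implicit $3$-agent subsystem'' of at most three explorers, but that lemma is stated and proved for exactly three agents; when the algorithm keeps two or three agents at home (so only one or two explore), you need the separate arguments the paper supplies --- a direct last-visit-to-$v_0$ argument for a lone explorer, and Lemma~\ref{lemma: 2 agents exploration cases LB} for two explorers. The case split on how many agents stay home until $T_d$ is not optional bookkeeping; it determines which lemma is available. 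Second, in scenario~3 you invoke Lemma~\ref{lem:impossibility:oneagent-twosusp}, whose hypothesis is that \emph{exactly one} agent remains alive; in scenario~3 two agents are alive (one stranded in $C_2$, the home agent in $C_1$), so the lemma does not apply as stated. The paper's fix is to have the adversary activate $\frakb$ in every round after $T_f$, so the $C_2$ agent can never cross back, and then to observe that the home agent's suspicious set has size at least $\lfloor n'/4\rfloor > 9$, forcing it to eventually step onto $\frakb$ while attempting to explore all of $C_1$, at which point it is destroyed and all survivors are stranded outside $C_1$.

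On scenario~2, the paper does not iterate the adversarial construction as you propose: it observes that the survivors in $C_1$ have a suspicious set of size at least $\lfloor n'/4\rfloor = n > 9$ and are at most three in number, and then invokes Theorem~\ref{thm:3 agents not sufficient} as a black box on the residual sub-instance. This is why $n' > 36$: one factor-of-$4$ loss in the suspicious set must still leave more than $9$ candidates, the threshold of the $3$-agent impossibility. Your plan to re-run Lemma~\ref{lemma: 2 agents exploration cases LB} by hand could be made to work, but it duplicates effort and inherits exactly the indistinguishability-of-histories concern you flag; the paper absorbs that concern once, inside the reduction to Theorem~\ref{thm:3 agents not sufficient}, rather than re-litigating it per iteration.
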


\begin{proof}
Let us chose the path graph, $P=(V,E,\lambda)$ to be of length $n'=4n>36$. 
Without loss of generality let $v_0$ be the \emph{home}.
    We prove this theorem by proving the following cases one by one.
    
    \noindent\textbf{Case-I:} Let until $T_d$ the number of agents staying at \emph{home} is 3.
    Let $A_x$ be the only agent exploring the path graph $P$ until $T_d$. In this case, due to benign execution $A_x$ has to visit $v_{n-1}$ at some time, say $T_1$. Let $T_1'$ be the time $A_x$ visited $v_0$ last time before $T_1$. So adversary can create $n-1$ instances $I_j=\langle P,4,v_0,v_j\rangle$ where $1\le j\le n-1$ and can destroy $A_x$ in each of these instances in between $T_1'$ and $T_1$ at $T_d$. Thus the remaining 3 agents at home can not distinguish between these instances. Thus the problem can be thought of as solving \textsc{PerpExploration-BBH-Home} with 3 agents for a path graph of length $\lfloor\frac{n'}{4}\rfloor=n>9$ (where $n\ge 9$ as assumed), which is impossible due to Theorem~\ref{thm:3 agents not sufficient}.

     \noindent\textbf{Case-II:} Let until $T_d$ the number of agents staying at \emph{home} is 2. 
     
     In this case, by Lemma~\ref{lemma: 2 agents exploration cases LB}, adversary can create two possible scenarios, i.e., there exists a time $T_f\ge T_d$ such that, either  both the agents are destroyed on or before $T_f$ or, at $T_f$ there is exactly one agent, say $A_x$ among the three agents, remains alive in the component $C_1$ of graph $P-\{\frakb\}$ such that $C_1$ contains the home and $|S_x(T_f)|\ge \lfloor\frac{n'}{2}\rfloor=2n>18$. It can be easily proved that until $T_f$ no other agent leaves 
 \emph{home} if until $T_d$, 2 agents stays at \emph{home}. Otherwise, adversary can force more agents to leave \emph{home} before $T_d$ which contradicts our assumption. So for both these cases, this problem can be reduced to solving \textsc{PerpExploration-BBH-Home} on a path graph of size greater than 9 with at most 3 agents . So by Theorem~\ref{thm:3 agents not sufficient}, we can conclude that there is no algorithm that solves \textsc{PerpExploration-BBH-Home} on a path graph with 4 agents if until $T_d$, 2 agents stay at \emph{home}. 

      \noindent\textbf{Case-III:} Let until $T_d$ the number of agents staying at \emph{home} is 1. 
      
      In this case for the rest of the 3 agents, adversary can create one of the three cases as described in Lemma~\ref{lemma: 3 agents cases} at some time $T_f\ge T_d$. It can be easily shown that agents those are staying at \emph{home} until $T_d$ never leaves from \emph{home} until $T_f$ if number of agents staying at \emph{home} until $T_d$ is 1. It can be shown by providing an adverserial strategy to force the agent at \emph{home} to leave \emph{home} before $T_d$.
      
      Now, If all the three agents are destroyed then the problem reduces to solving the problem \textsc{PerpExploration-BBH-Home} on a path graph with at least $\lfloor\frac{n'}{4}\rfloor=n>9$ nodes with one agent. This is impossible by Theorem~\ref{thm:3 agents not sufficient}. For the case where by time $T_f$ at least one among the 3 agents exploring is destroyed and the remaining agents $A_{x_i}$  are alive at $C_1$ ($C_1$ being the component of $P-\{\frakb\}$ that contains the home) with $|S_{x_i}|\ge \lfloor\frac{n'}{4}\rfloor=n>9$. So the problem can be reduced to solving the \textsc{PerpExploration-BBH-Home} with 2 or 3 agents on a path graph of size $n$. Now it is impossible due to Theorem~\ref{thm:3 agents not sufficient}. Now let us consider the case where there is exactly one alive agent, say $A_x$, among the three agents exploring initially before $T_f$ and it is at $C_2$ having $|S_x(T)| \ge 2$ at $T_f$. So, after $T_f$ it is the sole duty of the agent at \emph{home}, say  $A_h$, to explore $C_1$ perpetually. But due to argument in the last paragraph of Lemma~\ref{lemma: 3 agents cases}, we have $|S_h(T)|\ge \lfloor\frac{n'}{4}\rfloor=n>9$. So, it has to visit $\frakb$ at some time $T_*>T_f$. Now, if the adversary activates, $\frakb$ for all rounds after $T_f$, then no agents from $C_2$ can cross $\frakb$ to come to $C_1$ and also $A_{h}$ gets destroyed at $T_*$. This implies the agents can not solve \textsc{PerpExploration-BBH-Home}.

      From all these above cases it is evident that there can not be an algorithm that solves \textsc{PerpExploration-BBH-Home} with 4 agents on any path graph.
\end{proof}

\begin{theorem} \label{thm:5 agents impossibility}
    A set of 5 agents cannot solve \textsc{PerpExploration-BBH-Home} on a path graph of size $n'$, where $n'>144$.
\end{theorem}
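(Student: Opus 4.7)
The plan is to mirror the proof of Theorem~\ref{theorem: 4 agents cannot solve BBH-home}, scaling up the path size by another factor of~$4$. Concretely, I would fix a path $P=L[v_0,v_{n-1}]$ of length $n'=4n$ with $n>36$, so that $n'>144$ and $\lfloor n'/4\rfloor>36$. By Lemma~\ref{lemma: at least two groups are required}, at least one agent must remain at the home $v_0$ until the destruction time $T_d$, so the case analysis runs over the number of agents stationed at home up to $T_d$, ranging from $4$ down to $1$.

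The three easy cases reduce directly to previously established impossibilities. If $4$ agents stay at home, the adversary destroys the single exploring agent at a time chosen to leave $\lfloor n'/4\rfloor>36$ indistinguishable BBH positions, so the remaining $4$ agents must solve \pbmPerpExplHome\ on a path of size $>36$, impossible by Theorem~\ref{theorem: 4 agents cannot solve BBH-home}. If $3$ agents stay at home, the $2$ exploring agents fall under Lemma~\ref{lemma: 2 agents exploration cases LB}: either both are destroyed, leaving $3$ agents whose suspicious set has size at least $\lfloor n'/2\rfloor$, which contradicts Theorem~\ref{thm:3 agents not sufficient}; or one survives in $C_1$ with $|S|\geq\lfloor n'/2\rfloor$, giving a reduction to $4$ agents on a path of size $>36$, again impossible by Theorem~\ref{theorem: 4 agents cannot solve BBH-home}. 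If $2$ agents stay at home, the $3$ exploring agents fall under Lemma~\ref{lemma: 3 agents cases}, and each of its three outcomes reduces to either $\leq 4$ agents with large suspicious set on a path of size $\geq\lfloor n'/4\rfloor>36$, contradicting Theorem~\ref{theorem: 4 agents cannot solve BBH-home}, or, in scenario~3, to the two home agents being left alone to perpetually explore $C_1$ while the stranded $C_2$ agent cannot transmit the BBH position across~$\frakb$, which is impossible by Theorem~\ref{thm:3 agents not sufficient}.

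The genuinely new case is when only $1$ agent remains at home while $4$ agents explore. The plan here is to establish a new \textbf{four-exploring-agents lemma} analogous to Lemma~\ref{lemma: 3 agents cases}, stating that the adversary can force, at some time $T_f\geq T_d$, one of: (i)~all four exploring agents are destroyed; (ii)~at most three exploring agents survive in $C_1$, no agent is in $C_2$, and every surviving agent~$A_j$ has $|S_j(T_f)|\geq\lfloor n/c\rfloor$ for a small constant~$c$; or (iii)~at least one agent is stranded in $C_2$ while the home agent retains $|S|\geq 2$. The proof would extend the nested-suffix argument of Lemma~\ref{lemma: 3 agents cases} by one further level: track the first times at which $2$, $3$, and $4$ agents simultaneously enter progressively deeper suffixes $L_i$ of the path, and, in each nested configuration, exhibit two or three indistinguishable BBH positions that the adversary uses either to destroy the innermost group or to strand a single agent while leaving the outer ones with large uncertainty. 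Once this lemma is in hand, scenario~(i) leaves only the home agent on a large path, impossible by Theorem~\ref{thm:3 agents not sufficient}; scenario~(ii) yields at most $4$ agents with large suspicious set, impossible by Theorem~\ref{theorem: 4 agents cannot solve BBH-home}; and scenario~(iii) leaves the home agent alone in $C_1$, impossible as above. The main obstacle will be the combinatorial explosion of sub-cases in this four-exploring-agents lemma, and in particular keeping the constant~$c$ small enough that the single path-size bound $n'>144$ suffices for all the reductions to go through simultaneously.
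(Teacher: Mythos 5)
Your overall architecture matches the paper's: the same choice of splitting on the number $k$ of agents that stay at home until $T_d$ (justified by Lemma~\ref{lemma: at least two groups are required}), and your treatments of $k=4,3,2$ are essentially identical to the paper's Cases I--III, reducing via Lemma~\ref{lemma: 2 agents exploration cases LB} and Lemma~\ref{lemma: 3 agents cases} to Theorem~\ref{theorem: 4 agents cannot solve BBH-home} and Theorem~\ref{thm:3 agents not sufficient}. (One remark on Case $k=2$, scenario~3: the paper does not dispose of it by a one-line appeal to Theorem~\ref{thm:3 agents not sufficient}; it needs a further, fairly delicate argument showing that the two home agents cannot safely rendezvous with the agent stranded in $C_2$ to recover the BBH position --- the adversary forces any such meeting to happen at one of two indistinguishable candidate nodes and punishes every attempt. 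Your summary glosses over this, but it is the same difficulty you would face again in the $k=1$ case.)

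The genuine gap is in the $k=1$ case, which you correctly identify as the new content but do not actually prove. Two concrete problems. First, your proposed four-exploring-agents trichotomy is not airtight as stated: scenario~(i) (``all four exploring agents are destroyed'') is only a contradiction if the lone home agent is simultaneously left with at least two consistent BBH positions --- otherwise it can explore $C_1$ alone --- so the suspicious-set guarantee must be part of the statement; and scenarios~(ii)/(iii) do not exclude the possibility that some exploring agent survives \emph{in $C_1$} having pinpointed the BBH exactly, in which case the algorithm succeeds and no contradiction follows. Every branch of the lemma must bound the uncertainty of \emph{every} survivor in $C_1$. Second, the paper's actual endgame for this case is a specific move your sketch does not reach: after peeling off the sub-cases where only one, two, or three exploring agents go deep (handled by nested suffixes $L_{12n}\supset L_{8n}\supset L_{4n}$ and a ``segmented home'' reuse of Lemma~\ref{lemma: consequence of 3 agents Path Graph}), there must be a first round $T_m$ at which all four exploring agents are simultaneously in $L_{4n}$; the adversary then picks any of the $4n-1$ instances with $\frakb=v_i$, $1\le i\le 4n-1$, and activates the black hole \emph{permanently} from $T_m$ onward, stranding all four in $C_2$ and leaving the single home agent unable to distinguish among $4n-1$ candidates. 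Without this (or an equivalent) closing argument, the case analysis does not terminate, and your acknowledged ``combinatorial explosion'' is precisely the part of the proof that remains to be done.
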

\begin{proof}
    Let without loss of generality,the path graph $P=(V,E,\lambda)$ we assume is $L[v_0,v_{16n-1}]$ such that $n>9$. Let $v_0$ be the \emph{home}. Let $k$ be the number of agents that stays at $v_0$ until $T_d$. Based on the values of $k \in \{1,2,3,4\}$, we have the following cases. We prove this theorem by proving these cases. 

    \textbf{Case-I ($k=4$):} Let until $T_d$ the number of agents staying at home is 4. In this case only one agent, say $A_x$ explores the path graph until $T_d$. Now by similar argument as in Case-I of Theorem~\ref{theorem: 4 agents cannot solve BBH-home}, at $T_d$, the problem can be reduced to solving \pbmPerpExplHome~with four agents on a path graph of length at least $4n>36$ which is impossible due to Theorem \ref{theorem: 4 agents cannot solve BBH-home}. 

\textbf{Case-II ($k=3$):} Let until $T_d$, the number of agents staying at $v_0$ is 3. Now, let $A_{x_1}$ and $A_{x_2}$ be the only two agents exploring the path graph $L[v_0,v_{16n-1}]$ until $T_d$. Now as stated in Lemma~\ref{lemma: 2 agents exploration cases LB}, adversary can create two scenarios which are as follows. There exists a round $T_f\ge T_d$ such that at $T_f$ either both the agents are destroyed or exactly one, say $A_{x_i}$ where $i \in \{1,2\}$ remains alive with  $|S_{x_i}(T_f)|\ge 8n$. Also till $T_f$ no agents that are staying at \emph{home} till $T_d$ leaves \emph{home}. As otherwise adversary can make them leave \emph{home} even before $T_d$, which is a contradiction to our assumption that only two agents explore the path until $T_d$.  So, for both the cases the problem now reduces to solving \pbmPerpExplHome~using at most 4 agents on a path of length at least $4n>36$ which is impossible due to Theorem~\ref{theorem: 4 agents cannot solve BBH-home}.

\textbf{Case-III ($k=2$):} Let until $T_d$, the number of agents staying at $v_0$ is 2. So, there are three agents, say, $A_{x_1},A_{x_2}$ and $A_{x_3}$ exploring the path graph until $T_d$. By Lemma~\ref{lemma: 3 agents cases}, the adversary can create three scenarios at some time $T_f\ge T_d$. Also. for each of these scenarios the agents staying at \emph{home} till $T_d$ never leaves \emph{home} even until $T_f$. As otherwise by delaying $T_d$, adversary can enforce agents staying at \emph{home} till $T_d$ to leave \emph{home} before $T_d$ which is a contradiction to our assumption that until $T_d$ exactly two agents stay at \emph{home}.

In the first scenario all three agents are destroyed in at least $4n>36$ many distinct instances denoted by $I_j=\langle L[v_0,v_{16n-1}],5,v_0, v_j\rangle$, where $1\le j\le 4n$ (by argument from the last paragraph of Lemma~\ref{lemma: consequence of 3 agents Path Graph}). So at $T_f$, the problem reduces to solving \pbmPerpExplHome~on a path of length at least $4n>36$  with two agents which is impossible due to Theorem~\ref{thm:3 agents not sufficient}. Now for the second scenario, where adversary destroys at least one agent among $A_{x_i}$ where $i \in \{1,2,3\}$, without loss of generality, let it be the agent $A_{x_3}$. Then, $A_{x_1}$ and $A_{x_2}$ stays at $C_1$, where $C_1$ is the component of $L[v_0,v_{16n-1}]\setminus \{\frakb \}$ containing $v_0$. Also due to Lemma~\ref{lemma: consequence of 3 agents Path Graph}, $|S_{x_i}(T_f)| \ge \frac{16n}{4}=4n>36$, for each $i\in \{1,2\}$. So at $T_f$, the problem now reduces to solving \pbmPerpExplHome~with 3 or 4 agents on a path graph of size at least $4n>36$ which is impossible due to Theorem~\ref{theorem: 4 agents cannot solve BBH-home}. Now we consider the scenario three at $T_f$, where the adversary can create a situation where, exactly one of $A_{x_1}, A_{x_2}$ and $A_{x_3}$ remains alive at the component $C_2$, where $C_2$ is the component of $L[v_0,v_{16n-1}]\setminus\{\frakb\}$ such that $v_0 \notin C_2$. Let without loss of generality, $A_{x_1}$ be the alive agent at $C_2$ then as per the condition in Lemma~\ref{lemma: consequence of 3 agents Path Graph}, we have $|S_{x_1}(T_f)| \ge 2$ where $T_f\ge T_d$. In fact, $S_{x_1}(T_f)$ is a contiguous segment of the path graph. Also from the argument in last paragraph of Lemma~\ref{lemma: consequence of 3 agents Path Graph}, adversary can create $4n$ instances denoted by $I_j=\langle L[v_0,v_{16n-1}], 5, v_0, v_j\rangle$, where $1\le j\le 4n$, such that for each of these $4n>36$ instances, adversary can induce scenario 3 at some time $T_f$. So, after scenario 3 is induced at $T_f$, the two agents, say $A_{x_4}$ and $A_{x_5}$ at \emph{home}, has the sole duty to explore $C_1$. Note that, at $T_f$, agents $A_{x_4}$ and $A_{x_5}$, has $|S_{x_i}(T_f)| \ge 4n>36$ where $i\in \{4,5\}$. Now by Theorem~\ref{thm:3 agents not sufficient}, it is impossible for two agents to solve \pbmPerpExplHome\ on a path graph of length $4n>36$. So to able to perpetually explore, at least one of the agents, i.e., $A_{x_4}$ or $A_{x_5}$ must meet with $A_{x_1}$ after $T_f$. Let for $A_{x_1}$, two possible positions of BBH are, $v_{p}$ and $v_{p+1}$. Without loss of generality, let $A_{x_1}$ meets $A_{x_4}$. Note that before meeting, if $A_{x_1}$ reaches $v_{p+1}$, then for the instance $\langle L[v_0,v_{16n-1}],5,v_0,v_{p+1}\rangle$ adversary can destroy $A_{x_1}$ before it can meet any of $A_{x_4}$ and $A_{x_5}$, thus for the remaining alive agents $A_{x_4}$ and $A_{x_5}$, the problem still remains to solve \pbmPerpExplHome~on a path graph of size at least $4n>36$ which is impossible due to Theorem~\ref{thm:3 agents not sufficient}. Also we claim that $A_{x_4}$ cannot meet $A_{x_1}$ for the first time at $v_x$ where $x >p+1$. This is because in this case $A_{x_4}$ have to cross $v_p$, and adversary can chose from any of the two distinct instances $\langle L[v_0,v_{16n-1}],5,v_0,v_{p}\rangle$ and $\langle L[v_0,v_{16n-1}],5,v_0,v_{p+1}\rangle$ to destroy $A_{x_4}$ in both instances and the other agent in $C_1$ i.e., $A_{x_5}$ can not distinguish between these two instances, if it is located at $v_x$ ($x<p$) when $A_{x_4}$ moves to $v_{p+1}$ from $v_p$. Thus failing to solve \pbmPerpExplHome~as well. Note that when $A_{x_4}$ moves from $v_{p}$ to $v_{p+1}$, if at this round $A_{x_5}$ is on $v_x$ where $x \ge p$ then for the instance  $\langle L[v_0,v_{16n-1}],5,v_0,v_{p}\rangle$ adversary activates the BBH for all the upcoming rounds and all alive agents gets stranded in $C_2$, again failing to solve \pbmPerpExplHome. So if $A_{x_4}$ and $A_{x_1}$ meets, it must be at the node $v_{p+1}$. Let they meet first time on $v_{p+1}$ at some round $T^*$. In this case at $T^*-1$ $A_{x_4}$ must be on $v_p$ and $A_{x_1}$ must be on $v_{p+2}$. Now if at $T^*-1$, $A_{x_5}$ is with $A_{x_4}$ at $v_p$ then, adversary can chose the instance $\langle L[v_0,v_{16n-1}],5,v_0,v_{p}\rangle$ and makes the BBH act as BH for the rest of the execution from $T^*-1$ to destroy $A_{x_4}$ and $A_{x_5}$ and make the only living agent $A_{x_1}$, get stranded at $C_2$, thus making \pbmPerpExplHome~impossible. On the other hand, if at $T^*-1$, $A_{x_5}$ is not with $A_{x_4}$ at $v_p$ in $C_1$ then adversary can chose any of the following two instances, $I_1= \langle L[v_0,v_{16n-1}],5,v_0,v_{p}\rangle$ and $I_2= \langle L[v_0,v_{16n-1}],5,v_0,v_{p+1}\rangle$, and makes the BBH act as BH for the rest of the execution. In this case $A_{x_5}$ can not distinguish between these two instances and can not meet $A_{x_1}$ for any further help. So in this case also, \pbmPerpExplHome\ remains impossible to solve. Next we tackle the case where $k=1$.

\textbf{Case-IV} ($k=1$): In this case until $T_d$, 4 agents namely, $A_{x_1}, A_{x_2}, A_{x_3}$ and $A_{x_4}$, explores the path graph $L[v_0,v_{16n-1}]$. Now let only one agent, without loss of generality, say $A_{x_1}$, enters $L_{12n}$, then by similar argument as in first paragraph of Lemma~\ref{lemma: 2 agents exploration cases LB}, adversary can create $4n>36$ instances denoted by $I_j=\langle L[v_0,v_{16n-1}], 5, v_0,v_j \rangle$ ($12n\le j\le 16n-1$) for each of which it can destroy $A_{x_1}$. Thus the problem now reduces to solving \pbmPerpExplHome~by 4 agents on a path graph of length at least $4n>36$, which is impossible due to Theorem~\ref{theorem: 4 agents cannot solve BBH-home}. 

Now let only two agents, among $A_{x_1}, A_{x_2}, A_{x_3}$ and $A_{x_4}$ enters $L_{12n}$. Let these agents who enters $L_{12n}$ be, $A_{x_1}$ and $A_{x_2}$. Without loss of generality, let there exists a round $T_1$ when $A_{x_1}$ is at $v_{16n-1}$. Let $T_1'$ be the time when $A_{x_1}$ was at $v_{12n}$ for the last time before $T_1$. We can say that there exists a time $T_1^*$ between $T_1'$ and $T_1$  when both $A_{x_1}$ and $A_{x_2}$ are in $L_{12n} \subset L_{8n}$ (due to similar argument as in second and third paragraph of Lemma~\ref{lemma: consequence of 3 agents Path Graph}). If no other agents ever enters $L_{8n}$ when $A_{x_1}$ and $A_{x_2}$ are in $L_{8n+j}$ where $0\le j\le 4n$, then we can also conclude there are rounds $T_j^*$ such that at $T_j^*$ both these agents $A_{x_1}$ and $A_{x_2}$ are in $L_{8n+j}$ where $0 \le j\le 4n$. Now due to similar argument used in Lemma~\ref{lemma: 2 agents exploration cases LB}, adversary can create $4n$ distinct instances denoted as $I_{\alpha}= \langle L[v_0,v_{16n-1}], 5,v_0, v_{8+\alpha} \rangle$ ($0 \le \alpha \le 4n$) such that it can destroy both the agents for each of these $4n>36$ instances. So now the problem reduces to solving \pbmPerpExplHome~with 3 agents on a path graph of length at least $4n>36$ which is impossible due to Theorem~\ref{thm:3 agents not sufficient}. 

Now as argued in Lemma~\ref{lemma: 3 agents cases}, there is a time when at least three agents are in $L_{8n}$. So, let there exists a round $T_2$ such that three agents, without loss of generality, say $A_{x_1},A_{x_2}$ and $A_{x_3}$ are in $L_{8n} \subset L_{4n}$. Let $A_{x_1},A_{x_2}$ and $A_{x_3}$ are in  $L_{4n+j}$ ( for all $0\le j\le 4n$), but $A_{x_4}$ never enters $L_{4n}$. Now as argued in the last paragraph of Lemma~\ref{lemma: consequence of 3 agents Path Graph}, adversary can create $4n$ many instances  denoted as $I_{\beta}=\langle L[v_0,v_{16n-1}], 5, v_0, v_{8n-\beta}\rangle$  ($0\le \beta \le 4n$) for which it can create scenario 1 or 3 of Lemma~\ref{lemma: consequence of 3 agents Path Graph}, considering each vertex of segment $[v_0,v_{4n-1}]$ as \textit{segmented home}, which implies the agent which did not leave actual home and $A_{x_4}$ which did not leave the segment $[v_0,v_{4n-1}]$ are at segmented home and rest of three agents left segmented home, to explore the rest of the path.  Now if scenario 1 is achieved,  the problem reduces to solving \pbmPerpExplHome\ on a path of length at least $4n>36$ with two agents which is impossible due to Theorem~\ref{thm:3 agents not sufficient}. If scenario 3 is achieved then as argued in the case-III of this proof, similarly we can say that solving \pbmPerpExplHome\ is impossible for this case too. 

So, this concludes that there is a round $T_3$ (the existence of such a round follows by extending the arguments in Lemma~\ref{lemma: consequence of 3 agents Path Graph}) when  each of the 4 agents, $A_{x_1},A_{x_2},A_{x_3}$ and $A_{x_4}$ needs to be in $L_{4n} \subset L_1$, as in earlier case we assumed that $A_{x_4}$ did not leave the segment $[v_0,v_{4n-1}]$. Let $T_{m}$ be the first round, at which all these four agents are at $L_{4n}$. In this situation, the adversary can create $I_i=\langle L[v_0,v_{16n-1}],5,v_0,v_i\rangle$, where $1\le i \le 4n-1$ instances, and in each instance it activates the BBH from round $T_m$ onwards. Now, this shows all agents except $A_{x_5}$ (which is at \emph{home} till $T_m$) gets stuck at $C_2$, and it is impossible for $A_{x_5}$ to detect the BBH in the segment $[v_1,v_{4n-1}]$.

So we conclude that for each of the cases (i.e., $k\in \{1, 2, 3, 4\}$) it is impossible to solve \pbmPerpExplHome. So, it is impossible to solve \pbmPerpExplHome\ with 5 agents.
    
\end{proof}

\section{A perpetual exploration algorithm for paths with a BBH} \label{Appendix: Path Alg}

We provide an algorithm that solves \pbmPerpExplHome\ with 6 agents, even when the size of the path is unknown to the agents. We also show how to adapt this algorithm to solve \pbmPerpExpl\ with 4 agents. Note that, by Theorems~\ref{thm:3 agents not sufficient} and~\ref{thm:5 agents impossibility}, these are the optimal numbers of agents and they cannot be reduced, even assuming knowledge of the size of the path.

We call this algorithm \textsc{Path\_PerpExplore-BBH-Home}. Let $\left\langle P,6,h,\frakb\right\rangle$ be an instance of~\pbmPerpExplHome, where $P=(V,E,\lambda)$ is a port-labeled path. As per Definition~\ref{def:correctalg}, all agents are initially co-located at~$h$ (the home node). To simplify the presentation, we assume that $h$ is an extremity of the path, and we explain how to modify the algorithm to handle other cases in Remark~\ref{remark: h is not an extreme end}. Our algorithm works with 6 agents. Initially, among them the four least ID agents will start exploring $P$, while the other two agents will wait at $h$, for the return of the other agents. We first describe the movement of the four least ID agents say, $a_0$, $a_1$, $a_2$ and $a_3$, on $G$. Based on their movement, they identify their role as follows: $a_0$ as $F$, $a_1$ as $I_2$, $a_2$ as $I_1$ and $a_3$ as $L$. The exploration is performed by these four agents in two steps, in the first step, they form a particular pattern on $P$. Then in the second step, they move collaboratively in such a way that the pattern is translated from the previous node to the next node in five rounds. Since the agents do not have the knowledge of $n$, where $|V|=n$, they do the exploration of $P$, in $\log n$ phases, and then this repeats. In the $i$-th phase, the four agents start exploring $P$ by continuously translating the pattern to the next node, starting from $h$, and moves up to a distance of $2^i$ from $h$. Next, it starts exploring backwards in a similar manner, until they reach $h$. It may be observed that, any phase after $j-$th phase (where $2^j=n$), the agents behave in a similar manner, as they behaved in $j-$th phase, i.e., they move up to $2^j$ distance from $h$, and then start exploring backwards in a similar manner, until each agent reach $h$. Note that since all the agents have $\mathcal{O}(\log n)$ bits memory, each of them knows which phase is currently going on. Let $T_i$ be the maximum time, required for these 4 agents (i.e., $L,~I_1,~I_2$ and $F$) to return back to $h$ in $i-$th phase. Let us denote the waiting agents at $h$, i.e., $a_4,~a_5$ as $F_1,~F_2$. Starting from the $i$-th phase, they wait for $T_i$ rounds for the other agents to return. Now, if the set of agents $L,~I_1,~I_2$ and $F$ fail to return back to \emph{home} within $T_i$ rounds in phase, $i$, the agents $F_1$ and $F_2$ starts moving \textit{cautiously}. In the cautious move, first $F_1$ visits the next node and in the next round, returns back to the previous node to meet with $F_2$, that was waiting there for $F_1$. If $F_1$ fails to return then $F_2$ knows the exact location of $\frakb$, which is the next node $F_1$ visited. In this case, $F_2$ remains the component of $h$, hence can explore it perpetually. Otherwise, if $F_1$ returns back to $F_2$, then in the next round both of them move together to the next node.

Now, if the first set of agents (i.e., $L,~I_1,~I_2$ and $F$) fails to return to $h$ within $T_i$ rounds, then that means the adversary has activated $\frakb$. In this case we claim that, at least one agent among $L,~I_1,~I_2$ and $F$ stays alive at a node in $C_2$ knowing the location of $\frakb$, where $C_2$ is the component of $P- \{\frakb\}$, such that it does not contain $h$. Let $\alpha$ be such an alive agent, where $\alpha\in\{L,I_1,I_2,F\}$. Then, $\alpha$ places itself on the adjacent node of $\frakb$ in $C_2$. It may be noted that, $\alpha$ knows which phase is currently going on and so it knows the exact round at which $F_1$ and $F_2$ starts cautious move. Also, it knows the exact round at which $F_1$ first visits $\frakb$, say at round $r$. $\alpha$ waits till round $r-1$, and at round $r$ it moves to $\frakb$. Now at round $r$, if adversary activates $\frakb$, it destroys both $F_1$ and $\alpha$ then, $F_1$ fails to return back to $F_2$ in the next round. This way, $F_2$ knows the exact location of $\frakb$, while it remains in $C_1$. So it can explore $C_1$ by itself perpetually. The right figure in Fig. \ref{fig:perpetuallyexplore-timediagram}, represents the case where $L$ detects $\frakb$ at round $r_0+2$, and waits till round $r_1+3$. In the meantime, at round $r_1$ (where $r_1=r'_0+T_i$, $r'_0<r_0$ and $r'_0$ is the first round of phase $i$), $F_1$ and $F_2$ starts moving cautiously. Notably, along this movement, at round $r_1+4$, $F_1$ visits $\frakb$, and at the same time $L$ as well visits $\frakb$ from $v_{j+3}$. The adversary activates $\frakb$, and both gets destroyed. So, at round $r_1+5$, $F_2$ finds failure of $F_1$'s return and understands the next node to be $\frakb$, while it is present in $C_1$. Accordingly, it perpetually explores $C_1$.

On the other hand, if at round $r$, adversary doesn't activate $\frakb$, then $F_1$ meets with $\alpha$ and knows that they are located on the inactivated $\frakb$. In this case they move back to $C_1$ and starts exploring the component $C_1$, avoiding $\frakb$. The left figure in Fig. \ref{fig:perpetuallyexplore-timediagram} explores this case, where $L$ detects the position of $\frakb$ at round $r_0+2$, and stays at $v_{j+3}$ until $r_1+3$, then at round $r_1+4$, when $F_1$ is also scheduled to visit $\frakb$, $L$ also decides to visit $\frakb$. But, in this situation, the adversary does not activate $\frakb$ at round $r_1+4$, so both $F_1$ and $L$ meets, gets the knowledge from $L$ that they are on $\frakb$. In the next round, they move to $v_{j+1}$ which is a node in $C_1$, where they meet $F_2$ and shares this information. After which, they perpetually explore $C_1$.

We now describe how the set of four agents, i.e., $L$, $I_1$, $I_2$ and $F$ create and translate the pattern to the next node. 

\noindent\textbf{Creating pattern:} $L, I_1,I_2$ and $F$ takes part in this step from the very first round  of any phase, starting from $h$. In the first round, $L,~I_1$ and $I_2$ moves to the next node. Then in the next round only $I_2$ returns back to \emph{home} to meet with $F$. Note that, in this configuration the agents $L$, $I_1$, $I_2$ and $F$ are at two adjacent nodes while $F$ and $I_2$ are together and $L$ and $I_1$ are together on the same node. We call this particular configuration the \textit{pattern}, and it is pictorially explained in Fig \ref{fig: make pattern formation}.  

\begin{figure}[h!]
    \centering
    \begin{minipage}{0.45\textwidth}
       \centering
        \includegraphics[width=\textwidth]{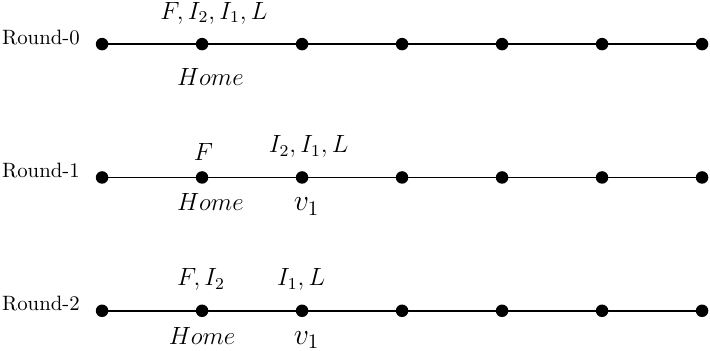}
        \caption{Depicts step by step movement, while creating the pattern from $h$}
        \label{fig: make pattern formation}
    \end{minipage}%
    \hfill
    \begin{minipage}{0.45\textwidth}
         \centering
        \includegraphics[width=\textwidth]{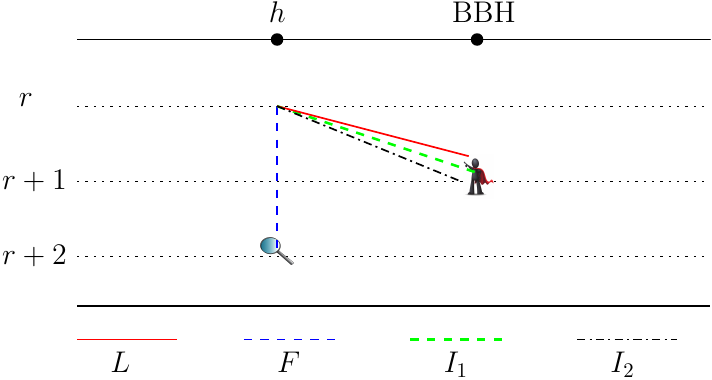}
        \caption{The time diagram, which depicts the case when $F$ detects the position of $\frakb$ while it is at $h$, while creating pattern.}
        \label{fig:make pattern detection}
    \end{minipage}
    \label{fig:create-and-translate}
\end{figure}

Note that, in the second round of creating pattern if $I_2$ does not return to $F$, $F$ knows that it must have been destroyed in the next node by the BBH. So $F$ knows the exact location of the BBH and explores the component $C_1$ (where $P-\{\frakb\}=C_1\cup C_2$, where $C_1,C_2\subset P$ and $h\in C_1$) perpetually, refer to Fig. \ref{fig:make pattern detection}. 

\noindent\textbf{Translating pattern:} After the pattern is formed in first two rounds of a phase, the agents translate the pattern to the next node until the agent $L$ reaches either at one end of the path graph $P$, or, reaches a node at a distance $2^i$ from $h$ in the $i$-th phase. Let, $v_0,~v_1,~v_2$ be three consecutive nodes on $P$, where, suppose $L$ and $I_1$ is on $v_1$ and $F$ and $I_2$ is on $v_0$. This translation of the pattern makes sure that after 5 consecutive rounds $L$ and $I_1$ is on $v_2$ and $F$ and $I_2$ is on $v_1$, thus translating the pattern by one node. We call these 5 consecutive rounds where the pattern translates starting from a set of 2 adjacent nodes, say $v_0,~v_1$, to the next two adjacent nodes, say $v_1,~v_2$, a \textit{sub-phase} in the current phase. The description of the 5 consecutive rounds i.e., a sub-phase without the intervention of the BBH at $\frakb$ (such that $\frakb\in\{v_0,v_1,v_2\}$) are as follows.

\noindent\underline{\textbf{Round 1:}} $L$ moves to $v_2$ from $v_1$.

\noindent\underline{\textbf{Round 2:}} $I_2$ moves to $v_1$ from $v_0$ and $L$ moves to $v_1$ back from $v_2$.

\noindent\underline{\textbf{Round 3:}} $I_2$ moves back to $v_0$ from $v_1$ to meet with $F$. Also, $L$ moves back to $v_2$ from $v_1$.

\noindent\underline{\textbf{Round 4:}} $F$ and $I_2$ moves to $v_1$ from $v_0$ together.

\noindent\underline{\textbf{Round 5:}} $I_1$ moves to $v_2$ from $v_1$ to meet with $L$.

So after completion of round 5 the pattern is translated from nodes $v_0$, $v_1$ to $v_1$, $v_2$. The pictorial description of the create and translate pattern are explained in Fig. \ref{fig:translate-pattern}.

\begin{figure}[h!]
    \centering
    \begin{minipage}{0.48\textwidth}
        \centering
        \includegraphics[width=0.8\textwidth]{Write-Up/Figures/Translating-Pattern-Flip.pdf}
        \caption{Depicts the step-wise interchange of roles, when the agents reach the end of the path graph, while performing translate pattern}
        \label{fig:translate-pattern-flip}
    \end{minipage}%
    \hfill
    \begin{minipage}{0.48\textwidth}
        \centering
        \includegraphics[width=1.0\textwidth]{Write-Up/Figures/Translating-Pattern.pdf}
        \caption{Depicts translating pattern steps}
        \label{fig:translate-pattern}
    \end{minipage}
\end{figure}

Now, suppose $v_2$ is the node upto which the pattern was supposed to translate at the current phase (or, it can be the end of the path graph also). So, when $L$ visits $v_2$ for the first time, in round 1 of  some sub-phase it knows that it has reached the end of the path graph for the current phase. Then in the same sub-phase at round 2 it conveys this information to $I_2$ and $I_1$. In round 3 of the same sub-phase, $F$ gets that information from $I_2$. So at the end of the current sub-phase all agents has the information that they have explored either one end of the path graph or the node upto which they were supposed to explore in the current phase. In this case, they interchange the roles as follows: the agent which was previously had role $L$ changes role to $F$, the agent having role $F$ changes it to $L$, $I_1$ changes role to $I_2$ and $I_2$ changes role to $I_1$ (refer to Fig. \ref{fig:translate-pattern-flip}). Then from the next sub-phase onwards they start translating the pattern towards $h$. It may be noted that, once $L$ (previously $F$) reaches $h$ in round 1 of a sub-phase, it conveys this information to the remaining agents in similar manner as described above. So, at round 5 of this current sub-phase, $F$ (previously $L$) and $I_2$ (previously $I_1$) also reaches $h$, and meets with $L$ (previously $F$) and $I_1$ (previously $I_2$). We name the exact procedure as \textsc{Translate\_Pattern}.

\noindent\textbf{Intervention by the BBH:} Next, we describe the situations which can occur, if the BBH destroys at least one among these 4 agents within sub-phase $i$ in phase $j$, say. At the starting of sub-phase $i$, suppose the agents $L$, $I_1$ is on $v_1$ and $F$, $I_2$ is on $v_0$, at the end sub-phase $i$, the goal of $L$, $I_1$ is to reach $v_2$ and $F$, $I_2$ is to reach $v_1$. Without loss of generality, here we are assuming that the pattern is translating away from $h$ in the current sub-phase. 

\noindent\underline{\textbf{Case-I:}} In this case, we look in to the case when $v_2$ is not the end point for the current sub-phase.

\noindent\underline{\textbf{If $v_2$ is $\frakb$:}} We describe the cases, which can arise, depending on at which round within this sub-phase, $\frakb$ gets activated and destroys at least one among these 4 agents.

Let at round 1, $\frakb$ is activated for the first time in the sub-phase $i$. Then, in round 2 of the sub-phase $i$, $L$ does not return to $v_1$ and meets with $I_1$ and $I_2$. Thus $I_1$ and $I_2$ knows that $v_2$ is $\frakb$ and they can explore $C_1$ by themselves as in round 2 all of them are at $C_1$ (refer to (i) of Fig. \ref{fig:v2 is BBH}, where $v_{j+2}$ is $\frakb$ and it symbolises $v_2$ in the description, also here in the figure, the sub-phase starts from round $r$). 

\begin{figure}[h!]
    \centering
    \includegraphics[width=0.8\linewidth]{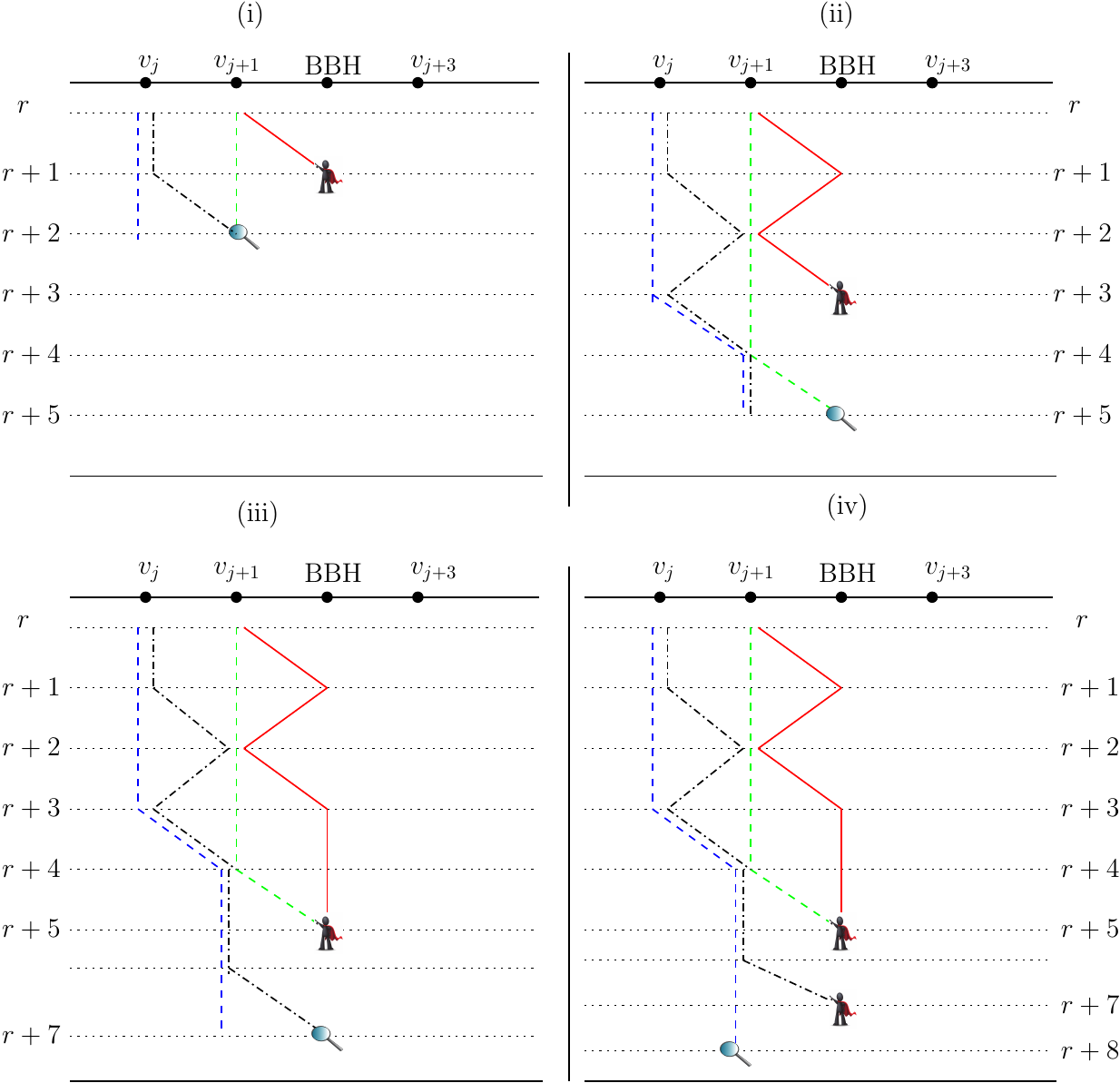}
    \caption{Time Diagram depicting each case, that can arise if $\frakb=v_{j+2}$ intervenes, while 4 agents are translating the pattern.}
    \label{fig:v2 is BBH}
\end{figure}

If, it is activated at round 2 for the first time in in the sub-phase $i$, then no agents are destroyed, as at round 2 none of these agents are present at $v_2$, and they continue the next rounds of the sub-phase as usual. 

If at round 3, $\frakb$ is activated for the first time in the sub-phase $i$, then at round 5 of sub-phase $i$ when $I_1$ reaches $v_2$ and it stays alive then it knows that it is on $\frakb$ as $L$ is not there (refer to (ii) of Fig. \ref{fig:v2 is BBH}, where $L$ gets destroyed at round $r+3\approx 3$ of sub-phase $i$, and $I_1$ detects it at round $r+5\approx 5$ of sub-phase $i$, when it visits $\frakb=v_{j+2}$). So, in this case it moves back to $v_1 \in C_1$ and starts exploring $C_1$ by itself. On the other hand if at round 5 of sub-phase $i$, $I_1$ is also destroyed at $v_2$ then at round 2 of sub-phase $(i+1)$, when $I_2$ reaches $v_2$, if it stays alive, it knows that it is on $\frakb$ as both $I_1$ and $L$ are missing (refer to (iii) of Fig. \ref{fig:v2 is BBH}, where at round $r+5\approx 5$ of sub-phase $i$, $I_1$ also gets destroyed, and $I_2$ detects it when it visits $\frakb$ at round $r+7\approx 2$ of sub-phase $(i+1)$). So, similarly it moves back to $v_1$ and starts exploring $C_1$ by itself. Now, if $I_1$ is also destroyed at round 2 of sub-phase $(i+1)$ then at round 3 of sub-phase $(i+1)$, $F$ at $v_1 (\in C_1)$ finds that $I_2$ is missing. From this, $F$ interprets that the BBH is at $v_2$, and thus it starts exploring $C_1$ by itself. 


Let $\frakb$ is activated at round 4, for the first time in the sub-phase $i$ then it destroys $L$, so at round 5 of the same sub-phase, when $I_1$ moves to $v_2$ and it stays alive, it finds that $L$ is missing. Hence, it knows that it is on $\frakb$ (refer to (ii) of Fig. \ref{fig:v2 is BBH}, except that instead of round $r+3\approx 3$ of sub-phase $i$, $\frakb$ is activated at round $r+4\approx 4$ of sub-phase $i$) and moves back to $v_1 \in C_1$ and starts exploring $C_1$ by itself. 

If $\frakb$ is activated at $v_2$ in the sub-phase $i$ for the first time in round 5, then it destroys both $L$ and $I_1$ at $v_2$. Next, when $I_2$ moves to $v_2$ at round 2 of sub-phase $(i+1)$, suppose $\frakb$ is again activated (if not already activated), then $I_2$ is also destroyed at round 2 of sub-phase $(i+1)$, so it fails to return back to $v_1$ at the same round. So, at round 3 of sub-phase $(i+1)$, $F$ knows that $v_2$ is the exact location of BBH by finding out that $I_2$ is missing at $v_1$ (refer to (iv) of Fig. \ref{fig:v2 is BBH}, where after $L$, $I_1$ gets destroyed at round $r+5\approx 5$ of sub-phase $i$, $I_2$ also gets destroyed at round $r+7\approx 2$ of sub-phase $(i+1)$, then $F$ understands this at round $r+8\approx 3$ of sub-phase $(i+1)$). So, $F$ then starts exploring $C_1$ by itself.

 \noindent\underline{\textbf{If $v_1$ is $\frakb$:}} Again, we describe the cases, which can arise depending on at which round within $i$-th sub-phase, $\frakb$ is activated, and destroys at least one among these 4 agents.

\begin{figure}
    \centering
    \includegraphics[width=0.9\linewidth]{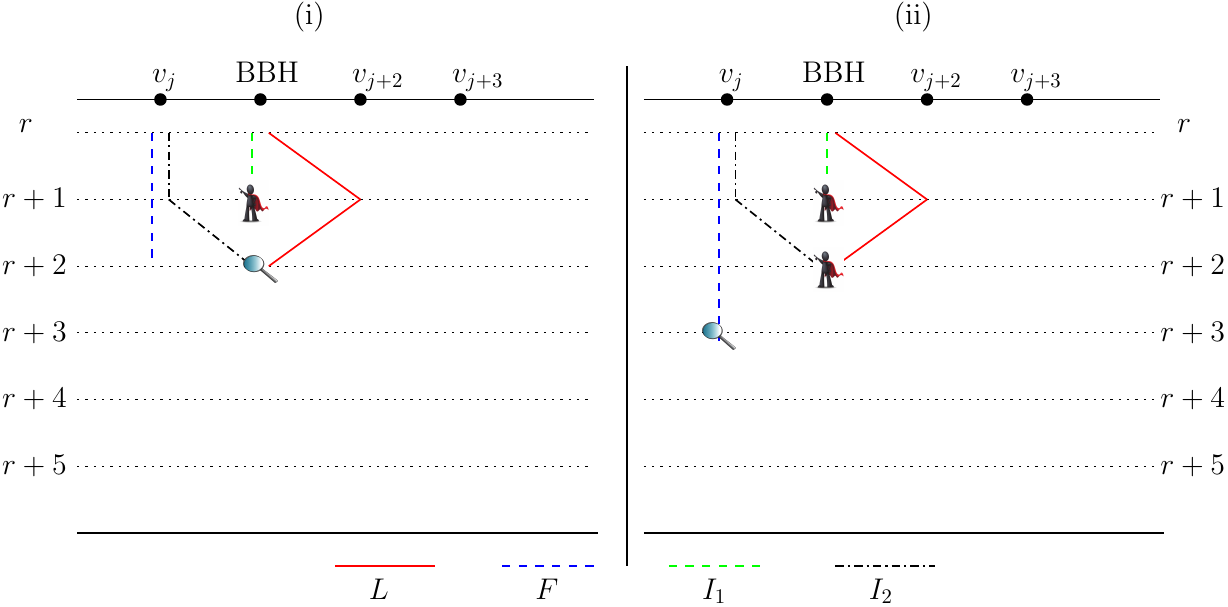}
    \caption{Depicts the time diagram, where $v_1=\frakb$, in (i) it is activated at $r+1$ and in (ii) it is activated at rounds $r+1$ and $r+2$}
    \label{fig:v1 is BBH1}
\end{figure}

Let at round $1$, $\frakb$ is activated for the first time in the sub-phase $i$. Then, in round 2 of the sub-phase $i$, when $I_2$ and $L$ visits $v_1$, and if they stay alive, they find that $I_1$ is missing. This information helps them understand that they are on $\frakb$ (refer to (i) of Fig. \ref{fig:v1 is BBH1} where at round $r+2\approx 2$ of sub-phase $i$, $L$ and $I_1$ detect the position of $\frakb$), so they move back to $v_0\in C_1$ and starts exploring $C_1$, perpetually. On the other hand, if $\frakb$ is activated at round 2, then both $L$ and $I_2$ also gets destroyed at $v_1$ at round 2. In this situation, at round 3, whenever $F$ (present at $v_0\in C_1$) finds that $I_2$ has not arrived from $v_1$, it understands that $v_1$ is $\frakb$ (refer to (ii) of Fig. \ref{fig:v1 is BBH1}, where $L$ detects the position of $\frakb$ at round $r+3\approx 3$ in sub-phase $i$, after $I_2$, $L$ gets destroyed at round $r+2$ and $I_1$ gets destroyed at $r+1$), and explores $C_1$ perpetually.

If at round 2, $\frakb$ is activated for the first time in the sub-phase $i$. Then, in round 2 itself, $I_1$, $I_2$ and $L$ gets destroyed at $v_1$. In this case, $F$ present at $v_0\in C_1$, finds that $I_2$ fails to return from $v_1$ at round $3$, hence it understands, $v_1$ is $\frakb$, and performs perpetual exploration of $C_1$. Refer to (ii) of Fig. \ref{fig:v1 is BBH1}, a similar case is discussed, where all three agents $L$, $I_1$ and $I_2$ gets destroyed due to $v_{j+1}$ being $\frakb$ and that is detected by $F$ at round $r+3\approx 3$ in sub-phase $i$.

\begin{figure}
    \centering
    \includegraphics[width=0.9\linewidth]{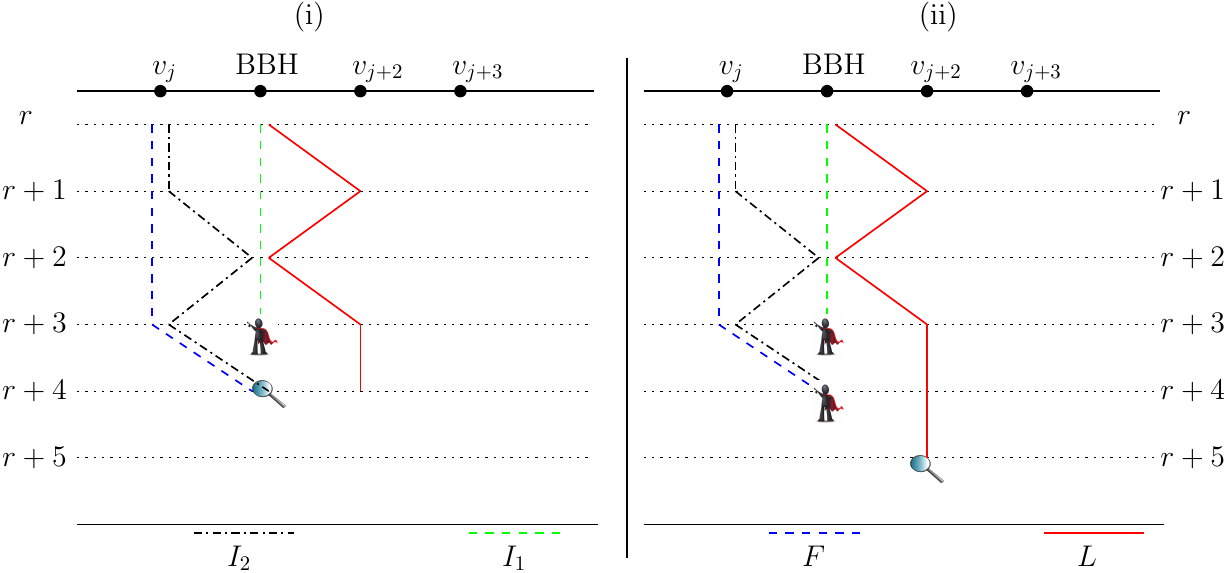}
    \caption{Depicts the time diagram, where $v_{j+1}=\frakb$, in (i) it is activated at $r+3$ and in (ii) it is activated at rounds $r+3$ and $r+4$}
    \label{fig:v1 is BBH2}
\end{figure}

If at round 3, $\frakb$ is activated for the first time in the sub-phase $i$. Then, at round 4, whenever $F$ and $I_2$ visits $v_1$, and if they stay alive, then they find that $I_1$ is missing. Hence, they understand that they are on $\frakb$, and so they return to $v_0\in C_1$, and explore $C_1$ perpetually. In (i) of Fig. \ref{fig:v1 is BBH2}, a similar case is discussed.

On the other hand, if $\frakb$ is active at round $4$ as well, then $F$ and $I_2$ also gets destroyed at $v_1$. In this situation, at round $5$, $L$ present at $v_2\in C_2$, finds that $I_1$ fails to return from $v_1$. Hence, $L$ understands that $v_1$ is $\frakb$ (refer to (ii) of Fig. \ref{fig:v1 is BBH2}). In this situation, since $L\in C_2$ hence it cannot reach $h$ in the current phase. But, it knows the distance between $h$ and $v_1$, and also it knows the exact round since the start of the current phase. Moreover, it also understands when the other two agents, i.e., $F_1$, $F_2$, currently waiting at $h$, starts their cautious movement, and exactly one among them reaches $v_1$ after it identifies $\frakb$ uniquely. Let $t_1$ be that round. In this scenario, $L$ waits until round $t_1-1$ and moves to $v_1$ at round $t_1$ along with exactly one agent, say $F_1$, from the group that was moving cautiously. Next, if $\frakb$ is activated at round $t_1$, then at round $t_1+1$, $F_2$ waiting for $F_1$ at $v_0\in C_1$, finds that $F_1$ fails to return. In this situation, $F_2$ understands that $v_1$ is $\frakb$, and explores $C_1$ perpetually (refer to (ii) of Fig. \ref{fig:perpetuallyexplore-timediagram}). On the other hand, if at $t_1$, $\frakb$ is not activated, then at round $t_1$ it meets with $L$, gathers the knowledge that $v_1$ is $\frakb$, and returns back to $v_0\in C_1$, and explores $C_1$ perpetually (refer to (i) of Fig. \ref{fig:perpetuallyexplore-timediagram}).

\begin{figure}[h]
    \centering
\includegraphics[width=0.7\linewidth]{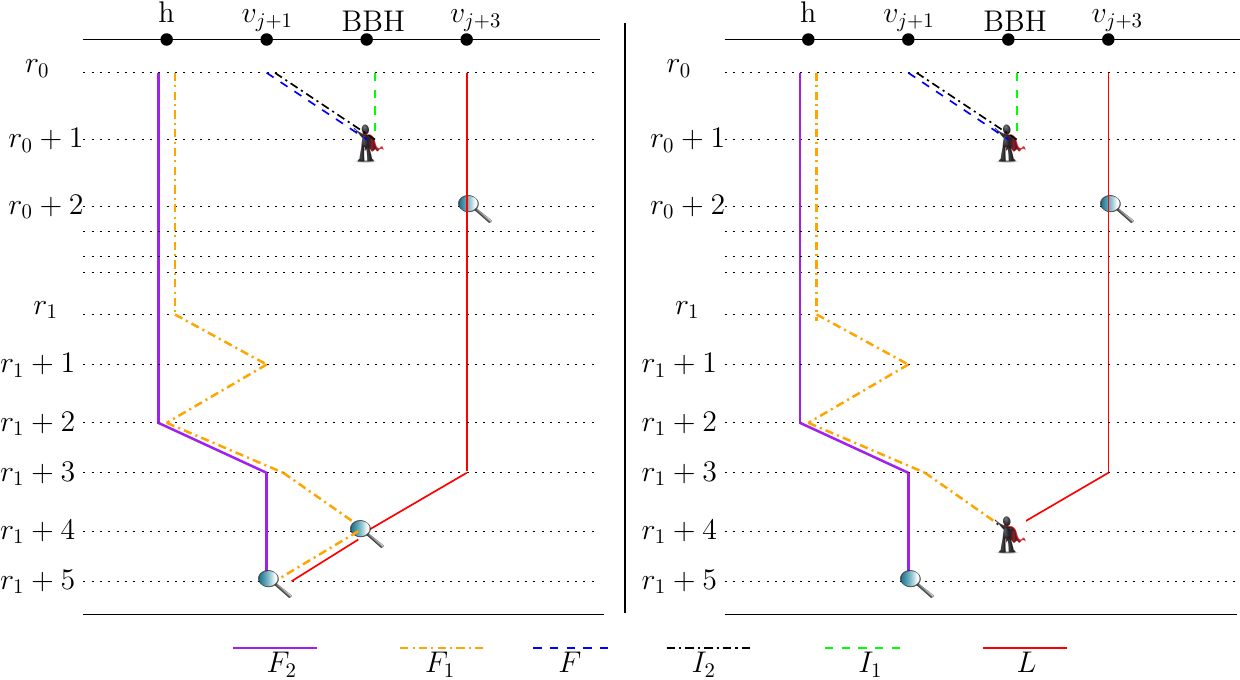}
    \caption{Represents the time diagram, in which at least one among $F_1$ and $F_2$ detects $\frakb$, and perpetually explores the path graph}
    \label{fig:perpetuallyexplore-timediagram}
\end{figure}

In round 4, if $\frakb$ is activated for the first time at $v_1$. Then, $F$, $I_2$ and $I_1$ each are destroyed at $v_1$. In this case, $L$ present at $v_2\in C_2$ understands $v_1$ to be $\frakb$, when at round 5, it finds that $I_1$ fails to return from $v_1$ (refer to Fig. \ref{fig:v1 is BBH3}). By similar argument, as described earlier, at least $F_2$ explores perpetually.

Now if the BBH is activated at $v_1$ in the sub-phase $i$ for the first time in round 5, then it destroys $F$ and $I_2$. In this situation, $I_1$ and $L$ understands this, when they are at $v_2\in C_2$ in round 2 of sub-phase $(i+1)$, after failure of $I_2$'s return from $v_1$ at the same round.

\noindent\underline{\textbf{If $v_0$ is $\frakb$:}} We accordingly describe the cases, which can arise depending on at which round within this sub-phase, $\frakb$ gets activated and destroys at least one among these 4 agents.

\begin{figure}
    \centering
    \includegraphics[width=0.9\linewidth]{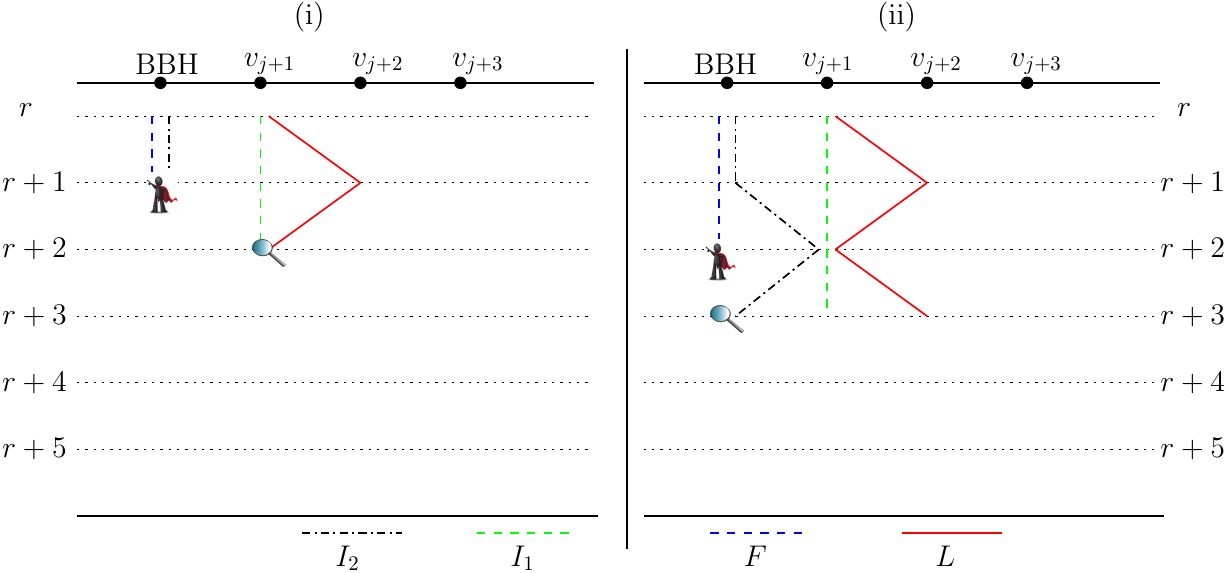}
    \caption{Depicts the time diagram, where $v_{j}=\frakb$, in (i) it is activated at $r+1$ and in (ii) it is activated at round $r+2$}
    \label{fig:v0 is BBH1}
\end{figure}

Let at round $1$, $\frakb$ is activated for the first time in the sub-phase $i$. Then it destroys both $F$ and $I_2$, and this can be understood by $L$ and $I_1$ at round 2, when they find that $I_2$ fails to meet them from $v_0$ at round 2 (refer to (i) of Fig. \ref{fig:v0 is BBH1}). By similar argument, as described for the case when $v_1$ is $\frakb$, here as well at least $F_2$ perpetually explores $C_1$.

If $\frakb$ is activated for the first time at round 2, then $F$ is destroyed at $v_0$. Next, in round 3 of this sub-phase, whenever $I_2$ visits $v_0$ and it is not destroyed, it finds that $F$ is missing. Hence, it understands that $v_0$ is $\frakb$ (refer to (ii) of Fig. \ref{fig:v0 is BBH1}), and moves to $v_1\in C_2$. On the contrary, if $I_2$ also gets destroyed at round 3, then at round 4 of this sub-phase, whenever $I_1$ finds that $F$ and $I_2$ fails to return from $v_0$. $I_1$ understands that $v_0$ is $\frakb$ (refer to Fig. \ref{fig:v0 is BBH2}), and it stays at $v_1\in C_2$, until $F_1$ visits $v_0$. Similar to earlier argument, here as well at least $F_2$ perpetually explores $C_1$.

In round 4 and 5 of this sub-phase, if the BBH is activated for the first time, then since no agents are present on $v_0$, hence none of the agent understands that $v_0$ is the BBH.

\begin{figure}[h!]
    \centering
    \begin{minipage}{0.5\textwidth}
        \centering
        \includegraphics[width=0.86\textwidth]{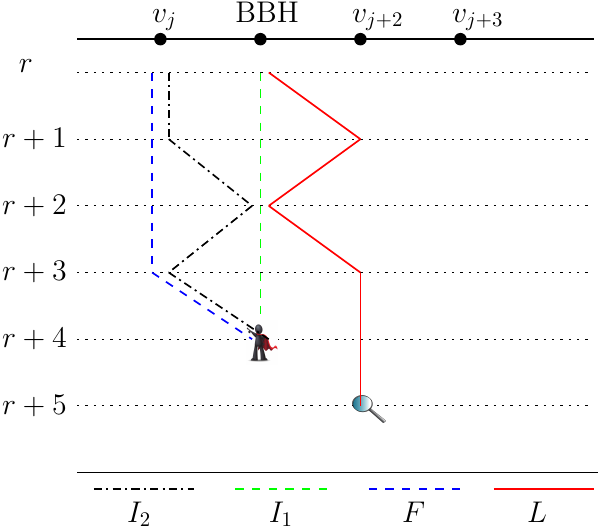}
        \caption{Depicts the time diagram, where $v_{j+1}=\frakb$ and it is activated at round $r+4$}
        \label{fig:v1 is BBH3}
    \end{minipage}%
    \hfill
    \begin{minipage}{0.46\textwidth}
        \centering
        \includegraphics[width=1.0\textwidth]{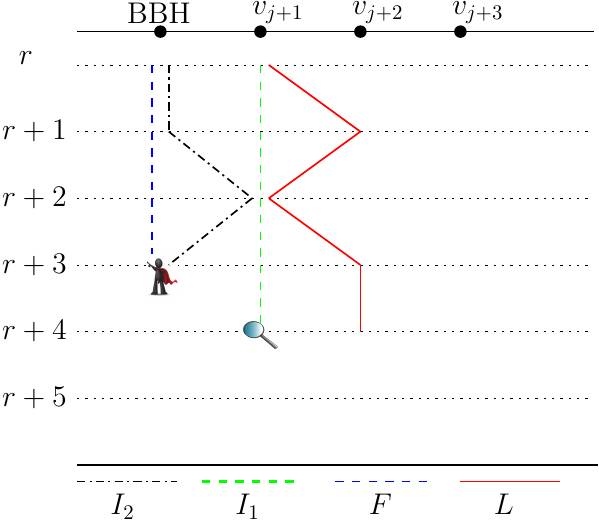}
        \caption{Depicts the time diagram, where $v_{j}=\frakb$ and it is activated at round $r+3$}
        \label{fig:v0 is BBH2}
    \end{minipage}
\end{figure}

\noindent \underline{\textbf{Case-II:}} In this case we look into the case where $v_2$ is the end node (i.e., either the end of the path graph, or the last node to be explored in the current phase) for the current sub-phase. We will only discuss some particular cases, the other cases are similar as the ones described in Case-I. 

Let us consider $i$ to be a sub-phase, at which $L$ first understands that $v_2$ is the end node, so within round 5 of this sub-phase, each agent understands this fact, and at round 5, they change their roles (refer to Fig. \ref{fig:translate-pattern-flip}). Suppose $v_2$ is $\frakb$, and let the adversary activates it at round 5 of sub-phase $i$, while new $I_2$ and $F$ are present at $v_2$. This destroys both of them, and in round 2 of sub-phase $(i+1)$, the new $L$ and $I_1$ finds that new $I_2$ fails to visit $v_1$ from $v_2$, hence they identify $v_1$ to be $\frakb$, and as they are present in $C_1$, so they perpetually explore $C_1$. Refer to (i) of Fig. \ref{fig:time diagram flip1}, where $v_{j+3}$ is the last node, and it is $\frakb$. The adversary activates $\frakb$ at round $r+1$, which symbolises the last round of the earlier sub-phase. In this round, all alive agents change their roles, i.e., earlier $F$, $I_2$ change to $L$ and $I_1$ at $v_{j+1}$. Next round onwards, they start executes the next sub-phase, with their new roles. So, accordingly, at round $r+3$, both $L$ and $I_1$ finds that new $I_2$ fails to arrive from $v_{j+2}$, and accordingly detect $v_{j+2}$ to be $\frakb$, and continue exploring $C_1$ perpetually.

\begin{figure}
    \centering
    \includegraphics[width=0.9\linewidth]{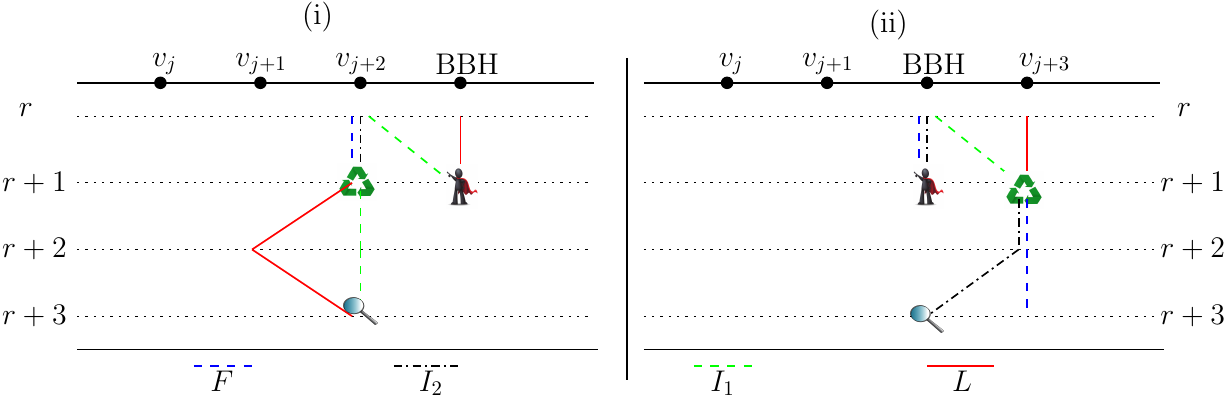}
    \caption{Depicts the time diagram, where in (i) $v_{j+3}=\frakb$ and it is activated at round $r+1$, in (ii) $v_{j+1}=\frakb$ and it is activated at round $r+1$ as well}
    \label{fig:time diagram flip1}
\end{figure}

On the other hand, if suppose $v_1$ is selected to be $\frakb$, and it is activated at round 5 of sub-phase $i$, when they were about to change their roles, after understanding $v_2$ to be the end node. Hence, the new $L$ and $I_1$ gets destroyed, whereas in round 2 of sub-phase $(i+1)$, new $I_2$ visits $v_1$. If $\frakb$ is not activated at this round, $I_2$ finds $L$ and $I_1$ to be missing, hence concludes that current node is $\frakb$ (refer to (ii) of Fig. \ref{fig:time diagram flip1}), and returns to $v_2$. Next, by similar argument as described in Case-I, when $v_1$ is $\frakb$, eventually $F_1$ and $F_2$ starts moving from $h$, cautiously, and at the end at least $F_2$ perpetually explores $C_1$. On the contrary if $\frakb$ is activated at round 2 of sub-phase $(i+1)$, then $I_2$ gets destroyed at $v_2$, and this is detected by new $F$ at $v_2$ at round 3 of sub-phase $(i+1)$, when it finds $I_2$ fails to arrive from $v_1$ (refer to Fig. \ref{fig:time diagram flip2}). It stays at $v_2$, until $F_1$ visits $v_1$, and by similar argument as described in the earlier case, at least $F_2$ explores $C_1$ perpetually. 
\begin{figure}
    \centering
    \includegraphics[width=0.45\linewidth]{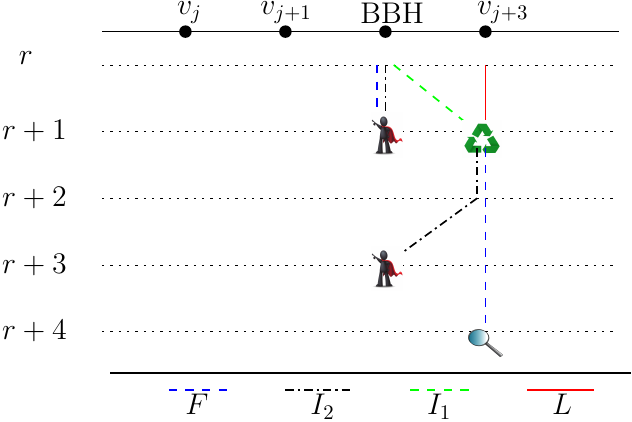}
    \caption{Depicts the time diagram, where $v_{j+2}=\frakb$ and it is activated at rounds $r+1$ and $r+3$.}
    \label{fig:time diagram flip2}
\end{figure}
The other situations of this case, follow exactly similar to the ones in Case-I. Only difference is that, in all situations in Case-I, where the agent detects $\frakb$, the moment it is situated on $\frakb$, then it is asked to visit the previous node (as the previous node is in $C_1$) and explore $C_1$ perpetually, but here in Case-II, in all those situations, the agent must visit the next node (as now the previous node belongs to $C_2$ and the next node belongs to $C_1$, since they changed their direction).

\begin{remark}\label{remark: h is not an extreme end}
    If $h$ is not at an extreme end of the path graph $P$, then the first group of 4 agents, chooses the lowest port direction, first, starting from $h$, at the start of each phase. They make pattern and then translating the pattern along that direction until at most $2^i$ distance (if the current phase is $i-$th phase), then they start returning back to $h$ by translating the same pattern, and thereafter from $h$, makes pattern and translates pattern to the other direction, again until at most $2^i$ distance. Thereafter, using similar movement, returns back to $h$. Only after all agents return to $h$, the current phase ends.
\end{remark}

\begin{theorem}
    Algorithm \textsc{Path\_PerpExplore-BBH-Home} solves \pbmPerpExplHome\ with 6 agents in path graphs, without knowledge of the size of the graph.
\end{theorem}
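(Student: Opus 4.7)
The plan is to split the analysis into two regimes, mirroring the algorithm description: the benign regime, in which the BBH never destroys any agent, and the intervention regime, in which the BBH activates and destroys at least one agent at some round~$T_d$. In the benign regime, I would argue directly from the definition of the phases that perpetual exploration of~$C_1\equiv P$ holds. The four exploring agents $L,I_1,I_2,F$ execute \textsc{Make\_Pattern} (2 rounds) followed by a sequence of \textsc{Translate\_Pattern} sub-phases (5 rounds each), reaching distance~$2^i$ in phase~$i$ before returning to~$h$; since $i$ grows without bound and every round of \textsc{Translate\_Pattern} moves the pattern one node forward, every node of~$P$ is visited by some agent in all sufficiently large phases, i.e., infinitely often.

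For the intervention regime, the central claim to establish is a \emph{detection invariant}: immediately after the first activation of the BBH (or very shortly thereafter, within a bounded number of rounds), there exists at least one alive agent $A_{alive}$ that has determined the exact location of~$\frakb$. I would prove this by exhaustive case analysis, stratified by (a) whether the intervention occurs during \textsc{Make\_Pattern} or during a \textsc{Translate\_Pattern} sub-phase, (b) which of the adjacent nodes $v_0,v_1,v_2$ coincides with~$\frakb$, and (c) the round $r\in\{1,\dots,5\}$ (or $r\in\{1,2\}$ for \textsc{Make\_Pattern}) at which the adversary first activates~$\frakb$ after~$T_d$. The detailed case inspection already carried out in the algorithm description (Figures~\ref{fig:make pattern detection}–\ref{fig:time diagram flip2}) systematically verifies that in every combination of $(v_i,r)$, either (i) no agent is destroyed and the sub-phase continues normally, or (ii) at least one survivor, by observing the absence of an expected rendezvous, deduces which of $v_0,v_1,v_2$ is~$\frakb$. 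The same invariant must also be verified for the boundary sub-phases where $v_2$ is the endpoint of the current phase, which introduces the role-flip; by symmetry with the forward case after the roles have been interchanged, the same detection argument applies.

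Once the detection invariant is established, I would close the argument by case analysis on the component containing~$A_{alive}$. If $A_{alive}\in C_1$, it knows~$\frakb$ and can alone perpetually explore~$C_1$ by oscillating between $h$ and the neighbor of~$\frakb$ in~$C_1$. If $A_{alive}\in C_2$, it stations itself at the neighbor of~$\frakb$ in~$C_2$; here I would use the synchronicity of the scheduler together with the fact that $A_{alive}$ knows the current phase index (hence the deterministic schedule $T_i$ after which $F_1,F_2$ begin moving cautiously, and the exact round~$r$ at which $F_1$ is about to enter~$\frakb$) to show that $A_{alive}$ can move onto~$\frakb$ precisely at round~$r$. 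In the subcase where the adversary activates~$\frakb$ at round~$r$, both $A_{alive}$ and $F_1$ are destroyed, but $F_2$ observes the non-return of $F_1$ at round $r+1$ and identifies~$\frakb$, while remaining in~$C_1$ (cf.\ right panel of Fig.~\ref{fig:perpetuallyexplore-timediagram-MainVersion}); in the complementary subcase, $F_1$ meets $A_{alive}$ on the inactivated~$\frakb$, learns its identity, and both rejoin $F_2$ in~$C_1$ (left panel of the same figure). In every branch, at least one surviving agent in~$C_1$ knows~$\frakb$ exactly, after which it perpetually explores~$C_1$.

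The main obstacle is the bookkeeping of the detection invariant: guaranteeing exhaustiveness of the intervention case analysis across all combinations of (pattern step, BBH position, activation round), including the role-swap sub-phase at the turnaround node. A secondary subtlety is the synchronization between $A_{alive}$ in~$C_2$ and the cautious movement of $F_1,F_2$; this reduces to showing that every alive agent can compute, purely from the round number and its own state, which phase and sub-phase is in progress, a fact that follows immediately from the deterministic, synchronous schedule and the $\calO(\log n)$ memory bound that suffices to track the current phase index.
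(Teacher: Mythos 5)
Your proposal is correct and follows essentially the same route as the paper's own argument: a benign/intervention dichotomy, a detection invariant established by the exhaustive case analysis over (pattern step, BBH position among $v_0,v_1,v_2$, activation round) including the role-flip sub-phase, and a final split on whether $A_{alive}$ lies in $C_1$ or $C_2$, with the synchronized rendezvous between $A_{alive}$ and the cautiously moving $F_1,F_2$ guaranteeing that $F_2$ identifies $\frakb$ from within $C_1$. No substantive difference from the paper's proof.
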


It may be noted from the high-level idea, in any benign execution, the first set of 4 agents perpetually explores $P$. Otherwise, if the BBH intervenes in the movement of the first 4 agents, either when they are making pattern or translating pattern, then two situations can occur: (1) at least one agent is alive, it knows the precise position of $\frakb$ and it is situated in $C_1$, (2) at least one agent is alive, it knows the precise position of $\frakb$ and it is situated in $C_2$.

In situation (1), since the agent is in $C_1$, and it knows the position of $\frakb$, it can perpetually explore $C_1$, so \textsc{PerpExploration-BBH-Home} is achieved, since $h\in C_1$. In situation (2), it has been shown in Case-I and Case-II earlier, that at least $F_2$ eventually also gets to know the position of $\frakb$, while it is still positioned in $C_1$, and thereafter it perpetually explores $C_1$. In this situation as well, \textsc{PerpExploration-BBH-Home} is achieved.

\begin{corollary}
    The algorithm consisting of \textsc{Make\_Pattern} and \textsc{Translate\_Pattern} solves \pbmPerpExpl\ with 4 agents in path graphs, without knowledge of the size of the graph.
\end{corollary}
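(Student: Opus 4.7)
The plan is to establish the corollary by analyzing the 4-agent algorithm (i.e.\ running \textsc{Make\_Pattern} followed by repeated \textsc{Translate\_Pattern} in the phase/sub-phase structure, but omitting the waiting agents $F_1,F_2$) under two regimes: a benign execution in which the adversary never destroys any agent, and an execution in which the BBH is activated at least once. In the benign regime, I would argue by induction on the phase index~$i$ that the group $\{L,I_1,I_2,F\}$ successfully translates the pattern from $h$ out to distance $\min(2^i, n-1)$ and back, role-flipping at the extremity as described in Fig.~\ref{fig:translate-pattern-flip-MainVersion}. Because phases grow geometrically and the agents have no knowledge of~$n$, for every~$v\in V$ there exists $i_0$ with $2^{i_0}\geq d(h,v)$, and from phase~$i_0$ onwards every such $v$ is visited in every phase; since phases repeat forever, $v$ is visited infinitely often, so the entire path is perpetually explored, which certainly implies \pbmPerpExpl.

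For the non-benign regime, the key observation, extracted from the exhaustive case analysis already performed for \textsc{Path\_PerpExplore-BBH-Home} in Appendix~\ref{Appendix: Path Alg} (Case-I for interior sub-phases and Case-II for the boundary sub-phase where the role-flip takes place), is the following invariant: however the adversary times the first activation of $\frakb$, and whichever of the three nodes $v_0,v_1,v_2$ involved in the current sub-phase (or the two nodes involved in \textsc{Make\_Pattern}) is the BBH, at the end of that sub-phase at least one agent $A_{\textit{alive}}\in\{L,I_1,I_2,F\}$ remains alive and can deduce the exact location of~$\frakb$ from the identities of the agents that failed to reappear at their scheduled meeting points. I would simply invoke this invariant, which is proved in the appendix by an explicit round-by-round enumeration summarised in Figs.~\ref{fig:v2 is BBH}--\ref{fig:time diagram flip2}.

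Given such an $A_{\textit{alive}}$, two subcases arise: either $A_{\textit{alive}}\in C_1$ (the component containing~$h$), or $A_{\textit{alive}}\in C_2$. In both cases, since $A_{\textit{alive}}$ knows~$\frakb$ precisely and hence knows the two endpoints of its component, it can simply oscillate between the two endpoints of $C_i$, avoiding the port leading to~$\frakb$; this perpetually explores $C_i$, and so satisfies the requirement of \pbmPerpExpl, which only demands that some component of $G-\frakb$ be perpetually explored. Crucially, this is where the proof diverges from (and is easier than) the argument for \pbmPerpExplHome: here we do not need to funnel the BBH-location information back across $\frakb$ to the home component, so the auxiliary agents $F_1,F_2$ and the cautious-probing sub-procedure can be discarded. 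The main (already-handled) obstacle is therefore the combinatorial verification of the survivor invariant, and once that is in hand the corollary follows immediately.
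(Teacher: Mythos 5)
Your proposal is correct and follows essentially the same route as the paper: the paper's own (very terse) justification likewise splits into the benign regime, where the four-agent pattern perpetually explores all of $P$, and the intervention regime, where the case analysis under \textbf{Intervention by the BBH} guarantees a surviving agent that knows the exact location of $\frakb$ and can therefore perpetually explore whichever component of $P-\frakb$ it occupies, making $F_1,F_2$ unnecessary for \pbmPerpExpl. Your write-up merely makes explicit the phase-doubling argument for the benign case and the observation that the survivor can explore its component without knowing~$n$, both of which the paper leaves implicit.
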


The above corollary follows from the fact that, in any benign execution, the 4 agents executing these algorithms explores $P$ perpetually, but if the BBH intervenes, then these algorithms ensure that at least one agent remains alive either in $C_1$ or $C_2$, knowing the exact position of $\frakb$. Hence, thereafter it can perpetually explore its current component, in turn solving \textsc{PerpExploration-BBH}.

\section{A perpetual exploration algorithm for trees with a BBH} \label{Appendix: Tree Alg}

Based on the algorithm we presented in Appendix~\ref{Appendix: Path Alg}, we show how to obtain algorithm \textsc{Tree\_PerpExplore-BBH-Home}, that solves \pbmPerpExplHome\ with 6 agents in trees, without the agents having knowledge of the size of the tree.

Let $\left\langle G,6,h,\frakb\right\rangle$ be a \pbmPerpExplHome\ instance, where $G=(V,E,\lambda)$ is a port-labeled tree. As per Definition~\ref{def:correctalg}, all agents are initially co-located at~$h$ (the \emph{home} node). We consider without loss of generality, the node $h$, to be the root of $G$. Initially four least ID agents, start exploring $G$, while the other two agents wait at $h$. The four agents, namely, $a_0$, $a_1$, $a_2$ and $a_3$, are termed as $L$, $I_1$, $I_2$ and $F$, based on their movements. The exploration they perform is exactly similar to the one explained in Appendix~\ref{Appendix: Path Alg} (i.e., \textsc{Make\_Pattern} and then \textsc{Translate\_Pattern}). But unlike a path graph, where except the parent port (i.e., the port along which the agent has reached the current node from previous node) only one port remains, to be explored from each node, here there can be at most $\Delta-1$ ports to choose, where $\Delta$ is maximum degree in $G$. To tackle this, the agents perform a strategy similar to $k-\texttt{Increasing-DFS}$ \cite{fraigniaud2005graph}. Similar to earlier algorithm on a path graph, the exploration of first four agents is divided in to phases, in the $i$-th phase, the agents explore at most $2^i$ nodes and then returns back to $h$, where in each phase the pattern is translated. As stated in \cite{fraigniaud2005graph}, to explore a graph with diameter at most $2^i$ and maximum degree $\Delta$, performing $k-\texttt{Increasing-DFS}$ where $k\ge \alpha 2^i\log \Delta$ such that $\alpha$ is a constant, for each phase the agents require $\mathcal{O}(2^i\log \Delta)$ memory, i.e., in total $\mathcal{O}(n\log \Delta)$ bits of memory. So, within $T_i\le 5\cdot 2^{i+1}+5$ rounds, if the first four agents fail to reach $h$, then the remaining two agents, $a_4$ and $a_5$, termed as $F_1$ and $F_2$ start moving cautiously, while executing $k-\texttt{Increasing-DFS}$, where $k\ge \alpha 2^i\log \Delta$. As per the earlier argument, if the first four agents fail to return, that implies at least one agent is alive, which knows the exact position of $\frakb$, and it is located in $C_j$ ($1<j\leq i$), where $G-\frakb=C_1\cup\dots \cup C_i$, such that $h\in C_1$. In addition to that, since the two agents $F_1$ and $F_2$ follow exactly same path, as followed by the first four agents, so the only alive agent, say $L$, not only knows the exact round at which $F_1$, $F_2$ started their movement from $h$, but it also knows the path they must follow. So, inevitably, it knows the exact round at which $F_1$ (minimum ID among $F_1$, $F_2$) is scheduled to visit $\frakb$. At the same time, $L$ also jumps to $\frakb$, and as per the earlier argument in case of a path graph, at least $F_2$ gets to know the position of $\frakb$, while it is still in $C_1$, hence it can perpetually explore $C_1$.

The next theorem concludes the successful solution of \pbmPerpExplHome\ with 6 agents in trees, and the proof follows from the preceding discussion and the arguments of Appendix~\ref{Appendix: Path Alg} for path graphs.

\begin{theorem}
    Algorithm \textsc{Tree\_PerpExplore-BBH-Home} solves \pbmPerpExplHome\ with 6 agents in tree graphs, without knowledge of the size~$n$ of the graph. Each agent needs  at most~$\calO(n\log\Delta)$ bits of memory, where $\Delta$ is the maximum degree of the tree.
\end{theorem}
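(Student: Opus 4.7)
The plan is to reduce the correctness analysis to that of \textsc{Path\_PerpExplore-BBH-Home} (Appendix~\ref{Appendix: Path Alg}) by viewing each phase of the tree algorithm as the exploration of a linearized sequence of nodes produced by $k$-\texttt{Increasing-DFS}. Fix a phase~$i$. The four exploring agents $L, I_1, I_2, F$ start at~$h$ and translate the pattern along the ordered sequence $v_0=h, v_1, v_2,\dots$ of nodes visited by $k$-\texttt{Increasing-DFS} on~$G$ (with $k\geq \alpha 2^i\log\Delta$), up to the first~$2^i$ steps of the traversal. Since $k$-\texttt{Increasing-DFS} dictates, at each node, a deterministic choice of port based on the local memory, the role changes, pattern creation, and pattern translation steps from the path algorithm apply verbatim: between any two consecutive nodes $v_j, v_{j+1}$ visited by the DFS, the five-round sub-phase moves the pattern from the pair $(v_{j-1},v_j)$ to $(v_j,v_{j+1})$ exactly as in a path. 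After $2^i$ translations, the roles swap and the same procedure retraces the DFS back to~$h$.

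For the benign case, the correctness of $k$-\texttt{Increasing-DFS} (\cite{fraigniaud2005graph}) guarantees that as~$i$ grows large enough, the prefix of length~$2^i$ of the DFS traversal visits every node of~$G$. Hence in the absence of BBH intervention, the four exploring agents visit every node infinitely often, thereby solving \pbmPerpExplHome. For the intervention case, I would invoke the case analysis from Appendix~\ref{Appendix: Path Alg}: whichever node in the current triple $(v_{j-1},v_j,v_{j+1})$ is~$\frakb$, and at whichever round of the sub-phase it is activated, at least one surviving agent deduces the exact position of~$\frakb$, because the movement pattern locally around any three consecutive DFS nodes is structurally identical to the path case. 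This surviving agent is located either in~$C_1$ (the component of~$h$), in which case it explores $C_1$ perpetually on its own by following the DFS and skipping any port incident to~$\frakb$, or it is trapped in a component $C_j\neq C_1$.

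For the latter sub-case, the key observation is that the waiting agents $F_1, F_2$ at~$h$ execute the \emph{same} cautious deterministic DFS traversal as the exploring group, offset by the timer~$T_i\leq 5\cdot 2^{i+1}+5$ of the current phase. Consequently, the stranded agent, knowing the current phase number, the exact round at which the timeout triggers, and the deterministic schedule of $k$-\texttt{Increasing-DFS}, can predict the exact round at which $F_1$ is about to step onto~$\frakb$. As in the path proof, it waits on the far side of~$\frakb$ and moves to~$\frakb$ in that precise round: either $\frakb$ is dormant, in which case $F_1$ and the stranded agent meet at~$\frakb$ and propagate the BBH's location back to~$F_2 \in C_1$; or $\frakb$ is activated, in which case $F_1$ and the stranded agent perish simultaneously, but $F_2$ infers the location of~$\frakb$ from $F_1$'s failure to return. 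In both outcomes, $F_2$ (possibly together with surviving companions) remains in~$C_1$ with precise knowledge of~$\frakb$ and can perpetually explore~$C_1$ by running the tree traversal while blacklisting the port leading to~$\frakb$.

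The main technical obstacle is the synchronization argument: one must verify rigorously that the stranded agent's estimate of the round in which $F_1$ enters~$\frakb$ is accurate, despite the fact that $F_1, F_2$ are executing $k$-\texttt{Increasing-DFS} rather than a simple path walk and using the cautious meeting protocol inherited from Appendix~\ref{Appendix: Path Alg}. This reduces to checking that the DFS choices made by $F_1, F_2$ are a deterministic function of the phase index and of the port labeling encountered, both of which are known to the stranded agent by virtue of having itself executed the same traversal a known number of rounds earlier. The memory bound follows immediately from \cite{fraigniaud2005graph}: storing the state required for $k$-\texttt{Increasing-DFS} with $k=\Theta(n\log\Delta)$ needs $\calO(n\log\Delta)$ bits, which subsumes the $\calO(\log n)$ bits needed for the phase counter and the constant-size role label.
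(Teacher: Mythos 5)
Your proposal is correct and follows essentially the same route as the paper: reduce to the path analysis by translating the pattern along the linearized $k$-\texttt{Increasing-DFS} walk, reuse the sub-phase case analysis for BBH intervention, and exploit the fact that $F_1,F_2$ retrace the same deterministic traversal so the stranded agent can time its move onto~$\frakb$ to coincide with $F_1$'s. The paper's own proof is in fact terser (it simply defers to the preceding discussion and the path-graph arguments), so your write-up, including the explicit synchronization observation and the memory accounting, is a faithful and somewhat more detailed rendering of the intended argument.
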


Again, at least one agent among the first four agents remains alive and eventually knows the position of $\frakb$, if the BBH intervenes their movement. Moreover, the alive agent (or agents) with knowledge of $\frakb$ is either in $C_1$ or $C_j$, for some $j\neq 1$. Hence, it can perpetually explore $C_1$ or $C_j$. This concludes our next corollary.

\begin{corollary}
    A modification of Algorithm~\textsc{Tree\_PerpExplore-BBH-Home} solves \pbmPerpExpl\ with 4 agents in tree graphs, without knowledge of the size~$n$ of the graph.Each agent needs  at most~$\calO(n\log\Delta)$ bits of memory, where $\Delta$ is the maximum degree of the tree.
\end{corollary}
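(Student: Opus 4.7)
The plan is to use the obvious modification of Algorithm~\textsc{Tree\_PerpExplore-BBH-Home}: run it with only the four exploring agents $L, I_1, I_2, F$, removing entirely the pair $F_1, F_2$ that was previously stationed at the home node. This mirrors the passage from the 6-agent \pbmPerpExplHome\ path algorithm to the 4-agent \pbmPerpExpl\ path algorithm, already observed for paths in the remark at the end of Section~\ref{section: Path Graph Algorithm Description}.

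I would then verify two cases. First, if the BBH is never activated (or $\frakb = \bot$), the four agents execute the pattern-based translation along a $k$-Increasing-DFS traversal in phases of depth $2^i$ for $i = 1, 2, \dots$, so every node of the tree is visited infinitely often, and the unique component of $G - \frakb$ is perpetually explored. Second, if the BBH destroys at least one agent, the sub-phase case analysis for trees in Appendix~\ref{Appendix: Tree Alg} (inherited from the path analysis in Appendix~\ref{Appendix: Path Alg}) shows that at least one agent $A_{\text{alive}}$ survives and can determine the exact location of $\frakb$ from its local observations. Crucially, this analysis is purely local to the pattern, which spans only two adjacent nodes, so it does not rely on the surrounding graph being a path as opposed to a tree.

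Once $A_{\text{alive}}$ knows the location of $\frakb$, it lies in some component $C_j$ of $G - \frakb$. From its current node, which is adjacent to $\frakb$, it can launch an indefinite DFS of $C_j$ that refuses to cross the known edge toward $\frakb$, thereby visiting every node of $C_j$ infinitely often. This suffices for \pbmPerpExpl, which only requires perpetual exploration of some component of $G - \frakb$, with no constraint that it contain the home. Note that, unlike the \pbmPerpExplHome\ setting, no rendez-vous between the survivor and agents stationed at the home is required, which is precisely what allows us to drop $F_1$ and $F_2$.

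The main delicate point is the memory accounting. By~\cite{fraigniaud2005graph}, each phase $i$ of the $k$-Increasing-DFS with $k = \Theta(2^i \log\Delta)$ requires $\calO(2^i \log\Delta)$ bits of state per agent, which is $\calO(n \log\Delta)$ as soon as $2^i \geq n$; the residual DFS performed by $A_{\text{alive}}$ on $C_j$ needs no more memory than the original traversal, since $|C_j| \le n$. Hence each agent uses $\calO(n \log\Delta)$ bits overall, which matches the stated bound.
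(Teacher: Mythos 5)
Your proposal is correct and follows essentially the same route as the paper: drop the two waiting agents $F_1,F_2$, observe that the benign execution of the four-agent pattern along the $k$-Increasing-DFS traversal perpetually explores the tree, and that upon intervention at least one surviving agent knows the exact position of~$\frakb$ and can perpetually explore whichever component of $G-\frakb$ it finds itself in, which suffices for \pbmPerpExpl. Your additional remarks on the locality of the pattern analysis and on the memory accounting are consistent with, and slightly more explicit than, the paper's own brief justification.
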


\section{Proof of Theorem \ref{thm: equivalent statement}}\label{appendix: lower bound proof General BBH}

In this section, we construct a class of graphs $\mathcal{G}$, and we show that, given any algorithm $\calA$ that claims to solve \pbmPerpExpl\ with $2(\Delta-1)$ agents, where~$\Delta$ is the maximum degree of the graph, an adversarial strategy exists such that it chooses a $G\in\mathcal{G}$ on which $\mathcal{A}$ fails.
 
The construction of~$\mathcal{G}$ is based on incremental addition of blocks as follows.

\noindent\textbf{Block-1 (Constructing Path):} We define a path $\mathcal{P}$, which consists of two types of vertices. The first type of vertices are denoted by $v_i$, for all $i\in\{1,2,\dots,\Delta\}$. The second type of vertices lie between $v_i$ and $v_{i+1}$, for all $i\in\{1,2,\dots,\Delta-1\}$, and they are denoted by $u^i_j$, for all $j\in\{1,2,\dots,l_i\}$, where $l_i+1=dist_{\mathcal{P}}(v_i,v_{i+1})$, such that $dist_{G}(a,b)$ indicates the shortest distance between two vertices $a$ and $b$ in $G$. Fig. \ref{fig:LowerBound-GeneralGraph-PhaseI} illustrates the path graph $\mathcal{P}$, with vertices of type $v_i$ and $u^i_j$.

\begin{figure}[H]
    \centering
    \includegraphics[width=0.7\linewidth]{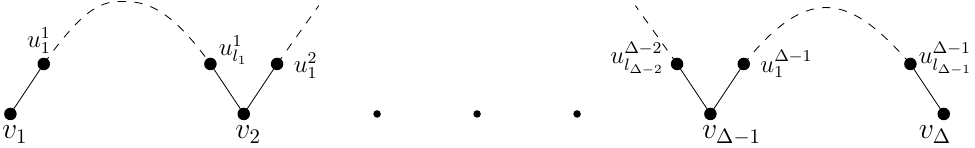}
    \caption{This figure depicts the path graph $\mathcal{P}$, constructed in Block-1 consisting of two types of vertices, where $l_{i-1}$ depicts the $dist(v_{i-1},v_{i})-1$.}
    \label{fig:LowerBound-GeneralGraph-PhaseI}
\end{figure}

\noindent\textbf{Block-2 (Attaching BBH):} We partition the set $\mathcal{V}=\{v_1,v_2,\dots,v_{\Delta-1}\}$ in to two disjoint sets $\mathcal{V}_1$ and $\mathcal{V}_2$. For each $v_i\in\mathcal{V}_1$, the node $\frakb$ (i.e., BBH) is attached to $v_i$, this extension structure is called \texttt{Ext}$_1$. Finally from $\frakb$ a node $z$ is attached, which is further connected to $\Delta-1$ other degree 1 vertices. For each $v_i\in \mathcal{V}_2$, a node $w_i$ is connected, which is further connected to $\Delta-2$ degree 1 vertices. Finally, each $w_i$ is connected to $\frakb$, this type of extension structure is denoted by \texttt{Ext}$_2$. If $|\mathcal{V}_1|=0$ we have a special case\footnote{As a special instance, abusing the definition of $\mathcal{V}$ we can include $v_{\Delta}$ to the set $\mathcal{V}_2$ if $|\mathcal{V}_1|=0$. Otherwise, if $|\mathcal{V}_1|>0$, then $v_{\Delta}\notin \mathcal{V}_1~\text{or}~v_{\Delta}\notin \mathcal{V}_2$.}. An example in Fig. \ref{fig:LowerBound-GeneralGraph-Figure3} is illustrated the aforementioned extension, where at least $v_1,v_{\Delta-1}\in\mathcal{V}_1$ and at least $v_2\in\mathcal{V}_2$.

\begin{figure}[H]
    \centering
    \includegraphics[width=0.7\linewidth]{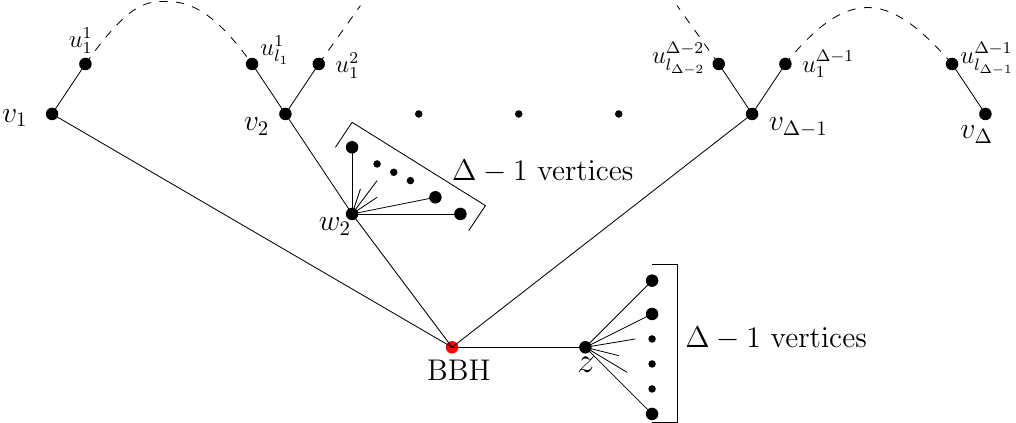}
    \caption{Indicates the addition of \texttt{Ext}$_1$ and \texttt{Ext}$_2$ along $\mathcal{P}$, where at least $v_1,v_{\Delta-1}\in\mathcal{V}_1$ and at least $v_2\in \mathcal{V}_2$.}
    \label{fig:LowerBound-GeneralGraph-Figure3}
\end{figure}

\noindent\textbf{Block-3 (Attaching Trees):} With respect to a vertex $v$, we define a tree $T_v$ as follows: $T_v$ is rooted at $v$ with height 2, $v$ has only one child, say $v'$ and where $v'$ has $\Delta-1$ leaves attached to it. In particular, if from $v$, the edge $(v,v')$ originating from $v$ has port $i$, then we call the tree as $T^i_v$. Next, these trees are attached from each vertex along $\mathcal{P}$ in the following manner:

\textbf{1.} From $v_1\in \mathcal{P}$, $\Delta-2$ such trees are attached, only except the port leading to $u^1_1$ and the port leading to \texttt{Ext}$_1$ or \texttt{Ext}$_2$, depending on $v_1\in\mathcal{V}_1~\text{or}~ \mathcal{V}_2$.

\textbf{2.} From each $v_i\in\mathcal{P}$, where $i\in\{2,\dots,\Delta-1\}$, $\Delta-3$ such trees are attached, except along the ports leading to $u^{i-1}_{l_{i-1}}$, $u^{i+1}_1$ and the one leading to \texttt{Ext}$_1$ or \texttt{Ext}$_2$, depending on $v_i\in\mathcal{V}_1~\text{or}~ \mathcal{V}_2$.

\textbf{3.} From each $u^i_j\in\mathcal{P}$ where $i\in\{1,2,\dots,\Delta-1\}$ and $j\in\{1,2,\dots,l_i\}$, $\Delta-2$ such trees are attached, only except the ports leading to the previous and next node along $\mathcal{P}$.

\textbf{4.} If $v_{\Delta}\in \mathcal{V}_2$, then $\Delta-2$ such trees are attached from $v_{\Delta}$, except along the ports leading to \texttt{Ext}$_2$ and $u^{\Delta-1}_{l_{\Delta-1}}$. If $v_{\Delta}\notin\mathcal{V}_2$, then $\Delta-1$ such trees are attached from $v_{\Delta}$, except along the port leading to $u^{\Delta-1}_{l_{\Delta-1}}$.

\begin{figure}[H]
    \centering
    \includegraphics[width=0.6\linewidth]{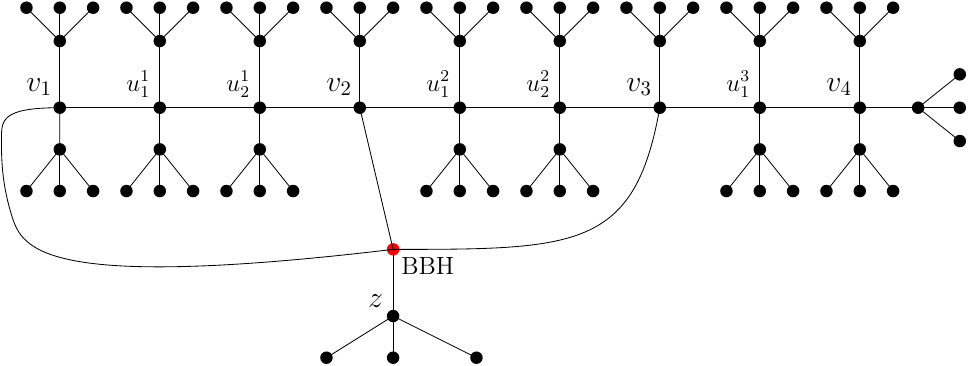}
    \caption{Illustrates an example graph of a member of graph class $\mathcal{G}_1$, where $\Delta=4$, where the distance between $v_1$, $v_2$ is 2, $v_2$, $v_3$ is 2 and $v_3$, $v_4$ is 1.}
    \label{fig:LowerBound-GeneralGraph-G1}
\end{figure}

\noindent\textbf{Final Graph Class}: The final graph class $\mathcal{G}$ is amalgamation of Block-1, Block-2 and Block-3. Now, based on the partition of $\mathcal{V}$, three separate graph subclass can be defined from $\mathcal{G}$, namely $\mathcal{G}_1$, $\mathcal{G}_2$ and $\mathcal{G}_3$. The graph class $\mathcal{G}_1$, preserves the characteristics that, each $v_i$ belong to $\mathcal{V}_1$, where $i\in\{1,2,\dots, \Delta-1\}$, refer Fig. \ref{fig:LowerBound-GeneralGraph-G1}. The graph class $\mathcal{G}_2$, preserves the characteristics that, each $v_i$ belong to $\mathcal{V}_2$, where $i\in\{1,2,\dots,\Delta-1\}$, refer to Fig. \ref{fig:LowerBound-GeneralGraph-G2G3}$(i)$. Lastly, the remaining graphs in $\mathcal{G}$ belong to $\mathcal{G}_3$, refer to Fig. \ref{fig:LowerBound-GeneralGraph-G2G3}$(ii)$.

\begin{figure}
    \centering
    \includegraphics[width=0.6\linewidth]{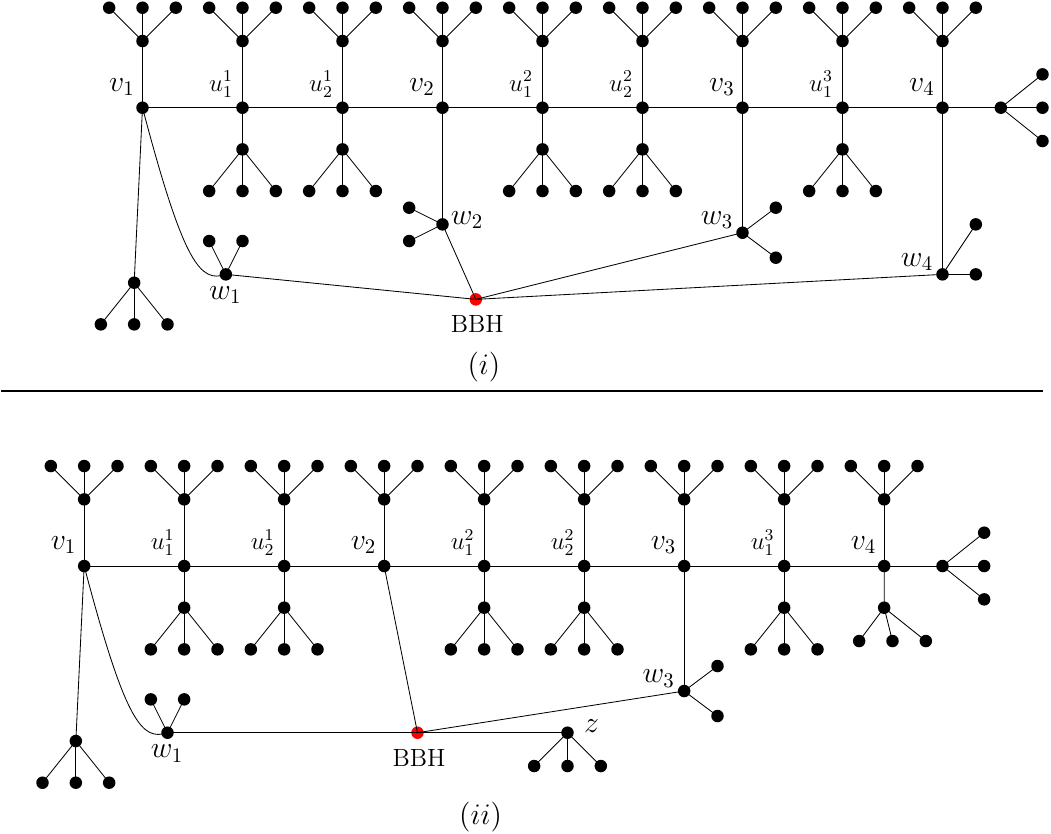}
    \caption{(i) Illustrates an example graph from graph class $\mathcal{G}_2$, (ii) an example graph from graph class $\mathcal{G}_3$, where $v_1,v_3\in \mathcal{V}_2$ and $v_2\in\mathcal{V}_1$. In both graphs the $\Delta=4$, such that distance between $v_1$, $v_2$ is 2, $v_2$, $v_3$ is 2 and $v_3$, $v_4$ is 1.}
    \label{fig:LowerBound-GeneralGraph-G2G3}
\end{figure}


 Given, an algorithm $\mathcal{A}$ which claims to solve \pbmPerpExpl~with $2(\Delta-1)$ agents, the adversarial counter strategy takes the graph class $\mathcal{G}$ and algorithm $\mathcal{A}$ as input, and returns a specific port-labeled graph $G\in\mathcal{G}$. We will prove that $\mathcal{A}$ fails on $G$.

Before formally explaining the counter strategies used by the adversary for all choices (i.e., the instructions given to the agents during the execution) the algorithm $\mathcal{A}$ can have, we define some of the functions and terminologies to be used by the adversary to choose a graph from $G\in \mathcal{G}$. Let $V_{exp}$ denote the nodes in $G$, already explored by the collection of agents executing $\mathcal{A}$, and $V^c_{exp}$ denote the remaining nodes to be explored, where $V=V_{exp}\cup V^c_{exp}$. The function $\lambda^1_{u,\mathcal{A}}: E_u\rightarrow \{1,\dots,\delta_u\}$ denotes the port label ordering to be returned from each node of $u\in V\setminus\{v_1,v_2,\dots,v_{\Delta}\}$, based on the algorithm $\mathcal{A}$. The function $\lambda^2_{v,\mathcal{A}}:E_v\rightarrow\{1,\dots,\delta_v\}$ returns the port label ordering from each $v\in\{v_1,v_2,\dots,v_{\Delta}\}$, based on the algorithm $\mathcal{A}$. So, $\lambda^1=(\lambda^1_{u,\mathcal{A}})_{u\in V\setminus\{v_1,\dots,v_\Delta\}}$ and $\lambda^2=(\lambda^2_{v,\mathcal{A}})_{v\in\{v_1,\dots,v_\Delta\}}$ are denoted to be the set of port label, and finally $\lambda=\lambda^1\cup\lambda^2$ is defined to be the collection of port labels on $G$. Next, we define the distance function $D_{\mathcal{A}}:\{v_1,\dots,v_{\Delta}\}\times\{v_1,\dots,v_{\Delta}\}\rightarrow \mathbb{N}$, which assigns the graph to be chosen from $\mathcal{G}$, to have a distance of $dist_{\mathcal{P}}(v_i,v_j)$ from $v_i$ to $v_j$ along $\mathcal{P}$ (defined in Block-1) on the chosen graph, for $i\neq j$ and $v_i,v_j\in \{v_1,v_2,\dots,v_{\Delta}\}$.

The adversarial counter strategy works as follows: $v_1$ is set as \emph{home} for any graph chosen from $\mathcal{G}$, i.e., all the agents are initially co-located at \emph{home}. Next, we discuss all possible choices $\mathcal{A}$ can use from $v_1$, then from $v_2$ and finally from $v_i$ (where $i\in\{2,\dots,\Delta-1\}$) one after the other. Accordingly, we state the counter strategies of the adversary to choose a graph from $\mathcal{G}$, after each such choices. We start with \emph{home} (i.e., $v_1$).

\paragraph*{Choices of algorithm $\mathcal{A}$ and counter strategies of adversary at \emph{home}}


The agents are initially co-located at $v_1$, since it is designated as \emph{home}. Based on the choices that an algorithm can have from $v_1$, we explain the counter measures taken by the adversary to choose a certain graph from $\mathcal{G}$ as per $\mathcal{A}$, accordingly.

\textit{Choice-1:} If at the first step, $\mathcal{A}$ assigns at least 2 agents to perform the first movement, and suppose it is along a port $j$ from $v_1$, where $j\in\{1,\dots, \delta_{v_1}\}$. 

\textit{Counter:} The adversary chooses a graph in which $v_1\in \mathcal{V}_1$. Moreover, the port-labeled function at $v_1$, $\lambda^2_{v_1,\mathcal{A}}$, returns an ordering where the port from $v_1$ along the edge $(v_1,\frakb)$ is $j$. 

\textit{Choice-2:} As per the execution of $\mathcal{A}$, let $r_1$ (for some $r_1>0$) be the first round, at which any one agent from $v_1$ travels a node, which is at a 2 hop distance.

\textit{Counter:} If there exists a round $r'_1<r_1$ at which more than one agent visits a neighbor node of $v_1$ with respect to some port $j$, then in that scenario, adversary selects $v_1\in \mathcal{V}_1$ and returns $\lambda^2_{v_1,\mathcal{A}}$ such that $v_1$ is connected to $\frakb$ via port $j$. On the other hand, if there does not exist such a round $r'_1$, then choose $v_1\in \mathcal{V}_1~\text{or}~v_1\in\mathcal{V}_2$, if $v_1\in \mathcal{V}_1$ then return $\lambda^2_{v_1,\mathcal{A}}$ and $\lambda^1_{\frakb,\mathcal{A}}$ such that at $r_1-1$ and $r_1$ rounds, the single agent must be at $\frakb$ and $z$, respectively. If $v_1\in\mathcal{V}_2$ then return $\lambda^2_{v_1,\mathcal{A}}$ and $\lambda^1_{w_1,\mathcal{A}}$ (where $w_1$ is part of \texttt{Ext}$_2$, refer explanation in Block-2) such that at rounds $r_1-1$ and $r_1$, the single agent must be at $w_1$ and $\frakb$, respectively.

\begin{lemma}\label{lemma:v1SuspiciousNode2}
As per Counter of Choice-2 if there exists no $r'_1<r_1$, then adversary can activate $\frakb$, such that even after destruction of one agent within round $r_1$, remaining alive agents cannot know exact node of $\frakb$ from $v_1$.
\end{lemma}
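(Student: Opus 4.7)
The plan is to construct two instances in $\mathcal{G}$ that both remain consistent with the observations of the agents at $v_1$ after the destruction, but place $\frakb$ at different nodes, and then invoke Definition~\ref{def:consistent} to conclude that the surviving agents cannot disambiguate the location of $\frakb$. Let $a^{\star}$ denote the single agent that reaches a node at distance $2$ from $v_1$ at round $r_1$, let $p$ be the port out of $v_1$ through which $a^{\star}$ leaves during round $r_1-1$, and let $x$ denote the neighbor of $v_1$ reached via port~$p$. I would construct two graphs $G_1,G_2\in\mathcal{G}$ which coincide on every port and every subtree except the portion reached directly via port $p$ from $v_1$. In $G_1$, set $v_1\in\mathcal{V}_1$ and pick $\lambda^2_{v_1,\mathcal{A}}$ and $\lambda^1_{\frakb,\mathcal{A}}$ so that port $p$ at $v_1$ leads directly to $\frakb$ (so $x\equiv\frakb$) and the agent's next hop at round $r_1$ would carry it to $z$; in $G_2$, set $v_1\in\mathcal{V}_2$ and pick $\lambda^2_{v_1,\mathcal{A}}$ and $\lambda^1_{w_1,\mathcal{A}}$ so that port $p$ at $v_1$ leads to $w_1$ (so $x\equiv w_1$) and the agent's next hop at round $r_1$ would carry it to $\frakb$. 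Since $\frakb$ in $G_1$ and $w_1$ in $G_2$ both have degree $\Delta$, the port-labeling functions can be matched so that the local views at $x$ in the two graphs are identical.

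Next, the adversary keeps $\frakb$ inactive throughout every round strictly before the destruction. By the definition of $r_1$, no agent has visited any node at distance $2$ from $v_1$ before round $r_1$; in particular, no agent has gone past~$x$, so the portion where $G_1$ and $G_2$ disagree is still entirely unexplored. The hypothesis that no earlier round $r'_1<r_1$ has multiple agents simultaneously traversing port $p$ ensures that every previous visit to $x$ was made by a lone agent that later returned safely to $v_1$. Since every other neighbor of $v_1$ (the path vertex $u^1_1$ and the roots of the trees $T^i_{v_1}$) is identical in the two graphs, the input histories of all alive agents up to the beginning of round $r_1-1$ agree exactly across the two worlds.

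I would then specify the Byzantine activations: in the execution on $G_1$, activate $\frakb$ during round $r_1-1$ to destroy $a^{\star}$ upon its arrival at $\frakb\equiv x$; in the execution on $G_2$, keep $\frakb$ inactive at round $r_1-1$ (so that $a^{\star}$ safely reaches $w_1\equiv x$) and activate it during round $r_1$ to destroy $a^{\star}$ upon its arrival at $\frakb$. In both cases exactly one agent is destroyed, and from the standpoint of every surviving agent (all located at $v_1$ or within some $T^i_{v_1}$, whose local views coincide in the two graphs) the only observable event is that $a^{\star}$ left via port $p$ and did not come back. By Definition~\ref{def:consistent}, the two instances then belong to the same consistency set. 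Since $\frakb$ is a $1$-hop neighbor of $v_1$ in $G_1$ but a $2$-hop node from $v_1$ in $G_2$, the remaining agents cannot pinpoint its exact location from $v_1$.

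The main obstacle is verifying that the local view at $x$ is genuinely indistinguishable across the two worlds despite the very different structure beyond $x$; this hinges entirely on the maximality of $r_1$, which forbids any prior exploration past $x$, together with the single-visit property guaranteed by the hypothesis. A secondary care is to ensure that the remaining distances $D_{\mathcal{A}}$ and port labels throughout $\mathcal{P}$ can be synchronized so that no agent that may have wandered toward $v_2$ or into another $T^i_{v_1}$ picks up any clue that separates $G_1$ from $G_2$ during the relevant window; this is automatic, because the only asymmetry introduced between $G_1$ and $G_2$ lies strictly beyond the node~$x$.
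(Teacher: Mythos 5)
Your proposal is correct and follows essentially the same route as the paper's proof: two instances with $v_1\in\mathcal{V}_1$ versus $v_1\in\mathcal{V}_2$, activation of $\frakb$ at round $r_1-1$ in the first and $r_1$ in the second, and indistinguishability resting on $\delta_{\frakb}=\delta_{w_1}=\Delta$ together with the fact that $r_1$ is the first round any agent reaches a $2$-hop node from $v_1$. Your write-up is in fact more explicit than the paper's about matching the local views at the first-hop node and about why no other agent's history separates the two worlds.
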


\begin{proof}
    Consider two instances, first $v_1\in \mathcal{V}_1$ and second $v_1\in\mathcal{V}_2$. In both the instances, the adversary activates $\frakb$ at rounds $r_1-1$ and $r_1$, respectively. Now, since no new information is gained by the agent till round $r_1-1$ considering the fact that $\delta_{\frakb}=\delta_z$ if $v_1\in\mathcal{V}_1$ and $\delta_\frakb=\delta_{w_1}=\Delta$, where $w_1$ is part of \texttt{Ext}$_2$ if $v_1\in\mathcal{V}_2$. So this means the agent will invariably move to $\frakb$ from $w_1$, if $v_1\in\mathcal{V}_2$ and to $z$ from $\frakb$, if $v_1\in\mathcal{V}_2$ at round $r_1$. This shows that, even after an agent is destroyed, the remaining agents cannot know, which among the two nodes along this 2 length path from $v_1$ is indeed $\frakb$.\end{proof}

 As per above lemma it is shown that, even after destruction of one agent from $v_1$ by $\frakb$, the exact location of $\frakb$ from $v_1$ cannot be determined. But, since $\mathcal{A}$ aims to solve \pbmPerpExpl, so $\mathcal{A}$ needs to destroy at least one more agent from $v_1$ in order to detect the exact position of $\frakb$ from $v_1$. Let that round at which the second agent gets destroyed by $\frakb$ from $v_1$ be $r''_1$. So, $[r_1,r''_1]$ signifies the total number of rounds between the destruction of the first agent and the second agent along $v_1$. Finally, after all the possible choices that can arise from $v_1$ as per execution of $\mathcal{A}$, and its respective counter measures of the adversary, the possible choices of graph class reduces to $\mathcal{G}^1$, where $\mathcal{G}^1\subset \mathcal{G}$. 

\begin{theorem}\label{theorem:v1-2agentsdestroyed}
    At least 2 agents are destroyed by $\frakb$ from $v_1$.
\end{theorem}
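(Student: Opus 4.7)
The plan is to carry out a case analysis over the moves $\mathcal{A}$ issues from $v_1$, invoking the adversarial counter-strategies of Choice-1 and Choice-2 together with Lemma~\ref{lemma:v1SuspiciousNode2}. The aim is to verify that, in every branch of the analysis, the adversary can force at least two destructions of agents at the immediate neighborhood of~$v_1$.

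First I would dispatch the easy case (Choice-1): if at some round $\mathcal{A}$ instructs two or more agents to traverse the same port $j$ out of~$v_1$, then the Counter of Choice-1 picks a graph with $v_1\in\mathcal{V}_1$ and a labeling in which $v_1$ is joined to $\frakb$ through port~$j$. By activating $\frakb$ in that single round the adversary destroys at least two agents simultaneously, which already yields the bound.

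Next I would turn to Choice-2, centred on the first round $r_1$ at which some agent reaches a node at distance~$2$ from~$v_1$. This naturally splits in two sub-cases. If some earlier round $r'_1<r_1$ has two or more agents jointly crossing the same port out of~$v_1$, the Counter again routes that port to~$\frakb$, thereby reducing to the Choice-1 argument and destroying two agents at~$r'_1$. Otherwise, every previous traversal of an edge incident to~$v_1$ was performed by a single agent at a time, which is precisely the hypothesis of Lemma~\ref{lemma:v1SuspiciousNode2}.

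In this last sub-case I would use Lemma~\ref{lemma:v1SuspiciousNode2} as follows: the adversary can exhibit two graphs from~$\mathcal{G}$ (one with $v_1\in\mathcal{V}_1$, one with $v_1\in\mathcal{V}_2$) that remain consistent with every surviving agent's history up to round~$r_1$, but in which $\frakb$ occupies distinct nodes at distance~$1$ and~$2$ from~$v_1$. Activating~$\frakb$ at the appropriate round destroys exactly one agent and leaves both candidate nodes suspicious. The main obstacle is formalising that~$\mathcal{A}$ must pay a second destruction at~$v_1$'s neighborhood, and I expect to argue it as follows: as long as both candidates remain consistent with the alive agents' histories, any agent entering either of them can be annihilated by the adversary in the instance where that candidate is actually~$\frakb$; conversely, if from some round onward no agent ever re-enters either candidate, then in the instance where the farther candidate is~$\frakb$ the closer candidate becomes a non-BBH node of~$C_1$ that is never visited again, contradicting the correctness of~$\mathcal{A}$ on \pbmPerpExpl. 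Hence $\mathcal{A}$ must eventually drive another agent into one of the two suspicious nodes, and the adversary destroys it at that round, which we label~$r''_1$. Taking the three cases together proves that in every execution at least two agents are destroyed by~$\frakb$ on approach from~$v_1$.
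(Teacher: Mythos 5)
Your proof is correct and follows essentially the same route as the paper: dispatch Choice-1 (and the $r'_1<r_1$ sub-case of Choice-2) by routing the shared port to $\frakb$, then invoke Lemma~\ref{lemma:v1SuspiciousNode2} to leave two candidate positions at distance $1$ and $2$ from $v_1$ and argue that a second agent must be sacrificed to disambiguate. Your explicit dichotomy --- either some agent re-enters a candidate and is destroyed in the instance where that candidate is $\frakb$, or no agent ever does and then the closer candidate is a never-revisited non-malicious node of $C_1$ in the instance where the farther one is $\frakb$, so $\calA$ fails outright --- actually spells out the step the paper only asserts (``$\mathcal{A}$ needs to destroy at least one more agent from $v_1$''), so no gap remains.
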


The proof of the above theorem follows from the Counter of Choice-1 and the conclusion of Lemma \ref{lemma:v1SuspiciousNode2}, i.e., in other words it is shown that for any choice $\mathcal{A}$ takes from $v_1$, at least 2 agents are destroyed by $\frakb$, within 2 hop of $v_1$. Next, from $v_1$ the agents eventually reach $v_2$, while executing $\mathcal{A}$. Based on the choices that $\mathcal{A}$ can have from $v_2$, we discuss the counter strategies of the adversary in the next section.

\paragraph*{Choices of algorithm $\mathcal{A}$ and counter strategies of adversary at $v_2$}

 We explain the choices that an algorithm $\mathcal{A}$ can have while exploring new nodes from $v_2$, accordingly, we present the adversarial counters. The following lemma discusses the fact that, after reaching $v_2$ for the first time, any agent trying to explore $V^c_{exp}$ from $v_2$, needs to traverse at least 2 hops from $v_2$.

\begin{lemma}\label{lemma:atleast2hopsfromv_2}
    In order to explore $V^c_{exp}$ from $v_2$, $\mathcal{A}$ must instruct at least one agent to travel at least 2 hops from $v_2$.
\end{lemma}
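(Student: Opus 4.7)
My plan is to prove this lemma by a direct structural observation about the graphs in~$\mathcal{G}$. After the handling of the \emph{home} node~$v_1$ in the previous paragraphs, the adversary has already reduced the space of candidate graphs to~$\mathcal{G}^1\subset\mathcal{G}$, but every graph in~$\mathcal{G}^1$ shares the same local structure around~$v_2$: the neighbors of~$v_2$ are $u^1_{l_1}$ (already in~$V_{exp}$ since the agents reached~$v_2$ from~$v_1$ via the subpath of~$\mathcal{P}$), the node $u^2_1$, either $\frakb$ or the auxiliary node~$w_2$ depending on whether $v_2\in\mathcal{V}_1$ or $v_2\in\mathcal{V}_2$, and the $\Delta-3$ roots of trees of the form $T^i_{v_2}$. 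In particular, every neighbor of~$v_2$ lies within~$1$ hop, and the $1$-hop neighborhood of~$v_2$ does not contain any of the vertices $v_3,\dots,v_\Delta$, nor the internal nodes $u^2_j$ for $j\geq 2$, nor any leaf of a tree~$T^i_{v_2}$, nor the deeper parts of~\texttt{Ext}$_1$/\texttt{Ext}$_2$.

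The first step is to identify an explicit unexplored vertex at distance at least~$2$ from~$v_2$. The node~$v_3$ is a natural choice, since by construction $\mathrm{dist}_{\mathcal{P}}(v_2,v_3)=l_2+1\geq 2$, and the unique path from~$v_2$ to~$v_3$ within~$\mathcal{P}$ goes through $u^2_1,u^2_2,\dots,u^2_{l_2}$. Since $v_3\in V^c_{exp}$ at the time the agents first arrive at~$v_2$, and since $\mathcal{A}$ is assumed to solve \pbmPerpExpl, the component of $G-\frakb$ that is perpetually explored by~$\mathcal{A}$ must contain $v_3$ and must have $v_3$ visited infinitely often. In particular, at least one agent must eventually reach~$v_3$ from~$v_2$, and any such traversal involves moving along the edge $(v_2,u^2_1)$ and then at least one further edge, i.e., at least~$2$ hops away from~$v_2$.

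The second step is to rule out the degenerate possibility that $v_3$ could be reached without crossing~$v_2$ at all. This is immediate: the subgraph $G-\{v_2\}$ disconnects~$v_1$ from~$v_3$ in every member of~$\mathcal{G}^1$, because removing~$v_2$ also disconnects the $\mathcal{P}$-path between them and the only remaining way to reach~$v_3$ from~$v_1$ would be through~$\frakb$, which no algorithm solving~\pbmPerpExpl\ can rely on (the adversary can activate~$\frakb$). Hence, at least one agent must perform the $2$-hop traversal starting specifically from~$v_2$.

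The main obstacle, if any, is simply being precise about what "exploring $V^c_{exp}$ from $v_2$" means under adversarial graph choice: the lemma must hold uniformly over all consistent graphs in~$\mathcal{G}^1$. This is handled by the structural observation above, which is identical across all graphs in~$\mathcal{G}^1$. No case analysis over the partition of~$\mathcal{V}$ into~$\mathcal{V}_1,\mathcal{V}_2$ is needed here, because the presence of at least one unexplored vertex at distance~$\geq 2$ from~$v_2$ (namely~$v_3$) is an invariant of the class.
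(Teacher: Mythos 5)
Your proof is correct and follows essentially the same route as the paper's: exhibit an unexplored vertex at distance at least~$2$ from~$v_2$ whose exploration forces a $2$-hop excursion from~$v_2$, then invoke the correctness of~$\mathcal{A}$. The paper's (very terse) proof simply asserts that the construction contains vertices at $2$-hop distance that are \emph{only reachable through}~$v_2$; the natural witnesses are the leaves of the trees~$T^i_{v_2}$ attached to~$v_2$ in Block-3 (there is at least one such tree since $\Delta\geq 4$), for which $v_2$ is a genuine cut vertex and the argument is purely graph-theoretic. Your witness~$v_3$ is workable but costs you an extra step: your intermediate claim that ``$G-\{v_2\}$ disconnects $v_1$ from $v_3$'' is false as a statement about the graph, since $v_1$ and $v_3$ remain connected through~$\frakb$ (directly or via the $w_i$ nodes, depending on the partition into~$\mathcal{V}_1,\mathcal{V}_2$), and you have to patch it with the adversarial observation that no correct algorithm can rely on routing agents through~$\frakb$. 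That patch is sound in spirit but is exactly the kind of reasoning the surrounding lemmas are careful to formalize, so it is cleaner to pick a witness for which no such patch is needed. Neither your argument nor the paper's addresses the (shared) subtlety that \pbmPerpExpl\ only demands perpetual exploration of \emph{some} component of $G-\frakb$, so strictly one should also rule out the algorithm abandoning~$C_1$ in favor of a component beyond~$\frakb$; this is implicitly covered by the adversary's freedom to activate~$\frakb$, and the paper is no more precise on this point than you are.
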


\begin{proof}
    Any graph in which a node of the form $v_2$ exists, as per the construction of $\mathcal{G}$, there exists vertices at least 2 hop distance apart, which are only reachable through $v_2$. This implies that, if no agent from $v_2$ visits a node which is at 2 hop distance, then there will exist some vertices which will never be explored by any agent, in turn contradicting our claim that $\mathcal{A}$ solves \pbmPerpExpl. 
\end{proof}

First, we discuss all possible knowledge that the set of agents can acquire, before reaching $v_2$ from $v_1$. Let at least one agent reaches $v_2$ at round $r^{\circ}_2$ for the first time from $v_1$, and $r_2$ ($>r^{\circ}_2$) is the first round when at least one agent is destroyed from $v_2$ within at most 2 hops of $v_2$. The agents can gain the map of the set $V_{exp}$, explored yet. Define $r''_1-r_1=t_1$, $r_2-r^{\circ}_2={Wait}_2$ and $t_1-{Wait}_2=Time_1$. Next we define the concept of \textit{Conflict-Free} for any node $v_i\in\mathcal{V}$.

\begin{definition}[Conflict-Free]
    A node $v_i\in G$ (for some $G\in\mathcal{G}$) is said to be Conflict-Free, if as per the execution of $\mathcal{A}$, any agent first visits $v_i$ at round $r^{\circ}_i$, and round $r_i$ ($r_i>r^{\circ}_i$) be the first round after $r^{\circ}_i$, at which at least one agent gets destroyed by $\frakb$ within at most 2 hop distance of $v_i$, while moving from $v_i$, then the adversary must ensure the following condition:
\begin{itemize}
     \item Within the interval $[r^{\circ}_i,r_i]$, no agent from $v_j$, for all $j\in\{1,2,\dots,i-1\}$, tries to visit along \texttt{Ext}$_1$ if $v_j\in\mathcal{V}_1$ or \texttt{Ext}$_2$ if $v_j\in\mathcal{V}_2$.     
     \end{itemize}
\end{definition}

 To ensure, $v_2$ to be Conflict-Free, the adversary sets the distance between $v_1$ and $v_2$ along $\mathcal{P}$ as follows: if $Time_1>0$, and there exists an $l_1\in\mathbb{N}$, such that $c_1\cdot \Delta^{l_1+1}<Time_1$, where $c_1$ (is a constant) is the maximum round for which each alive agent remains stationary while moving from $v_1$ towards $v_2$, then return $D_{\mathcal{A}}(v_1,v_2)=l_1+1$, and modify the port-labeling of $v_1$ using $\lambda^2_{v_1,\mathcal{A}}$, and insert the port label of $u^1_i$ using $\lambda^1_{u^1_i,\mathcal{A}}$ (for all $i\in\{1,2,\dots,l_1\}$) such that the first agent's $l_1$ distance movement from $v_1$ after round $r_1$ is to $v_2$. Otherwise, return $D_{\mathcal{A}}(v_1,v_2)=r''_1+1$. Call $D_{\mathcal{A}}(v_1,v_2)=l_1+1$. Again, modify the port-labeling of $v_1$ using $\lambda^2_{v_1,\mathcal{A}}$, and insert the port label of $u^1_i$ using $\lambda^1_{u^1_i,\mathcal{A}}$ (for all $i\in\{1,2,\dots,l_1\}$) such that the first agent's $l_1$ distance movement from $v_1$ after round $r''_1$ is to $v_2$. So, this means the graph to be chosen from $\mathcal{G}$ must have $u^1_1,\dots,u^1_{l_1}$ many nodes between $v_1$ and $v_2$. In the following lemma we prove that $v_2$ is Conflict-Free.

\begin{lemma}\label{lemma:v2ConflictFree}
$v_2$ is Conflict-Free.     
\end{lemma}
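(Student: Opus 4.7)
The plan is to exploit the adversary's freedom to choose the subpath length $l_1+1$ between $v_1$ and $v_2$, together with the port labels at the intermediate nodes $u^1_1,\dots,u^1_{l_1}$, in order to guarantee that the first arrival time $r^{\circ}_2$ at $v_2$ and the destruction time $r_2$ both occur in rounds where no agent is traveling through $v_1$'s extension toward $\frakb$. The argument will split along the two cases in the definition of $D_{\mathcal{A}}(v_1,v_2)$: the ``short'' case in which $Time_1>0$ and $c_1\cdot\Delta^{l_1+1}<Time_1$, and the ``default'' case in which $D_{\mathcal{A}}(v_1,v_2)=r''_1+1$.

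First I would establish a worst-case lower bound on $r^{\circ}_2 - r_1$. Because the agents have no topological knowledge of $G$, and every intermediate $u^1_j$ has degree $\Delta$ with adversarially permuted ports, any agent dispatched from $v_1$ toward $v_2$ may, in the worst case, attempt every outgoing port at each of the $l_1+1$ nodes along the subpath, with each attempt costing up to $c_1$ rounds. This yields $r^{\circ}_2 \geq r_1 + c_1\cdot\Delta^{l_1+1}$ in the short case, while in the default case the choice $D_{\mathcal{A}}(v_1,v_2)=r''_1+1$ together with the tailored port labels at $v_1$ and at each $u^1_i$ ensures $r^{\circ}_2 > r''_1$ directly. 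Since by definition the interval $[r^{\circ}_2,r_2]$ has length $Wait_2$, the defining inequality of the short case then gives $r_2 \leq r^{\circ}_2 + Wait_2 \leq r_1 + c_1\cdot\Delta^{l_1+1}+Wait_2 < r_1 + t_1 = r''_1$, so both endpoints of the conflict-free window lie within $(r_1,r''_1]$.

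Next I would argue that within the interval $[r^{\circ}_2,r_2]\subseteq (r_1,r''_1]$, the adversary can synchronize the port labels at $v_1$ (via $\lambda^2_{v_1,\mathcal{A}}$) and the activation pattern of $\frakb$ so that the only two rounds in which an agent from $v_1$ actually reaches $\frakb$ via \texttt{Ext} are $r_1$ and $r''_1$ themselves; these are exactly the two destructions guaranteed by Lemma~\ref{lemma:v1SuspiciousNode2} and Theorem~\ref{theorem:v1-2agentsdestroyed}. Any additional agent that $\mathcal{A}$ might instruct to move from $v_1$ into \texttt{Ext} during the window will, by the chosen port labels and the $\Delta-1$ dummy leaves attached inside the extension, either be unable to complete a descent to $\frakb$ before $r_2$, or be routed to a non-$\frakb$ node. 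For the default case, where $r^{\circ}_2 > r''_1$, the argument reduces to showing that, once $\mathcal{A}$ has already lost two agents at $\frakb$, it gains no new information that would force a further entry into \texttt{Ext} within the short window $[r^{\circ}_2,r_2]$, which follows because the pair of destructions at $r_1$ and $r''_1$ fully identifies the edge from $v_1$ to $\frakb$ (or to $w_1$).

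The main obstacle I anticipate is handling strategies that dispatch agents into \texttt{Ext} in a staggered manner, so that some are still traversing interior nodes of the extension (though not yet at $\frakb$) at rounds inside $[r^{\circ}_2,r_2]$. For this I would clarify that ``tries to visit along \texttt{Ext}'' should be read as referring to arrival at $\frakb$ through the extension, which is the only event whose timing matters for the adversary's activation schedule and the forthcoming analysis. Once this reading is fixed, the combinatorial bound from the first step, together with the strict inequality $c_1\cdot\Delta^{l_1+1}<Time_1$ exploited in the second step, closes the short case, and an essentially identical scheme, where the distance $r''_1+1$ substitutes for the exponential bound, closes the default case.
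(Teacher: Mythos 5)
Your proposal follows essentially the same two-case argument as the paper's proof: in the short case you bound the arrival time at $v_2$ by $c_1\cdot\Delta^{l_1+1}$ and chain $r_2\le r^{\circ}_2+{Wait}_2<r_1+t_1=r''_1$ so that the window $[r^{\circ}_2,r_2]$ falls strictly between the two destructions at $v_1$, and in the default case the distance $r''_1+1$ forces $r^{\circ}_2>r''_1$, after which the agents already know the location of $\frakb$ relative to $v_1$ and make no further attempts along the extension. One small slip: you state the traversal bound as $r^{\circ}_2\ge r_1+c_1\cdot\Delta^{l_1+1}$, but the direction actually needed (and the one you then use in your chain of inequalities, matching the paper's ``in the worst case $r^{\circ}_2=r_1+c_1\cdot\Delta^{l_1+1}$'') is the upper bound $r^{\circ}_2\le r_1+c_1\cdot\Delta^{l_1+1}$; likewise the containment should be in the open interval $(r_1,r''_1)$, which your strict inequality $r_2<r''_1$ already gives you, and the extra discussion of dummy leaves is unnecessary since by definition $r''_1$ is the first round after $r_1$ at which any agent from $v_1$ reaches a suspicious node.
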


\begin{proof}
    Without loss of generality, a single agent performed the first 2 hop distance movement from $v_1$, and got destroyed at round $r_1$. As per conclusion of Lemma \ref{lemma:v1SuspiciousNode2}, at least one other also gets destroyed through $v_1$ at round $r''_1$, in order to determine the exact position of $\frakb$. If two agents first performed a movement from $v_1$, in that case $r_1=r''_1$. Now, $[r_1+1,r''_1-1]$ is the interval within which no agent from $v_1$ tried to visit two possible positions of $\frakb$ from $v_1$, and moreover, the adversary acts in such a way that, at round $r''_1+1$ onwards, each agent trying to visit from $v_1$, knows the exact position of $\frakb$. As per the construction of $\mathcal{G}$, to reach $v_2$ from $v_1$, except using the nodes connected to $v_1$ via \texttt{Ext}$_1$ or \texttt{Ext}$_2$, any agent takes at most $c_1\cdot\Delta^{l_1+1}$ rounds, where $c_1$ signifies the maximum number of rounds for which each alive agent remains stationary during their movement from $v_1$ towards $v_2$. Next, $Time_1$ calculates the number of rounds remaining after subtracting $t_1=r''_1-r_1$ (i.e., the number of rounds between the first and second agent destruction from $v_1$) with ${Wait}_2=r_2-r^{\circ}_2$ (i.e., the number of rounds between an agent first arrives at $v_2$ and an agent gets destroyed by $\frakb$ from $v_2$). There are two conditions for choosing the graph:
    
    \textit{Condition-1:} If there exists an $l_1\in\mathbb{N}$ such that it satisfies the condition $c_1\cdot\Delta^{l_1+1}<Time_1$, then the adversary returns $D_{\mathcal{A}}(v_1,v_2)=l_1+1$, and the adversary chooses a graph from $\mathcal{G}$, such that $dist_{\mathcal{P}}(v_1,v_2)=l_1+1$. Also, it performs the port labeling at the nodes $v_1,u^1_1,\dots,u^1_{l_1}$, in a way that, whenever an agent, executing $\mathcal{A}$ visits a neighbor at a distance $l_1$ for the first time from $v_1$, then it reaches the node $v_2$. Set this round as $r^{\circ}_2$. In the worst case, $r^{\circ}_2=r_1+c_1\cdot\Delta^{l_1+1}$. It may be noted that, since $Time_1=t_1-{Wait}_2>c_1\cdot\Delta^{l+1}$, this implies $t_1>{Wait}_2+c_1\cdot\Delta^{l+1}$, which also implies $r''_1>r_1+r_2-r^{\circ}_2+c_1\cdot\Delta^{l+1}$. This signifies that during the interval $[r^{\circ}_2,r_2]$ no agent from $v_1$ tries to locate $\frakb$, hence $v_2$ is Conflict-Free.

    \textit{Condition-2:} Otherwise, if there does not exist such $l_1$ to satisfy the above condition, then the adversary returns $D_{\mathcal{A}}(v_1,v_2)=r''_1+1$, we call it $l_1+1$. In this situation, the adversary activates $\frakb$ in such a way that, any agent from $v_1$ knows the exact position of $\frakb$ from round $r''_1+1$ onwards, and we assume that, further no agent from $v_1$ tries to visit $\frakb$ along \texttt{Ext}$_1$ or \texttt{Ext}$_2$ (if it does, then a simple modification of $D_{\mathcal{A}}(v_1,v_2)$ will ensure further that, again $v_2$ is Conflict-Free). This shows that, since $r^{\circ}_2\ge r''_1+1$, so it implies that $v_2$ is Conflict-Free.  \end{proof}

Next, since, $\frakb\in V_{exp}$ (as per Choice-2 from the node $v_1$, if $v_1\in\mathcal{V}_1$), so as part of the map of $V_{exp}$, the agents can know the port labeling of the edge $(v_1,\frakb)$. Based on these, $\mathcal{A}$ can have two class of choices: the agents use the knowledge of the port label of $(v_1,\frakb)$ and second they do not. We call these first class of choices as \textit{Choice-A} class and second as \textit{Choice-B} class. We discuss all possible choices in \textit{Choice-A} class first. 

\textit{Choice-A1:} Algorithm $\mathcal{A}$ may instruct at least 2 agents to explore a neighbor of $V^c_{exp}$ at round $r_2$ along port $j$, where $r_2>r^{\circ}_2$ and $j\in\{1,2,\dots,\delta_{v_2}\}$. 

\textit{Counter:}  The adversary sets $v_2\in\mathcal{V}_1$, and accordingly it chooses a graph in $\mathcal{G}$ satisfying this criteria. Moreover, the port-labeled function at $v_2$, i.e., $\lambda^2_{v_2,\mathcal{A}}$, returns an ordering where the port from $v_2$ towards the edge $(v_2,\frakb)$ is $j$.

\textit{Choice-A2:} As per the execution of $\mathcal{A}$, let $r_2$ ($>r^{\circ}_2$) be the first round, at which a single agent travels a node which is at 2 hop distance of $v_2$ in $V^c_{exp}$.

\textit{Counter:} If there exists a round $r'_2$, such that $r^{\circ}_2<r'_2<r_2$, at which more than one agent tries to visit a neighbor of $v_2$ in $V^c_{exp}$ with respect to some port $j$, then in that scenario, adversary selects $v_2\in\mathcal{V}_1$ and returns $\lambda^2_{v_2,\mathcal{A}}$ such that $v_2$ is connected to $\frakb$ via port $j$. On the other hand, if there does not exists such $r'_2$, then choose $v_2\in\mathcal{V}_1~\text{or}~v_2\in\mathcal{V}_2$. If $v_2\in\mathcal{V}_1$, then return $\lambda^2_{v_2,\mathcal{A}}$ and modify $\lambda^1_{\frakb,\mathcal{A}}$ such that at rounds $r_2-1$ and $r_2$, the agent is at $\frakb$ and at any neighbor of $\frakb$ except $v_2$. If $v_2\in\mathcal{V}_2$, then return $\lambda^2_{v_2,\mathcal{A}}$ such that at round $r_2-1$ the agent is at $w_2$, and return $\lambda^1_{w_2,\mathcal{A}}$ such that the port label of $w_2$ to $\frakb$ is exactly same as the port label of $\frakb$ to $v_1$. 

\begin{lemma}\label{lemma:v2-two-destroyed-together}
    If $\mathcal{A}$ instructs at least two agents to move simultaneously to a neighbor of $v_2$ in $V^c_{exp}$, then the adversary can destroy all of these agents, even if $\mathcal{A}$ uses the knowledge of the map of $V_{exp}$ during its execution from $v_2$.
\end{lemma}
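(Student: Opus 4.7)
The plan is to show that, under the adversary's Counter for Choice-A1, any group of two or more agents instructed by $\mathcal{A}$ to move simultaneously from $v_2$ through port $j$ into $V^c_{exp}$ can be funneled directly into $\frakb$ and destroyed in a single round. I would first verify that the adversary's choice is consistent with the history observed by the agents so far. Before round $r_2$, no agent has traversed any edge from $v_2$ to a node of $V^c_{exp}$, so the restriction of $\lambda_{v_2}$ to such edges has not been fixed by any prior observation. Hence the adversary is free to declare $v_2\in\mathcal{V}_1$ and to set $\lambda^2_{v_2,\mathcal{A}}$ so that the edge $(v_2,\frakb)$ carries port label $j$ at the $v_2$-endpoint, while keeping all previously revealed port labels (including those in $V_{exp}$) untouched.

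Next, I would argue that $\mathcal{A}$'s access to the map of $V_{exp}$ does not help. Even when $v_1\in\mathcal{V}_1$ and $\frakb\in V_{exp}$, the only port labels at $\frakb$ or at $v_2$ that the algorithm knows are those along edges that have actually been traversed. The port labels at the two endpoints of an edge are independent, so the port label of $(v_1,\frakb)$ at $v_1$ gives no information about which port at $v_2$ leads to $\frakb$. Consequently, the instruction of $\mathcal{A}$ to send at least two agents from $v_2$ through port $j$ is uninformed with respect to the position of $\frakb$, and under the adversary's port labeling, all such agents arrive at $\frakb$ at the end of round $r_2$.

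Finally, I would invoke Lemma~\ref{lemma:v2ConflictFree}, which ensures that $v_2$ is Conflict-Free: within the interval $[r^{\circ}_2,r_2]$, no agent from $v_1$ attempts to reach $\frakb$ through \texttt{Ext}$_1$ or \texttt{Ext}$_2$. Therefore the adversary may activate $\frakb$ at round $r_2$ without contradicting any of its earlier commitments regarding the first two destructions at $v_1$ or the passivity of $\frakb$ along the $v_1$-path. Activating $\frakb$ at round $r_2$ destroys in one shot all the agents that moved through port $j$, proving the lemma.

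The main obstacle I anticipate is the consistency argument of the first step: one must check that committing to $v_2\in\mathcal{V}_1$ with port label $j$ on $(v_2,\frakb)$ does not clash with any partial information already revealed, in particular from the earlier interaction of agents with $\frakb$ via $v_1$. This subtlety is precisely resolved by the Conflict-Free property of $v_2$ (no agent from $v_1$ is simultaneously probing $\frakb$) together with the fact that no port label at $v_2$ for edges into $V^c_{exp}$ has yet been forced by the observable history.
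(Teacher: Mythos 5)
Your proposal is correct and follows essentially the same route as the paper: the adversary exploits the fact that the port at $v_2$ used for the first simultaneous move into $V^c_{exp}$ has not yet been constrained by the agents' history, declares $v_2\in\mathcal{V}_1$, labels the edge $(v_2,\frakb)$ with that port, and activates $\frakb$ to destroy all agents in one round. The paper's proof is terser, but your added consistency checks (independence of port labels at the two endpoints of an edge, and the Conflict-Free property of $v_2$ guaranteeing no clash with the adversary's earlier activations at $v_1$) only make explicit what the paper leaves implicit.
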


The proof of the above lemma is simple, as after the agent's reach $v_2$ from $v_1$ at round $r^{\circ}_2$, if $\mathcal{A}$ decides to send at least two agents from $v_2$ along port $j$, where the $j$-th port does not belong to the current map, in that case, the adversary can choose $v_2\in\mathcal{V}_1$ and set the port connecting $v_2$ to $\frakb$ as $j$. So, this shows that even if the algorithm $\mathcal{A}$ uses the map of $V_{exp}$, the adversary is able to destroy all the agents, instructed to move from $v_2$ after round $r^{\circ}_2$ for the first time.

\begin{lemma}\label{lemma:v2-twosuspiciousnode-A}
    As per Counter of Choice-A2, if there exists no $r'_2$, such that $r^{\circ}_2<r'_2<r_2$, then the adversary can activate $\frakb$, such that after destruction of one agent from $v_2$ within round $r_2$, remaining agents cannot know the exact node from $v_2$ to $\frakb$.
\end{lemma}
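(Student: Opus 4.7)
The plan is to mirror the proof of Lemma~\ref{lemma:v1SuspiciousNode2}, transplanting the argument from $v_1$ to $v_2$, with Lemma~\ref{lemma:v2ConflictFree} doing the extra work needed to rule out interference from the $v_1$-side of the exploration. Concretely, I would exhibit two instances $G_1,G_2\in\mathcal{G}^1$ that coincide on $V_{exp}$ (including the already-known position of $\frakb$ as seen from $v_1$) but disagree only in the local structure attached to $v_2$: in $G_1$ we pick $v_2\in\mathcal{V}_1$, so the BBH is reached from $v_2$ in one hop through an \texttt{Ext}$_1$-edge and extended by the trailing node $z$; in $G_2$ we pick $v_2\in\mathcal{V}_2$, so the BBH is reached from $v_2$ in two hops via the intermediate node $w_2$ of the corresponding \texttt{Ext}$_2$-block. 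The counter in Choice-A2 already fixes $\lambda^2_{v_2,\calA}$ and the relevant outgoing labels at $\frakb$ (in $G_1$) and at $w_2$ (in $G_2$).

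Next, I would prove by induction on rounds that the history of every agent is identical in $G_1$ and $G_2$ up to round $r_2$. The inductive step relies on three facts. First, by Lemma~\ref{lemma:v2ConflictFree}, throughout the interval $[r^{\circ}_2,r_2]$ no agent stationed at or near $v_1$ probes the \texttt{Ext}$_1$ or \texttt{Ext}$_2$ attachment of $v_1$; hence every agent outside the two-hop neighborhood of $v_2$ sees the same local inputs in both instances. Second, the single exploring agent that performs the two-hop move (the hypothesis ``no such $r'_2$'' ensures uniqueness) arrives at round $r_2-1$ at a node of degree~$\Delta$ in both instances (namely $\frakb$ in $G_1$ and $w_2$ in $G_2$, both having degree~$\Delta$ by the construction in Block-2), through the same arrival port number, so its round-$r_2$ decision is identical in the two executions. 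Third, the adversary can match the outgoing port chosen by $\calA$ at that intermediate node so that in $G_1$ it maps to a neighbor of $\frakb$ distinct from $v_2$, and in $G_2$ it maps to $\frakb$ itself.

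Having aligned the histories through round $r_2$, I would specify the adversarial activation pattern: in $G_1$ the BBH is never activated during $[r^{\circ}_2,r_2]$, so the exploring agent silently walks off $\frakb$ into the attached tree; in $G_2$ the BBH is activated at round $r_2$, destroying the exploring agent the instant it arrives at $\frakb$. In both cases no other agent can observe the difference within round $r_2$, because the destruction (or survival) happens two hops into $V^c_{exp}$ and the algorithm has scheduled no rendezvous with the exploring agent by that time. Consequently, the post-round-$r_2$ histories of all surviving agents coincide in $G_1$ and $G_2$, so the suspicious set of every alive agent contains both candidate positions of $\frakb$, namely the unique neighbor of $v_2$ along port~$j$ and the unique node at distance~$2$ from $v_2$ along that port, which are distinct nodes. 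This establishes that the exact node of $\frakb$ as seen from $v_2$ is not determined.

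The main obstacle I expect is the port-labeling bookkeeping in step~2: ensuring that the adversary, who reveals $\lambda^1_{\frakb,\calA}$ and $\lambda^1_{w_2,\calA}$ reactively to $\calA$'s choices, always has enough unlabeled ports at the intermediate node to simultaneously (i) match the incoming port from $v_2$ across both instances, and (ii) realize the outgoing port chosen by $\calA$ at round $r_2$ as an edge leading to $z$ (respectively $\frakb$). This is where the hypothesis $\Delta\geq 4$ of Theorem~\ref{thm: equivalent statement} is really used, since both $\frakb$ in $G_1$ and $w_2$ in $G_2$ must expose at least two distinct unused ports beyond the one connecting back to $v_2$. A secondary subtlety is that in $G_1$ the agent, having survived, would eventually read a different input at $z$ than it would have read at $\frakb$ in $G_2$; but since this happens strictly after round $r_2$ and the exploring agent in $G_1$ is, at that moment, outside any scheduled communication window, the discrepancy is not yet visible to the remaining agents at the end of round $r_2$, which is all the lemma requires.
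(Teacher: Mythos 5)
There is a genuine gap, and it sits exactly at the point that distinguishes this lemma (Choice-A) from its Choice-B counterpart. Your adversary leaves $\frakb$ inactive in $G_1$ and claims it can ``match the outgoing port chosen by $\calA$ at the intermediate node so that in $G_1$ it maps to a neighbor of $\frakb$ distinct from $v_2$.'' But in the Choice-A setting the algorithm already knows the port $\rho$ leading from $\frakb$ back to $v_1$ --- this is the defining feature of the Choice-A class --- and that port is committed: it was revealed during the exploration from $v_1$ and cannot be reassigned by the adversary. So consider the algorithm that stations an agent at $v_1$, sends $a_i$ to the intermediate node at round $r_2-1$, and has it take port $\rho$ in the next round. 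In your $G_1$ the agent is alive at $\frakb$ and exits via $\rho$ straight to $v_1$, where it meets the waiting agent; in $G_2$ it walks from $w_2$ into $\frakb$ and dies. The surviving agents distinguish the two instances immediately, and your indistinguishability claim collapses. This is precisely why the paper's proof activates $\frakb$ at round $r_2-1$ in the instance with $v_2\in\mathcal{V}_1$ (killing $a_i$ on arrival, before it can use $\rho$) and, in the instance with $v_2\in\mathcal{V}_2$, labels the port from $w_2$ to $\frakb$ with that same $\rho$, so that the attempted escape leads into the black hole. In both of the paper's instances the agent dies and never reaches $v_1$, which is what keeps the two executions identical in the eyes of everyone else.

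A secondary weakness: even for outgoing ports other than $\rho$, your $G_1$ leaves the exploring agent alive inside the attachment of $\frakb$, and you argue only that the discrepancy is invisible ``at the end of round $r_2$.'' The lemma is used (Theorem~\ref{theorem:v2-2agentsdestroyed}) to force a second sacrifice from $v_2$, which needs the ambiguity to persist until another agent probes; a surviving agent in $G_1$ would eventually return and resolve it unless the adversary also destroys it when it re-crosses $\frakb$. That part is patchable, but the $\rho$-escape above is not patchable within your construction: in the $\mathcal{V}_1$ instance the agent must be destroyed at $\frakb$ no later than its arrival there.
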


\begin{proof}
    Since the Counter is part of Choice-A class, hence $\mathcal{A}$ uses the knowledge of the port from $\frakb$ to $v_1$, while exploring 2 hops from $v_2$ for the first time after round $r^{\circ}_2$. Let that port connecting $\frakb$ to $v_1$ be $\rho$ (where $\rho\in\{1,\dots,\Delta\}$). First, it must be noted that, the distance from $v_1$ to $v_2$ through $\frakb$ along an induced subgraph of $G$, defined by $H=(V\setminus\{u^1_1,\dots,u^1_{l_1}\},E)$ where $G\in\mathcal{G}$ is at least 2. So, in order to reach $v_1$ from $v_2$, any agent, say $a_i$, needs to visit a node which is at least 2 hop distance of $v_2$ on $H$. So, to make use of the knowledge of the port from $\frakb$ to $v_1$, the algorithm $\mathcal{A}$ can use only the following strategy. $\mathcal{A}$ uses the gained knowledge, places some agent at $v_1$, and instructs $a_i$ to use the port $\rho$ at $r_2-1$ round, since at round $r_2$ any agent from $v_2$ visits a node which is at 2 hop distance, for the first time after $r^{\circ}_2$.

    In this strategy, as per our Counter of Choice-A2, the adversary can create two instances, first $v_2\in\mathcal{V}_1$ (refer to Instance-1 in Fig. \ref{fig:2InstanceChoice-A2}), and activates $\frakb$ at round $r_2-1$. In the second instance, $v_2\in\mathcal{V}_2$, where the port from $w_2$ to $\frakb$ is $\rho$ (refer Instance-2 in Fig. \ref{fig:2InstanceChoice-A2}), and activates $\frakb$ at round $r_2$. Since, both $\frakb$ and $w_2$ are of degree $\Delta$, the agent $a_i$ till round $r_2-1$ do not gain any different knowledge, hence in the first instance it gets destroyed at $r_2-1$ and in second instance gets destroyed at $r_2$. The remaining alive agents (including the one waiting at $v_1$), does not gain any different knowledge, as in both instance $a_i$ fails to reach $v_1$ from round $r_2$ onwards. So, the remaining agents cannot determine, the location of $\frakb$ from $v_2$.\end{proof}

\begin{figure}
    \centering
    \includegraphics[width=0.77\linewidth]{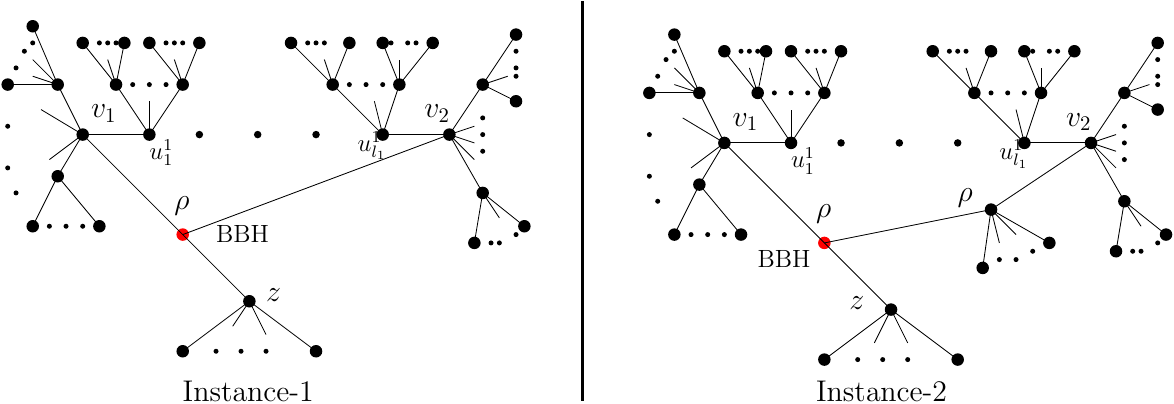}
    \caption{Illustrates the two instances, which the adversary can create in \textit{Counter} of \textit{Choice-A2}.}
    \label{fig:2InstanceChoice-A2}
\end{figure}

    Next, we discuss the choices in the class \textit{Choice-B}, i.e., in which the agent does not use the knowledge of the port label from $\frakb$ to $v_1$.

    \textit{Choice-B1:} $\mathcal{A}$ may instruct at least 2 agents to explore a neighbor of $V^c_{exp}$ at round $r_2$ along port $j$, where $r_2>r^{\circ}_2$ and $j\in\{1,2,\dots,\delta_{v_2}-1\}$.

    \textit{Counter:} Similar counter as the one described in Choice-A1 for the node $v_2$.

    \textit{Choice-B2:} As per the execution of $\mathcal{A}$, let $r_2$ be the first round, at which a single agent travels to a node which is at 2 hop distance of $v_2$ in $V^c_{exp}$.

    \textit{Counter:} If there exists a round $r'_2$, such that $r^{\circ}_2<r'_2<r_2$, at which more than one agent visits a neighbor of $v_2$ with respect to some port $j$, then in that scenario, adversary selects $v_2\in \mathcal{V}_1$ and returns $\lambda^2_{v_2,\mathcal{A}}$ such that $v_2$ is connected to $\frakb$ via port $j$. If there does not exist such $r'_2$, then choose $v_2\in\mathcal{V}_1~\text{or}~v_2\in\mathcal{V}_2$. If $v_2\in\mathcal{V}_1$, then return $\lambda^2_{v_2,\mathcal{A}}$ and modify $\lambda^1_{\frakb,\mathcal{A}}$ such that at rounds $r_2-1$ and $r_2$, the agent is at $\frakb$ and at any neighbor of $\frakb$, except $v_2$. If $v_2\in\mathcal{V}_2$, then return $\lambda^2_{v_2,\mathcal{A}}$ and $\lambda^1_{w_2,\mathcal{A}}$ such that at rounds $r_2-1$ and $r_2$, the agent is at $w_2$ and $\frakb$.

\begin{lemma}\label{lemma:v2-twosuspiciousnode-B}
    As per Counter of Choice-B2, if there exists no $r'_2$, then the adversary can activate $\frakb$, such that after the destruction of one agent from $v_2$ within round $r_2$, remaining agents cannot know the exact node from $v_2$ to $\frakb$.
\end{lemma}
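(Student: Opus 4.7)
The plan is to mirror the proof of Lemma~\ref{lemma:v2-twosuspiciousnode-A} while exploiting the extra freedom that Choice-B affords the adversary: since the algorithm does not consult the port label from $\frakb$ to $v_1$ when performing its first $2$-hop excursion from $v_2$, the adversary is no longer forced to align the port from $w_2$ to $\frakb$ with the port from $\frakb$ to $v_1$. I would construct two instances that the surviving agents cannot tell apart, and conclude that at least two distinct BBH positions remain consistent with their common history.

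First I would build Instance~1 by fixing $v_2\in\mathcal{V}_1$ and choosing $\lambda^2_{v_2,\calA}$ so that the neighbor of $v_2$ the lone traveling agent visits at round $r_2-1$ is exactly $\frakb$; I would then fix $\lambda^1_{\frakb,\calA}$ so that in a benign continuation the agent would proceed at round $r_2$ to some neighbor of $\frakb$ distinct from $v_2$. The adversary activates $\frakb$ at round $r_2-1$, destroying the agent on arrival. For Instance~2, I would take $v_2\in\mathcal{V}_2$, choose $\lambda^2_{v_2,\calA}$ so that the same outgoing port from $v_2$ sends the agent to $w_2$ at round $r_2-1$, and set $\lambda^1_{w_2,\calA}$ so that the agent then reaches $\frakb$ at round $r_2$; the adversary activates $\frakb$ only at round $r_2$.

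The heart of the argument is a standard indistinguishability claim. Up to round $r_2-1$, the traveling agent occupies either $\frakb$ or $w_2$, but $\deg(\frakb)=\deg(w_2)=\Delta$, and the adversary is free to assign incoming and outgoing port labels at these two nodes so that the agent's local input at round $r_2-1$ is identical in the two instances. Because the algorithm belongs to Choice-B, it never queries the port label of the edge $(\frakb,v_1)$, so no discrepancy can be forced through that channel either. In both instances the agent is destroyed before returning to $v_2$ and before meeting any other agent, so it transmits no information distinguishing the two scenarios. Consequently, every remaining agent has the same history in both executions, and the set of BBH-positions consistent with that history contains both the $1$-hop and the $2$-hop candidate from $v_2$.

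The main delicacy, which I would address carefully, is checking that the two port-labelings described above can be realized simultaneously as extensions of the already-committed labeling on $V_{exp}$. This reduces to observing that the node appearing at distance $1$ from $v_2$ along the chosen port (either $\frakb$ in Instance~1 or $w_2$ in Instance~2) has not been visited before through that port: this is guaranteed by the assumption that $r_2$ is the first round in which an agent crosses a $2$-hop path from $v_2$ into $V^c_{exp}$ and by the non-existence of $r'_2$ between $r^\circ_2$ and $r_2$ at which more than one agent moved along the same port out of $v_2$. Given this, the adversary's choice is consistent and the indistinguishability conclusion follows, establishing the lemma.
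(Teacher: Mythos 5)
Your proposal is correct and follows essentially the same route as the paper, which proves this lemma by pointing back to the two-instance indistinguishability argument of Lemma~\ref{lemma:v1SuspiciousNode2}: one instance with $v_2\in\mathcal{V}_1$ and $\frakb$ activated at round $r_2-1$, one with $v_2\in\mathcal{V}_2$ and $\frakb$ activated at round $r_2$, with $\delta_\frakb=\delta_{w_2}=\Delta$ guaranteeing that the destroyed agent's history, and hence the survivors' histories, coincide. Your added check that the two labelings extend the already-committed labeling on $V_{exp}$ is a reasonable extra precaution but not a departure from the paper's argument.
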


The idea of the proof is similar to Lemma \ref{lemma:v1SuspiciousNode2}. Moreover, we can similarly conclude that, at least one more agent must be destroyed by $\frakb$ from $v_2$ irrespective of the knowledge gained by the agents while traversing from $v_1$ (refer to Lemmas \ref{lemma:v2-two-destroyed-together}, \ref{lemma:v2-twosuspiciousnode-A} and \ref{lemma:v2-twosuspiciousnode-B}), and that round is denoted as $r''_2$.

\begin{corollary}\label{corollary:Exploredsetfromv2}
    Given any graph from $\mathcal{G}$, at round $r^{\circ}_2-1$, the set $V_{exp}$, can only contain the nodes $v_1$, $u^1_j$ (for all $j\in\{1,2,\dots,l_1\}$) along $\mathcal{P}$ of $G\in\mathcal{G}$.
\end{corollary}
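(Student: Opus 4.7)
My plan is to combine two straightforward observations. First, by the very definition of $r^{\circ}_2$ as the first round in which some agent visits $v_2$, we immediately have $v_2\notin V_{exp}$ at round $r^{\circ}_2-1$. So it remains to rule out the other nodes on~$\mathcal{P}$, namely $v_i$ for $i\ge 3$ and $u^i_j$ for $i\ge 2$. This is the substantive part.

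I would argue by contradiction: suppose some agent $a$ is on such a node $x$ at a round $r\le r^{\circ}_2-1$, and trace the walk of $a$ in~$G$ from $v_1$ to $x$. Since $\mathcal{P}$ is a path and $x$ lies beyond $v_2$ along $\mathcal{P}$, any walk that stays entirely inside~$\mathcal{P}$ must traverse~$v_2$, which is excluded. By Block~3 of the construction, every tree attached to a node of~$\mathcal{P}$ is a height-$2$ dead end that contains no edge back to~$\mathcal{P}$; hence leaving~$\mathcal{P}$ and re-entering at a different node can only be done through the BBH hub~$\frakb$ (directly from $v_1$ if $v_1\in\mathcal{V}_1$, or through $w_1$ if $v_1\in\mathcal{V}_2$).

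Finally, I would invoke the Counter of Choice-2 at~$v_1$, together with Lemma~\ref{lemma:v1SuspiciousNode2} and Theorem~\ref{theorem:v1-2agentsdestroyed}: the adversary has committed the port labeling of $\frakb$ (or of $w_1$, respectively) so that the only ports used by a surviving agent before round $r^{\circ}_2$ lead either towards $z$ or back towards $v_1$, and has destroyed at rounds $r_1$ and $r''_1$ the two agents that first probed the $\frakb$-side from~$v_1$. Any further agent that attempts to cross $\frakb$ to reach a node of~$\mathcal{P}$ other than~$v_1$ can be destroyed in that same round by activating~$\frakb$. Hence the alleged walk of~$a$ is impossible, which gives the corollary. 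The main subtlety I expect to handle is arguing that any such additional activations do not compromise the destruction budget used later at $v_2,\dots,v_{\Delta-1}$: at round $r^{\circ}_2-1$ the algorithm still has at most two candidate positions for $\frakb$ near $v_1$, so any additional $\frakb$-crossing attempt is strictly uninformed and can be charged to an agent the algorithm has effectively sacrificed, without disturbing the per-$v_i$ accounting that drives the $2\Delta-1$ lower bound.
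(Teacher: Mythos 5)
Your argument is correct and follows essentially the same route as the paper's: both reduce the claim to the observation that any walk from $v_1$ to a node of $\mathcal{P}$ beyond $v_2$ must pass either through $v_2$ (impossible before round $r^{\circ}_2$ by the very definition of $r^{\circ}_2$) or through the hub $\frakb$ (blocked because the adversary destroys any crossing agent, via the Choice-2 counter, Lemma~\ref{lemma:v1SuspiciousNode2} and Theorem~\ref{theorem:v1-2agentsdestroyed}). Your closing worry about the destruction budget is harmless in a lower-bound argument --- extra destructions at $\frakb$ only strengthen the adversary and are attributed to $v_1$ --- and the paper absorbs this timing issue into the conflict-freeness of $v_2$ (Lemma~\ref{lemma:v2ConflictFree}) rather than treating it separately.
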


The above corollary is a direct consequence of Lemma \ref{lemma:v1SuspiciousNode2} and Lemma \ref{lemma:v2ConflictFree}, as any path to a node in $V^c_{exp}$ from $v_1$ must either pass through $\frakb$ or through $v_2$. Since, $v_2$ is conflict free, that implies till round $r^{\circ}_2$, either one agent is destroyed at $\frakb$ from $v_1$, or the adversary can activate $\frakb$ in such a way that at least 2 agent's have been destroyed and no more agent can cross $\frakb$ without being destroyed. Moreover, $r^{\circ}_2$ indicates the first round any agent moves from $v_1$ to $v_2$. This proves the corollary, claiming $V_{exp}$ can contain only the nodes $v_1$, $u^1_j$ (for all $j\in\{1,2,\dots,l_1\}$) along $\mathcal{P}$ of $G\in\mathcal{G}$.

\begin{theorem}\label{theorem:v2-2agentsdestroyed}
    At least 2 agents are destroyed by $\frakb$ from $v_2$.
\end{theorem}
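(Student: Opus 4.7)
The plan is to mirror the structure of the proof of Theorem~\ref{theorem:v1-2agentsdestroyed}, but using the richer case analysis developed for $v_2$, namely the split between the \emph{Choice-A} and \emph{Choice-B} classes and their respective counters.

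First, I invoke Lemma~\ref{lemma:atleast2hopsfromv_2} to conclude that, in order for $\mathcal{A}$ to explore $V^c_{\mathrm{exp}}$, at least one agent must be instructed by $\mathcal{A}$ to travel to a node at 2-hop distance from $v_2$. The argument then branches along the Choice classes. In the \emph{Choice-A1/B1} subcase, $\mathcal{A}$ sends at least two agents simultaneously from $v_2$ to the same neighbor in $V^c_{\mathrm{exp}}$ along some port~$j$. The corresponding Counter allows the adversary to pick $v_2\in\mathcal{V}_1$ and set $\lambda^2_{v_2,\mathcal{A}}$ so that port~$j$ from $v_2$ leads to $\frakb$; activating $\frakb$ at that round destroys both agents at once. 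By Lemma~\ref{lemma:v2-two-destroyed-together}, this holds even if $\mathcal{A}$ exploits the map of $V_{\mathrm{exp}}$, so the bound of two destructions is already attained.

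In the complementary \emph{Choice-A2/B2} subcase, the first two-hop exploration from $v_2$ is performed by a single agent, and no round $r'_2\in(r^\circ_2,r_2)$ witnesses a simultaneous move along a common unexplored port. Lemmas~\ref{lemma:v2-twosuspiciousnode-A} and~\ref{lemma:v2-twosuspiciousnode-B} already give the key indistinguishability: after the round-$r_2$ destruction the adversary keeps alive two $(\calE,I,r_2)$-consistent instances, one with $v_2\in\mathcal{V}_1$ and $\frakb$ adjacent to $v_2$, and one with $v_2\in\mathcal{V}_2$ in which the candidate node is the intermediate $w_2$. Since $\mathcal{A}$ must solve \pbmPerpExpl\ and no surviving agent can distinguish the two candidate positions from their current histories, the only way for the algorithm to ever traverse the edge in question safely is to send another agent into one of the two candidate nodes; the adversary picks the instance in which that node is $\frakb$ and activates it at round $r''_2$, causing a second destruction from the $v_2$ side.

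The main obstacle is ensuring that the two candidate instances actually stay $(\calE,I,r''_2)$-consistent across the entire interval $[r^\circ_2,r''_2]$, so that the adversarial ``swap'' argument works. For this I will crucially use Lemma~\ref{lemma:v2ConflictFree} (Conflict-Freeness of $v_2$) together with Corollary~\ref{corollary:Exploredsetfromv2}: during $[r^\circ_2,r''_2]$ no agent launched from $v_1$ can reach \texttt{Ext}$_1$ or \texttt{Ext}$_2$ of $v_2$, so none of the events at $v_1$ or along the $u^1_j$-subpath can discriminate the two candidate graphs. Combined with the fact that $\frakb$ and $w_2$ (respectively, $\frakb$ and $z$) have identical local degree and that the adversary selects port-labels via $\lambda^2_{v_2,\mathcal{A}}$ and $\lambda^1_{w_2,\mathcal{A}}$ to match the ordering, the histories of all surviving agents coincide in both instances until round $r''_2$. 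This finally yields at least two agents destroyed by $\frakb$ from $v_2$ in all cases, proving the theorem.
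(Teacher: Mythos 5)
Your proposal is correct and follows essentially the same route as the paper's proof: it combines the Conflict-Freeness of $v_2$ (Lemma~\ref{lemma:v2ConflictFree}) with Lemmas~\ref{lemma:v2-two-destroyed-together}, \ref{lemma:v2-twosuspiciousnode-A} and~\ref{lemma:v2-twosuspiciousnode-B} to force a second destruction at some round $r''_2\ge r_2$, exactly as the paper does. You merely spell out in more detail the step the paper leaves terse, namely that the indistinguishability of the two candidate nodes compels $\mathcal{A}$ to sacrifice a second agent, and that the two candidate instances remain consistent up to $r''_2$.
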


\begin{proof}
The adversary chooses $G\in\mathcal{G}$, such that $v_2$ is Conflict-Free. This implies, the adversary does not require to activate $\frakb$ between $[r^{\circ}_2,r_2]$ due to any agent's movement from $v_1$, towards \texttt{Ext}$_1$ or \texttt{Ext}$_2$ (depending whether $v_1\in\mathcal{V}_1~\text{or}~v_1\in\mathcal{V}_2$). Next, as per Lemmas \ref{lemma:v2-two-destroyed-together}, \ref{lemma:v2-twosuspiciousnode-A} and \ref{lemma:v2-twosuspiciousnode-B}, shows that irrespective of the knowledge gained by the agents, there exists a round $r''_2$ ($\ge r_2$) at which the second agent gets destroyed from $v_2$, while traversing along \texttt{Ext}$_1$ or \texttt{Ext}$_2$ from $v_2$ (depending on $v_2\in\mathcal{V}_1~\text{or}~v_2\in\mathcal{V}_2$). Hence, this proves the theorem.\end{proof}

Finally, after all possible choices that can arise from $v_2$ as per execution of $\mathcal{A}$, and its respective counter measures of the adversary, the possible choices of the graphs reduces to $\mathcal{G}^2$, where $\mathcal{G}^2\subset \mathcal{G}^1 \subset \mathcal{G}$. Next, in general we discuss the choices $\mathcal{A}$ can have after reaching $v_i$ (where $i\in\{3,\dots,\Delta-1\}$), and accordingly discuss the adversarial counters in the following section.

\paragraph*{Choices of algorithm $\mathcal{A}$ and counter strategies of adversary at $v_i$, for $2< i \le \Delta-1$}

In this case as well, we explain the choices that an algorithm $\mathcal{A}$ can have while exploring new nodes from $v_i$, accordingly we present the adversarial counters.

\begin{lemma}
    In order to explore $V^c_{exp}$ from $v_i$, $\mathcal{A}$ must instruct at least one agent to travel at least 2 hops from $v_i$.
\end{lemma}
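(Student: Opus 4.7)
The plan is to mirror the argument used for Lemma~\ref{lemma:atleast2hopsfromv_2}, adapted to a generic interior node $v_i$ with $3 \le i \le \Delta-1$. First, I would invoke the construction of $\mathcal{G}$, specifically Block-3, to identify nodes that lie in $V^c_{exp}$ at the moment agents first reach $v_i$ and that are reachable only via $v_i$. By the construction, $\Delta-3$ trees $T^j_{v_i}$ of height~$2$ are attached to $v_i$ via ports distinct from the ports leading to $u^{i-1}_{l_{i-1}}$, $u^{i+1}_1$, and to the \texttt{Ext}$_1$ or \texttt{Ext}$_2$ block. The leaves of these trees lie at distance exactly~$2$ from $v_i$, and by the tree structure, they are reachable only by first visiting $v_i$, then its unique child in the corresponding tree, and finally the leaf. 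Moreover, since $v_i$ is a cut vertex separating these leaves from the rest of the graph, no prior exploration from $v_1,\dots,v_{i-1}$ could have visited them.

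Next, I would argue that since $\mathcal{A}$ claims to solve \pbmPerpExpl, at least one connected component of $G-\frakb$ must be perpetually explored. The leaves of the attached trees $T^j_{v_i}$ remain in the same component as $v_i$ regardless of whether $\frakb$ is a cut vertex, since they are separated from $\frakb$ only by $v_i$ itself (and not by $\frakb$). Because they are reachable from $v_i$ via a path of length~$2$ that does not traverse $\frakb$, they belong to whichever component contains $v_i$. For the component containing $v_i$ to be perpetually explored, these leaves must be visited infinitely often. Since every path from the currently explored set $V_{exp}$ to any such leaf must pass through $v_i$ and then traverse two further edges, $\mathcal{A}$ must instruct at least one agent to travel at least~$2$ hops from~$v_i$.

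The proof is essentially routine, structurally identical to that of Lemma~\ref{lemma:atleast2hopsfromv_2}; the only substantive step is to check, from the description of Block-3, that at least one tree of height~$2$ is indeed attached to $v_i$ when $3 \le i \le \Delta-1$. This holds because the degree budget at $v_i$ allows for $\Delta-3$ such trees (which is $\geq 1$ for $\Delta \geq 4$, the assumption of Theorem~\ref{thm: equivalent statement}). The leaves of even a single such tree suffice to force the $2$-hop movement, completing the argument.
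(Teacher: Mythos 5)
Your proof is correct and takes essentially the same route as the paper: the paper proves this lemma by deferring to the analogous lemma for $v_2$, whose proof is exactly the observation that the construction places vertices at $2$-hop distance that are reachable only through the node in question, so failing to send an agent $2$ hops leaves them forever unexplored, contradicting correctness. You simply make the witness vertices explicit (the leaves of the height-$2$ trees attached to $v_i$ in Block-3, of which there are $\Delta-3\ge 1$ for $\Delta\ge 4$), which is a harmless and in fact slightly more careful rendering of the paper's argument.
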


The idea of the proof of the above lemma is similar to Lemma \ref{lemma:atleast2hopsfromv_2}. Next, we discuss the possible knowledge the agents might have gained, when any agent first time visits $v_i$.

Let the round at which at least one agent reaches $v_i$ for the first time be $r^{\circ}_i$, and $r_i$ ($>r^{\circ}_i$) indicates the round at which at least one agent gets destroyed from $v_i$ within 2 hops of $v_i$. The agent can gain the map of the set $V_{exp}$, explored yet. Define $T_{i-1}=\max^{i-1}_{j=1}(r''_j-r_j)$, $r_i-r^{\circ}_i={Wait}_{i}$ and $T_{i-1}-{Wait}_{i}=Time_{i-1}$.

To ensure $v_i$ to be Conflict-Free, the adversary sets the distance between $v_{i-1}$ and $v_i$ along $\mathcal{P}$ as follows: if $Time_{i-1}>0$, and there exists some $l_{i-1}\in\mathbb{N}$ such that $C_{i-1}\cdot \Delta^{l_{i-1}+1}<Time_{i-1}$, where $C_{i-1}=\max\{c_1,c_2,\dots,c_{i-2}\}$, then return $D_{\mathcal{A}}(v_{i-1},v_i)=l_{i-1}+1$ and modify the port-labeling of $v_{i-1}$ using $\lambda^2_{v_{i-1},\mathcal{A}}$, and port label $u^{i-1}_j$ using $\lambda^1_{u^{i-1}_j,\mathcal{A}}$ (for all $j\in\{1,2,\dots,l_{i-1}\})$ such that first agent's $l_j$ distance movement from $v_j$ on $V^c_{exp}$, after round $r_{j-1}$ is to $v_j$. Otherwise, return $D_{\mathcal{A}}(v_{i-1},v_i)=r''_{i-1}+1$, call it $l_{i-1}+1$. Also modify the port-labeling of $v_{i-1}$ using $\lambda^2_{v_{i-1},\mathcal{A}}$, and port label of $\lambda^1_{u^{i-1}_j,\mathcal{A}}$ (for all $j\in\{1,2,\dots,l_{i-1}\})$ such that the first agent's $l_{i-1}$ distance movement after round $r''_{i-1}$ is to $v_i$. In the following lemma, we have shown that $v_i$ is indeed Conflict-Free.

\begin{lemma}\label{lemma:viConflictFree}
    $v_i$ is Conflict-Free.
\end{lemma}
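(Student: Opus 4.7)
The plan is to generalize the argument of Lemma~\ref{lemma:v2ConflictFree} from handling only $v_1$ to handling all previously visited special vertices $v_j$ with $j<i$ simultaneously. The overall structure will mirror the two-condition case analysis already used at $v_2$.

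First, I would invoke an inductive hypothesis: for each $j\in\{1,\ldots,i-1\}$, by Lemma~\ref{lemma:v2ConflictFree} (for $j=2$) and by prior applications of the current lemma (for $2<j<i$), $v_j$ is Conflict-Free, and by Theorems~\ref{theorem:v1-2agentsdestroyed} and~\ref{theorem:v2-2agentsdestroyed} (and their analogues) at least two agents are destroyed by $\frakb$ from $v_j$, at rounds $r_j\le r''_j$. The key observation is that the only rounds at which an agent from $v_j$ is actively traversing an edge of \texttt{Ext}$_1$ (if $v_j\in\mathcal{V}_1$) or \texttt{Ext}$_2$ (if $v_j\in\mathcal{V}_2$) are contained in the dangerous window $[r_j,r''_j]$: before $r_j$ no destruction has been triggered by $v_j$-side agents, and from $r''_j+1$ onwards the agents at $v_j$ have uniquely identified $\frakb$ and will not attempt to cross it. Since $T_{i-1}=\max_{j<i}(r''_j-r_j)$, each such window has duration at most $T_{i-1}$.

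Next, I would split into the two cases of the construction. In the first case, when $Time_{i-1}>0$ and some $l_{i-1}$ satisfies $C_{i-1}\cdot\Delta^{l_{i-1}+1}<Time_{i-1}$, the adversary inserts $l_{i-1}$ intermediate nodes between $v_{i-1}$ and $v_i$ and fixes the port labels $\lambda^2_{v_{i-1},\mathcal{A}}$ and $\lambda^1_{u^{i-1}_k,\mathcal{A}}$ so that the fastest traversal from $v_{i-1}$ to $v_i$ that avoids $\frakb$ takes at most $C_{i-1}\cdot\Delta^{l_{i-1}+1}$ rounds (the worst-case per-step stationary delay $C_{i-1}$ multiplied by the exponential branching factor). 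Combining this bound with $T_{i-1}\ge r''_j-r_j$ for every $j<i$ and with $Time_{i-1}=T_{i-1}-Wait_i$, a direct arithmetic manipulation in the same spirit as Condition~1 of Lemma~\ref{lemma:v2ConflictFree} shows that $r_i=r^{\circ}_i+Wait_i$ exceeds $r''_{i-1}$, and hence exceeds $r''_j$ for every $j<i$ since the vertices of $\mathcal{P}$ are visited in order. Consequently, $[r^{\circ}_i,r_i]$ is disjoint from every dangerous window $[r_j,r''_j]$, so no agent from any $v_j$ tries to visit \texttt{Ext}$_1$ or \texttt{Ext}$_2$ during $[r^{\circ}_i,r_i]$, establishing Conflict-Freeness. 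In the second case, when no such $l_{i-1}$ exists, the adversary sets $D_{\mathcal{A}}(v_{i-1},v_i)=r''_{i-1}+1$, which forces $r^{\circ}_i\ge r''_{i-1}+1$ and closes all prior dangerous windows before $r^{\circ}_i$; if an agent from $v_{i-1}$ still attempts a traversal of \texttt{Ext}$_1$/\texttt{Ext}$_2$ past this point, a further slight inflation of $D_{\mathcal{A}}(v_{i-1},v_i)$ suffices, exactly as noted in the proof of Lemma~\ref{lemma:v2ConflictFree}.

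The hard part will be the bookkeeping in the first case: one has to verify that there really is no shortcut from $v_{i-1}$ to $v_i$ except along the length-$(l_{i-1}+1)$ subpath of $\mathcal{P}$, which relies on the fact that every attached tree $T_v$ is a height-$2$ dead-end and that every \texttt{Ext}$_1$/\texttt{Ext}$_2$ path necessarily passes through $\frakb$. In addition, the quantities $C_{i-1}$, $T_{i-1}$, $Wait_i$ and the exponential factor $\Delta^{l_{i-1}+1}$ must be chained correctly across all earlier $v_j$, and one must check that the ordering $\max_{j<i}r''_j\le r''_{i-1}$ holds, which ultimately follows because $v_j$ is necessarily visited before $v_{j+1}$ under the adversary's earlier port-label choices.
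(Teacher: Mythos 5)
Your overall plan is the right one --- the paper's own proof of this lemma is literally ``the proof is similar to the one discussed in Lemma~\ref{lemma:v2ConflictFree}'', and your two-condition case split, the roles of $T_{i-1}$, ${Wait}_i$ and $Time_{i-1}$, and your treatment of the second case all match that intended generalization. However, your Case~1 contains a genuine error: you claim the arithmetic shows that $r_i=r^{\circ}_i+{Wait}_i$ \emph{exceeds} $r''_{i-1}$, so that $[r^{\circ}_i,r_i]$ is \emph{disjoint from} every dangerous window $[r_j,r''_j]$. This is the reverse of what the construction gives. In Condition~1 of Lemma~\ref{lemma:v2ConflictFree} the chain is $r^{\circ}_2\le r_1+c_1\Delta^{l_1+1}$ together with $t_1>{Wait}_2+c_1\Delta^{l_1+1}$, which yields $r_2<r''_1$: the window $[r^{\circ}_2,r_2]$ is nested \emph{strictly inside} the quiet gap between the first and the second destruction at $v_1$, during which, by the very definition of $r''_1$, no agent from $v_1$ attempts the extension. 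The hypothesis $C_{i-1}\cdot\Delta^{l_{i-1}+1}<Time_{i-1}=T_{i-1}-{Wait}_i$ is an \emph{upper} bound on the traversal time relative to the length of that gap, so it can only place $[r^{\circ}_i,r_i]$ inside the gap; it cannot push $r_i$ past $r''_{i-1}$. Indeed, if Condition~1 could already guarantee $r^{\circ}_i>r''_{i-1}$, then Condition~2 --- which sets $D_{\mathcal{A}}(v_{i-1},v_i)=r''_{i-1}+1$ precisely in order to force the window past $r''_{i-1}$ --- would be redundant.

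A secondary problem in the same step: you justify ``hence exceeds $r''_j$ for every $j<i$'' by the visiting order along $\mathcal{P}$, but under Condition~1 the destruction windows of consecutive special vertices interleave (one gets $r_{j+1}<r''_j$), so the ordering $\max_{j<i}r''_j\le r''_{i-1}$ is not automatic and should not be taken for granted. The statement you actually need is the one the paper proves at $v_2$: for each $j<i$ whose window is still open at round $r^{\circ}_i$, the interval $[r^{\circ}_i,r_i]$ lies strictly between $r_j$ and $r''_j$, and for every $j$ whose window has already closed, $r''_j<r^{\circ}_i$ and the agents at $v_j$ already know the position of $\frakb$, so they make no further attempt on \texttt{Ext}$_1$ or \texttt{Ext}$_2$. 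That is exactly what the Conflict-Free definition requires. The remainder of your proposal --- the observation that the attached trees are height-$2$ dead ends, the need to rule out shortcuts through $\frakb$, and your handling of Case~2 --- is consistent with the paper.
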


Again the proof of this is similar to the one discussed in Lemma \ref{lemma:v2ConflictFree}.

Next, for all $v_j\in\mathcal{V}_1$ where $j\in\{1,2,\dots,i-1\}$, the agent can know the exact port labelings of the edge $(v_j,\frakb)$. Based on these, we have again two class of choices: first, the agent's use the knowledge of the port labels $(v_i,\frakb)$, for all such $j$, and second, they do not. We call these first class of choices as \textit{Choice-A} and second class of \textit{Choice-B}. We discuss all the possible choices in \textit{Choice-A} class first.

\textit{Choice-A1:} $\mathcal{A}$ may instruct at least 2 agents to explore a neighbor of $v_i$ from $V^c_{exp}$ at round $r_i$ along port $j$, where $r_i>r^{\circ}_i$ and $j\in\{1,2,\dots,\delta_{v_i}\}$.

\textit{Counter:} The counter idea is same as the one explained in Choice-A1 for $v_2$.

\textit{Choice-A2:} Let $r_i$ be the first round, at which a single agent travels a node which is at 2 hop distance of $v_i$ belonging to $V^c_{exp}$.

\textit{Counter:} The counter idea is similar as the one explained in Choice-A2 for $v_2$.

\begin{lemma}\label{lemma:vi-two-destroyed-together}
    If $\mathcal{A}$ instructs at least two agents to move simultaneously to a neighbor of $v_2$ in $V^c_{exp}$, then the adversary can destroy all of these agents, even if $\mathcal{A}$ uses the knowledge of the map of $V_{exp}$ during its execution from $v_2$.
\end{lemma}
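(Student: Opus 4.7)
My plan is to verify directly that the Counter described for Choice-A1 at $v_2$ succeeds. Suppose $\mathcal{A}$ instructs at least two agents to move from $v_2$ along some common port $j$ at some round $r_2 > r^{\circ}_2$, where $j$ leads to a node in $V^c_{exp}$. I will show that the adversary can commit, at round $r_2$, to a graph $G \in \mathcal{G}^1$ with $v_2 \in \mathcal{V}_1$ and to a port-labeling $\lambda^2_{v_2,\mathcal{A}}$ assigning port $j$ at $v_2$ to the edge $(v_2,\frakb)$, and then activate $\frakb$ at round $r_2$. Every agent traversing port $j$ at $v_2$ in that round lands on $\frakb$ and is destroyed.

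The first step is to verify the feasibility of this late commitment. By Corollary~\ref{corollary:Exploredsetfromv2}, when an agent first arrives at $v_2$ at round $r^{\circ}_2$, $V_{exp}$ consists only of $v_1$ and the intermediate path nodes $u^1_1,\dots,u^1_{l_1}$. Any further nodes added to $V_{exp}$ during $[r^{\circ}_2, r_2]$ are reached through $v_2$ using ports at $v_2$ different from $j$ (otherwise $j$ would not lead to $V^c_{exp}$ at round $r_2$). Hence the port label at $v_2$ for the fresh port $j$ is still uncommitted in the adversary's online strategy, and assigning it to the edge $(v_2,\frakb)$ is consistent with every observation the agents have made so far.

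The main obstacle is handling the hypothesis that $\mathcal{A}$ may use the knowledge of the map of $V_{exp}$, which under Choice-A includes the port at $v_1$ leading to $\frakb$. I will argue that this knowledge does not help: in any graph of $\mathcal{G}$, every unexplored neighbor of $v_2$ has degree $\Delta$, regardless of whether it is $\frakb$ itself, the root of some tree $T_{v_2}$, or a node belonging to \texttt{Ext}$_1$ or \texttt{Ext}$_2$ attached at $v_2$. Moreover, the map contains no edge incident to any such neighbor before round $r_2$. Consequently, the map of $V_{exp}$ provides no information that lets $\mathcal{A}$ pick a fresh port at $v_2$ known to avoid $\frakb$, and the two agents are bound to traverse the port that the adversary will choose to identify with $(v_2,\frakb)$.

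Finally, the commitment $v_2 \in \mathcal{V}_1$ must be shown compatible with all commitments the adversary has already made, most notably at $v_1$. Since the extension attached to $v_1$ (either \texttt{Ext}$_1$ or \texttt{Ext}$_2$) and the extension attached at $v_2$ are structurally independent, and since no node along the $v_2$-extension has been visited before round $r_2$, the two choices can be made independently without contradicting any earlier observation recorded in $V_{exp}$. Combining these three ingredients with the activation of $\frakb$ at round $r_2$ destroys all agents that moved along port $j$ from $v_2$, which is exactly the conclusion of the lemma.
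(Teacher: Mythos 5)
Your proposal is correct and follows essentially the same route as the paper: the paper's own argument is simply that, since port $j$ at $v_2$ does not yet belong to the explored map, the adversary can choose $v_2\in\mathcal{V}_1$ and label the edge $(v_2,\frakb)$ with port $j$, destroying all agents that traverse it. Your additional justifications (that the late commitment of the port labeling is consistent with all prior observations, and that the degree-$\Delta$ uniformity of unexplored neighbors makes the map useless for avoiding $\frakb$) are exactly the implicit steps the paper's terse proof relies on, so this is the same argument made more explicit rather than a different one.
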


    The proof of the above lemma is similar to Lemma \ref{lemma:v2-two-destroyed-together}.

\begin{lemma}\label{lemma:vi-twosuspiciousnode-A}
    As per Counter of Choice-A2, if there exists no $r'_i$, where $r^{\circ}_i<r'_i<r_i$, then adversary can activate $\frakb$, such that after destruction of one agent from $v_i$ within round $r_i$, remaining agents cannot know the exact node from $v_i$ to $\frakb$.
\end{lemma}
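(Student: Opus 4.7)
My plan is to mirror the proof of Lemma~\ref{lemma:v2-twosuspiciousnode-A}, generalized so that it handles the richer knowledge that the algorithm has accumulated by the time it reaches~$v_i$. Since we are in the \textit{Choice-A} class, the algorithm may exploit the known port label of every edge $(v_j,\frakb)$ for those $v_j\in\mathcal{V}_1$ with $j<i$. So I need to exhibit two $(\calE,I,r_i)$-consistent instances from the current reduced class~$\mathcal{G}^{i-1}$ in which the adversary activates $\frakb$ at slightly different rounds, destroying the single exploring agent either one node earlier or one node later, yet the two executions agree on everything visible to the surviving agents up through round~$r_i$.

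Concretely, I would fix the agent~$a$ whose 2-hop excursion from~$v_i$ is scheduled to finish at round~$r_i$, and let $\rho$ be the port~$a$ intends to use at round $r_i-1$ to leave the first visited node. Then I would build two instances that agree on the entire explored subgraph (and in particular on every previously discovered port into~$\frakb$) and differ only in the attachment at~$v_i$. In Instance~1, $v_i\in\mathcal{V}_1$ and $\lambda^2_{v_i,\mathcal{A}}$ routes $a$ directly to $\frakb$ at round $r_i-1$; in Instance~2, $v_i\in\mathcal{V}_2$ with intermediate node~$w_i$, where $\lambda^2_{v_i,\mathcal{A}}$ routes $a$ to $w_i$ at round $r_i-1$ and $\lambda^1_{w_i,\mathcal{A}}$ sets the port from $w_i$ to $\frakb$ equal to~$\rho$. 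Since $\delta_\frakb=\delta_{w_i}=\Delta$ and neither node was visited before, the inputs received by~$a$ through round $r_i-1$ coincide in the two instances. The adversary activates $\frakb$ at round $r_i-1$ in Instance~1 and at round $r_i$ in Instance~2, so~$a$ is destroyed in both cases and stops reporting back, leaving the surviving agents with identical histories.

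The main subtlety, beyond the argument used for~$v_2$, is ruling out a scenario in which the algorithm parks some agent at a previously visited $v_j\in\mathcal{V}_1$, intending to traverse the known edge $(v_j,\frakb)$ in the window $[r^\circ_i,r_i]$ and meet~$a$ on the far side of~$\frakb$, thereby potentially distinguishing the two instances. I would defuse this by invoking Lemma~\ref{lemma:viConflictFree}: $v_i$ being Conflict-Free means that within $[r^\circ_i,r_i]$ no agent at any $v_j$ with $j<i$ attempts to enter \texttt{Ext}$_1$ or \texttt{Ext}$_2$, so the two different attachments at~$v_i$ cannot induce any diverging observation anywhere else in the graph during that window. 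Combined with the fact that the ports from every $v_j\in\mathcal{V}_1\cap\{v_1,\dots,v_{i-1}\}$ into~$\frakb$ are carried over identically into both instances, the \textit{Choice-A} information base is consistent with both candidates for~$\frakb$'s position from~$v_i$.

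Having established this, the two instances lie in $\cons(\calE,I,r_i)$ with $\frakb$ placed at two distinct candidate nodes (the immediate neighbor of~$v_i$ along port, respectively, the node reached via the intermediate~$w_i$), so by Definition~\ref{def:consistent} the surviving agents have indistinguishable local views and cannot decide which of the two candidates is the actual malicious node. The hardest part of the argument is precisely the conflict-freeness bookkeeping above, because it is the only place where the extra structure at~$v_i$ (as opposed to~$v_2$) could leak information; everything else is a direct lift of the $v_2$ construction with $\rho_i$ playing the role that the port from~$v_1$ to~$\frakb$ played in Lemma~\ref{lemma:v2-twosuspiciousnode-A}.
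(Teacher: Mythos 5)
Your proposal is correct and follows essentially the same route as the paper, which proves this lemma simply by deferring to the two-instance construction of Lemma~\ref{lemma:v2-twosuspiciousnode-A} (the adversary places $\frakb$ either directly adjacent to $v_i$ or behind an intermediate $w_i$ of equal degree, activates it at round $r_i-1$ or $r_i$ respectively, and relies on Conflict-Freeness to prevent any disambiguation via the previously learned ports into $\frakb$). Your explicit treatment of the ``park an agent at an earlier $v_j$ and meet on the far side'' scenario via Lemma~\ref{lemma:viConflictFree} is exactly the bookkeeping the paper leaves implicit, so nothing is missing.
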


The proof of the above lemma is similar to Lemma \ref{lemma:vi-twosuspiciousnode-A}.

Next, we discuss the choices in the class \textit{Choice-B}, i.e., when the agents do not use the knowledge of the port label from $\frakb$ to $v_j$, for any $j\in\{1,2,\dots,i-1\}$.

\textit{Choice-B1:} $\mathcal{A}$ may instruct at least 2 agents to explore a neighbor of $v_i$ from $V^c_{exp}$ at round $r_i$ along port $j$, where $r_i>r^{\circ}_i$ and $j\in \{1,2,\dots,\delta_{v_i}\}$.

\textit{Counter:} The counter idea is similar to the one described for Choice-B1 of $v_2$.

\textit{Choice-B2:} Let $r_i$ be the first round at which a single agent travels a node which is at 2 hop distance of $v_i$ in $V^c_{exp}$.

\textit{Counter:} The counter idea is similar to the one described for Choice-B2 of $v_2$.

\begin{lemma}\label{lemma:vi-twosuspiciousnode-B}
    As per Counter of Choice-B2, if there exists no $r'_i<r_i$, then adversary can activate $\frakb$, such that after destruction of one agent from $v_i$ within round $r_i$, remaining agents cannot know the exact node from $v_2$ which is $\frakb$.
\end{lemma}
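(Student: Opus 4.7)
The plan is to carry over the argument of Lemma~\ref{lemma:v1SuspiciousNode2}, now applied at $v_i$ rather than at $v_1$. The Counter of Choice-B2 has already committed the adversary to two candidate configurations: Instance~1 with $v_i\in\mathcal{V}_1$, in which the next port of~$v_i$ leads directly to $\frakb$, and Instance~2 with $v_i\in\mathcal{V}_2$, in which the same port leads to $w_i$, which is itself adjacent to $\frakb$. Both first-hop neighbors, $\frakb$ in Instance~1 and $w_i$ in Instance~2, have degree exactly~$\Delta$ by the construction of the class~$\mathcal{G}$, so the lone agent moving along port~$j$ out of $v_i$ cannot tell the two neighbors apart from local degree information alone. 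Since the hypothesis of the lemma places us in the Choice-B class, algorithm~$\mathcal{A}$ is by assumption not leveraging any port-label data at $\frakb$ that was discovered from earlier destructions at $v_1,\dots,v_{i-1}$; this gives the adversary the freedom to pick $\lambda^1_{\frakb,\mathcal{A}}$ in Instance~1 and $\lambda^1_{w_i,\mathcal{A}}$ in Instance~2 so that the history of the exploring agent through round $r_i-1$ coincides bit-for-bit in the two executions.

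The next step is to apply the activation schedules dictated by the Counter: activate $\frakb$ at round $r_i-1$ in Instance~1, so that the agent dies upon arriving at $\frakb$ from $v_i$; and activate $\frakb$ at round $r_i$ in Instance~2, so that the agent dies upon arriving at $\frakb$ from $w_i$. In both cases, exactly one agent is destroyed within round $r_i$, and—crucially—it was travelling alone, since by the hypothesis of the lemma there is no round $r'_i\in(r^\circ_i,r_i)$ at which more than one agent leaves $v_i$ via port~$j$. Therefore the destroyed agent never had an opportunity to transmit its observations face-to-face to any surviving agent before being destroyed in either instance. Combined with Lemma~\ref{lemma:viConflictFree}, which guarantees that no agent from $v_\alpha$ with $\alpha<i$ enters \texttt{Ext}$_1$ or \texttt{Ext}$_2$ during the window $[r^\circ_i,r_i]$, this implies that the set of histories of all still-alive agents at the beginning of round $r_i+1$ is identical in the two executions, so both Instance~1 and Instance~2 lie in $\cons(\mathcal{E},I,r_i+1)$.

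The concluding step then follows immediately: the surviving agents cannot distinguish whether $\frakb$ is the immediate neighbor of $v_i$ along port~$j$ (as in Instance~1) or the vertex at distance~$2$ from $v_i$ along that same direction (as in Instance~2), proving the claim (with the minor typo noted: the statement should read ``the exact node from $v_i$ which is $\frakb$''). The only real technical obstacle is verifying that the local observations of the lone agent can be made genuinely identical across the two instances: one must check that the degree profile at the first-hop neighbor matches (it does, since $\delta_{\frakb}=\delta_{w_i}=\Delta$), that the port labelings at those two vertices can be chosen consistently with each other and with the agent's prior observations, and that all other agents remain confined to regions of~$G$ in which the two instances coincide. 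Each of these conditions follows from the freedom the adversary still possesses over $\lambda^1_{\frakb,\mathcal{A}}$ and $\lambda^1_{w_i,\mathcal{A}}$ in the Choice-B regime, together with the Conflict-Free property of~$v_i$.
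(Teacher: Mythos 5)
Your proposal is correct and follows essentially the same route as the paper: the paper proves this lemma by reference to Lemma~\ref{lemma:v2-twosuspiciousnode-B}, which in turn reduces to the two-instance indistinguishability argument of Lemma~\ref{lemma:v1SuspiciousNode2} (one instance with $v_i\in\mathcal{V}_1$ and $\frakb$ activated at $r_i-1$, one with $v_i\in\mathcal{V}_2$ and $\frakb$ activated at $r_i$, using $\delta_{\frakb}=\delta_{w_i}=\Delta$ and the lone agent's inability to report back). You simply spell out the details the paper leaves implicit, including the correct use of the Conflict-Free property and the adversary's freedom over $\lambda^1_{\frakb,\mathcal{A}}$ and $\lambda^1_{w_i,\mathcal{A}}$, and you rightly flag the typo ($v_2$ should be $v_i$).
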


The proof of the lemma is similar to Lemma \ref{lemma:v2-twosuspiciousnode-B}.

\begin{corollary}\label{corollary:Exploredsetfromvi}
    Given any graph $\mathcal{G}$ at round $r^{\circ}_i-1$, the set $V_{exp}$ can only contain the nodes $v_j$, $u^j_k$ (where $j\in\{3,\dots,\Delta-1\}$ and $k\in\{1,2,\dots,l_j\}$) along $\mathcal{P}$ of $G\in\mathcal{G}$. 
\end{corollary}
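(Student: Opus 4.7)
The plan is to establish this corollary by induction on $i$, with the base case $i=2$ already given by Corollary~\ref{corollary:Exploredsetfromv2}. The statement should be read as asserting that $V_{exp}\cap V(\mathcal{P})\subseteq\{v_j:1\le j\le i-1\}\cup\{u^j_k:1\le j\le i-1,\,1\le k\le l_j\}$ at round $r^{\circ}_i-1$, where the range ``$j\in\{3,\dots,\Delta-1\}$'' in the statement appears to be a typo for $j\in\{1,\dots,i-1\}$. The topological fact I would exploit is that, in any $G\in\mathcal{G}$, every path from $v_1$ to a node of $\mathcal{P}$ lying on or beyond $v_i$ must either (a) traverse $\mathcal{P}$ through $v_{i-1}$ and the intermediate $u^{i-1}_k$ nodes, or (b) pass through the Byzantine black hole $\frakb$ via an \texttt{Ext}$_1$ or \texttt{Ext}$_2$ gadget attached at some $v_j$ with $j\le i-1$. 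This holds because each attached tree $T^j_v$ is a dead end (its leaves have degree~$1$), and the degree-$\Delta$ nodes $w_j,z$ neighbouring $\frakb$ have no alternative connection back to $\mathcal{P}$.

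For the inductive step I would assume the statement at round $r^{\circ}_{i-1}-1$ and then analyse what can happen during the interval $[r^{\circ}_{i-1},r^{\circ}_i-1]$. Route (a) can only add the nodes $v_{i-1}$ and $u^{i-1}_1,\dots,u^{i-1}_{l_{i-1}}$ to $V_{exp}\cap V(\mathcal{P})$, since by the very definition of $r^{\circ}_i$ the node $v_i$ has not yet been visited. For route (b), I would invoke the Conflict-Free property of $v_i$ established in Lemma~\ref{lemma:viConflictFree}, which guarantees that the adversary has allocated enough slack in $D_\mathcal{A}(v_{i-1},v_i)$ to activate $\frakb$ against any agent attempting to bypass $\mathcal{P}$ through \texttt{Ext}$_1$ or \texttt{Ext}$_2$ at some $v_j$ with $j\le i-1$ during this interval. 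By the inductive analogue of Theorems~\ref{theorem:v1-2agentsdestroyed} and~\ref{theorem:v2-2agentsdestroyed}, such attempts are fatal to the traversing agents, so no new node of $\mathcal{P}$ beyond $v_{i-1}$ can enter $V_{exp}$ through that route. Combining the two cases with the inductive hypothesis yields the claimed containment.

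The main obstacle I expect is making rigorous the bookkeeping of the adversarial $\frakb$ activations: the adversary has already committed to a particular activation pattern to realise the earlier destructions at $v_1,\dots,v_{i-1}$, and one has to verify that it can simultaneously afford to fire $\frakb$ to block any route-(b) bypass during $[r^{\circ}_{i-1},r^{\circ}_i-1]$ without conflicting with those commitments or with the consistency requirements it is maintaining between instances. Once the Conflict-Free buffer on $D_\mathcal{A}(v_{i-1},v_i)$ is shown to be wide enough to accommodate these supplementary activations, the corollary follows by a clean reassembly of the inductive hypothesis with the route-(a)/route-(b) case analysis.
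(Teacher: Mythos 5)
Your proposal is correct and follows essentially the same route as the paper: the paper leaves Corollary~\ref{corollary:Exploredsetfromvi} without an explicit proof, but its justification of the $v_2$ analogue (Corollary~\ref{corollary:Exploredsetfromv2}) rests on exactly the two ingredients you use --- every path from the explored region to a node of $\mathcal{P}$ at or beyond $v_i$ passes either through $v_i$ or through $\frakb$, and Conflict-Freeness (Lemma~\ref{lemma:viConflictFree}) together with the destruction theorems ensures no agent survives a crossing of $\frakb$ before round $r^{\circ}_i$. Your reading of the index range as $j\in\{1,\dots,i-1\}$ matches the $v_2$ version of the statement, and your inductive packaging (including the flagged bookkeeping of the adversary's $\frakb$ activations, which the paper also does not spell out) merely makes explicit what the paper treats as a direct analogue.
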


\begin{theorem}\label{theorem:vi-2agentsdestroyed}
    At least 2 agents are destroyed by $\frakb$ from $v_i$.
\end{theorem}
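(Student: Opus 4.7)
The plan is to mirror essentially verbatim the structure of the proof of Theorem~\ref{theorem:v2-2agentsdestroyed}, but relying on the general-$v_i$ lemmas (Lemma~\ref{lemma:viConflictFree} and Lemmas~\ref{lemma:vi-two-destroyed-together}, \ref{lemma:vi-twosuspiciousnode-A}, \ref{lemma:vi-twosuspiciousnode-B}) that have already been established in this section as the analogs of the $v_2$ lemmas.

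First, I would invoke Lemma~\ref{lemma:viConflictFree} to guarantee that the graph $G \in \mathcal{G}^{i-1}$ chosen by the adversary makes $v_i$ Conflict-Free. The key consequence is that during the interval $[r^{\circ}_i, r_i]$, no agent coming from any $v_j$ with $j<i$ attempts to traverse along \texttt{Ext}$_1$ or \texttt{Ext}$_2$ at $v_j$; therefore, the adversary is free to activate $\frakb$ during this interval only in response to movements initiated from $v_i$ itself, without incidentally destroying agents from earlier $v_j$'s that would alter the history available to the alive agents.

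Next, I would split the analysis according to whether $\mathcal{A}$ uses the map of $V_{exp}$ to move more than one agent simultaneously from $v_i$ into $V^c_{exp}$. In the first case, Lemma~\ref{lemma:vi-two-destroyed-together} immediately yields that the adversary, by choosing $v_i \in \mathcal{V}_1$ and pointing the port $j$ toward $\frakb$ via the labeling $\lambda^2_{v_i,\mathcal{A}}$, destroys at least two agents together at round $r_i$. In the second case, only a single agent performs the first 2-hop excursion from $v_i$; Lemmas~\ref{lemma:vi-twosuspiciousnode-A} and~\ref{lemma:vi-twosuspiciousnode-B} (covering the Choice-A and Choice-B subclasses respectively) show that after that agent is destroyed at round $r_i$, the remaining alive agents cannot disambiguate between the two candidate positions of $\frakb$ among the two nodes on the 2-hop path from $v_i$.

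Finally, since $\mathcal{A}$ is assumed to solve \pbmPerpExpl, the surviving agents must eventually determine which of these two candidates is the actual $\frakb$ in order to avoid forever leaving unexplored nodes on one side; this disambiguation can only be achieved by sending another agent into that 2-hop path, which the adversary destroys at some round $r''_i \ge r_i$ by activating $\frakb$ at the appropriate node. Combining the two destructions gives at least two agents destroyed by $\frakb$ from $v_i$. There is essentially no new technical obstacle here beyond carefully invoking the right analog lemma in each subcase, exactly as in the proof of Theorem~\ref{theorem:v2-2agentsdestroyed}; the main conceptual point already done by Lemma~\ref{lemma:viConflictFree} is the buffering between consecutive $v_j$'s via the distance choice $D_{\mathcal{A}}(v_{i-1},v_i)$, which guarantees that the arguments at $v_i$ can be carried out independently of the earlier destructions at $v_1,\dots,v_{i-1}$.
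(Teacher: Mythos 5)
Your proposal is correct and follows essentially the same route as the paper's own proof: invoke Lemma~\ref{lemma:viConflictFree} to make $v_i$ Conflict-Free so that the argument at $v_i$ is insulated from earlier destructions, then apply Lemmas~\ref{lemma:vi-two-destroyed-together}, \ref{lemma:vi-twosuspiciousnode-A} and~\ref{lemma:vi-twosuspiciousnode-B} to force a second destruction at some round $r''_i\ge r_i$. Your write-up is in fact somewhat more explicit about the case split and the disambiguation argument than the paper's terse version, but there is no substantive difference.
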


\begin{proof}
    The graph chosen by the adversary from $\mathcal{G}$, satisfies the condition that, $v_i$ is Conflict-Free, i.e., within the interval $[r^{\circ}_i,r_i]$, the adversary need not activate $\frakb$, due to any movement from $v_{j}$ along \texttt{Ext}$_1$ or \texttt{Ext}$_2$ (for all $v_i\in\mathcal{V}_1~\text{or}~v_i\in\mathcal{V}_2$, where $j\in\{1,2,\dots,i-1\}$). Further, Lemmas \ref{lemma:vi-two-destroyed-together}, \ref{lemma:vi-twosuspiciousnode-A} and \ref{lemma:vi-twosuspiciousnode-B}, ensure that, irrespective of the knowledge gained by the agents before round $r^{\circ}_2$, there exists a round $r''_i$ ($\ge r_i$) at which the second agent gets destroyed from $v_i$, while traversing along \texttt{Ext}$_1$ or \texttt{Ext}$_2$ from $v_i$ (depending on $v_i\in\mathcal{V}_1~\text{or}~v_i\in\mathcal{V}_2$). Hence, this proves the theorem.\end{proof}

So, the possible graph class choices eventually reduces to $\mathcal{G}^{\Delta-1}$, where $\mathcal{G}^{\Delta-1}\subset \dots \subset \mathcal{G}^1\subset \mathcal{G}$. 

\begin{remark}\label{remark:v_2inV2}
    If $v_{\Delta}\in\mathcal{V}_2$, then choose $D_{\mathcal{A}}(v_{\Delta-1},v_{\Delta})$, in a similar idea as chosen for $v_i$ (for all $i\in\{3,4,\dots,\Delta-1\}$). Also the choices posed by $\mathcal{A}$ from $v_{\Delta}$ are exactly similar as the ones from $v_{\Delta-1}$, and their adversarial counter measures are also similar. So, using them, we can conclude that not only $v_{\Delta}$ can be Conflict-Free, but also similar to Theorem \ref{theorem:vi-2agentsdestroyed}, we can say that at least 2 agents are destroyed by $\frakb$ from $v_{\Delta}$.

    Otherwise, set $D_{\mathcal{A}}(v_{\Delta-1},v_{\Delta})=2$. So, finally we can conclude that depending on whether $v_{\Delta}\in\mathcal{V}_2$ or $v_{\Delta}\notin\mathcal{V}_2$, the possible graph choices further reduces to $\mathcal{G}^{\Delta}$, where $\mathcal{G}^{\Delta}\subset \mathcal{G}^{\Delta-1}$.
\end{remark}

Finally, to conclude the proof of Theorem~\ref{thm: equivalent statement},
the adversary chooses the graph $G=(V,E,\lambda)$ from $\mathcal{G}^{\Delta}$, where $\mathcal{G}^{\Delta-1}\subset\mathcal{G}^{\Delta-2}\subset\dots \subset \mathcal{G}$, and sets $v_1=home$. Next starting from $v_1$, Theorems \ref{theorem:v1-2agentsdestroyed}, \ref{theorem:v2-2agentsdestroyed} and \ref{theorem:vi-2agentsdestroyed} ensure that 2 agents are destroyed from each $v_i\in V$, where $i\in\{1,2,\dots,\Delta-1\}$. This contradicts the fact that $\mathcal{A}$ solves \pbmPerpExpl~with $2(\Delta-1)$ agents.



\begin{table}[h]
\centering
\small
\begin{tabular}{|c|l|}
\hline
\textbf{Notation} & \textbf{Description} \\
\hline
$\mathcal{G}$ & Indicates the specific graph from which adversary chooses the graph, as per $\mathcal{A}$.\\
\hline

$\mathcal{G}^i$ & Subclass of $\mathcal{G}$, where $\mathcal{G}^i\subset \mathcal{G}^{i-1}\subset\dots\subset \mathcal{G}$.\\
\hline
$\mathcal{P}$ & Indicates the path graph on a graph in $\mathcal{G}$ consisting of vertices of type $v_i$ and $u^i_j$.\\
\hline
$\mathcal{V}$ & indicates the set of vertices $v_1,\dots,v_{\Delta-1}$.\\
\hline
$dist_{\mathcal{P}}(v_{i-1},v_i)$ & indicates the distance between $v_{i-1}$ and $v_i$ along $\mathcal{P}$, also denoted by $l_{i-1}+1$.\\
\hline
\texttt{Ext}$_1$,\texttt{Ext}$_2$ & indicates two variation of subgraphs\\
\hline
$\mathcal{G}_1,\mathcal{G}_2,\mathcal{G}_3$ & the different sub-class under class $\mathcal{G}$.\\
\hline
$w_i$ & it is a vertex connected to $v_i\in\mathcal{V}_2$, where it is of degree $\Delta$ \\
\hline
$T^i_v$ & a tree special tree (explained in Block-3) originating from $v$ along port $i$.\\
\hline
$\lambda^1_{v,\mathcal{A}}$ & function for adversary to port label each vertex, except $\{v_1,v_2,\dots,v_{\Delta}\}$.\\
\hline
$\lambda^2_{v_i,\mathcal{A}}$ & function for adversary to port label each vertex $v_i$, where $i\in\{1,2,\dots,\Delta\}$.\\
\hline
$V_{exp}$, $V^c_{exp}$ & indicates the set of explored and unexplored vertices, such that $V=V_{exp}\cup V^c_{exp}$.\\
\hline
$r^{\circ}_i$ & indicates the round at which any agent first visits $v_i$.\\
\hline
$r_i,r''_i$ & indicates the rounds at which the first and second agent destroyed from $v_i$.\\
\hline
$t_i$ & indicates the interval of rounds between destruction of first agent\\
& and second agent from $v_i$.\\
\hline
$T_j$ & maximum of $t_i$, for all $i\in\{1,2,\dots,j-1\}$.\\
\hline
${Wait}_i$ & indicates the number rounds between any agent first visits $v_i$ \\
& and the first agent (or agents) getting destroyed.\\
\hline
$Time_{i-1}$ & indicates $T_{i-1}-{Wait}_i$\\
\hline
\end{tabular}
\caption{Notations Table for the notations used in Appendix \ref{appendix: lower bound proof General BBH}.}
\end{table}

\section{A perpetual exploration algorithm for general graphs with a BBH}\label{Appendix: General Graph Algorithm}

In this section we discuss the algorithm, termed as \textsc{Graph\_PerpExplore-BBH-Home} that solves \textsc{PerpExploration-BBH-Home} on a general graph, $G=(V,E,\lambda)$. We will show that our algorithm uses at most $3\Delta+3$ agents (where $\Delta$ is the maximum degree in $G$), to solve this problem. The structure of our algorithm depends upon four separate algorithms \textsc{Translate\_Pattern} along with \textsc{Make\_Pattern} (discussed in Appendix \ref{Appendix: Path Alg}), \textsc{Explore} (explained in this section) and \textsc{BFS-Tree-Construction} \cite{chalopin2010constructing}. So, before going in to details of our algorithm that solves \textsc{PerpExploration-BBH-Home}, we recall the idea of \textsc{BFS-Tree-Construction}.

An agent starts from a node $h\in V$ (also termed as \textit{home}), where among all nodes in $G$, only $h$ is marked. The agent performs breadth-first search (BFS) traversal, while constructing a BFS tree rooted at $h$. The agent maintains a set of edge-labeled paths, $\mathcal{P}=\{P_v~:~\text{edge labeled shortest path}$ $ \text{from $h$ to $v$,}~\forall v\in V~\text{such that the agent has visited $v$}\}$ while executing the algorithm. During its traversal, whenever the agent visits a node $w$ from a node $u$, then to check whether the node $w$ already belongs to the current BFS tree of $G$ constructed yet, it traverses each stored edge labeled paths in the set $\mathcal{P}$ from $w$ one after the other, to find if one among them takes it 
to the marked node $h$. If yes, then it adds to its map a cross-edge $(u,w)$. Otherwise, it adds to the already constructed BFS tree, the node $w$, accordingly $\mathcal{P}=\mathcal{P}\cup P_w$ is updated. The underlying data structure of \textsc{Root\_Paths} \cite{chalopin2010constructing} is used to perform these processes. This strategy guarantees as per Proposition 9 of \cite{chalopin2010constructing}, that \textsc{BFS-Tree-Construction} algorithm constructs a map of $G$, in presence of a marked node, within $\mathcal{O}(n^3\Delta)$ steps and using $\mathcal{O}(n\Delta\log n)$ memory, where $|V|=n$ and $\Delta$ is the maximum degree in $G$.

 In our algorithm, we use $k$ agents (in Theorem \ref{theorem:Final3DeltaUpperBound-mainversion}, it is shown that $k=3\Delta+3$ agents are sufficient), where they are initially co-located at a node $h\in V$, which is termed as \emph{home}. Initially, at the start our algorithm asks the agents to divide in to three groups, namely, \texttt{Marker}, \texttt{SG} and \texttt{LG}$_0$, where \texttt{SG} (or smaller group) contains the least four ID agents, the highest ID agent among all $k$ agents, denoted as \texttt{Marker} stays at $h$ (hence $h$ acts as a marked node), and the remaining $k-5$ agents are denoted as \texttt{LG}$_0$ (or larger group). During the execution of our algorithm, if at least one member of \texttt{LG}$_0$ detects one port leading to the BBH from one of its neighbor, in that case at least one member of \texttt{LG}$_0$ settles down at that node, acting as an \textit{anchor} blocking that port which leads to the BBH, and then some of the remaining members of \texttt{LG}$_0$ forms \texttt{LG}$_1$. In general, if at least one member of \texttt{LG}$_i$ detects the port leading to the BBH from one of its neighbors, then again at least one member settles down at that node acting as an \textit{anchor} to block that port leading to the BBH, and some of the remaining members of \texttt{LG}$_i$ forms \texttt{LG}$_{i+1}$, such that |\texttt{LG}$_{i+1}|<$|\texttt{LG}$_{i}$|. It may be noted that, a member of \texttt{LG}$_i$ only settles at a node $v$ (say) acting as an \textit{anchor}, only if no other \textit{anchor} is already present at $v$. Also, only if a member of \texttt{LG}$_i$ settles as an \textit{anchor}, then only some of the members of \texttt{LG}$_i$ forms \texttt{LG}$_{i+1}$.
 
 In addition to the groups \texttt{LG}$_0$ and \texttt{SG}, the \texttt{Marker} agent permanently remains at $h$. In a high-level the goal of our \textsc{Graph\_PerpExplore-BBH-Home} algorithm is to create a situation, where eventually at least one agent blocks, each port of $C_1$ that leads to the BBH (where $C_1,C_2,\dots,C_t$ are the connected components of $G-\frakb$, such that $h\in C_1$), we term these blocking agents as \textit{anchors}, whereas the remaining alive agents must perpetually explore at least $C_1$.

Initially from $h$, the members of \texttt{SG} start their movement, and the members of \texttt{LG}$_0$ stays at $h$ until they find that, none of the members of \texttt{SG} reach $h$ after a certain number of rounds. Next, we explain one after the other how both these groups move in $G$.

\noindent\textbf{Movement of \texttt{SG}}: The members (or agents) in \texttt{SG} works in phases, where in each phase the movement of these agents are based on the algorithms \textsc{Make\_Pattern} and \textsc{Translate\_Pattern} (both of these algorithms are described in Appendix \ref{Appendix: Path Alg}). Irrespective of which, the node that they choose to visit during \textit{making pattern} or \textit{translating pattern} is based on the underlying algorithm \textsc{BFS-Tree-Construction}.

More specifically, the $i$-th phase (for some $i>0$) is divided in two sub-phases: $i_1$-th phase and $i_2$-th phase. In the $i_1$-th phase, the members of \texttt{SG} makes at most $2^i$ translations, while executing the underlying algorithm \textsc{BFS-Tree-Construction}. Next, in the $i_2$-th phase, irrespective of their position after the end of $i_1$-th phase, they start translating back to reach $h$. After they reach $h$ during the $i_2$-th phase, they start $(i+1)$-th phase (which has again, $(i_1+1)$ and $(i_2+1)$ sub-phase). Note that, while executing $i_1$-th phase, if the members of \texttt{SG} reach $h$, in that case they continue executing $i_1$-th phase. We already know as per Appendix \ref{Appendix: Path Alg}, each translation using \textsc{Translate\_Pattern} requires 5 rounds and for creating the pattern using \textsc{Make\_Pattern} it requires 2 rounds. This concludes that, it requires at most ${Ti}_j=5\cdot 2^i+2$ rounds to complete $i_j$-th phase, for each $i>0$ and $j\in\{1,2\}$.

If at any point, along their traversal, the adversary activates the BBH, such that it interrupts the movement of \texttt{SG}. In that scenario, at least one member of \texttt{SG} must remain alive, exactly knowing the position of the BBH from its current node (refer to the discussion of \textbf{Intervention by the BBH} in Appendix \ref{Appendix: Path Alg}). The agent (or agents) which knows the exact location of the BBH, stays at the node adjacent to the BBH, such that from its current node, it knows the exact port that leads to the BBH, or in other words they act as \textit{anchors} with respect to one port, leading to the BBH. In particular, let us suppose, the agent holds the adjacent node of BBH, with respect to port $\alpha$ from BBH, then this agent is termed as \texttt{Anchor($\alpha$)}.

\noindent\textbf{Movement of \texttt{LG}$_0$}: These group members stay at $h$ with \texttt{Marker}, until the members of \texttt{SG} are returning back to $h$ in the $i_2$-th phase, for each $i>0$. If all members of \texttt{SG} do not reach $h$, in the $i_2$-th phase, i.e., within ${Ti}_2$ rounds since the start of $i_2$-th phase, then the members of \texttt{LG}$_0$ start their movement.

Starting from $h$, the underlying movement of the members of \texttt{LG}$_0$ is similar to \textsc{BFS-Tree-Construction}, but while moving from one node to another they do not execute neither \textsc{Make\_Pattern} nor \textsc{Translate\_Pattern}, unlike the members of \texttt{SG}. In this case, if all members of \texttt{LG}$_0$ are currently at a node $u\in V$, then three lowest ID members of \texttt{LG}$_0$ become the explorers, they are termed as $E^0_1$, $E^0_2$ and $E^0_3$ in increasing order of their IDs, respectively. If based on the \textsc{BFS-Tree-Construction}, the next neighbor to be visited by the members of \texttt{LG}$_0$ is $v$, where $v\in N(u)$, then the following procedure is performed by the explorers of \texttt{LG}$_0$, before \texttt{LG}$_0$ finally decides to visit $v$. 

Suppose at round $r$ (for some $r>0$), \texttt{LG}$_0$ members reach $u$, then at round $r+1$ both $E^0_2$ and $E^0_3$ members reach $v$. Next at round $r+2$, $E^0_3$ traverses to the first neighbor of $v$ and returns to $v$ at round $r+3$. At round $r+4$, $E^0_2$ travels to $u$ from $v$ and meets $E^0_1$ and then at round $r+5$ it returns back to $v$. This process iterates for each neighbor of $v$, and finally after each neighbor of $v$ is visited by $E^0_3$, at round $r+4\cdot (\delta_v-1)+1$ both $E^0_2$ and $E^0_3$ returns back to $u$. And in the subsequent round each members of \texttt{LG}$_0$ visit $v$. The whole process performed by $E^0_1$, $E^0_2$ and $E^0_3$ from $u$ is termed as \textsc{Explore}$(v)$, where $v$ symbolizes the node at which the members of \texttt{LG}$_0$ choose to visit from a neighbor node $u$. After the completion of \textsc{Explore}$(v)$, each member of \texttt{LG}$_0$ (including the explorers) visit $v$ from $u$.
\begin{figure}
    \centering
    \includegraphics[width=0.9\linewidth]{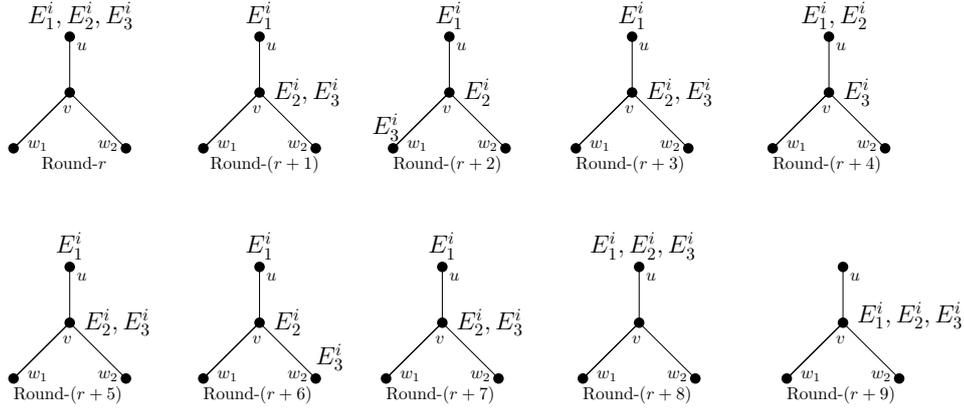}
    \caption{Depicts the round-wise execution of \textsc{Explore}$(v)$ from $u$ by the explorer agents of \texttt{LG}$_i$, for some $i\ge 0$ on the neighbors $w_1$ and $w_2$ of $v$.}
    \label{fig:Explore(v)}
\end{figure}

It may be noted that, if the members of \texttt{LG}$_0$ at the node $u$, according to the \textsc{BFS-Tree-Construction} algorithm, are slated to visit the neighboring node $v$, then before executing \textsc{Explore}$(v)$. The members of \texttt{LG}$_0$ checks if there exists an \textit{anchor} agent blocking that port which leads to $v$. If that is the case, then \texttt{LG}$_0$ avoids visiting $v$ from $u$, and chooses the next neighbor, if such a neighbor exists and no \textit{anchor} agent is blocking that edge. If no such neighbor exists from $u$ to be chosen by \texttt{LG}$_0$ members, then they backtrack to the parent node of $u$, and start iterating the same process.

From the above discussion we can have the following remark.
\begin{remark}
    If at some round $t$, the explorer agents of \texttt{LG}$_i$  (i.e., $E^i_1, E^i_2$ and $E^i_3$), are exploring a two length path, say $P= u \rightarrow v \rightarrow w$, from $u$, then all members of \texttt{LG}$_i$ 
 agrees on $P$ at $t$. This is due to the fact that the agents while executing \textsc{Explore}($v$) from $u$ must follow the path $u \rightarrow
 v$ first. Now from $v$, $E^i_3$ chooses the next port in a particular pre-decided order (excluding the port through which it entered $v$). So, whenever it returns back to $v$ to meet $E^i_2$ after visiting a node $w$, $E^i_2$ knows which port it last visited and which port it will chose next and relay that information back to other agents of \texttt{LG}$_i$ on $u$. So, after $E^i_2$ returns back to $v$ again from $u$ when $E^i_3$ starts visiting the next port, all other agents know about it.
 \label{remark: agreement of path chosen}
\end{remark}

During the execution of \textsc{Explore}$(v)$ from $u$, the agent $E^0_3$ can face one of the following situations:

\begin{itemize}
    \item It can find an \textit{anchor} agent at $v$, acting as \texttt{Anchor}$(\beta)$, for some $\beta\in\{1,\dots,\delta_v\}$. In that case, during its current execution of \textsc{Explore}$(v)$, $E^0_3$ does not visit the neighbor of $v$ with respect to the port $\beta$.
    \item It can find an \textit{anchor} agent at a neighbor $w$ (say) of $v$, acting as \texttt{Anchor}$(\beta')$, where $\beta'\in\{1,\dots, \delta_w\}$. If the port connecting $w$ to $v$ is also $\beta'$, then $E^0_3$ understands $v$ (or its previous node) is the BBH, and accordingly tries to return to $u$, along the path $w\rightarrow v \rightarrow u$, and if it is able to reach to $u$, then it acts as an \textit{anchor} at $u$, with respect to the edge $(u,v)$. On the other hand, if port connecting $w$ to $v$ is not $\beta'$, then $E^0_3$ continues its execution of \textsc{Explore}$(v)$.
\end{itemize}

The agent $E^0_2$ during the execution of \textsc{Explore}$(v)$, can encounter one of the following situations, and accordingly we discuss the consequences that arise due to the situations encountered.

\begin{itemize}
    \item It can find an \textit{anchor} agent at $v$ where the \textit{anchor} agent is not $E^0_3$, in which case it continues to execute \textsc{Explore}$(v)$.
    \item It can find an \textit{anchor} agent at $v$ and finds the \textit{anchor} agent to be $E^0_3$. In this case, $E^0_2$ returns back to $u$, where \texttt{LG}$_1$ is formed, where \texttt{LG}$_1=\texttt{LG}_0\setminus\{E^0_3\}$. Next, the members of \texttt{LG}$_1$ start executing the same algorithm from $u$, with new explorers as $E^1_1$, $E^1_2$ and $E^1_3$.
    \item $E^0_2$ can find that $E^0_3$ fails to return to $v$ from a node $w$ (say), where $w\in N(v)$. In this case, $E^0_2$ understands $w$ to be the BBH, and it visits $u$ in the next round to inform this to remaining members of \texttt{LG}$_0$ in the next round, and then returns back to $v$, and becomes \texttt{Anchor}$(\beta)$, where $\beta\in\{1,\dots,\delta_v\}$ and $\beta$ is the port connecting $v$ to $w$. On the other hand, \texttt{LG}$_0$ after receiving this information from $E^0_2$, transforms to \texttt{LG}$_1$ (where \texttt{LG}$_1=\texttt{LG}_0\setminus\{E^0_2,E^0_3\}$) and starts executing the same algorithm, with $E^1_1$, $E^1_2$ and $E^1_3$ as new explorers.
    
\end{itemize}

Lastly, during the execution of \textsc{Explore}$(v)$ the agent $E^0_1$ can face the following situation. 

\begin{itemize}
    \item $E^0_2$ fails to return from $v$, in this situation $E^0_1$ becomes \texttt{Anchor}$(\beta)$ at $u$, where $\beta$ is the port connecting $u$ to $v$. Moreover, the remaining members of \texttt{LG}$_0$, i.e., \texttt{LG}$_0\setminus\{E^0_1,E^0_2,E^0_3\}$ forms \texttt{LG}$_1$ and they start executing the same algorithm from $u$, with new explorers, namely, $E^1_1$, $E^1_2$ and $E^1_3$, respectively.
\end{itemize}

For each $E^0_1$, $E^0_2$ and $E^0_3$, if they do not face any of the situations discussed above, then they continue to execute \textsc{Explore}$(v)$.

From the above discussions, we conclude that following lemma.

\begin{lemma}
\label{lemma: explore cases}
    Let at round $t$ (for some $t>0$), the explorer agents of \texttt{LG}$_i$ (i.e., $E^i_1,E^i_2$ and $E^i_3$) starts exploring the 2 length path $P=u \rightarrow v\rightarrow w$ from $u$ while executing \textsc{Explore}($v$). Then during the exploration of the path $P$, we have the following results.
    \begin{enumerate}
        \item If any of the explorer agents gets destroyed at $v$  then, there will be at least one explorer agent of \texttt{LG}$_i$ that identifies $v$ as the BBH and the port from $u$ leading to $v$ is the port leading to the BBH. Also it can relay this information to the remaining agents of \texttt{LG}$_i$. 
        \item If any of the explorer agents gets destroyed at $w$  then, there will be at least one explorer agent of \texttt{LG}$_i$ that identifies $w$ as the BBH and the port from $v$ leading to $w$ is the port leading to the BBH. Also it can relay this information to the remaining agents of \texttt{LG}$_i$.
        \item If an explorer agent  meets an \textit{anchor} agent at $w$ pointing the port from $w$ to $v$ as the port leading to the BBH then, there will be at least an explorer agent of \texttt{LG}$_i$ that identifies $v$ as the BBH and the port from $u$ leading to $v$ is the port leading to the BBH. Also it can relay this information to the remaining agents of \texttt{LG}$_i$.
    \end{enumerate}
\end{lemma}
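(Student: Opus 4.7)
The plan is to prove each of the three items by a careful case analysis on the round at which the BBH is activated (for items 1 and 2) or on which explorer encounters the anchor (for item~3), and then to trace, in each sub-case, the resulting inputs of the surviving explorers round by round. Throughout, we will use the protocol described in the definition of \textsc{Explore}$(v)$: starting from the round $r$ in which all of \texttt{LG}$_i$ is at $u$, $E^i_2$ and $E^i_3$ move to $v$ at round $r+1$, and then at every cycle of $4$ rounds $E^i_3$ probes the next neighbor of $v$ (going out and back in rounds $r+2,r+3$ within the first cycle), while $E^i_2$ relays the outcome to $E^i_1$ at $u$ (leaving at round $r+4$, returning at round $r+5$). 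The surviving explorer(s) will be identified by the inputs that they do receive or fail to receive at the protocol-prescribed rounds, and the agreement between $E^i_1,E^i_2,E^i_3$ on which neighbor is being probed will come from Remark~\ref{remark: agreement of path chosen}.

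For item~1, assume the BBH equals $v$ and is activated at some round $t\geq r+1$. First I would split on whether $t\in\{r+1,r+2,r+3\}$ (i.e., during $E^i_3$'s first excursion to $w$), in which case $E^i_2$ is at~$v$ and is destroyed; hence at round $r+4$ the agent $E^i_1$ at~$u$ fails to see $E^i_2$ arrive, and since by the protocol it expects $E^i_2$ exactly at that round through the port leading to $v$, it concludes that $v$ is the BBH and the port from $u$ to $v$ is the corresponding port. The remaining sub-case is $t\in\{r+4,r+5\}$: if only $E^i_3$ dies, then $E^i_2$ returns to $v$ at $r+5$ (having gone to $u$ at $r+4$) and finds $E^i_3$ missing at~$v$ where it was expected, identifying $v$ as the BBH; the next scheduled relay round propagates this to $E^i_1$. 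If both $E^i_2$ and $E^i_3$ are destroyed at $v$ within the same cycle, then the argument reduces to the first sub-case via $E^i_1$'s missing-relay detection. I would handle later cycles (when $E^i_3$ is probing other neighbors of $v$) by the same template.

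For item~2, assume the BBH equals $w$ and is activated at some round when $E^i_3$ is at $w$. Only $E^i_3$ ever visits $w$ during the relevant cycle, so only $E^i_3$ is destroyed. Then $E^i_2$ (alive at $v$) fails to see $E^i_3$ return at round $r+3$ (or the analogous return round in later cycles). By Remark~\ref{remark: agreement of path chosen}, $E^i_2$ knows exactly through which port $E^i_3$ had left $v$, and so it identifies $w$ together with the port from~$v$ to~$w$; at the next relay round it transmits this to $E^i_1$ at~$u$, who in turn communicates it to the rest of \texttt{LG}$_i$ when they rejoin. For item~3, $E^i_3$ is at $w$ and finds an \texttt{Anchor}$(\beta')$ such that the port $\beta'$ from $w$ leads to $v$. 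In this case $E^i_3$ concludes that $v$ is the BBH, and the plan is to attempt to return to $u$ along $w\to v\to u$. If $v$ is not activated on the return step, $E^i_3$ reaches $v$, informs $E^i_2$ there, and the standard relay mechanism propagates the information. If $v$ is activated on the return step, then both $E^i_3$ (moving into $v$) and $E^i_2$ (stationed at $v$) are destroyed, and the argument reduces exactly to a sub-case of item~1, where $E^i_1$ detects the missing relay and identifies $v$ as the BBH together with the port from~$u$.

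The main obstacle, and where I would spend the most care, is the bookkeeping of which agents are alive at which node in each specific round of each probing cycle, because the BBH activation can occur at any such round and the ``identifier-to-be'' shifts accordingly between $E^i_3$, $E^i_2$, and $E^i_1$. To keep the analysis uniform I would organise it as a single table indexed by (cycle number, activation round mod $4$) and verify that in each cell exactly one surviving explorer receives a protocol-violating input that uniquely pinpoints the BBH and the incriminating port; the relay half of each statement then follows immediately from the description of \textsc{Explore}$(v)$ together with Remark~\ref{remark: agreement of path chosen}.
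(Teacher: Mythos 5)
Your plan is correct and follows essentially the same route as the paper, which establishes this lemma by exactly the kind of case analysis you describe: enumerating, for each of $E^i_1$, $E^i_2$, $E^i_3$, the protocol-violating input it receives (a missing return of $E^i_3$ to $v$, a missing relay of $E^i_2$ to $u$, or an anchor at $w$ pointing back to $v$) and invoking Remark~\ref{remark: agreement of path chosen} to pin down the incriminated port. The only difference is organizational (you index cases by the BBH's position and activation round rather than by which explorer observes the anomaly), and your level of detail matches the paper's, which likewise leaves the per-cycle bookkeeping implicit.
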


Pictorial description of points 1 and 3 of Lemma \ref{lemma: explore cases} is described in (i) and (ii) of Fig. \ref{fig:Time-Diagram-Explore-on-path-P}, respectively.

It may be noted that, if for some $j>0$, |\texttt{LG}$_j$|<3, then those members perpetually explore $G$ by executing simply \textsc{BFS-Tree-Construction} and not performing \textsc{Explore}(), unlike the movement for the members of \texttt{LG}$_t$, for $0\le t <j$. Whenever during this execution, the member of \texttt{LG}$_j$, encounter \texttt{Anchor}($\beta$) at a node, say $u$, then the member of \texttt{LG}$_j$ avoids choosing the port $\beta$ for its next move.

In addition to that, after the members in \texttt{LG}$_i$ (for some $i\ge 1$), obtain the map of $G$, they perpetually explore each node in $G$ except the BBH, by performing simple BFS traversal, and while performing this traversal, the members of \texttt{LG}$_i$ always avoids the ports, blocked by some \textit{anchor} agent.

Next, we prove the correctness of our algorithm \textsc{Graph\_PerpExplore-BBH-Home}.


\begin{lemma}
    The members of \texttt{SG} perpetually explores $G$ until their movement is intervened by the BBH.
\end{lemma}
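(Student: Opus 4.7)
The plan is to argue that in a benign execution the four agents of \texttt{SG} correctly simulate the movement of a single virtual agent executing \textsc{BFS-Tree-Construction} on $G$, with a controlled time overhead, and that the phase structure guarantees every node is revisited infinitely often.

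First, I would verify the base mechanics. By the analysis of \textsc{Make\_Pattern} and \textsc{Translate\_Pattern} in Appendix~\ref{Appendix: Path Alg}, when no agent is destroyed, two rounds suffice to establish the four-agent pattern at the home node, and each subsequent application of \textsc{Translate\_Pattern} moves the pattern by one edge in exactly five rounds, leaving $L,I_1$ on the newly visited node and $F,I_2$ on the previous one. Crucially, by Remark~\ref{remark: agreement of path chosen} (applied in the \texttt{SG} case), at the end of each sub-phase all four agents agree on the current state of the underlying \textsc{BFS-Tree-Construction} computation, because the agent playing the role of $L$ scouts the next edge suggested by \textsc{BFS-Tree-Construction}, and the subsequent rounds of the sub-phase propagate that information via $I_1,I_2$ back to $F$. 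Hence the five-round translation faithfully implements one step of the virtual BFS-constructing agent, and the visited sequence of the pattern coincides (up to a constant time factor of $5$) with the traversal produced by \textsc{BFS-Tree-Construction} starting from $h$.

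Next, I would handle the phase structure. By design, in phase $i$, sub-phase $i_1$ executes at most $2^{i}$ translations following the BFS traversal, and sub-phase $i_2$ retraces the same translations in reverse to return to $h$; each of these costs at most $T_{i_j}=5\cdot 2^{i}+2$ rounds, so phase $i$ terminates in finite time with all four agents back at $h$ in their initial pattern-ready state, preserving the internal \textsc{BFS-Tree-Construction} state across phases (the agents only need $\calO(\log n)$ bits to remember the current phase index, while the BFS state is maintained using the $\calO(n\Delta\log n)$ memory of~\cite[Proposition~9]{chalopin2010constructing}). Since the translation count per phase doubles with $i$, there exists a threshold $i^\star$ such that $2^{i^\star}\geq c\cdot n^{3}\Delta$ for the constant $c$ hidden in the $\calO(n^3\Delta)$ step bound of \textsc{BFS-Tree-Construction}; from phase $i^\star$ onward the simulated BFS traversal completes within sub-phase $i_1$, visiting every node of~$G$.

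Finally, to conclude perpetual exploration it suffices to note that once the BFS map is fully constructed, the agents keep entering new phases; each phase of index $i\geq i^\star$ again runs the stored BFS-based traversal, revisiting every node of~$G$. Since there are infinitely many such phases and each visits every vertex at least once, every node of~$G$ is visited infinitely often, establishing perpetual exploration in the absence of BBH intervention. The main subtle point is verifying that the pattern translation preserves the shared view of the BFS computation across the four agents, but this follows directly from the explicit round-by-round description of \textsc{Translate\_Pattern} combined with Remark~\ref{remark: agreement of path chosen} adapted to \texttt{SG}.
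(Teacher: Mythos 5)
Your proposal is correct and follows essentially the same route as the paper's proof: the pattern translations simulate \textsc{BFS-Tree-Construction}, Proposition~9 of~\cite{chalopin2010constructing} gives the $\calO(n^3\Delta)$ step bound, and there is a threshold phase $j$ with $2^j\geq c\,n^3\Delta$ after which every phase covers all of~$G$, yielding perpetual exploration. Your write-up is merely more explicit about the four-agent agreement on the BFS state and the per-translation round count, which the paper leaves implicit.
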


\begin{proof}
    The members of \texttt{SG} operates in phases. As stated earlier, each $i$-th phase (for some $i>0$) is sub-divided in to two parts: $i_1$-th phase and $i_2$-th phase. In the $i_1$-th phase, the agent translates up to $2^i$ nodes, and each node that it translates to, is decided by the underlying algorithm \textsc{BFS-Tree-Construction}. In the $i_2$-th phase, no matter whatever the position of the members of \texttt{SG}, they start translating back to the home. By Proposition 9 of \cite{chalopin2010constructing}, it is shown that if an agent follows \textsc{BFS-Tree-Construction} algorithm, then it eventually constructs the map within $\mathcal{O}(n^3\Delta)$ steps with $\mathcal{O}(n\Delta\log n)$ memory, where $|V|=n$ and $\Delta$ is the maximum degree of $G=(V,E)$. This implies that, for a sufficiently large $j$, such that $2^j\ge c n^3\Delta$, for some constant $c$, the members of \texttt{SG} must have obtained the map of $G$ in the $j_1$-th phase, if the BBH does not intervene in the movement of \texttt{SG}, in any phase less than equal $j$-th phase. So, for any subsequent phase after $j$-th phase until the BBH intervenes, the members of \texttt{SG} explores $G$.     
\end{proof}

\begin{lemma}\label{lemma:one member of SG acts as an anchor}
    If BBH intervenes the movement of \texttt{SG}, at least one member of \texttt{SG} acts as an \textit{anchor}. 
\end{lemma}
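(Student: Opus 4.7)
The plan is to reduce the claim to the case analysis already carried out for paths in Appendix~\ref{Appendix: Path Alg}, using the fact that the local behavior of the four \texttt{SG} agents during any \textsc{Make\_Pattern} or \textsc{Translate\_Pattern} operation is identical in structure to the path setting. More precisely, at any round during the traversal, the four agents of \texttt{SG} occupy at most two adjacent nodes, and each sub-phase of \textsc{Translate\_Pattern} concerns exactly three consecutive nodes $v_0,v_1,v_2$ along the edge that the underlying \textsc{BFS-Tree-Construction} has selected to traverse (similarly, \textsc{Make\_Pattern} at the start of a phase concerns two adjacent nodes). Thus the only way for the BBH to interrupt \texttt{SG} is if $\frakb \in \{v_0,v_1,v_2\}$ for the current sub-phase, or if $\frakb$ is the node adjacent to~$h$ chosen during \textsc{Make\_Pattern}.

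First, I would observe that the choice of which neighbor to visit next at any given node, as dictated by \textsc{BFS-Tree-Construction}, does not influence the internal mechanics of the five rounds making up a sub-phase: once the direction of the translation is fixed, the movements of $L$, $I_1$, $I_2$, $F$ between $v_0,v_1,v_2$ are exactly those described in Appendix~\ref{Appendix: Path Alg}. In particular, the failure-recovery arguments in that appendix depend only on the local configuration and not on the global topology. Therefore the exhaustive case analysis carried out there---ranging over the three possible positions of~$\frakb$ within the relevant triple and over the five rounds at which the adversary may first activate~$\frakb$---applies verbatim.

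Next, I would invoke that case analysis to conclude that whenever the BBH destroys at least one agent of \texttt{SG}, at least one member of \texttt{SG} remains alive at some node $u\in N(\frakb)$ and, from its own local history, is able to identify the exact port $\alpha\in\{1,\dots,\delta_u\}$ that leads to~$\frakb$. This is precisely the definition used in Section~\ref{subsection: Algorithm description general graph} for the agent to become \texttt{Anchor}$(\alpha)$: the agent thereafter remains stationary at~$u$ and blocks the port~$\alpha$. The corresponding anchor decision is explicit in the algorithm's specification for \texttt{SG}, so the surviving agent does settle as an anchor.

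The only subtlety, which is the main thing to be careful about, is to verify that the path-based case analysis is insensitive to the presence of extra edges incident to $v_0,v_1,v_2$ in the general graph~$G$. This is straightforward: during any single sub-phase, none of $L,I_1,I_2,F$ is instructed to traverse any edge other than those of the triple $(v_0,v_1,v_2)$, and the only inputs an alive agent uses to detect a missing peer and to identify~$\frakb$ are (i) the configurations on its current node and (ii) the ports it has traversed. Both are unchanged from the path setting. Hence the conclusion of the path analysis transfers, and at least one surviving member of \texttt{SG} becomes an \textit{anchor}, as claimed.
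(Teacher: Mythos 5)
Your proposal is correct and takes essentially the same route as the paper: both reduce the claim to the exhaustive path-based case analysis of Appendix~\ref{Appendix: Path Alg} for \textsc{Make\_Pattern} and \textsc{Translate\_Pattern} (your added check that extra incident edges in~$G$ do not affect that local analysis is a useful elaboration the paper leaves implicit). The only small imprecision is your claim that the surviving agent is at a node of~$N(\frakb)$ when it detects the BBH --- in some cases it detects~$\frakb$ while standing on the unactivated~$\frakb$ itself and must first move to an adjacent node before settling as an anchor, a case the paper's proof distinguishes explicitly, but this does not affect the conclusion.
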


\begin{proof}
  The movement of the members of \texttt{SG} is based on translation, and in order to translate, they execute the algorithms \textsc{Make\_Pattern} and \textsc{Translate\_Pattern}. If the BBH intervenes during their movement, means the BBH intervenes during translation, or in other words, the BBH intervenes during \textsc{Make\_Pattern} or \textsc{Translate\_Pattern}. We already know, if the BBH intervenes during the execution of either \textsc{Make\_Pattern} or \textsc{Translate\_Pattern}, then at least one agent executing these algorithms, must identify the location of the BBH (as per Appendix \ref{Appendix: Path Alg}). As per the argument in Appendix \ref{Appendix: Path Alg}, if at round $r$ (for some $r>0$), the member determines the position of the BBH, then either it is on BBH at round $r$ in which case it moves to an adjacent node in $C_1$, say $v$, or at round $r$ the member is present at an adjacent node $v'$ ($v'\in C_1~\text{or}~C_j$, $j\neq 1$). In either situation, the member remains at $v$ or $v'$, \textit{anchoring} the edge connecting BBH to $v$ from round $r$ onwards or BBH to $v'$ from round $r'$ onwards.
\end{proof}

Before going to the next section of correctness, we define the notion of \texttt{LG}$_i$ at a node $u$ (for any $u\in V$), if every member that is part of \texttt{LG}$_i$ is at that node $u$.

\begin{lemma}\label{lemma: LG_i cannot be on BBH}
    For any $i\ge0$, \texttt{LG}$_i$ can never be located at the BBH.
\end{lemma}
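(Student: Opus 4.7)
I would prove this lemma by strong induction on the round number, using as the induction hypothesis the strengthened statement: for every round $t$ and every $i \geq 0$, all alive members of $\texttt{LG}_i$ are located at some node in $C_1 \setminus \{\frakb\}$, and whenever $\texttt{LG}_i$ is co-located at some $u \in N(\frakb)$, there exists at least one anchor agent at some $w \in N(\frakb)$ already blocking the edge $(w,\frakb)$. The base case is immediate: initially $\texttt{LG}_0$ is co-located at $h$, which is not $\frakb$ by assumption, and by the algorithm, $\texttt{LG}_0$ does not start moving until all members of $\texttt{SG}$ fail to return within $Ti_2$ rounds of the $i_2$-th phase; if this delay occurs, then by Lemma~\ref{lemma:one member of SG acts as an anchor} there is already an \texttt{SG}-anchor placed at some node of $N(\frakb)$.

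For the inductive step, I would analyze the only way $\texttt{LG}_i$ could end up co-located at $\frakb$: at some earlier round, $\texttt{LG}_i$ is co-located at some $u \in N(\frakb)$ and, following \textsc{BFS-Tree-Construction}, attempts to move to $v = \frakb$ via \textsc{Explore}$(v)$. By the induction hypothesis, an anchor already blocks some edge $(w,\frakb)$ with $w \in N(\frakb)$, and moreover $u$ itself does not yet have an anchor blocking the port to $v$ (otherwise $\texttt{LG}_i$ would not attempt $v$ per the algorithm). I would then split into two cases. \emph{Case A:} the adversary activates $\frakb$ at some round during the execution of \textsc{Explore}$(v)$ in which $E^i_2$ or $E^i_3$ is on $v$. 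By item~1 of Lemma~\ref{lemma: explore cases}, at least one surviving explorer identifies $v = \frakb$ and relays this to the remaining members of $\texttt{LG}_i$ at $u$; an anchor is placed at $u$ and $\texttt{LG}_i$ never moves to $v$. \emph{Case B:} the adversary keeps $\frakb$ silent throughout \textsc{Explore}$(v)$. Then $E^i_3$ traverses every neighbor of $v$ (except $u$), in particular the node $w$, and by item~3 of Lemma~\ref{lemma: explore cases} identifies $v = \frakb$ upon reading the anchor's port information at $w$; the information is relayed via $E^i_2$ to $\texttt{LG}_i$ at $u$, and again an anchor is placed at $u$ instead of $\texttt{LG}_i$ moving to $v$.

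To carry the induction forward, I would also verify the auxiliary claim that, when $\texttt{LG}_i$ later reconfigures into $\texttt{LG}_{i+1}$ and continues BFS, the anchor invariant is preserved: either the new anchor placed at $u$ enlarges the set of anchors adjacent to $\frakb$, or a pre-existing anchor (placed by $\texttt{SG}$ or by some $\texttt{LG}_j$ with $j < i$) is still in place. Since $\texttt{LG}_{i+1}$ only forms when an anchor is newly placed and inherits the knowledge of $E^i_2$'s and $E^i_3$'s fates via Remark~\ref{remark: agreement of path chosen}, the port leading to $\frakb$ from $u$ is added to the set of blocked ports that \texttt{BFS-Tree-Construction} will avoid from then on, so $\texttt{LG}_{i+1}$ never revisits $v$ from $u$.

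The main obstacle I anticipate is rigorously ruling out Case B above, which hinges on guaranteeing the existence of a usable anchor at some $w \in N(\frakb)$ \emph{at the moment} $E^i_3$ visits $w$. This requires arguing that (i) such an anchor is placed before $\texttt{LG}_0$ is even allowed to leave $h$ (using the algorithm's waiting condition together with Lemma~\ref{lemma:one member of SG acts as an anchor}), and (ii) once placed, the anchor remains stationary (which follows from the definition of \texttt{Anchor}$(\alpha)$ as a settled agent). A secondary subtlety is that $w$ may lie in a different component $C_j$ with $j \neq 1$, but this does not affect the argument since $E^i_3$ reaches $w$ directly through the edge $(v,w)$ during \textsc{Explore}$(v)$, regardless of the global reachability of $w$ from $u$ without crossing $\frakb$.
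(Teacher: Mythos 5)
Your proof is correct and follows essentially the same route as the paper's: both arguments reduce to the moment $\texttt{LG}_i$ attempts \textsc{Explore}$(\frakb)$ from some $u\in N(\frakb)\cap C_1$ without an anchor at $u$, invoke Lemma~\ref{lemma:one member of SG acts as an anchor} to guarantee a pre-existing anchor at some other neighbor of $\frakb$, and then apply items~1 and~3 of Lemma~\ref{lemma: explore cases} to conclude that an explorer identifies $\frakb$ and anchors $u$ before the group can move onto it. The only difference is presentational — you run a strong induction with an explicit invariant where the paper takes a first-violating-round contradiction — and the substance is identical.
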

\begin{proof}
    If possible let there exists a round $t$ (for some $t>0$) and  $i \ge 0$ such that \texttt{LG}$_i$ is located on the BBH $\frakb$ at round $t$. Without loss of generality, let this be the first round when \texttt{LG}$_i$ is on the BBH for any $i$. Now, since initially \texttt{LG}$_0$ was located at $h \in C_1$ ($C_1$ is the component of $G-\frakb$ containing $h$), \texttt{LG}$_i$ must have moved onto $\frakb$ from some vertex $u \in N(\frakb)\cap C_1$. This implies there must exists a round $t'<t$ such that at $t'$, \texttt{LG}$_i$ is at $u$ and $E^i_1, E^i_2$ and $E^i_3$ starts procedure \textsc{Explore}($\frakb$) from $u$. Note that $u$ does not have any agent acting as anchor, otherwise \texttt{LG}$_i$ can never move to $\frakb$ from $u$. Now, since there is at least one neighbor, say $u'$, of $\frakb$ where there is an agent settled as an anchor (as per Lemma~\ref{lemma:one member of SG acts as an anchor}), distance between $u$ and $u'$ must be 2. So when executing \textsc{Explore}($\frakb$), either $E^i_3$ meets with the anchor agent at $u'$ while exploring the path $u\rightarrow \frakb \rightarrow u'$ or, at least one of $E^i_2$ and $E^i_3$ gets destroyed at $\frakb$ while executing \textsc{Explore}($\frakb$). Now, from the results obtained in point 1 and 3 of Lemma~\ref{lemma: explore cases}, there exists an explorer agent that will know that $\frakb$ is the BBH
 and the port, say $\beta$, from $u$ leading to $\frakb$ is the port leading to the BBH. In this case, this agent relays this information to the other members of \texttt{LG}$_i$ located at $u$ and settles at $u$ as \texttt{Anchor}($\beta$). The remaining members of \texttt{LG}$_i$ forms \texttt{LG}$_{i+1}$ and continue the algorithm, avoiding the port $\beta$ from $u$. This contradicts our assumption that \texttt{LG}$_i$ moves to $\frakb$ from $u$. Hence, we can conclude that \texttt{LG}$_i$ can never be located on the BBH for any $i\ge 0$. 
 \end{proof}
\begin{figure}
    \centering
    \includegraphics[width=0.8\linewidth]{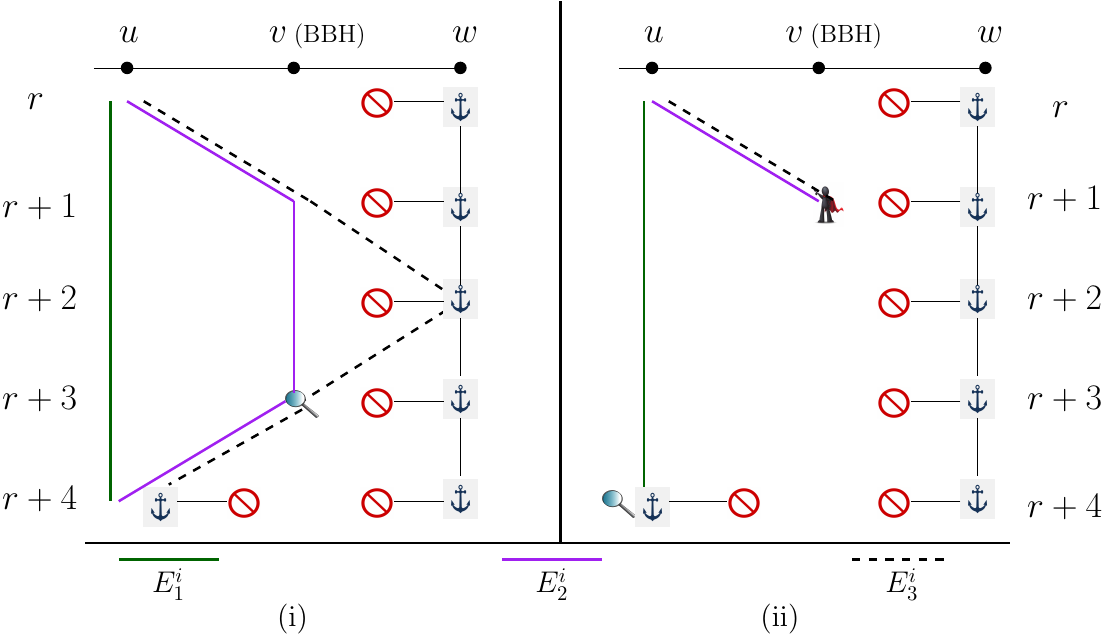}
    \caption{Depicts the time diagram of \textsc{Explore}$(v)$ of \texttt{LG}$_i$ along a specific path $P=u\rightarrow v \rightarrow w$, where $v=\frakb$, in which $w$ contains an \textit{anchor} agent for the edge $(v,w)$. In (i), it depicts even if $\frakb$ is not activated, an explorer agent settles as an \textit{anchor} at $u$ for the edge $(u,v)$. In (ii), it depicts that activation of $\frakb$ destroys both $E^i_2$ and $E^i_3$, then an explorer $E^i_1$ gets settled as \textit{anchor} at $u$ for $(u,v)$.}
    \label{fig:Time-Diagram-Explore-on-path-P}
\end{figure}

 From the above result we can have the following corollary.
\begin{corollary}
    For any $i\ge0$, \texttt{LG}$_i$ can never be located at a node outside $C_1$, where $G-\frakb=C_1\cup\dots \cup C_t$ and $h\in C_1$, $\frakb$ is denoted to be the BBH.
\end{corollary}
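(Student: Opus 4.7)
The plan is to deduce this corollary from Lemma~\ref{lemma: LG_i cannot be on BBH} by a simple connectivity argument. I will proceed by contradiction: assume there exist $i\ge 0$ and a round $t$ such that \texttt{LG}$_i$ is co-located at some node $u\notin C_1$, and choose $t$ to be minimal among all such pairs $(i,t)$.

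The key observation is that, in $G$, every path from $h$ to $u$ must traverse $\frakb$, since by definition $u$ and $h$ lie in distinct connected components of $G-\frakb$. Hence to reach the contradiction it suffices to exhibit a ``witness trajectory'' of the group---i.e.\ a sequence of nodes at which the whole group is simultaneously located---that begins at $h\in C_1$ and ends at $u\notin C_1$, and then invoke Lemma~\ref{lemma: LG_i cannot be on BBH} to preclude the only way this trajectory could have crossed the partition of $G-\frakb$.

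Building the witness trajectory is the step that needs care. The trajectory starts with \texttt{LG}$_0$ at $h$ in round~$1$. While the current group \texttt{LG}$_j$ is unchanged, the trajectory advances to an adjacent node of $G$ exactly when all members of \texttt{LG}$_j$ move together under \textsc{Explore}, which by construction only happens between neighbors. At a transition \texttt{LG}$_j\to\texttt{LG}_{j+1}$, the new group \texttt{LG}$_{j+1}$ is defined as a strict subset of \texttt{LG}$_j$ at the node $u_j$ where \texttt{LG}$_j$ was fully co-located just before the transition (the anchor is installed at $u_j$ or at a neighbor of $u_j$, while the remaining agents stay at $u_j$); so the witness trajectory continues at $u_j$ without any jump. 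Combining these observations, the trajectory is a walk in $G$ from $h$ to $u$.

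Since this walk starts in $C_1$ and ends outside $C_1$, it must visit $\frakb$ at some round $t'\le t$. But then at round~$t'$ the corresponding group \texttt{LG}$_{i'}$ (for some $i'\le i$) is co-located at $\frakb$, contradicting Lemma~\ref{lemma: LG_i cannot be on BBH}. The main obstacle is purely bookkeeping: verifying rigorously that the transitions \texttt{LG}$_j\to\texttt{LG}_{j+1}$ never ``teleport'' the group to a different node and that they preserve full co-location of the surviving subgroup. Once this invariant is in hand, the corollary follows immediately.
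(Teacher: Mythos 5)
Your proposal is correct and matches the paper's intended argument: the paper derives this corollary directly from Lemma~\ref{lemma: LG_i cannot be on BBH} via exactly the connectivity observation you make, namely that \texttt{LG}$_i$ starts at $h\in C_1$, moves only along edges of $G$ (with group transitions occurring in place at the node of co-location), and therefore could only leave $C_1$ by being co-located at $\frakb$, which the lemma forbids. Your extra bookkeeping about the \texttt{LG}$_j\to\texttt{LG}_{j+1}$ transitions is a sound elaboration of what the paper leaves implicit.
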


\begin{lemma}\label{lemma: U reduces}
    Let $\mathcal{U}\subseteq N(\frakb)\cap C_1$, such that if $u\in\mathcal{U}$, then $u$ does not contain any \textit{anchor}. If $|\mathcal{U}|>0$, then our algorithm \textsc{Graph\_PerpExplore-BBH-Home} ensures that $|\mathcal{U}|$ decreases, eventually.
\end{lemma}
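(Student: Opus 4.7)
The plan is to show that the BFS-driven movement of the agents of \texttt{LG}$_i$ (for the current $i$) must eventually force a visit to some node of $\mathcal{U}$ and, once such a visit occurs, the execution of \textsc{Explore}$(\frakb)$ from that node triggers the installation of a new anchor, thereby removing that node from $\mathcal{U}$. Before $|\mathcal{U}|$ decreases, the invariant maintained by the algorithm is that every non-anchor, non-\texttt{Marker} agent currently belongs to some \texttt{LG}$_i$ located at a node of $C_1$ (by Lemma~\ref{lemma: LG_i cannot be on BBH} and its corollary).

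First, I would argue that \texttt{LG}$_i$ eventually reaches a node $u \in \mathcal{U}$. Since the underlying traversal is \textsc{BFS-Tree-Construction} from~$h$ with $h$ as the unique marked node, and since the only edges that \texttt{LG}$_i$ refuses to traverse are those blocked by an already-placed anchor, every edge $(u,\frakb)$ with $u\in\mathcal{U}$ is still offered as a candidate edge to explore when \texttt{LG}$_i$ is at~$u$. By Lemma~\ref{lemma:one member of SG acts as an anchor}, because $\frakb$ has already intervened, there exists at least one neighbor $u^\star$ of $\frakb$ holding an anchor agent; in particular, $\frakb\notin C_1$ is marked in the agents' map only by being the endpoint of that anchored edge, so from the perspective of \texttt{LG}$_i$ the BFS tree of~$C_1$ has not yet been fully constructed as long as $\mathcal{U}\neq\emptyset$. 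Invoking Proposition~9 of~\cite{chalopin2010constructing}, within $\mathcal{O}(n^3\Delta)$ steps the BFS procedure of \texttt{LG}$_i$ must either complete the BFS tree of the reachable portion of~$C_1$ or attempt to traverse the edge $(u,\frakb)$ for some $u\in\mathcal{U}$; the former case is impossible while $\mathcal{U}\neq\emptyset$ because that reachable portion strictly contains~$\mathcal{U}$.

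Second, I would analyze what happens when \texttt{LG}$_i$ is at such a node $u\in\mathcal{U}$ and the explorers $E^i_1, E^i_2, E^i_3$ initiate \textsc{Explore}$(\frakb)$ from~$u$. By Lemma~\ref{lemma: explore cases}, exactly one of two scenarios occurs: either (i)~the adversary activates $\frakb$ during the execution, destroying at least one explorer on~$\frakb$, in which case point~1 of that lemma guarantees a surviving explorer identifies $\frakb$ as the BBH and the port from~$u$ to~$\frakb$ as the port leading to it; or (ii)~the BBH is not activated and $E^i_3$, while sweeping neighbors of~$\frakb$, encounters the existing anchor at~$u^\star$ pointing back along the edge $(u^\star,\frakb)$, in which case point~3 of Lemma~\ref{lemma: explore cases} guarantees that an explorer again identifies $\frakb$ as the BBH and the port from~$u$ leading to~$\frakb$. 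In either case, the identifying explorer relays the information and becomes \texttt{Anchor}$(\beta)$ at~$u$, where $\beta$ is the port from~$u$ to~$\frakb$; the remaining agents of \texttt{LG}$_i$ form \texttt{LG}$_{i+1}$ and continue the algorithm.

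The consequence is that $u$ now contains an anchor and hence $u\notin\mathcal{U}$ from this round onward, so $|\mathcal{U}|$ has strictly decreased. The main obstacle I anticipate is carefully justifying that the BFS procedure of \texttt{LG}$_i$ will in fact propose the edge $(u,\frakb)$ and not stall at some earlier partially-explored branch; this is handled by the standard correctness of \textsc{BFS-Tree-Construction}, adapted to the fact that the only edges pruned from consideration are those explicitly blocked by an already placed anchor, none of which lie on the path from~$h$ to any $u\in\mathcal{U}$ within~$C_1$.
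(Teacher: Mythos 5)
Your proof is correct and follows essentially the same route as the paper's: you invoke the correctness of the anchor-pruned \textsc{BFS-Tree-Construction} traversal to guarantee that \texttt{LG}$_i$ eventually proposes the edge $(u,\frakb)$ for some $u\in\mathcal{U}$, and then use the dichotomy from points~1 and~3 of Lemma~\ref{lemma: explore cases} (explorer destroyed on~$\frakb$, or $E^i_3$ meets the pre-existing anchor at $u^\star$ guaranteed by Lemma~\ref{lemma:one member of SG acts as an anchor}) to conclude that an anchor is installed at~$u$. The only cosmetic difference is that the paper packages the ``\texttt{LG}$_i$ must eventually attempt the edge $(u,\frakb)$'' step as a proof by contradiction via Lemma~\ref{lemma: LG_i cannot be on BBH}, whereas you argue it directly from the BFS traversal; both are sound.
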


\begin{proof}
    If at time $t$ (for some $t>0$), the members of \texttt{LG}$_i$ (for some $i\ge 0$) reach $u$, then we show that eventually $u$ will not belong to $\mathcal{U}$. As per our algorithm, whenever \texttt{LG}$_i$ reaches a node $u$, then three explorer agents of \texttt{LG}$_i$ perform \textsc{Explore}() on the next neighbor node to be chosen by \texttt{LG}$_i$ to visit from $u$. Suppose, at time $t'>t$, \texttt{LG}$_i$ chooses to visit $\frakb$ from $u$ (since $u\in N(\frakb)$). Hence, before visiting $\frakb$, three explorer agents start \textsc{Explore}$(\frakb)$, and during this execution, as per points 1 and 3 of Lemma \ref{lemma: explore cases}, one explorer agent of \texttt{LG}$_i$ settles down at $u$ acting as an \textit{anchor} for the edge $(u,\frakb)$. So, now $u$ cannot belong to $\mathcal{U}$ (as it violates the definition of the set $\mathcal{U}$), so $\mathcal{U}$ gets modified to $\mathcal{U}\setminus\{u\}$.

    Let at time $t$, we assume $|\mathcal{U}|=c$ for some constant $c$, and we suppose $\mathcal{U}$ does not decrease from $t$ onwards. Now, as per definition of $\mathcal{U}$, it contains all such nodes which are neighbor to $\frakb$ but does not contain any \textit{anchor}. Let $u_1\in V$ be the last node in $\mathcal{U}$, at which an \textit{anchor} got settled at time $t'$ (where $t'<t$) and due to which \texttt{LG}$_{i}$ got formed from \texttt{LG}$_{i-1}$. Now, as per our assumption, as $\mathcal{U}$ does not decrease from $t$ onwards, so \texttt{LG}$_i$ remains \texttt{LG}$_i$. As per our algorithm, it follows that \texttt{LG}$_i$ will visit every node of $G$ (except through the edges, blocked by \textit{anchor}). This means that \texttt{LG}$_i$ will visit $\frakb$ through a neighbor node that does not contain any \textit{anchor}, which cannot happen as per Lemma \ref{lemma: LG_i cannot be on BBH}.
\end{proof}

The corollary follows from the above lemma.

    \begin{corollary}\label{corollary: U becomes empty}
        Our algorithm ensures that eventually, $\mathcal{U}$ becomes $\emptyset$.
    \end{corollary}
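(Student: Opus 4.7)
}

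The plan is to obtain this as a direct finite-descent consequence of Lemma~\ref{lemma: U reduces}, once we pin down that $|\mathcal{U}|$ is bounded and that Lemma~\ref{lemma: U reduces} remains applicable after each reduction. First I would observe that $\mathcal{U}$ is, by definition, a subset of $N(\frakb)\cap C_1$, a set of size at most $\Delta-1$ (since one neighbor slot of $\frakb$ may lie outside $C_1$). In particular $|\mathcal{U}|$ is a finite nonnegative integer at all times, and it is monotonically non-increasing: once an \emph{anchor} agent settles at a node $u\in N(\frakb)\cap C_1$ with respect to the port leading to $\frakb$, that node is permanently removed from $\mathcal{U}$ (the algorithm never asks an \emph{anchor} to move), and no new node ever enters $\mathcal{U}$.

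Next I would iterate Lemma~\ref{lemma: U reduces}. The lemma asserts that whenever $|\mathcal{U}|>0$, the execution eventually drops $|\mathcal{U}|$ by at least one. After such a drop, the hypotheses of Lemma~\ref{lemma: U reduces} are again satisfied with a smaller set~$\mathcal{U}'\subsetneq\mathcal{U}$: the group~$\texttt{LG}_{i+1}$ formed after the new \emph{anchor} is placed continues to execute the same \textsc{BFS-Tree-Construction}-driven traversal from the current location, and Lemma~\ref{lemma: LG_i cannot be on BBH} together with the fact that the remaining ports leading to~$\frakb$ from nodes in~$\mathcal{U}'$ are unblocked ensures that the explorer subroutine will again be triggered on at least one such port unless $\mathcal{U}'=\emptyset$. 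Applying Lemma~\ref{lemma: U reduces} to~$\mathcal{U}'$ yields another strict decrease, and so on.

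Thus the sequence $|\mathcal{U}|\ge|\mathcal{U}'|\ge|\mathcal{U}''|\ge\cdots$ is a strictly decreasing chain of nonnegative integers at the moments where Lemma~\ref{lemma: U reduces} triggers, which must terminate after at most $\Delta-1$ applications. The only value at which the chain can stabilize is~$0$, since for any positive value Lemma~\ref{lemma: U reduces} forces a further strict decrease. Hence eventually $\mathcal{U}=\emptyset$.

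The main obstacle I anticipate is purely a bookkeeping one: verifying that each intermediate group $\texttt{LG}_j$ formed along the way still has at least three agents available to run \textsc{Explore}$(\cdot)$ and trigger Lemma~\ref{lemma: U reduces}, or else that the argument degrades gracefully to the $|\texttt{LG}_j|<3$ regime described in the algorithm. Since at most two agents are consumed per reduction (one as the \emph{anchor}, at most one destroyed by~$\frakb$), and the total number of reductions is at most $\Delta-1$, the overall agent budget of $3\Delta+3$ is comfortably sufficient to keep $|\texttt{LG}_j|\geq 3$ until the process completes, so this obstacle is resolved by a direct accounting rather than by any new ideas.
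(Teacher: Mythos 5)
Your proposal is correct and is essentially the argument the paper intends: the paper states the corollary as an immediate consequence of Lemma~\ref{lemma: U reduces}, and your finite-descent iteration (with $\mathcal{U}$ finite, monotonically non-increasing, and strictly decreasing whenever non-empty) is exactly the implicit reasoning, made explicit. The additional agent-budget accounting you include is consistent with what the paper defers to the proof of Theorem~\ref{theorem:Final3DeltaUpperBound-mainversion}.
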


We can now prove Theorem~\ref{theorem:Final3DeltaUpperBound-mainversion}, which we recall below:

\ubgeneral*


\begin{proof}
    As per Corollary \ref{corollary: U becomes empty}, we know that eventually $|\mathcal{U}|=\phi$. This means that, every neighbor of $\frakb$, reachable from $h$ contains an \textit{anchor} agent, blocking that edge which leads to $\frakb$. Again, as per the correctness of \textsc{BFS-Tree-Construction} or just BFS traversal (provided the remaining agents know the map of $G$), the remaining agents (i.e., the agents which are alive and not \textit{anchor} or \texttt{Marker}) will visit every reachable node of $G$, except the one's blocked by \textit{anchor}, perpetually. Since, only $\frakb$ is blocked to visit by all the \textit{anchors}, present at all of its neighbors, so this guarantees that $C_1$ will be perpetually visited, hence it solves \textsc{PerpExplore-BBH-Home}.

    Next, we consider just one 2 length path $P=u\rightarrow\frakb\rightarrow w$ (refer to Fig. \ref{fig:Time-Diagram-Explore-on-path-P}, where $u\in \mathcal{U}$ and $w$ contains an \textit{anchor}, blocking the port $(v,\frakb)$. So, whenever \texttt{LG}$_i$ (for some $i\ge 0$) reaches $u$ (it will reach $u$, as per Lemma \ref{lemma: U reduces}), according to our algorithm \textsc{Explore}$(\frakb)$ will start eventually (i.e., before \texttt{LG}$_i$ decides to move to $\frakb$), and it is known as per Lemma \ref{lemma: explore cases}, an explorer will settle as a \textit{anchor} at $u$, surely after they finish visiting the path $P$ (it may happen earlier during the execution of \textsc{Explore}$(\frakb)$ as well). Now, while visiting $P$, in the worst scenario, adversary can activate $\frakb$, the moment $E^i_2$ and $E^i_3$ is at $\frakb$. This means, the remaining explorer $E^i_1$ will become an \textit{anchor} at $u$, for the edge $(u,\frakb)$ and $u$ will not be in $\mathcal{U}$. Now, this situation can occur from each neighbor of $\frakb$, except one, i.e., where already at least one member of \texttt{SG} is settled as an \textit{anchor}. This proves that, $3(\Delta-1)$ agents are required to \textit{anchor}, $\Delta-1$ adjacent ports of $\frakb$ (if $\delta_\frakb=\Delta$), in the worst case. So, except the \texttt{Marker} agent (which remains stationary at $h$), in the worst case $3\Delta+1$ agents are required to \texttt{anchor} each $\Delta$ many adjacent ports of $\frakb$, including members of \texttt{SG}. Now, this does not ensure perpetual exploration, as all alive agents is stationary, either as an \textit{anchor} or \texttt{Marker}. Hence, we require at least $3\Delta+3$ agents to solve \textsc{PerpExplore-BBH-Home}. In addition to this, to execute our algorithm, we use the underlying concept of \textsc{BFS-Tree-Construction}, and this requires each agent to have a memory of $\mathcal{O}(n\Delta\log n)$.  
\end{proof}

The next remark calculates the total time required for \texttt{SG} to translate to each node in $G$.

\begin{remark}
    The movement of \texttt{SG} is performed in phases. So, in $j$-th phase it performs at most $2^j$ translations using the underlying algorithm of \textsc{BFS-Tree-Construction}, and then returns back to $h$ using again at most $2^j$ translations. Each translation further requires 5 rounds. So, $j$-th phase is executed within $\mathcal{O}(2^{3j}\Delta)$ rounds. Hence, to explore $G$ it takes $\sum^{\log n}_{j=1}\mathcal{O}(2^{3j}\Delta)=\mathcal{O}(n^3\Delta)$ rounds.
\end{remark}

The following remark discusses the time required to explore every node of $G$ by \texttt{LG}$_i$ after \texttt{SG} fails to return at the $i$-th phase, for some $i>0$. and \texttt{SG}, respectively.

\begin{remark}\label{remark: Eventual time for placing anchor}
    Let $v$ be a node, and $w$ be a neighbor of $v$ with degree $\Delta$. It must be observed that, to perform \textsc{Explore}$(w)$ by \texttt{LG}$_i$ from $u$ (for some $i\ge 0$), and eventually reach $w$, it takes $4(\Delta-1)+2$ rounds, as to explore a neighbor of $w$ it takes 4 rounds and except the parent, all neighbors must be explored, and finally from $v$ to reach $w$ it takes another 2 rounds. As discussed earlier, in general to perform \textsc{BFS-Tree-Construction}, it takes $\mathcal{O}(n^3\Delta)$ rounds to visit each node of $G$, and construct a map of $G$.
    
    So, for \texttt{LG}$_i$ to visit each node in $G$, and accordingly construct a map also during the course of which \textit{anchor} each port in $C_1$, \texttt{LG}$_i$ takes $(4(\Delta-1)+2)\cdot \mathcal{O}(n^3\Delta)=\mathcal{O}(n^3\Delta^2)$ rounds.     
\end{remark}

\section{Perpetual exploration in presence of a Black Hole}\label{Appendix: PerpExploration-BH}

In this section, we prove that if the BBH acts as a classical black hole (or BH), then any algorithm requires at least $\Delta+1$ agents  to perform perpetual exploration of at least one component, in an arbitrary graph. We can call this problem \textsc{PerpExploration-BH}, as instead of a BBH, the underlying graph contains a BH. Further, we discuss an algorithm that solves \textsc{PerpExploration-BH} with $\Delta+2$ initially co-located agents.

For a fixed $\Delta\geq 4$, we construct an (unlabeled) graph~$G_\Delta=(V_\Delta,E_\Delta)$, of maximum degree~$\Delta$, which contains (referring to Figure~\ref{fig:lowerbound-GeneralGraph}) a single vertex $v$ of degree $\Delta$ (i.e., the maximum degree in $G$), $\Delta$ many vertices of degree 4 (refer to the vertices $u_i$, for $i\in\{0,1,\dots,\Delta-1\}$), and $\Delta$ many vertices of degree 1 (refer to the vertices $w_i$, for $i\in\{0,1,\dots,\Delta-1\}$).

\begin{figure}
    \centering
    \includegraphics[width=0.6\linewidth]{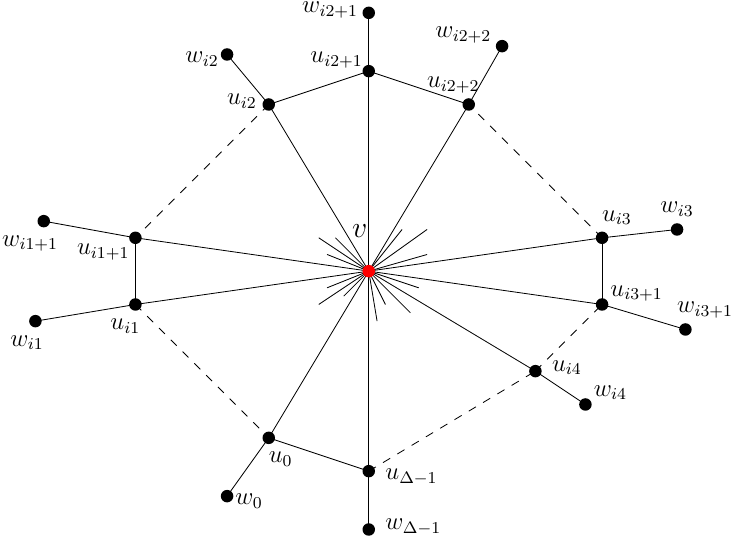}
    \caption{Depicts a graph $G=(V,E,\lambda)$, with the vertex $v$ being the BBH.}
    \label{fig:lowerbound-GeneralGraph}
\end{figure}

\begin{theorem}
    For every~$\Delta\geq 4$, there exists a port-labeling of graph~$G_\Delta$ such that any algorithm using at most~$\Delta$ agents fails to solve \textsc{PerpExploration-BH} in~$G_\Delta$.  
\end{theorem}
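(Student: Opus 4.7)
The plan is to construct, for each $\Delta\geq 4$, an adversarial port-labeling of $G_\Delta$ against which any algorithm with at most $\Delta$ agents fails. I would fix the home to be one of the degree-four vertices, say $u_0$, which is adjacent to $v$. The crucial combinatorial observation is that, because $v$ has $\Delta$ neighbors $u_0,\dots,u_{\Delta-1}$ and because each pendant $w_i$ is reachable only via~$u_i$, the algorithm must visit every $u_i$ infinitely often in order to perpetually explore $C_1=G_\Delta-v$.

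The strategy is to design the adversary so that, for every $u_i$ first visited by the algorithm, at least one agent is destroyed by the black hole through a port at~$u_i$. The adversary reveals port-labels lazily, committing a choice only when the algorithm's next move forces it, and uses an indistinguishability argument: at each freshly visited $u_i$, the adversary maintains a set of port-labelings still consistent with the history, and when the algorithm sends an agent along a port at~$u_i$ whose destination is not yet committed, it selects the labeling that assigns that port to~$v$, thereby destroying that agent. If this can be carried out at every $u_i$, then at least $\Delta$ agents are destroyed over the course of the execution, leaving no alive agent to maintain perpetual exploration. This is in the spirit of the classical $\Delta+1$ lower bound for black hole search~\cite{DBLP:journals/dc/DobrevFPS06}.

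The main technical obstacle is handling probing strategies in which an agent is dispatched from $u_i$ along a port and returns safely, forcing the adversary to commit that port to a non-$v$ destination; a naive adversary could be compelled to burn several safe commitments before ever having a chance to commit~$v$. The workaround is to exploit the structural symmetry of~$G_\Delta$: all vertices~$u_j$ have identical local structure, so an agent returning from a degree-four neighbor cannot unambiguously identify which $u_j$ it visited, and many different port-labelings remain consistent with its observations. The adversary leverages this slack to postpone its $v$-commitment at~$u_i$ until an agent is forced to traverse the remaining unexplored port. Formally, I would proceed by induction on the set of $u_i$'s visited so far, maintaining the invariant that the adversary preserves enough labeling freedom both to commit $v$ at the first unexplored port crossed at each newly visited~$u_i$ and to keep the remote $u_j$'s sufficiently indistinguishable; the conclusion follows because losing one agent per~$u_i$ exhausts the entire pool of $\Delta$ agents.
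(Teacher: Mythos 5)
Your core strategy is the same as the paper's: an adaptive adversary that fixes the port-labeling lazily and, at each $u_i$, assigns the first uncommitted port traversed by any agent to the edge $(u_i,v)$, so that one agent is destroyed per $u_i$ and all $\Delta$ agents are exhausted (the pendants $w_i$ force a fresh-port traversal at every $u_i$, exactly as you observe). Up to that point your argument matches the paper's proof and is correct.

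However, the ``main technical obstacle'' you identify is not a real obstacle, and the workaround you propose for it would actually break the proof. The adversary is never \emph{compelled} to let a probe of a fresh port return safely: since port assignments at distinct nodes are independent and the only ports at $u_i$ that can become committed without an agent leaving $u_i$ through them are the arrival ports from $u_{i\pm 1}$, the $v$-port at $u_i$ is still free the very first time any agent departs $u_i$ through an unexplored port, and the adversary simply kills that agent. By contrast, your plan to ``postpone the $v$-commitment at $u_i$ until an agent is forced to traverse the remaining unexplored port'' fails: if the adversary lets the earlier fresh probes return safely, it must commit those ports to $w_i$ and to the cycle neighbors, after which the agents have already visited $w_i$ and both neighbors on the cycle and have no reason ever to traverse the one remaining port (which is precisely the $v$-port), so no agent dies at $u_i$. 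Also note that surviving a probe to a degree-four node does reveal that the node is not $v$ (a black hole that is always active kills on arrival), so the ``indistinguishability among the $u_j$'' slack you invoke does not let the adversary retroactively declare a safely probed port to be the $v$-port. The fix is simply to drop the postponement idea and keep your first rule --- kill the first fresh-port prober at each $u_i$ --- which is what the paper does.
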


\begin{proof}
   Consider the graph $G=(V,E,\lambda)$ explained in Fig. \ref{fig:lowerbound-GeneralGraph}, and let $u_0$ be \emph{home}, where initially the agents $A=\{a_1,\dots,a_{\Delta}\}$ are co-located, consider $v$ to be the BH. We shall prove this theorem by contradiction. Suppose, there exists an algorithm $\mathcal{A}$ that solves \textsc{PerpExploration-BH} with $\Delta$ agents, starting from $u_0$. It may be noted that, the agents do not have the knowledge of the port-labeling of~$G_\Delta$.

   Since, in the worst case, to gather the map of $G$, the algorithm $\mathcal{A}$ must instruct the agents to visit every node in $V$. Now, for each $u_i\in V$, whenever $u_i$ is visited by an agent for the first time, say at round $r_i$, the agent cannot identify the ports from $u_i$ leading to $w_i$ or $v$ or $u_{i+1}$. So, the adversary, based on which, has returned the function $\lambda_{u_i,\mathcal{A}}$ for each $i\in\{0,1,\dots,\Delta-1\}$, such that, let $r'_i$ be the first round after $r_i$ at which any agent (say, $a_j$) from $u_i$ decides to visit an unexplored port, then $a_j$ reaches $v$ at round $r'_i+1$, based on $\lambda_{u_i,\mathcal{A}}$. Since, $v$ is the BH so $a_j$ gets destroyed along the edge $(u_i,v)$. This phenomenon can occur for each $u_i\in V$, for all $i\in\{0,1,\dots,\Delta-1\}$. This shows that $\Delta$ agents get destroyed, and leads to a contradiction for the algorithm $\mathcal{A}$.\end{proof}

   The following remark discusses an algorithm that solves \textsc{PerpExploration-BH} with $\Delta+2$ agents.

\begin{remark}\label{remark: Delta+2 algorithm for BH}
    A slight modification of \textsc{BFS-Tree-Construction} algorithm can generate a $\Delta+2$ algorithm for \textsc{PerpExploration-BH}. A brief idea of which is as follows: an agent stays at \emph{home}, acting as a \texttt{Marker}. Among the remaining agents, one agent is selected to be an \texttt{Explorer}. Each agent maintains a set of edge-labeled paths, $\mathcal{P}$, as explained in \textsc{BFS-Tree-Construction}. During their traversal, the agents perform a \textit{cautious} walk, i.e., the \texttt{Explorer} agent is the one to visit a new node, and after \texttt{Explorer} agent finds it safe, then only remaining agents visits the new node. After the agents visit a new node, say $v$, they check whether $v$ is already visited or not. To check this, \texttt{Explorer} remains at $v$ whereas other agents reaches back to \texttt{Marker}. Next, they start visiting all the paths stored in $\mathcal{P}$, one after the other. If any of these paths lead to $v$ (where the \texttt{Explorer} is present), they update the edge, say $(u,v)$, taken by the \texttt{Explorer} to reach $v$, as a cross edge in the map. Otherwise, if no path leads to the \texttt{Explorer}, then $v$ is added to the map, and $\mathcal{P}=\mathcal{P}\cup P_v$. Accordingly, the remaining agents, except \texttt{Marker}, reach back to the \texttt{Explorer}. This process repeats. 
    
    Now, suppose $v$ is already visited, and there exists a port $\rho$, which leads to the black hole. So, whenever \texttt{Explorer} visits the node with respect to $\rho$, it gets destroyed. Remaining agents at $v$, update in the map that, with respect to $v$, the port $\rho$ leads to the black hole. Accordingly, a new \texttt{Explorer} is elected, and process continues. In future during this process, if ever the agents again reach $v$ (which they understand via the stored edge-labeled paths in $\mathcal{P}$), through any cross edge not already present in the map constructed yet, then the agent's do not take the port $\rho$ from $v$ again. This shows that, at most $\Delta$ agents are destroyed, one agent remains alive, except the \texttt{Marker}, which has the knowledge of all the ports leading to the black hole, and can perpetually explore the current component of $G-\text{BH}$. So, a $\Delta+2$ agent \textsc{PerpExploration-BH} algorithm can be formulated, using this strategy. 
\end{remark}

\end{document}